\DeclareMathOperator*{\argmin}{arg\,min}
\newtheorem{theorem}{Theorem}[section]
\newtheorem{proposition}[theorem]{Proposition}
\newtheorem{lemma}[theorem]{Lemma}
\newtheorem{assumption}[theorem]{Assumption}
\newtheorem{remark}[theorem]{Remark}
\theoremstyle{definition}
\newtheorem{definition}[theorem]{Definition}
\newtheorem{example}[theorem]{Example}
\newcommand{\kong}{\vspace{0.08cm}}
\title{Macroscopic Market Making Games via Multidimensional Decoupling Field}
\author{Ivan Guo}
\thanks{\hspace{0.2cm} Ivan Guo, Email: ivan.guo@monash.edu, Address: Centre for Quantitative Finance and Investment Strategies, School of Mathematics, Monash University, Wellington Rd, Clayton VIC 3800, Australia}
\author{Shijia Jin} 
\thanks{\hspace{0.2cm} Corresponding author: Shijia Jin, Email: shijia.jin@monash.edu, Address: School of Mathematics, Monash University, Wellington Rd, Clayton VIC 3800, Australia}
\begin{document}
\relscale{1.047}

\begin{abstract}
Building on the macroscopic market making framework as a control problem, this paper investigates its extension to stochastic games. In the context of price competition, each agent is benchmarked against the best quote offered by the others. We begin with the linear case. While constructing the solution directly, the \textit{ordering property} and the dimension reduction in the equilibrium are revealed. For the non-linear case, we extend the decoupling approach by introducing a multidimensional \textit{characteristic equation} to analyse the well-posedness of the forward-backward stochastic differential equations. Properties of the coefficients in this characteristic equation are derived using tools from non-smooth analysis. Several new well-posedness results are presented.\\

\noindent \textbf{Keywords:} Market making, Stochastic differential game, Forward-backward stochastic differential equation, Riccati equation, Price impact

\end{abstract}

\maketitle
\tableofcontents

\section{Introduction}
\label{intro}\noindent This paper is mainly devoted to the strategic interactions between market makers, who act as liquidity providers in the financial market. Market makers offer bid and ask prices for one or multiple assets, generating profits from the bid-ask spread (the price difference between buying and selling orders). Market making as a stochastic control problem has been extensively explored in market microstructure literature, introduced by \cite{ho1981optimal} and further developed by \cite{avellaneda2008high}. Recently, a macroscopic model \`a la Avellaneda-Stoikov \cite{avellaneda2008high} has been proposed by \cite{guo2023macroscopic}, while still following the single-player optimization setting. Therefore, we extend the investigation to stochastic games within this macroscopic framework, with two main motivations in mind. Our model mainly focuses on quote-driven markets and order-driven markets where the ratio of bid-ask spread to tick size is large. We refer readers to the introduction of \cite{guo2023macroscopic} for further details on the macroscopic model and a comparison with the Avellaneda-Stoikov model.

The study of market making games \`a la Avellaneda-Stoikov is relatively rare, while two recent notable exceptions are \cite{luo2021dynamic} and \cite{cont2022dynamics}. As mentioned by \cite{luo2021dynamic}, literatures on the market making as a stochastic control problem and trading via limit orders (e.g., \cite{gueant2012optimal}, \cite{gueant2013dealing}, \cite{gueant2017optimal}, \cite{bayraktar2014liquidation}, \cite{campi2020optimal}) share a common assumption. In detail, the probability that a limit order is executed at the quoted price is only a function of the price gap between the order and a reference price. While this approach offers tractability, it neglects the influences of prices offered by other market makers. In our setting, this `probability' depends on the difference between the agent's price and the best price offered by the others. The consideration of price competition is a key distinction in our paper compared to these stochastic control models. In comparison to \cite{luo2021dynamic} and \cite{cont2022dynamics}, we provide a distinct modelling of competition, which delivers several explicit mathematical results.

The macroscopic market making model seeks to bridge several gaps between market making and optimal execution problems. For comprehensive treatments of these two topics, we refer the reader to \cite{gueant2016financial} and \cite{cartea2015algorithmic}. One key contribution of the macroscopic model is its alignment with optimal execution problems in representing market orders through trading rates. On the other hand, price impact functions---introduced by \cite{almgren2001optimal} and \cite{bertsimas1998optimal}---were designed to capture the effect of large orders on asset prices. These functions are typically pre-specified in an execution problem to accommodate stochastic control techniques. For instance, non-linear impact functions were studied in \cite{almgren2003optimal}, while \cite{souza2022regularized} considered stochastic coefficients. However, such impact functions serve primarily as approximations of actual price dynamics. In contrast, our approach models price impact as a consequence of the quoting strategies of market makers. Considering the number market participants, explaining price changes can be better achieved through the interactions of multiple market makers.

\textit{Contribution}: This paper proposes novel game models for market makers. Mathematically, the contributions lie in the global well-posedness of Nash equilibria and the development of associated tools to achieve these results. From an economic perspective, along with novel models for price competitions, we aim to establish connections between the market making problem, price impact, and the optimal execution problem. The outline of each section is given as follows:

\begin{itemize}
    \item Section \ref{paper 2 section 2} studies the game in a linear setting, of which the well-posedness result is presented in Theorem \ref{N_players_and_4_players}. After introducing the general non-linear setting, Section \ref{paper 2 section 3} characterizes the Nash equilibrium as the solution of a forward-backward system \eqref{general FBSDE}. The local well-posedness is well-known, due to the Lipschitz property obtained by a novel version of the implicit function theorem.\\
    \vspace{-0.2cm}
    
    \item To achieve the global well-posedness of \eqref{general FBSDE}, the generalized derivatives of the coefficient functions are analysed in Section \ref{paper 2 section 4}. The finding is summarized in Theorem \ref{global_Jacobian}, using a novel non-smooth analysis method. Based on the generalized derivative, Section \ref{paper 2 section 5} explores a key property in the equilibrium, described in Theorem \ref{global_Jacobian_M0}. Additionally, by an extension of the decoupling approach, Theorem \ref{mult_char_bsre} bridges the global well-posedness between \eqref{general FBSDE} and a Riccati equation.\\
    \vspace{-0.2cm}

    \item With above tools, several global well-posedness results (e.g., Theorem \ref{2_dim bsre}) are presented in Section \ref{paper 2 section 6}, along with a price impact decomposition \eqref{explicit price} in a simple example. Concerning heterogenous coefficients, Section \ref{paper 2 section 7} studies a two-player game; see Theorem \ref{heter game}.
\end{itemize}
\kong

We then provide a more detailed introduction. The paper starts with the linear case in Section \ref{paper 2 section 2} for analytic tractability. In the game setting, we let the portion be a linear decreasing function with respect to $\delta-\bar{\delta}$, where $\delta$ is the price gap of the agent and $\bar{\delta}$ is the smallest gap provided by the other market makers. In other words, each player is compared with the best quote from the others. The concavity of the objective functional helps characterize the Nash equilibrium as the solution of a forward-backward stochastic differential equation (FBSDE). Thanks to the linear structure, we can explicitly derive the solution of the equation. The \textit{ordering property} in the stochastic control setting (Theorem 3.8 in \cite{guo2023macroscopic})---agent with higher inventory tends to place ask orders at a lower price---persists in the equilibrium. This, in turn, effectively reduces the $N$-player game into a $4$-player game. 

Next, in Section \ref{paper 2 section 3} we delve into the case where the portion becomes a general decreasing function, as introduced in \cite{gueant2017optimal}, with respect to $\delta-\bar{\delta}$. After verifying the Isaac's condition, we introduce a non-smooth implicit function theorem to ensure the Lipschitz continuity of the implicit function. A version of the stochastic maximum principle then characterizes the equilibrium as an FBSDE, the local well-posedness of which is well-known due to the Lipschitz condition. 

To achieve the global well-posedness result, in Section \ref{paper 2 section 4},  we look at the `derivative' of the coefficient functions. Since the functions are Lipschitz, we utilize the \textit{(Clarke) generalized derivatives} and the corresponding \textit{non-smooth analysis}. While the derivative of the implicit function in the smooth implicit function theorem is well-known, the generalized derivative of the implicit function in the non-smooth scenario is not as well established. A novel analysis reveals that the generalized derivative of the coefficient functions consists of \textit{$M$-matrices}.

Based on the derivative information, in Section \ref{paper 2 section 5}, we show the ordering property remains in the non-linear setting. With respect to the FBSDE system, we adopt the method of decoupling fields introduced by \cite{ma2015well} for the one-dimensional equation. A multi-dimensional extension of the \textit{characteristic BSDE} is then presented. Consequently, the well-posedness of the FBSDE can then be guaranteed by the existence of a unique bounded solution to some backward stochastic Riccati equation (BSRE).

The majority of literature on Riccati equations deals with either symmetric or positive definite coefficient matrices, let alone literature on BSREs. Section \ref{paper 2 section 6} establishes several new well-posedness results, particularly providing a complete study of the $2$-dimensional equation. In companion with this $2$-dimensional result, additional conditions are presented so that the $4$-player game can be further reduced to a $2$-player game. As an example, we study the case when all players are identical. In this scenario, the permanent price impact can be further decomposed into two components: an \textit{ex post} impact indicates the effect of past order imbalances, and an \textit{ex ante} impact specifies the influences of expected imbalances in the future.

While most prior discussions focus on agents with homogeneous risk parameters, Section \ref{paper 2 section 7} examines a stochastic game consisting of two heterogeneous agents. Despite the heterogeneity, the concavity of the objective functional remains unaffected, allowing the equilibrium to be characterized as the solution of an FBSDE. The well-posedness of the equation is derived based on \cite{peng1992stochastic}. Additionally, we provide an example demonstrating that the ordering property can fail in the heterogeneous case.

\textit{Notation}: Throughout the present work, we fix $T > 0$ to represent our finite trading horizon. We denote by $(\Omega, \mathcal{F}, \mathbb{F}=(\mathcal{F}_t)_{0\leq t\leq T}, \mathbb{P})$ a complete filtered probability space, with $\mathcal{F}_T = \mathcal{F}$. An $m$-dimensional Brownian motion $W=(W^1,\dots,W^m)$ is defined on such space, for a fixed positive integer $m$, and the filtration $\mathbb{F}$ is generated by $W$ and augmented. Let $\mathcal{G}$ represents an arbitrary $\sigma$-algebra contained in $\mathcal{F}$ and consider the following spaces:
\begin{gather*}
    L^p(\Omega, \mathcal{G}):=\big\{X: X \text{ is } \mathcal{G} \text{-measurable and } \mathbb{E}|X|^p<\infty \big\};\\
    \mathbb{H}^p:=\Big\{X: X \text{ is } \mathbb{F} \text{-progressively measurable and } \mathbb{E}\Big[\big(\int_0^T |X_t|^2\,dt\big)^{p/2}\Big]<\infty\, \Big\};\\
    \mathbb{S}^p:=\Big\{X\in \mathbb{H}^p: \mathbb{E}\big[\sup_{0\leq t\leq T} |X_t|^p\big]<\infty \Big\};\\
    \mathbb{M}:= \big\{ M : M_t \in L^2(\Omega, \mathcal{F}_t) \text{ for a.e. } t \in [0, T] \text{ and } \{M_t, \mathcal{F}_t\}_{0\leq t\leq T}\text{ is a continuous martingale} \big\}.
\end{gather*}
We use superscripts for enumerating purposes. For example, superscripts in $Q^1, Q^2$ are to mark objects which are associated with player $1$ and player $2$ respectively. In particular, $Q^2$ is not to be confused with quadratic powers, which will be explicitly denoted with brackets like $(Q)^2$.\\

\section{Linear Market Making Game}
\label{paper 2 section 2}
\noindent In \cite{guo2023macroscopic}, macroscopic marketing making was studied as a control problem for a market maker. Here we explore its extension as a stochastic game between several market makers. We start with the case of the linear intensity function and subsequently extend our analysis to general intensity functions. 
The linear structure of the model offers extra analytic tractability, allowing us to construct the solution explicitly. Moreover, we can see more directly how the $N$-player game can be reduced to the \textit{four-player framework}.

We consider a scenario involving multiple liquidity takers engaged in trading a single asset within the market. The collective trading rates of market orders on the ask and bid side are represented by the processes $a := (a_t)_{t\in[0, T]} \in\mathbb{H}^2$ and $b := (b_t)_{t\in[0, T]}\in\mathbb{H}^2$, respectively, where $a_t\in(0,\bar{a}]$ and $b_t\in(0,\bar{b}]$ for some constants $\bar{a}, \bar{b}>0$. These trading rates reflect the liquidity demands of liquidity takers. To meet this demand, a group of $N\in \mathbb{N}$ \textit{homogeneous} market makers provides the liquidity. Each market maker, identified by $i\in\{1,...,N\}$, dynamically places buy limit orders at the price level $S_t-\delta_t^{i,b}$ and sell limit orders at $S_t+\delta_t^{i,a}$. Here, vector $\boldsymbol{\delta}^i := (\delta^{i,a}, \delta^{i,b}) \in \mathbb{H}^2\times \mathbb{H}^2$ represents the control strategy employed by agent $i$, and $\{S_t,\mathcal{F}_t\}_{t\in[0,T]}$---the fundamental price---is a square-integrable martingale. Recalling that the linear intensity function is defined by $\Lambda(\delta)=\zeta-\gamma\,\delta$, we introduce its game extension as follows:

\kong

\begin{assumption}[Linear intensity]
\label{linear assumption}
The quantity of order flow executed by agent $i$ depends linearly on the difference between her offered price and the best price offered by the others. On the ask side, the executed flow is determined by the difference  $\delta_t^{i,a}-\bar{\delta}_t^{i,a}$, and on the bid side, it is determined by $\delta_t^{i,b}-\bar{\delta}_t^{i,b}$, where
\begin{equation*}
    \bar{\delta}_t^{i,a}:=\min_{j\neq i}\delta_t^{j,a} \quad\text{and}\quad \bar{\delta}_t^{i,b}:=\min_{j\neq i}\delta_t^{j,b}.
\end{equation*}
Specifically, if we write $v_t^{i,a}$ and $v_t^{i,b}$ as the (passive) selling and buying rates of the agent at time $t$, the linear dependence can be represented by 
\begin{equation}
    v_t^{i,a}=a_t\cdot\big(\,\zeta-\gamma\,(\delta_t^{i,a}-\bar{\delta}_t^{i,a})\,\big) \quad\text{and}\quad v_t^{i,b}=b_t\cdot\big(\,\zeta-\gamma\,(\delta_t^{i,b}-\bar{\delta}_t^{i,b})\,\big),
    \label{lin_game}
\end{equation}
for some constants $\zeta$, $\gamma>0$. We have also assumed the bid-ask symmetry for notational convenience.
\end{assumption}

\kong

\begin{remark}
The proposed model serves as an approximation for the price competition. While the model does not guarantee the market clearing condition, it retains a crucial element from the Avellaneda-Stoikov model: the gap between the offered price and the `best' price. Both the stochastic control problems in \cite{guo2023macroscopic} and \cite{avellaneda2008high} can then be viewed as special cases where the `best' price is assumed to be the fundamental price. On the other hand, it is possible in \eqref{lin_game} that $v_t^{i,a}>a_t$. To address this, one can further propose a bounded action space for all agents and introduce an additional drift term in \eqref{lin_game}:
\begin{equation*}
    v_t^{i,a}=a_t\,\Big(\,\zeta-\xi_3-\gamma\,\big(\delta_t^{i,a}\vee(-\xi_1)\wedge\xi_2-\bar{\delta}_t^{i,a}\vee(-\xi_1)\wedge\xi_2\big)\,\Big),
\end{equation*}
where $\xi_1, \xi_2, \xi_3 > 0$ are additional coefficients. We choose to not follow this approach, due to considerations of notation convenience and its little impact on the mathematical difficulty of the latter general case.
\end{remark}

\kong

\noindent The inventory and cash of the agent $i$ are modelled by $X_t^i$ and $Q_t^i$ accordingly:
\begin{gather*}
    X_t^i=\int_0^t(S_u+\delta_u^{i,a})\,a_u\,(\,\zeta-\gamma\,(\delta_u^{i,a}-\bar{\delta}_u^{i,a})\,)\,du-\int_0^t(S_u-\delta_u^{i,b})\,b_u\,(\,\zeta-\gamma\,(\delta_u^{i,b}-\bar{\delta}_u^{i,b})\,)\,du,\\
    Q_t^i=q_0^i-\int_0^ta_u\,(\,\zeta-\gamma\,(\delta_u^{i,a}-\bar{\delta}_u^{i,a})\,)\,du+\int_0^tb_u\,(\,\zeta-\gamma\,(\delta_u^{i,b}-\bar{\delta}_u^{i,b})\,)\,du,
    \nonumber   
\end{gather*}
where $q_0^i\in\mathbb{R}$ denotes the initial inventory level. The player $i$ aims at maximizing the objective functional
\begin{equation}
\begin{aligned}
    J(\boldsymbol{\delta}^i; \boldsymbol{\delta}^{-i}):&=\mathbb{E}\Big[X_T^i+S_T\,Q_T^i-\int_0^T\phi_t\big(Q_t^i\big)^2\,dt-A\,\big(Q_T^i\big)^2\Big]\\
     &= \mathbb{E}\Big[\int_0^T\delta_t^{i,a}\, a_t\, (\,\zeta-\gamma\,(\delta_t^{i,a}-\bar{\delta}_t^{i,a})\,)\,dt+\int_0^T\delta_t^{i,b}\,b_t \, (\,\zeta-\gamma\,(\delta_t^{i,b}-\bar{\delta}_t^{i,b})\,)\,dt\\
     &\hspace{3cm} -\int_0^T \phi_t\big(Q_t^i\big)^2\,dt-A\big(Q_T^i\big)^2 \Big].
    \label{lin_game_obj}
\end{aligned}
\end{equation}
Here, penalty coefficients $\phi:=(\phi_t)_{t\in[0,T]}\in\mathbb{H}^2$, $A\in L^2(\Omega, \mathcal{F}_T)$ are non-negative and (uniformly) bounded by constants $\bar{\phi}, \bar{A}>0$ respectively. The simplification is deduced by the martingale property of $S$ and It\^o's formula; see the \cite{guo2023macroscopic} for the interpretation and the simplification step of \eqref{lin_game_obj}. The goal is to find a Nash equilibrium in which all agents solve
their maximization problems simultaneously in the following sense:

\kong

\begin{definition}
A strategy profile $(\hat{\boldsymbol{\delta}}^j)_{j=1}^N \in (\mathbb{H}^2\times\mathbb{H}^2)^{N}$ is called
a Nash equilibrium if, for all $1\leq i\leq N$ and any admissible strategies $\boldsymbol{\delta}^i\in\mathbb{H}^2\times\mathbb{H}^2$, it holds that
\begin{equation}
    J(\boldsymbol{\delta}^i; \hat{\boldsymbol{\delta}}^{-i})\leq J(\hat{\boldsymbol{\delta}}^i; \hat{\boldsymbol{\delta}}^{-i}).
    \label{Nash}
\end{equation}
\end{definition}

\kong

\noindent Thanks to the linear-quadratic structure, we utilize the convex-analytic method, introduced in \cite{bank2017hedging}, to characterize the Nash equilibrium as a system of FBSDEs.

\kong

\begin{theorem}
\label{conv_ana}
A strategy profile $(\boldsymbol{\delta}^j)_{j=1}^N \in (\mathbb{H}^2\times\mathbb{H}^2)^{N}$ forms a Nash equilibrium if and only if it solves the following system of FBSDEs:
\begin{equation}
\left\{
\begin{aligned}
\;& dQ_t^i  = -a_t\,\big(\zeta+\gamma\bar{\delta}^{i,a}_t-\gamma\delta_t^{i,a}\big)dt+b_t\,\big(\zeta+\gamma\bar{\delta}^{i,b}_t-\gamma\delta_t^{i,b}\big)dt, \\
& d\delta_t^{i,a}=d\bar{\delta}^{i,a}_t/2+\phi_t\,Q_t^i\,dt-dM_t^i,\\
& d\delta_t^{i,b}=d\bar{\delta}^{i,b}_t/2-\phi_t\,Q_t^i\,dt+dM_t^i,\\
& Q_0^i=q_0^i,\quad \delta_T^{i,a}=\zeta/(2\gamma)+\bar{\delta}^{i,a}_T/2-A\,Q_T^i,\quad \delta_T^{i,b}=\zeta/(2\gamma)+\bar{\delta}^{i,b}_T/2+A\,Q_T^i,
\end{aligned}
\right.
\label{N_linear_fbsdes}
\end{equation}
\noindent where $M^i_t\in\mathbb{M}$, for all $i\in\{1,\dots, N\}$.
\end{theorem}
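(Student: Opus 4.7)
The plan is to invoke the convex-analytic variational method of \cite{bank2017hedging}. Fix the other players' strategies $\boldsymbol{\delta}^{-i}$; then the functional $\boldsymbol{\delta}^i \mapsto J(\boldsymbol{\delta}^i; \boldsymbol{\delta}^{-i})$ on $\mathbb{H}^2\times\mathbb{H}^2$ is strictly concave and G\^{a}teaux differentiable, so the Nash property \eqref{Nash} reduces to the vanishing of its directional derivative in every admissible direction, and the martingale representation theorem then converts this first-order condition into \eqref{N_linear_fbsdes}. A key simplifying observation is that $\bar{\delta}^{i,a}$ and $\bar{\delta}^{i,b}$ are exogenous from the viewpoint of player $i$, depending only on $\boldsymbol{\delta}^{-i}$, so they act as fixed $\mathbb{H}^2$-processes throughout the variational argument.

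First, I would verify this variational setup. The state processes $X^i$ and $Q^i$ depend affinely on $\boldsymbol{\delta}^i$; the ask integrand $\delta^{i,a}_t a_t(\zeta-\gamma(\delta^{i,a}_t-\bar{\delta}^{i,a}_t))$ is strictly concave quadratic in $\delta^{i,a}_t$ with leading coefficient $-\gamma a_t<0$, and analogously on the bid side; and the remaining terms $-\int_0^T\phi_t(Q^i_t)^2\,dt - A(Q^i_T)^2$ contribute further concavity. Consequently, $\hat{\boldsymbol{\delta}}^i$ satisfies \eqref{Nash} if and only if the G\^{a}teaux derivative of $J(\,\cdot\,; \boldsymbol{\delta}^{-i})$ at $\hat{\boldsymbol{\delta}}^i$ vanishes against every perturbation $\boldsymbol{\eta}^i=(\eta^{i,a},\eta^{i,b})\in\mathbb{H}^2\times\mathbb{H}^2$.

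Next, I would compute this derivative explicitly. Differentiating the running reward yields the boundary-free contribution $\mathbb{E}\int_0^T a_t\eta^{i,a}_t(\zeta-2\gamma\delta^{i,a}_t+\gamma\bar{\delta}^{i,a}_t)\,dt$ on the ask side (and symmetrically for the bid side), while the inventory terms, via an application of Fubini justified by the $\mathbb{H}^2$-bounds on $a,b,\phi,Q^i,\eta^i$ and the uniform boundedness of $\phi$ and $A$, reassemble into $-2\gamma\mathbb{E}\int_0^T a_t\eta^{i,a}_t Y^i_t\,dt + 2\gamma\mathbb{E}\int_0^T b_t\eta^{i,b}_t Y^i_t\,dt$, where
\begin{equation*}
    Y^i_t := \mathbb{E}\Big[\int_t^T\phi_s Q^i_s\,ds + A\,Q^i_T \,\Big|\, \mathcal{F}_t\Big].
\end{equation*}
Arbitrariness of $\boldsymbol{\eta}^i$ combined with $a_t,b_t>0$ yields the pointwise first-order conditions $\delta^{i,a}_t = \zeta/(2\gamma) + \bar{\delta}^{i,a}_t/2 - Y^i_t$ and $\delta^{i,b}_t = \zeta/(2\gamma) + \bar{\delta}^{i,b}_t/2 + Y^i_t$.

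Finally, applying the martingale representation theorem to the continuous square-integrable martingale $Y^i_t + \int_0^t\phi_s Q^i_s\,ds$ gives $dY^i_t = -\phi_t Q^i_t\,dt + dM^i_t$ with $M^i\in\mathbb{M}$ and $Y^i_T = A\,Q^i_T$. Differentiating the two first-order relations above then recovers precisely the backward dynamics and terminal conditions for $\delta^{i,a}$ and $\delta^{i,b}$ in \eqref{N_linear_fbsdes}, while the forward equation for $Q^i$ is a restatement of its definition under Assumption \ref{linear assumption}. The converse direction reads the first-order condition back off the FBSDE and invokes strict concavity to conclude optimality. The main technical care---rather than a genuine obstacle---lies in the Fubini rearrangement and in confirming $Y^i\in\mathbb{S}^2$ so that the resulting $M^i$ indeed belongs to $\mathbb{M}$; both are routine given the standing integrability of the states and strategies.
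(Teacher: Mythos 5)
Your proposal is correct and follows essentially the same route as the paper: the convex-analytic first-order condition for the (strictly) concave functional $J(\,\cdot\,;\boldsymbol{\delta}^{-i})$ with $\bar{\delta}^{i,a},\bar{\delta}^{i,b}$ treated as exogenous, yielding $\delta^{i,a}_t=\zeta/(2\gamma)+\bar{\delta}^{i,a}_t/2-\mathbb{E}_t[A\,Q^i_T+\int_t^T\phi_s Q^i_s\,ds]$ and its bid-side analogue, followed by martingale representation to recover the backward dynamics. The only difference is that you spell out the Fubini rearrangement and the representation step, which the paper leaves implicit by citing the analogous computation in \cite{guo2023macroscopic}.
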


\begin{proof}
Consider an agent $i$ with $i\in\{1,\dots, N\}$. If the profile $(\boldsymbol{\delta}^j)_{1\leq j\leq N}$ represents a Nash equilibrium, agent $i$ can be seen as solving a stochastic control problem with a fixed $\boldsymbol{\delta}^{-i}$. In comparison to the problem studied in the \cite{guo2023macroscopic}, the only necessary modifications are replacing the `old' constant $\zeta$ with the processes $\zeta+\gamma\bar{\delta}^{i,b}_t$ and $\zeta+\gamma\bar{\delta}^{i,a}_t$. However, these modifications do not alter the concave nature of the functional \eqref{lin_game_obj} with respect to the control $\boldsymbol{\delta}^i$. The proof in Theorem 2.6 in \cite{guo2023macroscopic} demonstrates that the concavity relies solely on the linear impact of $\boldsymbol{\delta}^i$ on $Q^i$, and the non-negative property of $\gamma$. Thanks to the concavity, the necessary and sufficient condition for the optimality of $\boldsymbol{\delta}^i$ can be characterized by the first-order condition: for any $\boldsymbol{w}\in\mathbb{H}^2\times\mathbb{H}^2$ that is also uniformly bounded, the G\^ateaux derivative of the functional \eqref{lin_game_obj} with respect to the direction $\boldsymbol{w}$ should vanish, i.e.,
\begin{equation}
\begin{aligned}
    0=\big\langle\nabla\, J(\boldsymbol{\delta}^i; \boldsymbol{\delta}^{-i}), \boldsymbol{w}\big\rangle=\mathbb{E}\Big[&\int_0^T a_t\,w_t^a\,\big(\zeta+\gamma\,\bar{\delta}^{i,a}_t-2\gamma\,\delta_t^{i,a}-2\gamma A\,Q_T^i-2\gamma\int_t^T \phi_s\,Q_s^i\,ds\big)\,dt\\
    &\; +\int_0^Tb_t\,w_t^b\,\big(\zeta+\gamma\,\bar{\delta}^{i,b}_t-2\gamma\,\delta_t^{i,b}+2\gamma A\,Q_T^i+2\gamma\int_t^T\phi_s\,Q_s^i\,ds\big)\,dt\Big].
    \nonumber
\end{aligned}    
\end{equation}
Due to the arbitrariness of $\boldsymbol{w}$, the law of total expectation yields
\begin{equation*}
\begin{aligned}
    \delta_t^{i,a}&=\frac{\zeta}{2\gamma}+\frac{1}{2}\bar{\delta}^{i,a}_t-\mathbb{E}_t[A\,Q_T^i]-\mathbb{E}_t\int_t^T\phi_s\,Q_s^i\,ds,\\
    \delta_t^{i,b}&=\frac{\zeta}{2\gamma}+\frac{1}{2}\bar{\delta}^{i,b}_t+\mathbb{E}_t[A\,Q_T^i]+\mathbb{E}_t\int_t^T\phi_s\,Q_s^i\,ds.
\end{aligned}
\end{equation*}
that holds $d\mathbb{P}\times dt$ almost everywhere. Since the above relation holds for all agents, we obtain the system \eqref{N_linear_fbsdes}.
\end{proof}

\kong

The remainder of this section focuses on the well-posedness of the system $\eqref{N_linear_fbsdes}$. We first introduce a crucial lemma named as the \textit{ordering property}, which provides notable simplifications to the system. Especially, based on Theorem 3.8 in \cite{guo2023macroscopic}, it has been established that the optimal strategy in the control problem is monotonic with respect to the initial inventory. In the equilibrium of the game setting, we will demonstrate that this property still holds true.

\kong

\begin{lemma}[Ordering property]
\label{no_crossing}
Every equilibrium possesses the ordering property. Specifically, for each pair $i, j\in\{1,\dots, N\}$ with $q_0^i \geq q_0^j$, the equilibrium strategies $\boldsymbol{\delta}^i$ and $\boldsymbol{\delta}^j$ satisfy that
\begin{equation}
    \delta^{i,a}_t \leq \delta^{j,a}_t\quad \text{and}\quad \delta^{i,b}_t \geq \delta^{j,b}_t,
    \nonumber
\end{equation}
$\mathbb{P}$-a.s. for all $t\in[0,T]$.

\begin{proof}
Let $(\boldsymbol{\delta}^j)_{1\leq j\leq N}\in(\mathbb{H}^2\times\mathbb{H}^2)^{N}$ be an equilibrium strategy profile. As a necessary condition, for any player $i$ the following holds $d\mathbb{P}\times dt$ a.e. that:
\begin{equation*}
\begin{aligned}
    \delta_t^{i,a}&=\frac{\zeta}{2\gamma}+\frac{1}{2}\bar{\delta}^{i,a}_t-\mathbb{E}_t[A\,Q_T^i]-\mathbb{E}_t\int_t^T\phi_s\,Q_s^i\,ds,\\
    \delta_t^{i,b}&=\frac{\zeta}{2\gamma}+\frac{1}{2}\bar{\delta}^{i,b}_t+\mathbb{E}_t[A\,Q_T^i]+\mathbb{E}_t\int_t^T\phi_s\,Q_s^i\,ds.
\end{aligned}
\end{equation*}
Consider another player $j$ and the same procedure yields $d\mathbb{P}\times dt$ a.e. that:
\begin{equation}
\begin{aligned}
    \delta_t^{i,a}-\delta_t^{j,a}&=\frac{1}{2}(\bar{\delta}^{i,a}_t-\bar{\delta}_t^{j,a})-\mathbb{E}_t[A\,(Q_T^i-Q_T^j)]-\mathbb{E}_t\int_t^T\phi_s\,(Q_s^i-Q_s^j)\,ds,\\
    \delta_t^{i,b}-\delta_t^{j,b}&=\frac{1}{2}(\bar{\delta}^{i,b}_t-\bar{\delta}_t^{j,b})+\mathbb{E}_t[A \, (Q_T^i - Q_T^j)]+\mathbb{E}_t\int_t^T\phi_s\,(Q_s^i-Q_s^j)\,ds,\\
     \delta_t^{i,a}-\delta_t^{j,a}-\frac{1}{2}(\bar{\delta}^{i,a}_t-\bar{\delta}_t^{j,a})&=(-1)\,\Big[\delta_t^{i,b}-\delta_t^{j,b}-\frac{1}{2}(\bar{\delta}^{i,b}_t-\bar{\delta}_t^{j,b})\Big].
\end{aligned}
\label{dim_reduct}
\end{equation}
Due to the equilibrium characterization, the profile $(\boldsymbol{\delta}^i)_{1\leq i\leq N}$ solves the system \eqref{N_linear_fbsdes}. By taking the difference of equations for players $i$ and $j$, one can obtain the FBSDE
\begin{equation}
\left\{
\begin{aligned}
\;& d(Q_t^i-Q_t^j)  = -\gamma\, a_t\,\big(\bar{\delta}^{i,a}_t-\bar{\delta}^{j,a}_t-(\delta_t^{i,a}-\delta_t^{j,a})\big)dt+\gamma\, b_t\,\big(\bar{\delta}^{i,b}_t-\bar{\delta}^{j,b}_t-(\delta_t^{i,b}-\delta_t^{j,b})\big)dt, \\
& d(\delta_t^{i,a}-\delta_t^{j,a})=d\big(\bar{\delta}^{i,a}_t-\bar{\delta}^{j,a}_t\big)/2+\phi_t\,(Q_t^i-Q_t^j)\,dt-d(M_t^i-M_t^j),\\
& d(\delta_t^{i,b}-\delta_t^{j,b})=d\big(\bar{\delta}^{i,b}_t-\bar{\delta}^{j,b}_t\big)/2-\phi_t\,(Q_t^i-Q_t^j)\,dt+d(M_t^i-M_t^j),\\
& Q_0^i-Q_0^j=q_0^i-q_0^j, \quad \delta_T^{i,a}-\delta_T^{j,a}=\big(\bar{\delta}^{i,a}_T-\bar{\delta}^{j,a}_T\big)/2-A\,(Q_T^i-Q_T^j),\\
&\hspace{3.5cm}\delta_T^{i,b}-\delta_T^{j,b}=\big(\bar{\delta}^{i,b}_T-\bar{\delta}^{j,b}_T\big)/2+A\,(Q_T^i-Q_T^j).\\
\end{aligned}
\nonumber
\right.
\end{equation}
Define $\mathcal{X}_t:=Q_t^i-Q_t^j$, $\mathcal{Y}_t:=\delta_t^{i,a}-\delta_t^{j,a}-(\bar{\delta}^{i,a}_t-\bar{\delta}_t^{j,a})/2$ and $\mathcal{M}_t:=M_t^i-M_t^j$ for any $t\in[0, T]$. In view of \eqref{dim_reduct}, the above FBSDE can be rewritten as
\begin{equation}
\left\{
\begin{aligned}
\;& d\mathcal{X}_t  = -\gamma\, a_t\,\big((\bar{\delta}^{i,a}_t-\bar{\delta}^{j,a}_t)/2-\mathcal{Y}_t\big)dt+\gamma\, b_t\,\big((\bar{\delta}^{i,b}_t-\bar{\delta}^{j,b}_t)/2+\mathcal{Y}_t\big)dt, \\
& d\mathcal{Y}_t=\phi_t\,\mathcal{X}_t\,dt-d\mathcal{M}_t,\\
& \mathcal{X}_0=q_0^i-q_0^j,\quad \mathcal{Y}_T=-A\,\mathcal{X}_T.
\end{aligned}
\label{no_cross_FBSDE}
\right.
\end{equation}
When $\mathcal{Y}_t\neq 0$, note that
\begin{equation}
\begin{aligned}
    \big|\,\frac{\bar{\delta}^{i,a}_t-\bar{\delta}^{j,a}_t}{\mathcal{Y}_t}\,\big|&=\big|\,\frac{\delta_t^{j,a}\wedge\min_{k\neq i,j}\delta_t^{k,a}-\delta_t^{i,a}\wedge\min_{k\neq i,j}\delta_t^{k,a}}{\delta_t^{i,a}-\delta_t^{j,a}-(\bar{\delta}^{i,a}_t-\bar{\delta}_t^{j,a})/2}\,\big|\leq \frac{|\delta_t^{j,a}-\delta_t^{i,a}|}{|\delta_t^{i,a}-\delta_t^{j,a}|}=1,\\
    \big|\,\frac{\bar{\delta}^{i,b}_t-\bar{\delta}^{j,b}_t}{\mathcal{Y}_t}\,\big|&=\big|\,\frac{\delta_t^{j,b}\wedge\min_{k\neq i,j}\delta_t^{k,b}-\delta_t^{i,b}\wedge\min_{k\neq i,j}\delta_t^{k,b}}{\delta_t^{i,a}-\delta_t^{j,a}-(\bar{\delta}^{i,a}_t-\bar{\delta}_t^{j,a})/2}\,\big|\\
    &=\big|\,\frac{\delta_t^{j,b}\wedge\min_{k\neq i,j}\delta_t^{k,b}-\delta_t^{i,b}\wedge\min_{k\neq i,j}\delta_t^{k,b}}{\delta_t^{i,b}-\delta_t^{j,b}-(\bar{\delta}^{i,b}_t-\bar{\delta}_t^{j,b})/2}\,\big|\leq 1,
    \nonumber
\end{aligned}
\end{equation}
where we have applied the property that $\delta_t^{i,a}-\delta_t^{j,a}$ and $-(\bar{\delta}^{i,a}_t-\bar{\delta}^{j,a}_t)$ can not have different signs. While one can observe that $\bar{\delta}^{j,a}_t-\bar{\delta}^{i,a}_t=0$ if $\mathcal{Y}_t=0$, by introducing the process $\mathcal{U}$ as
\begin{equation}
    \mathcal{U}_t=\left\{
\begin{aligned}
& -\gamma\,a_t\,\Big(\frac{\bar{\delta}_t^{i,a}-\bar{\delta}_t^{j,a}}{2\,\mathcal{Y}_t}-1\Big)+\gamma\, b_t\,\Big(\frac{\bar{\delta}_t^{i,b}-\bar{\delta}_t^{j,b}}{2\,\mathcal{Y}_t}+1\Big), \quad \text{when}\; \mathcal{Y}_t\neq 0, \\
& 0, \quad \text{when}\; \mathcal{Y}_t= 0,
\end{aligned}
\right.
\nonumber
\end{equation}
equation \eqref{no_cross_FBSDE} can then be further simplified as
\begin{equation}
\left\{
\begin{aligned}
\;& d\mathcal{X}_t  = \mathcal{U}_t\, \mathcal{Y}_t\,dt, \\
& d\mathcal{Y}_t=\phi_t\,\mathcal{X}_t\,dt-d\mathcal{M}_t,\\
& \mathcal{X}_0=q_0^i-q_0^j,\quad \mathcal{Y}_T=-A\,\mathcal{X}_T.
\end{aligned}
\label{linear_FBSDE}
\right.
\end{equation}
Notice that $\mathcal{U}$ is bounded and also non-negative because
\begin{equation*}
    \frac{\bar{\delta}_t^{i,a}-\bar{\delta}_t^{j,a}}{2\,\mathcal{Y}_t}-1\leq -\frac{1}{2} \text{\quad and \quad} \frac{\bar{\delta}_t^{i,b}-\bar{\delta}_t^{j,b}}{2\,\mathcal{Y}_t}+1\geq \frac{1}{2}.
\end{equation*}
Consequently, the above FBSDE is studied in Lemma 3.6 from \cite{guo2023macroscopic}. We know it has a unique solution and the sign of $\mathcal{Y}$ differs from the initial condition of $\mathcal{X}$. Given $\mathcal{X}_0=q_0^i-q_0^j\geq0$, for $t\in[0,T]$ one has $\mathcal{X}_t\geq0$ and 
\begin{equation}
    0\geq\mathcal{Y}_t=\delta_t^{i,a}-\delta_t^{j,a}-(\bar{\delta}^{i,a}_t-\bar{\delta}_t^{j,a})/2,
    \nonumber
\end{equation}
which finally implies $\delta_t^{i,a}-\delta_t^{j,a} \leq 0$ and simultaneously $\delta_t^{i,b}-\delta_t^{j,b} \geq 0$.
\end{proof}
\end{lemma}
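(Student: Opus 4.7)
The plan is to take two players $i, j$ from an arbitrary equilibrium profile and study the difference FBSDE obtained by subtracting the characterization in Theorem \ref{conv_ana} applied to each. Because the first-order conditions are affine in the expectations of $Q^i$ and $\bar{\delta}^i$, subtraction eliminates the constant $\zeta/(2\gamma)$ and produces an FBSDE for $\mathcal{X}_t := Q_t^i - Q_t^j$ and the differences $\delta^{i,a} - \delta^{j,a}$, $\delta^{i,b} - \delta^{j,b}$, whose dynamics are driven by $\phi_t\, \mathcal{X}_t$ and terminated by $\pm A\, \mathcal{X}_T$ up to the awkward $\bar{\delta}^{\cdot,a} - \bar{\delta}^{\cdot,a}$ and $\bar{\delta}^{\cdot,b} - \bar{\delta}^{\cdot,b}$ terms. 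My goal is to reduce this system to a scalar linear FBSDE of the type already solved in the companion single-player paper (Lemma 3.6 of \cite{guo2023macroscopic}), whose sign-preservation conclusion then yields the ordering.

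The first step I would carry out is a dimension reduction. Comparing the expressions for $\delta^{i,a} - \delta^{j,a}$ and $\delta^{i,b} - \delta^{j,b}$ from Theorem \ref{conv_ana}, the $Q$-dependent pieces enter with opposite signs, so the quantities $\delta^{i,a} - \delta^{j,a} - (\bar\delta^{i,a} - \bar\delta^{j,a})/2$ and $\delta^{i,b} - \delta^{j,b} - (\bar\delta^{i,b} - \bar\delta^{j,b})/2$ are negatives of each other. This motivates defining a single scalar $\mathcal{Y}_t := \delta_t^{i,a} - \delta_t^{j,a} - (\bar\delta_t^{i,a} - \bar\delta_t^{j,a})/2$, into which both bid and ask differences can be encoded. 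The backward equation for $\mathcal{Y}$ is then the clean one-dimensional $d\mathcal{Y}_t = \phi_t \mathcal{X}_t\, dt - d\mathcal{M}_t$ with terminal value $-A\, \mathcal{X}_T$, while the forward equation for $\mathcal{X}$ still contains the $\bar\delta$-terms.

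The key obstacle is dealing with $\bar{\delta}^{i,\cdot} - \bar{\delta}^{j,\cdot}$, which are non-smooth $\min$-differences involving both $\delta^{i,\cdot}$ and $\delta^{j,\cdot}$ themselves. My idea is to exploit the elementary identity $\bar\delta_t^{i,a} = \delta_t^{j,a} \wedge \min_{k\neq i,j} \delta_t^{k,a}$ (and symmetrically for $j$). Two consequences follow immediately: the sign of $\bar\delta^{i,a} - \bar\delta^{j,a}$ is opposite to that of $\delta^{i,a} - \delta^{j,a}$, and by the contraction property of $\min$, $|\bar\delta^{i,a} - \bar\delta^{j,a}| \leq |\delta^{i,a} - \delta^{j,a}|$. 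Consequently $\mathcal{Y}$ is the sum of two same-signed terms and the ratio $(\bar\delta^{i,a} - \bar\delta^{j,a})/\mathcal{Y}$, wherever $\mathcal{Y} \neq 0$, lies in $[-2, 0]$; together with the analogous bid-side bound this lets me rewrite the forward equation as $d\mathcal{X}_t = \mathcal{U}_t\, \mathcal{Y}_t\, dt$ for a bounded, non-negative process $\mathcal{U}_t$ (defined to vanish on $\{\mathcal{Y}_t = 0\}$, where a brief check shows the original drift also vanishes).

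At this point the system reduces to the scalar linear FBSDE $d\mathcal{X}_t = \mathcal{U}_t \mathcal{Y}_t\, dt$, $d\mathcal{Y}_t = \phi_t \mathcal{X}_t\, dt - d\mathcal{M}_t$, $\mathcal{X}_0 = q_0^i - q_0^j$, $\mathcal{Y}_T = -A\,\mathcal{X}_T$, which falls exactly into the framework of Lemma 3.6 of \cite{guo2023macroscopic}; invoking it yields $\mathcal{Y}_t \leq 0$ whenever $\mathcal{X}_0 \geq 0$. Unfolding the definition of $\mathcal{Y}$ and using once more that $\bar\delta^{i,a} - \bar\delta^{j,a}$ has the opposite sign to $\delta^{i,a} - \delta^{j,a}$ gives $\delta^{i,a} \leq \delta^{j,a}$, and the symmetric bid-side inequality comes for free from the negatives-of-each-other identity observed in step one. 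I expect the main technical difficulty to be precisely the handling of the non-smooth $\min$ terms and the verification that $\mathcal{U}$ remains non-negative and bounded on $\{\mathcal{Y}_t = 0\}$; the rest is a routine application of the existing one-dimensional theory.
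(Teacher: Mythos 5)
Your proposal is correct and follows essentially the same route as the paper's own proof: the same reduction to the scalar variable $\mathcal{Y}_t=\delta_t^{i,a}-\delta_t^{j,a}-(\bar{\delta}_t^{i,a}-\bar{\delta}_t^{j,a})/2$, the same sign-opposition and $1$-Lipschitz properties of the $\min$-differences to produce a bounded non-negative $\mathcal{U}$ with $d\mathcal{X}_t=\mathcal{U}_t\,\mathcal{Y}_t\,dt$, and the same appeal to Lemma 3.6 of \cite{guo2023macroscopic}. The only cosmetic difference is that you bound the ratio $(\bar{\delta}^{i,a}-\bar{\delta}^{j,a})/\mathcal{Y}$ in $[-2,0]$ using only the same-sign decomposition, whereas the paper gets $[-1,0]$ by additionally using the contraction $|\bar{\delta}^{i,a}-\bar{\delta}^{j,a}|\leq|\delta^{i,a}-\delta^{j,a}|$; both bounds suffice for the non-negativity of $\mathcal{U}$.
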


\kong

Following the ordering property, the next lemma looks carefully at a particular four-player game that serves as a fundamental building block for the interaction of the general $N$-player game.

\kong

\begin{lemma}[Four-player framework]
\label{four_game}
Assume $N=4$ and $q_0^1\geq q_0^2\geq q_0^3 \geq q_0^4$. Then, there exists a unique Nash equilibrium for this four-player game and the equilibrium strategy profile $(\boldsymbol{\delta}^i)_{1\leq i\leq 4}$ exhibits the following ordering property:
\begin{equation}
\begin{aligned}
    \delta_t^{1,a}\leq\delta_t^{2,a}&\leq\delta_t^{3,a}\leq\delta_t^{4,a},\\
    \delta_t^{4,b}\leq\delta_t^{3,b}&\leq\delta_t^{2,b}\leq\delta_t^{1,b},
    \label{linear_game_order}
\end{aligned}
\end{equation}
$\mathbb{P}$-a.s. for all $t\in[0, T]$.
\end{lemma}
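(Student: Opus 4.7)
The plan is to exploit Lemma \ref{no_crossing}: since any Nash equilibrium must already satisfy the ordering \eqref{linear_game_order}, I would take this ordering as an ansatz, eliminate the $\min$-operators in \eqref{N_linear_fbsdes}, and solve the resulting \emph{linear} FBSDE. Under the ansatz, the best-of-others quotes take explicit form: $\bar{\delta}^{1,a}_t = \delta^{2,a}_t$ and $\bar{\delta}^{i,a}_t = \delta^{1,a}_t$ for $i \in \{2,3,4\}$ on the ask side, with symmetric identifications $\bar{\delta}^{4,b}_t = \delta^{3,b}_t$ and $\bar{\delta}^{i,b}_t = \delta^{4,b}_t$ for $i \in \{1,2,3\}$ on the bid side. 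Substituting these into \eqref{N_linear_fbsdes} yields a purely linear FBSDE in the twelve processes $(Q^i, \delta^{i,a}, \delta^{i,b})_{1\leq i\leq 4}$.

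Next, to solve the reduced system I would form pairwise differences. Subtracting player 2's ask equation from player 1's yields a one-dimensional linear BSDE for $\delta^{1,a}_t - \delta^{2,a}_t$ driven by the inventory difference $Q^1_t - Q^2_t$, and an analogous BSDE for $\delta^{3,b}_t - \delta^{4,b}_t$ appears on the bid side. Because players $2,3,4$ on the ask side and $1,2,3$ on the bid side all reference a common process ($\delta^{1,a}$ and $\delta^{4,b}$ respectively), once these two ``driving'' differences have been identified, the remaining quotes and inventories can be read off sequentially. Each building block then matches the framework of Lemma 3.6 in \cite{guo2023macroscopic}, whose one-dimensional solvability I would invoke---the non-negativity of $\phi$ and $A$ providing the required sign structure.

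With the candidate solution constructed, the remaining step is to verify the ordering \eqref{linear_game_order} a posteriori. For any pair $i,j$ with $q^i_0 \geq q^j_0$, the argument of Lemma \ref{no_crossing} applied to the constructed profile yields an FBSDE of the form \eqref{linear_FBSDE} whose analogue of $\mathcal{U}_t$ is non-negative, and sign preservation then delivers $\delta^{i,a}_t \leq \delta^{j,a}_t$ and $\delta^{i,b}_t \geq \delta^{j,b}_t$. Once the ordering is confirmed, the candidate solves the un-reduced system \eqref{N_linear_fbsdes}, so Theorem \ref{conv_ana} certifies it as a Nash equilibrium. Uniqueness follows immediately: every equilibrium satisfies Lemma \ref{no_crossing}'s ordering and hence solves the same reduced FBSDE, which we have just shown admits a unique solution.

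The main obstacle I anticipate is ensuring that the reduced multi-dimensional FBSDE---four inventories coupled to eight quote processes across ask and bid---is genuinely globally well-posed; the pairwise-difference decomposition is the key device, and its success hinges on keeping the analogue of $\mathcal{U}_t$ non-negative throughout, so that the one-dimensional tools of \cite{guo2023macroscopic} can be applied consistently to each resulting building block.
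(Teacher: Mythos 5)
Your overall strategy coincides with the paper's: use the ordering of Lemma \ref{no_crossing} to resolve the $\min$-operators, solve the resulting affine system by pairwise differencing with Lemma 3.6 of \cite{guo2023macroscopic} as the one-dimensional engine, verify the ordering a posteriori, and obtain uniqueness from the uniqueness of each difference equation. However, two concrete points in your decomposition would not survive being written out. First, the two ``driving'' differences you name, $\delta^{1,a}-\delta^{2,a}$ and $\delta^{3,b}-\delta^{4,b}$, do not suffice: they never separate player $2$ from player $3$. The difference $\delta^{2,a}-\delta^{3,a}$ satisfies its own coupled FBSDE (driven by $Q^2-Q^3$, whose forward equation in turn involves $\delta^{2,a}-\delta^{3,a}$ and $\delta^{2,b}-\delta^{3,b}$); it is an independent building block that must be solved, not ``read off'' from the other two. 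The paper uses the three consecutive differences $1$--$2$, $2$--$3$, $3$--$4$ together with the level of player $1$ (recovered from a final simple BSDE), which is exactly the invertible change of variables needed to reconstruct all four players.

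Second, and more importantly, each pairwise difference is \emph{not} a one-dimensional linear BSDE on the ask side alone: the forward equation for $Q^i-Q^j$ couples the ask difference and the bid difference, e.g.
\begin{equation*}
d(Q^1_t-Q^2_t)=-2\gamma\, a_t\,(\delta^{2,a}_t-\delta^{1,a}_t)\,dt+\gamma\, b_t\,(\delta^{2,b}_t-\delta^{1,b}_t)\,dt.
\end{equation*}
The reduction to the form \eqref{linear_FBSDE} requires first observing that $(3/2)(\delta^{1,a}-\delta^{2,a})$ and $\delta^{2,b}-\delta^{1,b}$ solve the same BSDE with the same terminal condition and therefore coincide (and similarly $\delta^{2,a}-\delta^{3,a}=\delta^{3,b}-\delta^{2,b}$ and $(3/2)(\delta^{3,b}-\delta^{4,b})=\delta^{4,a}-\delta^{3,a}$). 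Only after substituting these identities does each difference become a genuinely one-dimensional FBSDE to which Lemma 3.6 of \cite{guo2023macroscopic} applies, with the non-positive process $P_{i,j}$ furnishing both the sign structure for \eqref{linear_game_order} and the exponential representation of $Q^i-Q^j$. Your plan omits this ask--bid identification, so the ``one-dimensional building blocks'' you invoke are not yet one-dimensional. Both gaps are repairable within your framework, but they constitute the substance of the paper's proof rather than routine details.
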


\begin{proof}
Given Lemma \ref{no_crossing} and that $q_0^1\geq q_0^2\geq q_0^3\geq q_0^4$, we know as a necessary condition that
\begin{equation*}
\begin{aligned}
    \delta_t^{1,a}\leq\delta_t^{2,a}&\leq\delta_t^{3,a}\leq\delta_t^{4,a},\\
    \delta_t^{4,b}\leq\delta_t^{3,b}&\leq\delta_t^{2,b}\leq\delta_t^{1,b}.
\end{aligned}
\end{equation*}
Hence, instead of \eqref{N_linear_fbsdes}, it is equivalent to look at the following system : 
\begin{equation}
\left\{
\begin{aligned}
\;& dQ_t^1  = -a_t\,\big(\zeta+\gamma\delta_t^{2,a}-\gamma\delta_t^{1,a}\big)dt+b_t\,\big(\zeta+\gamma\delta_t^{4,b}-\gamma\delta_t^{1,b}\big)dt, \\
& d\delta_t^{1,a}=d\delta^{2,a}_t/2+\phi_t\,Q_t^1\,dt-dM_t^1,\\
& d\delta_t^{1,b}=d\delta^{4,b}_t/2-\phi_t\,Q_t^1\,dt+dM_t^1,\\
& Q_0^1=q_0^1,\quad \delta_T^{1,a}=\zeta/(2\gamma)+\delta_T^{2,a}/2-A\,Q_T^1,\quad \delta_T^{1,b}=\zeta/(2\gamma)+\delta_T^{4,b}/2+A\,Q_T^1;\\
\end{aligned}
\right.
\label{four_bsde_1}
\end{equation}

\vspace{0.1cm}

\begin{equation}
\left\{
\begin{aligned}
\;& dQ_t^2  = -a_t\,\big(\zeta+\gamma\delta_t^{1,a}-\gamma\delta_t^{2,a}\big)dt+b_t\,\big(\zeta+\gamma\delta_t^{4,b}-\gamma\delta_t^{2,b}\big)dt, \\
& d\delta_t^{2,a}=d\delta^{1,a}_t/2+\phi_t\,Q_t^2\,dt-dM_t^2,\\
& d\delta_t^{2,b}=d\delta^{4,b}_t/2-\phi_t\,Q_t^2\,dt+dM_t^2,\\
& Q_0^2=q_0^2,\quad \delta_T^{2,a}=\zeta/(2\gamma)+\delta_T^{1,a}/2-A\,Q_T^2,\quad \delta_T^{2,b}=\zeta/(2\gamma)+\delta_T^{4,b}/2+A\,Q_T^2;\\
\end{aligned}
\right.
\label{four_bsde_2}
\end{equation}

\vspace{0.1cm}

\begin{equation}
\left\{
\begin{aligned}
\;& dQ_t^3  = -a_t\,\big(\zeta+\gamma\delta_t^{1,a}-\gamma\delta_t^{3,a}\big)dt+b_t\,\big(\zeta+\gamma\delta_t^{4,b}-\gamma\delta_t^{3,b}\big)dt, \\
& d\delta_t^{3,a}=d\delta^{1,a}_t/2+\phi_t\,Q_t^3\,dt-dM_t^3,\\
& d\delta_t^{3,b}=d\delta^{4,b}_t/2-\phi_t\,Q_t^3\,dt+dM_t^3,\\
& Q_0^3=q_0^3,\quad \delta_T^{3,a}=\zeta/(2\gamma)+\delta_T^{1,a}/2-A\,Q_T^3,\quad \delta_T^{3,b}=\zeta/(2\gamma)+\delta_T^{4,b}/2+A\,Q_T^3;\\
\end{aligned}
\right.
\label{four_bsde_3}
\end{equation}

\vspace{0.1cm}

\begin{equation}
\left\{
\begin{aligned}
\;& dQ_t^4  = -a_t\,\big(\zeta+\gamma\delta_t^{1,a}-\gamma\delta_t^{4,a}\big)dt+b_t\,\big(\zeta+\gamma\delta_t^{3,b}-\gamma\delta_t^{4,b}\big)dt, \\
& d\delta_t^{4,a}=d\delta^{1,a}_t/2+\phi_t\,Q_t^4\,dt-dM_t^4,\\
& d\delta_t^{4,b}=d\delta^{3,b}_t/2-\phi_t\,Q_t^4\,dt+dM_t^4,\\
& Q_0^4=q_0^4,\quad \delta_T^{4,a}=\zeta/(2\gamma)+\delta_T^{1,a}/2-A\,Q_T^4,\quad \delta_T^{4,b}=\zeta/(2\gamma)+\delta_T^{3,b}/2+A\,Q_T^4.\\
\end{aligned}
\right.
\label{four_bsde_4}
\end{equation}

\noindent Our goal is to construct a solution to the affine system \eqref{four_bsde_1} - \eqref{four_bsde_4} and then show that the ordering property holds. To solve the affine system, first let us take the difference of \eqref{four_bsde_2} and \eqref{four_bsde_3} to obtain
\begin{equation}
\left\{
\begin{aligned}
\;& d(Q_t^2-Q_t^3)  =-\gamma\,a_t\,(\delta_t^{3,a}-\delta_t^{2,a})\,dt+\gamma\,b_t\,(\delta_t^{3,b}-\delta_t^{2,b})\,dt, \\
& d(\delta_t^{2,a}-\delta_t^{3,a})=\phi_t\,(Q_t^2-Q_t^3)\,dt-d(M_t^2-M_t^3),\\
& d(\delta_t^{2,b}-\delta_t^{3,b})=-\phi_t\,(Q_t^2-Q_t^3)\,dt+d(M_t^2-M_t^3),\\
& Q_0^2-Q_0^3=q_0^2-q_0^3,\quad \delta_T^{2,a}-\delta_T^{3,a}=-A\,(Q_T^2-Q_T^3),\quad \delta_T^{2,b}-\delta_T^{3,b}=A\,(Q_T^2-Q_T^3).\\
\end{aligned}
\right.
\nonumber
\end{equation}
\noindent While it is clear that $\delta_t^{2,a}-\delta_t^{3,a}=\delta_t^{3,b}-\delta_t^{2,b}$, one can then observes that $(Q^2-Q^3, \delta^{2,a}-\delta^{3,a}, M^2-M^3)$ is of the same type as \eqref{linear_FBSDE}. Based on Lemma 3.6 of \cite{guo2023macroscopic}, there exists a non-positive bounded process $(P_{2,3}(t))_{0\leq t\leq T}\in\mathbb{H}^2$ such that
\begin{equation}
    \delta_t^{2,a}-\delta_t^{3,a}=P_{2,3}(t)\cdot(Q_t^2-Q_t^3) \quad\text{and}\quad Q_t^2-Q_t^3=(q_0^2-q_0^3)\cdot e^{\int_0^t\gamma(a_u+b_u)\,P_{2,3}(u)\,du}.
    \nonumber
\end{equation}
In the same fashion, one can also take the difference of \eqref{four_bsde_1} and \eqref{four_bsde_2} to have
\begin{equation}
\left\{
\begin{aligned}
\;& d(Q_t^1-Q_t^2)  =-2\gamma\,a_t\,(\delta_t^{2,a}-\delta_t^{1,a})\,dt+\gamma\,b_t\,(\delta_t^{2,b}-\delta_t^{1,b})\,dt, \\
(3/2)\,& d(\delta_t^{1,a}-\delta_t^{2,a})=\phi_t\,(Q_t^1-Q_t^2)\,dt-d(M_t^1-M_t^2),\\
& d(\delta_t^{1,b}-\delta_t^{2,b})=-\phi_t\,(Q_t^1-Q_t^2)\,dt+d(M_t^1-M_t^2),\\
& Q_0^1-Q_0^2=q_0^1-q_0^2,\quad (3/2)\,(\delta_T^{1,a}-\delta_T^{2,a})=-A\,(Q_T^1-Q_T^2),\quad \delta_T^{1,b}-\delta_T^{2,b}=A\,(Q_T^1-Q_T^2).\\
\end{aligned}
\right.
\nonumber
\end{equation}
\noindent Setting $(3/2)\,(\delta_t^{1,a}-\delta_t^{2,a})=\delta_t^{2,b}-\delta_t^{1,b}$, it turns out that $(Q^1-Q^2,  \delta^{1,a}-\delta^{2,a})$ accepts the representation:
\begin{equation}
    \delta_t^{1,a}-\delta_t^{2,a}=P_{1,2}(t)\cdot(Q_t^1-Q_t^2) \quad\text{and}\quad Q_t^1-Q_t^2=(q_0^1-q_0^2)\cdot e^{\int_0^t\gamma(2a_u+3b_u/2)\,P_{1,2}(u)\,du},
    \nonumber
\end{equation}
for a non-positive bounded process $(P_{1,2}(t))_{0\leq t\leq T}\in\mathbb{H}^2$. Symmetrically, the difference of \eqref{four_bsde_3} and \eqref{four_bsde_4} yields
\begin{equation}
\left\{
\begin{aligned}
\;& d(Q_t^3-Q_t^4)  =-\gamma\,a_t\,(\delta_t^{4,a}-\delta_t^{3,a})\,dt+2\gamma\,b_t\,(\delta_t^{4,b}-\delta_t^{3,b})\,dt, \\
& d(\delta_t^{3,a}-\delta_t^{4,a})=\phi_t\,(Q_t^3-Q_t^4)\,dt-d(M_t^3-M_t^4),\\
(3/2)\,& d(\delta_t^{3,b}-\delta_t^{4,b})=-\phi_t\,(Q_t^3-Q_t^4)\,dt+d(M_t^3-M_t^4),\\
& Q_0^3-Q_0^4=q_0^3-q_0^4,\quad \delta_T^{3,a}-\delta_T^{4,a}=-A\,(Q_T^3-Q_T^4),\quad (3/2)\,(\delta_T^{3,b}-\delta_T^{4,b})=A\,(Q_T^3-Q_T^4).\\
\end{aligned}
\right.
\nonumber
\end{equation}
\noindent Setting $(3/2)\,(\delta_t^{3,b}-\delta_t^{4,b})=\delta_t^{4,a}-\delta_t^{3,a}$, we can represent $(Q_t^3-Q_t^4,  \delta_t^{3,a}-\delta_t^{4,a})$ by:
\begin{equation}
    \delta_t^{3,a}-\delta_t^{4,a}=P_{3,4}(t)\cdot(Q_t^3-Q_t^4) \quad\text{and}\quad Q_t^3-Q_t^4=(q_0^3-q_0^4)\cdot e^{\int_0^t\gamma(a_u+4b_u/3)\,P_{3,4}(u)\,du}.\\
    \nonumber
\end{equation}
for a non-positive bounded process $(P_{3,4}(t))_{0\leq t\leq T}\in\mathbb{H}^2$. Finally, since we have computed $\delta^{j,a}-\delta^{j+1,a}$ (and hence $\delta^{j,b}-\delta^{j+1,b})$ for $j\in\{1,2,3\}$, the right hand side of $dQ_t^1$ are known and $Q^1$ can be computed. Consequently, the equation \eqref{four_bsde_1} is reduced to a simple BSDE
\begin{equation}
\left\{
\begin{aligned}
& \frac{1}{2}\,d\delta_t^{1,a}=\frac{1}{2}\,d(\delta^{2,a}_t-\delta^{1,a}_t)+\phi_t\,Q_t^1\,dt-dM_t^1,\\
& \frac{1}{2}\,d\delta_t^{1,b}=\frac{1}{2}\,d(\delta^{4,b}_t-\delta_t^{1,b})-\phi_t\,Q_t^1\,dt+dM_t^1,\\
& Q_0^1=q_0^1,\quad \frac{1}{2}\,\delta_T^{1,a}=\frac{\zeta}{2\gamma}+\frac{1}{2}\,(\delta_T^{2,a}-\delta_T^{1,a})-A\,Q_T^1,\quad \frac{1}{2}\,\delta_T^{1,b}=\frac{\zeta}{2\gamma}+\frac{1}{2}(\delta_T^{4,b}-\delta_T^{1,b})+A\,Q_T^1.\\
\end{aligned}
\right.
\nonumber
\end{equation}
Its solution is given by
\begin{gather*}
    \frac{1}{2}\,\delta_t^{1,a}=\frac{\zeta}{2\gamma}+\frac{1}{2}\,(\delta_t^{2,a}-\delta_t^{1,a})+\int_0^t\phi_u\, Q_u^1\,du-\mathbb{E}_t\Big[\int_0^T\phi_u\,Q_u^1\,du+A\,Q_T^1\Big],\\
    \frac{1}{2}\,\delta_t^{1,b}=\frac{\zeta}{2\gamma}+\frac{1}{2}\,(\delta_t^{4,b}-\delta_t^{1,b})-\int_0^t\phi_u\,Q_u^1\,du+\mathbb{E}_t\Big[\int_0^T\phi_u\,Q_u^1\,du+A\,Q_T^1\Big],
    \nonumber
\end{gather*}
where all terms on the right hand side are known, and the expression inside of the conditional expectation is almost surely bounded. In conclusion, we have computed $(\boldsymbol{\delta}^1, \boldsymbol{\delta}^1-\boldsymbol{\delta}^2, \boldsymbol{\delta}^2-\boldsymbol{\delta}^3, \boldsymbol{\delta}^3-\boldsymbol{\delta}^4)$, which is an invertible linear transformation of $(\boldsymbol{\delta}^1, \boldsymbol{\delta}^2, \boldsymbol{\delta}^3, \boldsymbol{\delta}^4)$. This guarantees the existence of solutions. Moreover, the uniqueness of the solution can be established by noting that each of the three FBSDEs associated with $\boldsymbol{\delta}^1-\boldsymbol{\delta}^2$, $\boldsymbol{\delta}^2-\boldsymbol{\delta}^3$, and $\boldsymbol{\delta}^3-\boldsymbol{\delta}^4$ has a unique solution in $\mathbb{S}^2 \times \mathbb{S}^2 \times \mathbb{M}$, ensuring the uniqueness of $\boldsymbol{\delta}^1$ finally.

Finally, in order to derive the ordering property, we would like to summarize the above discussion as follows:
\begin{equation}
    \begin{aligned}
        \delta_t^{1,a}-\delta_t^{2,a}=P_{1,2}(t)\cdot(q_0^1-q_0^2)\cdot e^{\int_0^t\gamma(2a_u+3b_u/2)\,P_{1,2}(u)\,du} \quad&\text{and}\quad \delta_t^{2,b}-\delta_t^{1,b}= \frac{3}{2}\,(\delta_t^{1,a}-\delta_t^{2,a});\\
        \delta_t^{2,a}-\delta_t^{3,a}=P_{2,3}(t)\cdot(q_0^2-q_0^3)\cdot e^{\int_0^t\gamma(a_u+b_u)\,P_{2,3}(u)\,du} \quad&\text{and}\quad \delta_t^{2,a}-\delta_t^{3,a}=\delta_t^{3,b}-\delta_t^{2,b};\\
        \delta_t^{3,a}-\delta_t^{4,a}=P_{3,4}(t)\cdot(q_0^3-q_0^4)\cdot e^{\int_0^t\gamma(a_u+4b_u/3)\,P_{3,4}(u)\,du} \quad&\text{and}\quad \delta_t^{4,a}-\delta_t^{3,a}= \frac{3}{2}\,(\delta_t^{3,b}-\delta_t^{4,b}).
    \end{aligned}
    \nonumber
\end{equation}
for all $t\in[0,T]$. The ordering follows from the non-positiveness of $P_{j,j+1}(t)$ and non-negativeness of $(q_0^j-q_0^{j+1})$ for $j\in\{1,2,3\}$.
\end{proof}

\kong

We are now prepared to present the main result of the $N$-player game. Firstly, we solve the four-player game involving the `top' two players and the `bottom' two players. Once the equilibrium of this four-player game is determined, it becomes evident that the remaining players are simply solving an equivalent stochastic control problem.

\kong

\begin{theorem}
\label{N_players_and_4_players}
Let the index of a player be the rank of her initial inventory level, i.e.,
\begin{equation}
    q_0^1\geq q_0^2\geq q_0^3\geq\cdots\geq q_0^{N-1}\geq q_0^N.
    \nonumber   
\end{equation}
Then, there exists a unique Nash equilibrium. Moreover, the equilibrium strategy profile $(\boldsymbol{\delta}^j)_{1\leq j\leq N}$ satisfies
\begin{equation*}
\begin{aligned}
    \delta_t^{1,a} \leq \delta_t^{2,a} &\leq \cdots \leq \delta_t^{N-1,a} \leq \delta_t^{N,a},\\
    \delta_t^{1,b} \geq \delta_t^{2,b} &\geq \cdots \geq \delta_t^{N-1,b} \geq \delta_t^{N,b},
\end{aligned}
\end{equation*}
$\mathbb{P}$-a.s. for all $t\in[0, T]$.
\end{theorem}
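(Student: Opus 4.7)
The plan is to leverage Lemma \ref{no_crossing} to reduce the $N$-player equilibrium to a four-player boundary subgame coupled with $N-4$ independent single-agent control problems. First, I would suppose a Nash equilibrium exists. Lemma \ref{no_crossing} then forces the ordering $\delta^{1,a}_t \leq \cdots \leq \delta^{N,a}_t$ and $\delta^{1,b}_t \geq \cdots \geq \delta^{N,b}_t$, so that for every $i \in \{2,\ldots,N-1\}$ one has $\bar{\delta}^{i,a}_t = \delta^{1,a}_t$ and $\bar{\delta}^{i,b}_t = \delta^{N,b}_t$, while $\bar{\delta}^{1,a}_t = \delta^{2,a}_t$ and $\bar{\delta}^{N,b}_t = \delta^{N-1,b}_t$. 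Crucially, the controls of the intermediate players $i \in \{3,\ldots,N-2\}$ never feed back into any $\bar{\delta}^j$.

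I would then observe that the system \eqref{N_linear_fbsdes} restricted to $\{1,2,N-1,N\}$ forms a closed subsystem identical in form to \eqref{four_bsde_1}--\eqref{four_bsde_4} under the relabelling $(1,2,N-1,N) \mapsto (1,2,3,4)$. Applying Lemma \ref{four_game} yields a unique solution $(\boldsymbol{\delta}^1, \boldsymbol{\delta}^2, \boldsymbol{\delta}^{N-1}, \boldsymbol{\delta}^N)$ obeying the four-player ordering \eqref{linear_game_order}. With $\delta^{1,a}$ and $\delta^{N,b}$ now determined and bounded, each intermediate player $i\in\{3,\ldots,N-2\}$ faces a decoupled single-agent optimization, structurally identical to the control problem in \cite{guo2023macroscopic} after replacing the constant $\zeta$ by the processes $\zeta + \gamma\delta^{1,a}_t$ on the ask side and $\zeta + \gamma\delta^{N,b}_t$ on the bid side, which admits a unique optimizer $\boldsymbol{\delta}^i$.

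To close the loop, I would verify consistency between the ordering assumed in the reduction and the ordering satisfied by the construction. Since players $\{2,3,\ldots,N-1\}$ all face the same exogenous environment $(\delta^{1,a},\delta^{N,b})$, Theorem 3.8 of \cite{guo2023macroscopic} (monotonicity of the single-agent optimum in initial inventory) applied with $q_0^2 \geq q_0^3 \geq \cdots \geq q_0^{N-1}$ gives $\delta^{2,a}_t \leq \cdots \leq \delta^{N-1,a}_t$ together with the reversed ordering on the bid side. Combined with the four-player ordering from Lemma \ref{four_game}, this delivers the full ordering claimed in the theorem. Existence then follows by reading the construction backwards and checking it solves \eqref{N_linear_fbsdes}, while uniqueness follows from the uniqueness at each stage together with the equivalence in Theorem \ref{conv_ana}.

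The main delicacy is the self-referential aspect: the reduction of $\bar{\delta}^i$ to the extreme quotes depends on the ordering, yet the ordering among intermediate players depends in turn on the reduction. Closing this loop requires the separate single-agent monotonicity ingredient from \cite{guo2023macroscopic}, which is the one piece not already packaged inside Lemmas \ref{no_crossing} and \ref{four_game}.
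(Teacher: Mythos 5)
Your proposal is correct and follows essentially the same route as the paper: use the ordering property (Lemma \ref{no_crossing}) to identify the closed four-player subsystem among players $\{1,2,N-1,N\}$, solve it via Lemma \ref{four_game}, and reduce each intermediate player to a decoupled single-agent problem against the fixed best quotes $(\delta^{1,a},\delta^{N,b})$, with uniqueness stage by stage. The only cosmetic difference is that you verify the intermediate ordering via the single-agent monotonicity of \cite{guo2023macroscopic}, whereas the paper obtains it by iterating the pairwise-difference construction of Lemma \ref{four_game} (and, for any equilibrium, directly from Lemma \ref{no_crossing}); both close the consistency loop you flag.
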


\begin{proof}
Based on the ordering property, through a finite number of iterations of the method in the proof of Lemma \ref{four_game}, one can construct a solution $(Q^i, \boldsymbol{\delta}^i, M^i)_{i\in\{1, \dots, N\}}$ to the FBSDE system \eqref{N_linear_fbsdes}. Consequently, strategy profile $(\boldsymbol{\delta}^i)_{i\in\{1, \dots, N\}}$ forms a Nash equilibrium with the prescribed ordering. We then turn to the uniqueness. The ordering property again infers that the best ask and bid prices are determined by the four-dimensional FBSDE in Lemma \ref{four_game}, which admits a unique solution. Denoting by $\delta^{1,a}$ (resp. $\delta^{N,b}$) the obtained best ask (resp. bid) quoting strategy, the agent $i$, with $i\in\{3,\dots,N-2\}$, solves the FBSDE
\begin{equation*}
    \left\{
\begin{aligned}
\;& dQ_t^i  = -a_t\,\big(\zeta+\gamma\delta^{1,a}_t-\gamma\delta_t^{i,a}\big)dt+b_t\,\big(\zeta+\gamma\delta^{N,b}_t-\gamma\delta_t^{i,b}\big)dt, \\
& d\delta_t^{i,a}=d\delta^{1,a}_t/2+\phi_t\,Q_t^i\,dt-dM_t^i,\\
& d\delta_t^{i,b}=d\delta^{N,b}_t/2-\phi_t\,Q_t^i\,dt+dM_t^i,\\
& Q_0^i=q_0^i,\quad \delta_T^{i,a}=\zeta/(2\gamma)+\delta^{1,a}_T/2-A\,Q_T^i,\quad \delta_T^{i,b}=\zeta/(2\gamma)+\delta^{N,b}_T/2+A\,Q_T^i.
\end{aligned}
\right.
\end{equation*}
It suffices to derive its uniqueness. Let $(Q^i, \boldsymbol{\delta}^i, M^i)$, $(\tilde{Q}^i, \tilde{\boldsymbol{\delta}}^i, \tilde{M}^i)$ be two solutions and define $(\Delta Q, \Delta \boldsymbol{\delta}, \Delta M):=(\tilde{Q}^i-Q^i, \tilde{\boldsymbol{\delta}}^i-\boldsymbol{\delta}^i, \tilde{M}^i-M^i)$. It follows $(\Delta Q, \Delta \boldsymbol{\delta}, \Delta M)$ solves
\begin{equation}
    \left\{
\begin{aligned}
\;& d\Delta Q_t  = \gamma \, a_t\,\Delta \delta_t^a\,dt - \gamma \, b_t\,\Delta \delta_t^b\,dt, \\
& d\Delta \delta_t^{a}=\phi_t\,\Delta Q_t\,dt-d\Delta M_t,\\
& d\Delta \delta_t^{b}=-\phi_t\, \Delta Q_t\,dt+d\Delta M_t,\\
& \Delta Q_0=0,\quad \Delta \delta_T^{a}=-A\,\Delta Q_T, \quad \Delta \delta_T^{b}=A\,\Delta Q_T.
\label{temp_1}
\end{aligned}
\right.
\end{equation}
While it is straightforward to see $\Delta \delta^a=-\Delta \delta^b$, the FBSDE \eqref{temp_1} reduces to 
\begin{equation*}
    \left\{
\begin{aligned}
\;& d\Delta Q_t  = \gamma \, (a_t+b_t)\,\Delta \delta_t^a\,dt, \\
& d\Delta \delta_t^{a}=\phi_t\,\Delta Q_t\,dt-d\Delta M_t, \\
& \Delta Q_0=0,\quad \Delta \delta_T^{a}=-A\,\Delta Q_T,
\end{aligned}
\right.
\end{equation*}
the well-posedness of which is derived in \cite{guo2023macroscopic}. The proof is completed by noting that the unique solution to this system is $(0, 0, 0)$.
\end{proof}

\kong

\begin{remark}
(1) The equilibrium profile bears a similar economic interpretation to the optimal strategy in the control problem: players with higher inventory aim to sell more, consequently offering more favorable ask prices but less attractive bid prices. Some other economic interpretations are left to the general case later.

(2) For illustration, here we let $b\equiv0$ and define multi-dimensional processes $\boldsymbol{Q}:=(Q^1, Q^2, \dots)$ and $\boldsymbol{Y}:=(\delta^{1,a}, \delta^{2,a}, \dots)$. Then, it turns out later that the forward equations of both \eqref{N_linear_fbsdes} and \eqref{four_bsde_1}-\eqref{four_bsde_4} can be written neatly by 
\begin{equation}
    d\boldsymbol{Q}_t= \boldsymbol{H}_t\, \boldsymbol{Y}_t\,dt,
    \label{M-matrix remark}
\end{equation}
where $\boldsymbol{H}_t$ is a random $M$-matrix for all $t$. We will later introduce and examine this type of matrices in a more general context, along with investigating the associated FBSDE system.
\end{remark}

\vspace{0.2cm}

\section{General Market Making Game: Lipschitz Formulations}
\label{paper 2 section 3}
\noindent Since the Assumption \ref{linear assumption} is proposed based on the linear intensity function, we extend such assumption to the case of general intensity functions. Moreover, agents are allows to be \textit{heterogeneous} in their penalty parameters. The first goal of this section is to characterize the equilibrium of the stochastic game as the solution of Lipschitz FBSDEs. We adopt the following notations, definitions, and associated results from \cite{gueant2017optimal} regarding general intensity functions, with a slight modification.

\kong

\begin{assumption}
\label{general_inten}
A function $\Lambda:\mathbb{R}\to\mathbb{R_+}$ belongs to the class of intensity functions $\boldsymbol{\Lambda}$ if:

\vspace{0.1cm}

\begin{itemize}
    \item[1.] $\Lambda$ is twice continuously differentiable;\\
    \vspace{-0.2cm}
    
    \item[2.] $\Lambda$ is strictly decreasing and hence $\Lambda'(x)<0$ for any $x\in\mathbb{R}$;\\
    \vspace{-0.2cm}
    
    \item[3.] $\lim_{x\to\infty}\Lambda(x)=0\,$ and $\,-\infty<\inf_{x\in\mathbb{R}}\frac{\Lambda(x)\,\Lambda''(x)}{(\Lambda'(x))^2}\leq\sup_{x\in\mathbb{R}}\frac{\Lambda(x)\,\Lambda''(x)}{(\Lambda'(x))^2}\leq1$.
\end{itemize}
\label{inten_assu}
\end{assumption}

\begin{lemma}[\cite{gueant2017optimal}]
\label{inten_fun}
For any $\Lambda\in\boldsymbol{\Lambda}$, define the function $\mathcal{W}:\mathbb{R}\to\mathbb{R}$ as $\mathcal{W}(p)=\sup_{\delta\in\mathbb{R}}\Lambda(\delta)\,(\delta-p)$. Then, the following holds:

\vspace{0.1cm}

\begin{itemize}
    \item[1.] $\mathcal{W}$ is a decreasing function of class $C^2$;\\
    \vspace{-0.1cm}
    
    \item[2.] The supremum in the definition of $\mathcal{W}$ is attained at a unique $\delta^*(p)$ characterized by 
    \begin{equation}
        \delta^*(p)=\Lambda^{-1}\big(-\mathcal{W}'(p)\big),
        \nonumber
    \end{equation}
    where $\Lambda^{-1}$ denotes the inverse function of $\Lambda$;\\
    \vspace{-0.1cm}
    
    \item[3.] The function $p\mapsto\delta^*(p)$ belongs to $C^1$ and is increasing. Its derivative reads
    \begin{equation}
        (\delta^{*})'(p)=\Big[2-\frac{\Lambda(\delta^*(p))\,\Lambda''(\delta^*(p))}{\Lambda'(\delta^*(p))^2}\Big]^{-1}>0.\\
        \nonumber
    \end{equation}
\end{itemize}
\end{lemma}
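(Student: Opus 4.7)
The plan is to work directly with the parametric family $h(\delta,p):=\Lambda(\delta)(\delta-p)$ and pin down its supremum before invoking the implicit function theorem. First I would fix $p$ and establish existence and uniqueness of a maximizer $\delta^*(p)$. Because $\Lambda:\mathbb{R}\to\mathbb{R}_+$ is strictly decreasing with $\lim_{x\to\infty}\Lambda(x)=0$, we have $\Lambda>0$ everywhere; hence $\Lambda(\delta)\geq\Lambda(0)>0$ for $\delta\leq 0$, so $h(\delta,p)\to -\infty$ as $\delta\to-\infty$. The bound $\Lambda\Lambda''/(\Lambda')^2\leq 1$ is equivalent to $(\log\Lambda)''\leq 0$, i.e.\ log-concavity; combined with $(\log\Lambda)'=\Lambda'/\Lambda<0$, this forces exponential decay of $\Lambda$ at $+\infty$, hence $h(\delta,p)\to 0$. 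Since $h(p+1,p)=\Lambda(p+1)>0$, the supremum is positive and attained at some interior critical point of the first-order condition
\begin{equation*}
F(\delta,p)\;:=\;\Lambda'(\delta)(\delta-p)+\Lambda(\delta)\;=\;0.
\end{equation*}
Substituting $\delta-p=-\Lambda(\delta)/\Lambda'(\delta)$ from this FOC into $\partial^2_\delta h=\Lambda''(\delta)(\delta-p)+2\Lambda'(\delta)$ produces
\begin{equation*}
\partial^2_\delta h\;=\;\Lambda'(\delta)\Big[\,2-\frac{\Lambda(\delta)\Lambda''(\delta)}{\Lambda'(\delta)^2}\,\Big]\;<\;0
\end{equation*}
at every critical point, because $\Lambda'<0$ and the bracket is $\geq 1$ by Assumption \ref{general_inten}. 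Every critical point is therefore a strict local maximum, so by Rolle's theorem there can be only one, yielding uniqueness.

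Next I would extract the regularity and derivative formulas. Since $\Lambda\in C^2$, $F$ is $C^1$, and the previous display shows $\partial_\delta F(\delta^*(p),p)\neq 0$; the implicit function theorem then produces $\delta^*\in C^1$ with
\begin{equation*}
(\delta^*)'(p)\;=\;-\frac{\partial_p F(\delta^*(p),p)}{\partial_\delta F(\delta^*(p),p)}\;=\;\Big[\,2-\frac{\Lambda(\delta^*(p))\,\Lambda''(\delta^*(p))}{\Lambda'(\delta^*(p))^2}\,\Big]^{-1}\;>\;0,
\end{equation*}
which is Item 3. Differentiating $\mathcal{W}(p)=h(\delta^*(p),p)$ and applying the FOC $\partial_\delta h(\delta^*(p),p)=0$ kills the $(\delta^*)'(p)$ term and leaves $\mathcal{W}'(p)=\partial_p h(\delta^*(p),p)=-\Lambda(\delta^*(p))$. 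Strict decrease of $\mathcal{W}$ follows from $\Lambda>0$; inverting the strictly decreasing $\Lambda$ on its range gives $\delta^*(p)=\Lambda^{-1}(-\mathcal{W}'(p))$, which is the formula in Item 2; and the representation $\mathcal{W}'=-\Lambda\circ\delta^*$ together with $\Lambda\in C^2$ and $\delta^*\in C^1$ shows $\mathcal{W}'\in C^1$, i.e.\ $\mathcal{W}\in C^2$, finishing Item 1.

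The main obstacle is the existence half of Step 1, because coercivity of $h(\cdot,p)$ at $+\infty$ is not automatic: it is precisely the log-concavity extracted from the upper bound $\Lambda\Lambda''/(\Lambda')^2\leq 1$ that guarantees the exponential decay of $\Lambda$ and thereby rules out escape of the supremum. The same inequality, recast as $2-\Lambda\Lambda''/(\Lambda')^2\geq 1>0$, drives both the uniqueness argument and the non-degeneracy needed for the implicit function theorem, so the upper bound in Assumption \ref{general_inten} is doing double duty here. Everything beyond Step 1 then reduces to the implicit function theorem and a routine envelope-type differentiation.
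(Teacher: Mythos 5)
Your proof is correct. Note, however, that the paper does not prove this lemma at all: it is imported verbatim (up to the modified hypothesis) from \cite{gueant2017optimal}, so there is no in-paper argument to compare against. Your route is the standard one that Gu\'eant's proof also follows --- first-order condition, strict negativity of $\partial_\delta^2 h$ at every critical point via the substitution $\delta-p=-\Lambda(\delta)/\Lambda'(\delta)$, the implicit function theorem for $(\delta^*)'$, and the envelope identity $\mathcal{W}'=-\Lambda\circ\delta^*$ --- and all the computations check out: the second-order expression $\Lambda'\bigl[2-\Lambda\Lambda''/(\Lambda')^2\bigr]$ is indeed strictly negative because the bracket is bounded below by $1$ under the paper's Assumption \ref{general_inten}, uniqueness follows since a $C^2$ function all of whose critical points are strict local maxima has at most one critical point, and the local IFT branches glue into a global $C^1$ function $\delta^*$ precisely because the critical point is unique for each $p$. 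The one genuinely nontrivial point you handle, which the paper's citation glosses over, is attainment of the supremum: your observation that $\sup\Lambda\Lambda''/(\Lambda')^2\leq 1$ is exactly log-concavity of $\Lambda$, hence (with $(\log\Lambda)'<0$) exponential decay and $h(\delta,p)\to 0$ as $\delta\to\infty$, is the right way to exploit the paper's strengthened bound; under Gu\'eant's original weaker hypothesis $\sup<2$ this coercivity step would require a separate argument, so it is worth being explicit (as you are) that existence leans on the constant $1$ while uniqueness and non-degeneracy only need the constant $2$.
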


\kong

\begin{remark}
The definition remains the same as in \cite{gueant2017optimal}, with the only modification being the replacement of the original inequality $\sup_{x\in\mathbb{R}}\frac{\Lambda(x)\,\Lambda''(x)}{(\Lambda'(x))^2}<2$ with $\sup_{x\in\mathbb{R}}\frac{\Lambda(x)\,\Lambda''(x)}{(\Lambda'(x))^2}\leq1$, and the addition of $-\infty<\inf_{x\in\mathbb{R}}\frac{\Lambda(x)\,\Lambda''(x)}{(\Lambda'(x))^2}$. This adjustment is made for technical reasons. Note that $\Lambda(x)=u^{-x}$ is eligible for any $u > 1$.
\end{remark}

\kong

\indent Similar to the role played by the linear intensity function in Assumption \ref{linear assumption}, we now introduce the nonlinear context characterized by the usage of general intensity functions.

\kong

\begin{assumption}[General intensity]
\label{general assumption}
The quantity of order flow executed by agent $i$ depends on the difference between her offered price and the best price offered by the others in a nonlinear way. Specifically, if we write $v_t^{i,a}$ and $v_t^{i,b}$ as the (passive) selling and buying rates of the agent at time $t$, we can express this nonlinear dependence as 
\begin{equation*}
    v_t^{i,a}=a_t\,\Lambda(\delta_t^{i,a}-\bar{\delta}_t^{i,a}) \quad\text{and}\quad v_t^{i,b}=b_t\,\Lambda(\delta_t^{i,b}-\bar{\delta}_t^{i,b}),
\end{equation*}
for some $\Lambda\in\boldsymbol{\Lambda}$. We have also assumed the bid-ask symmetry for notational convenience.
\end{assumption}

\kong

\noindent Given any admissible strategy $\boldsymbol{\delta}^i\in\mathbb{A}\times\mathbb{A}$ with
\begin{equation*}
    \mathbb{A}:=\{\delta\in\mathbb{H}^2 :\, |\delta_t|\leq \xi \, \text{ for all } t\in[0,T]\,\} \; \text{ for some constant } \, \xi>0,
\end{equation*}
the inventory and cash of the agent $i$ are given by
\begin{gather*}
    X_t^i=\int_0^t(S_u+\delta_u^{i,a})\,a_u\,\Lambda(\delta_u^{i,a}-\bar{\delta}_u^{i,a})\,du-\int_0^t(S_u-\delta_u^{i,b})\,b_u\,\Lambda(\delta_u^{i,b}-\bar{\delta}_u^{i,b})\,du,\\
    Q_t^i=q_0^i-\int_0^ta_u\,\Lambda(\delta_u^{i,a}-\bar{\delta}_u^{i,a})\,du+\int_0^tb_u\,\Lambda(\delta_u^{i,b}-\bar{\delta}_u^{i,b})\,du,
    \nonumber   
\end{gather*}
with $q_0^i\in\mathbb{R}$ representing the initial inventory. The player $i$ aims at maximizing the objective functional
\begin{equation}
\begin{aligned}
    &J^i(\boldsymbol{\delta}^i; \boldsymbol{\delta}^{-i}):=\mathbb{E}\Big[X_T^i+S_T\,Q_T^i-\int_0^T\phi_t^i\,\big(Q_t^i\big)^2\,dt-A^i\,\big(Q_T^i\big)^2\Big]\\
     &= \mathbb{E}\Big[\int_0^T\delta_t^{i,a}\,a_t\,\Lambda(\delta_t^{i,a}-\bar{\delta}_t^{i,a})\,dt+\int_0^T\delta_t^{i,b}\,b_t\,\Lambda(\delta_t^{i,b}-\bar{\delta}_t^{i,b})\,dt-\int_0^T \phi_t^i\, \big(Q_t^i\big)^2\,dt-A^i\,\big(Q_T^i\big)^2 \Big].
    \label{nonlin_game_obj}
\end{aligned}
\end{equation}
Here, for all $i\in\{1, \dots, N\}$, penalties $\phi^i:=(\phi_t^i)_{t\in[0,T]}\in\mathbb{H}^2$ and $A^i\in L^2(\Omega, \mathcal{F}_T)$ are non-negative, satisfying $\phi_t^i\leq\bar{\phi}$ and $A^i\leq\bar{A}$ for some constants $\bar{\phi}, \bar{A}>0$. The goal is to find a Nash equilibrium in the same sense as \eqref{Nash}.

\kong

\begin{definition}
An admissible strategy profile $(\hat{\boldsymbol{\delta}}^j)_{1\leq j\leq N}\in(\mathbb{A}\times\mathbb{A})^{N}$ is called
a Nash equilibrium if, for all $1\leq i\leq N$ and any admissible strategies $\boldsymbol{\delta}^i\in\mathbb{A}\times\mathbb{A}$, it holds that
\begin{equation*}
    J^i(\boldsymbol{\delta}^i; \hat{\boldsymbol{\delta}}^{-i})\leq J^i(\hat{\boldsymbol{\delta}}^i; \hat{\boldsymbol{\delta}}^{-i}).\\
\end{equation*}

\end{definition}

\kong

\begin{remark}
The constant $\xi$ in the definition of $\mathbb{A}$ serves as a regularization parameter, allowing us to formulate some Lipschitz mappings. In the homogeneous case, we will explore how this constant can be eliminated.
\end{remark}

\kong

In view of the Pontryagin stochastic maximum principle, the Hamiltonian of agent $i$ reads
\begin{equation}
\begin{aligned}
    H^i(t, q^i, y^i, \boldsymbol{\delta}^i; \boldsymbol{\delta}^{-i}) = \big[b_t\,\Lambda(\delta^{i,b}& - \bar{\delta}^{i,b}) - a_t \, \Lambda(\delta^{i,a} -\bar{\delta}^{i,a}) \big] \, y^i\\
    & + b_t \, \delta^{i,b} \, \Lambda( \delta^{i,b} -\bar{\delta}^{i,b}) + a_t \, \delta^{i, a} \, \Lambda(\delta^{i,a} - \bar{\delta}^{i,a}) -\phi_t^i\,\big(q^i\big)^2.
    \label{hamilton}
\end{aligned}
\end{equation}
While $H^i$ exhibits concavity in the state variable $Q^i$, its concavity with respect to the control $\boldsymbol{\delta}^i$ is not assured. This lack of concavity violates the typical stochastic maximum principle, outlined in works such as \cite{carmona2016lectures} and \cite{carmona2018probabilistic}. However, due to the separation between the state variable and control, we can still apply the stochastic maximum principle.

\kong

\begin{definition}
Given an admissible strategy profile $(\boldsymbol{\beta}^i)_{i\in \{1, \dots, N\}} \in (\mathbb{A}\times\mathbb{A})^N$ and the corresponding controlled inventories $(Q^1, \dots, Q^N)$, a set of $N$ pairs $(Y^i, M^i)=(Y_t^i, M_t^i)_{t\in[0, T]}$ of processes in $\mathbb{S}^2$ and $\mathbb{M}$, respectively, for $i = 1, \dots, N$, is said
to be a set of \textit{adjoint processes} associated with $(\boldsymbol{\beta}^i)_{i \in \{1, \dots, N\}}$ if they satisfy the BSDEs
\begin{equation*}
    \left\{
    \begin{aligned}
    \, dY_t^i &= 2\phi_t^i\,Q_t^i\,dt+dM_t^i,\\
    Y_T^i &= -2A^i\,Q_T^i,
    \end{aligned}
    \right.
\end{equation*}
for all $i\in\{1, \dots, N\}$.
\end{definition}

\kong

\begin{proposition}
\label{gen_sto_max}
    Consider an admissible strategy profile $(\hat{\boldsymbol{\delta}}^i)_{i=1}^N \in (\mathbb{A}\times\mathbb{A})^N$. Let $(Q^i)_{i=1}^N$ denote the corresponding controlled inventories and $(Y^i, M^i)_{i=1}^N$ represent the adjoint processes. Then $(\hat{\boldsymbol{\delta}}^i)_{i=1}^N$ forms a Nash equilibrium if and only if the generalized min-max Isaacs condition holds along the optimal paths in the following sense:
    \begin{equation}
        H^i(t, Q_t^i, Y_t^i, \hat{\boldsymbol{\delta}}_t^i; \hat{\boldsymbol{\delta}}_t^{-i} ) = \max_{\boldsymbol{\beta}^i\in[-\xi, \xi]^2}H^i(t, Q_t^i, Y_t^i, \boldsymbol{\beta}^i; \hat{\boldsymbol{\delta}}_t^{-i})
        \label{issacs condition}
    \end{equation}
$dt\times d\mathbb{P}$-a.s. for each $i\in \{1, \dots, N\}$.
\end{proposition}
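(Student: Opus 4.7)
The plan is to prove both implications of the equivalence. The essential structural observation is that $Q^i$ has no diffusion term and its drift depends only on the controls, not on $Q^i$ itself; simultaneously, the only state-dependent piece of $H^i$ is the concave-in-$q$ term $-\phi_t^i(q^i)^2$. This separation compensates for the fact that $H^i$ is not jointly concave in the control $\boldsymbol{\delta}^i$, allowing a tailored Pontryagin argument to replace the standard one.

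For sufficiency, assume the Isaacs condition \eqref{issacs condition}. Fix a deviation $\boldsymbol{\delta}^i$ against frozen opponents $\hat{\boldsymbol{\delta}}^{-i}$, with induced inventory $Q^i$. I will combine two ingredients: (i) the elementary convexity estimate $(Q_t^i)^2-(\hat{Q}_t^i)^2 \ge 2\hat{Q}_t^i(Q_t^i - \hat{Q}_t^i)$ together with its analogue at time $T$, which upper-bounds the state-cost gap; and (ii) an integration-by-parts of $Y_t^i(Q_t^i - \hat{Q}_t^i)$ in which the cross-bracket vanishes since both $Q^i$ and $\hat{Q}^i$ are absolutely continuous. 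Using the adjoint BSDE, the resulting state-pairing collapses into a drift-pairing with $Y^i$, giving
\[
J^i(\boldsymbol{\delta}^i;\hat{\boldsymbol{\delta}}^{-i}) - J^i(\hat{\boldsymbol{\delta}}^i;\hat{\boldsymbol{\delta}}^{-i}) \le \mathbb{E}\int_0^T \bigl[H^i(t, Q_t^i, Y_t^i, \boldsymbol{\delta}_t^i;\hat{\boldsymbol{\delta}}_t^{-i}) - H^i(t, Q_t^i, Y_t^i, \hat{\boldsymbol{\delta}}_t^i;\hat{\boldsymbol{\delta}}_t^{-i})\bigr]\,dt,
\]
where the $-\phi_t^i(Q_t^i)^2$ piece cancels on both sides. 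The Isaacs condition then forces the integrand to be non-positive.

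For necessity, suppose $(\hat{\boldsymbol{\delta}}^i)$ is a Nash equilibrium. I will use a needle perturbation: given $\tau\in[0,T)$ and an $\mathcal{F}_\tau$-measurable $\boldsymbol{\beta}\in[-\xi,\xi]^2$, set $\boldsymbol{\delta}^{i,\epsilon}_t = \boldsymbol{\beta}$ on $[\tau,\tau+\epsilon]$ and $\hat{\boldsymbol{\delta}}_t^i$ elsewhere. The perturbed inventory $Q^{i,\epsilon}$ differs from $\hat{Q}^i$ by an amount $O(\epsilon)$ uniformly in $t$, so the quadratic state-cost admits the expansion $(Q^{i,\epsilon}_t)^2-(\hat{Q}_t^i)^2 = 2\hat{Q}_t^i\eta_t + O(\epsilon^2)$ with $\eta:=Q^{i,\epsilon}-\hat{Q}^i$. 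Conditioning at $\tau$ and applying the BSDE identity $\mathbb{E}[\int_\tau^T 2\phi_t^i\hat{Q}_t^i\,dt + 2A^i\hat{Q}_T^i \mid \mathcal{F}_\tau] = -Y_\tau^i$ give the first-order expansion
\[
J^i(\boldsymbol{\delta}^{i,\epsilon};\hat{\boldsymbol{\delta}}^{-i}) - J^i(\hat{\boldsymbol{\delta}}^i;\hat{\boldsymbol{\delta}}^{-i}) = \epsilon\,\mathbb{E}\bigl[H^i(\tau, \hat{Q}_\tau^i, Y_\tau^i, \boldsymbol{\beta};\hat{\boldsymbol{\delta}}_\tau^{-i}) - H^i(\tau, \hat{Q}_\tau^i, Y_\tau^i, \hat{\boldsymbol{\delta}}_\tau^i;\hat{\boldsymbol{\delta}}_\tau^{-i})\bigr] + o(\epsilon),
\]
whose right-hand side must be $\le 0$ by Nash optimality.

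The main hurdle is the \emph{lifting} step: passing from this expected inequality to the pointwise statement \eqref{issacs condition} over the uncountable action set $[-\xi,\xi]^2$. Taking $\boldsymbol{\beta} = \beta\,\mathbf{1}_A + \hat{\boldsymbol{\delta}}_\tau^i\,\mathbf{1}_{A^c}$ with deterministic $\beta$ and arbitrary $A\in\mathcal{F}_\tau$ localizes the inequality in $\omega$ for each fixed $\beta$; a countable dense selection of $\beta$ together with the continuity of $H^i$ in the control (inherited from that of $\Lambda$ in Assumption \ref{general_inten}) then extends the bound to all $\beta$, and a Lebesgue-point argument upgrades the fixed $\tau$ to $dt$-a.e.~$t$.
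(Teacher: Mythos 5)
Your proof is correct, and the sufficiency half is essentially the paper's argument (convexity of the quadratic penalties, integration by parts against the adjoint process with no bracket term since the inventories are absolutely continuous, and cancellation of the state-dependent part of $H^i$). The necessity half, however, takes a genuinely different route. The paper perturbs convexly, $\boldsymbol{\delta}^{i,\epsilon}=\hat{\boldsymbol{\delta}}^i+\epsilon(\boldsymbol{\beta}-\hat{\boldsymbol{\delta}}^i)$, computes the G\^ateaux derivative of $J^i$ via a duality relation, and obtains only the first-order condition $\partial_{\boldsymbol{\delta}}H^i\cdot(\boldsymbol{\beta}-\hat{\boldsymbol{\delta}}^i_t)\le 0$; it then must invoke the specific structure of $\Lambda$ --- the monotonicity of $\delta\mapsto\delta+\Lambda(\delta)/\Lambda'(\delta)$ from Assumption \ref{general_inten}(3), which makes $H^i$ unimodal in each of $\delta^{i,a}$, $\delta^{i,b}$ --- to upgrade stationarity (or the boundary KKT cases at $\pm\xi$) to a global maximum over $[-\xi,\xi]^2$. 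Your needle variation on $[\tau,\tau+\epsilon]$ compares $\hat{\boldsymbol{\delta}}^i_\tau$ directly against an arbitrary action $\boldsymbol{\beta}$, so it delivers the global maximality of the Hamiltonian without any unimodality or differentiability-in-the-control argument; the price is the extra localization machinery (indicator perturbations over $A\in\mathcal{F}_\tau$, a countable dense set of actions, continuity of $H^i$ in $\boldsymbol{\beta}$, and a Lebesgue-point argument in $\tau$), all of which is standard and goes through here precisely because the state has no diffusion and its drift is state-independent, so $\eta=Q^{i,\epsilon}-\hat{Q}^i$ is $O(\epsilon)$ uniformly and the quadratic penalties expand to first order with $o(\epsilon)$ error. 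Your approach is the more robust one (it would survive intensity functions violating the unimodality structure), while the paper's is shorter once the G\^ateaux/duality computation is set up and makes transparent exactly which feature of $\Lambda$ is being used.
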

\begin{proof}
See the appendix.
\end{proof}

\kong

\noindent To maximize the Hamiltonian simultaneously for all agents, it must hold for all $i$ that
\begin{equation}
\begin{aligned}
    \delta^{i,b}&=\big[\Bar{\delta}^{i,b}+\delta^*(-y^i-\Bar{\delta}^{i,b})\big]\vee(-\xi)\wedge\xi,\\
    \delta^{i,a}&=\big[\Bar{\delta}^{i,a}+\delta^*(y^i-\Bar{\delta}^{i,a})\big]\vee(-\xi)\wedge\xi;
    \label{max_hamil_1}
\end{aligned}
\end{equation}
see \cite{guo2023macroscopic} for the derivation based on Lemma \ref{inten_fun}. We claim that, for any $\boldsymbol{y}:=(y^1, \dots, y^N)\in\mathbb{R}^N$, there exist $\boldsymbol{\delta}^b:=(\delta^{1,b}, \dots, \delta^{N, b}) \in \mathbb{R}^N$ and $\boldsymbol{\delta}^a \in \mathbb{R}^N$ such that \eqref{max_hamil_1} holds. Indeed, the compactness brought by the truncation $\xi$ enables us to find such $\boldsymbol{\delta}^a$ and $\boldsymbol{\delta}^b$ through the Schauder fixed-point theorem. Define functions $\Psi^a, \Psi^b:\mathbb{R}^N \times \mathbb{R}^N \to \mathbb{R}^N$ as
\begin{equation}
\begin{aligned}
    \Psi^{i,b}(\boldsymbol{\delta}^b, \boldsymbol{y})&=\delta^{i,b}-\big[\Bar{\delta}^{i,b}+\delta^*(-y^i-\Bar{\delta}^{i,b})\big]\vee(-\xi)\wedge\xi,\\
    \Psi^{i,a}(\boldsymbol{\delta}^a, \boldsymbol{y})&=\delta^{i,a}-\big[\Bar{\delta}^{i,a}+\delta^*(y^i-\Bar{\delta}^{i,a})\big]\vee(-\xi)\wedge\xi,
\end{aligned}
\label{def_mapping}
\end{equation}
for all $i\in\{1,\dots, N\}$. Here, the additional superscript $i$ represents the $i$-th entry of the vector. Consequently, the Issacs condition can be represented as
\begin{equation}
    \Psi^{i,b}(\boldsymbol{\delta}^b, \boldsymbol{y})=0 \quad \text{and} \quad 
    \Psi^{i,a}(\boldsymbol{\delta}^a, \boldsymbol{y})=0.
    \label{equiv issac}
\end{equation}
We intend to find Lipschitz functions $\psi^a, \psi^b:\mathbb{R}^N\to\mathbb{R}^N$ such that
\begin{equation*}
    \Psi^b(\psi^b(\boldsymbol{y}), \boldsymbol{y})=0 \quad \text{and} \quad \Psi^a(\psi^a(\boldsymbol{y}), \boldsymbol{y})=0.
\end{equation*}
Functions $\psi^a$ and $\psi^b$ are known as implicit functions, and their existence, uniqueness, and regularity are the main focus of the (global) implicit function theorem. While the original theorem primarily dealt with the case when $\Psi^a$ and $\Psi^b$ are smooth, a local implicit function theorem was first introduced in \cite{clarke1990optimization} to handle locally Lipschitz non-smooth mappings. Subsequently, \cite{galewski2018global} investigated a global implicit function theorem for the same type of mappings, where the resulting implicit function was locally Lipschitz. Building upon these works, we propose a global implicit function theorem for non-smooth mappings with Lipschitz implicit functions.

\kong

\begin{proposition}
\label{general_implicit}
Assume that $F:\mathbb{R}^n\times\mathbb{R}^m\to \mathbb{R}^n$ is a locally Lipschitz mapping such that:

\kong

\begin{itemize}
    \item [1.] For every $y\in\mathbb{R}^m$, the functional $\varphi_y:\mathbb{R}^n\to\mathbb{R}$ given by the formula
    \begin{equation*}
        \varphi_y(x):=\frac{1}{2}\,|F(x,y)|^2
    \end{equation*}
    is coercive, i.e., $\lim_{|x|\to\infty}\varphi_y(x)=\infty$;\\
    
    \item[2.] The set $\partial_xF(x,y)$ is of maximal rank for all $(x,y)\in\mathbb{R}^n\times\mathbb{R}^m$;\\
    
    \item[3.] Define the function $\tilde{F}:\mathbb{R}^{m+n}\to\mathbb{R}^{m+n}$ by
    \begin{equation*}
        \tilde{F}(x,y)=\big(y, \;F(x,y)\big)
    \end{equation*}
    and let $\tilde{S}$ denote the unit sphere in $\mathbb{R}^{m+n}$. There exists a constant $\upsilon>0$ such that the distance between $\partial \tilde{F}(x,y)\,\tilde{S}$ and $0$ is at least $\upsilon$, for all $(x,y)\in\mathbb{R}^n\times\mathbb{R}^m$. Here, set $\partial \tilde{F}(x,y)\,\tilde{S}$ is defined by
    \begin{equation*}
        \partial \tilde{F}(x,y) \, \tilde{S} := \big\{ U \, v \, : \, U \in \tilde{F}(x,y) \text{\, and \,} v \in \Tilde{S} \big\}
    \end{equation*}
    and the distance refers to $\inf\{|U\,v| \, : \, U \in \tilde{F}(x,y) \text{\, and \,} v \in \Tilde{S} \}$.
\end{itemize}
\kong

\noindent Then, there exists a unique Lipschitz function $f:\mathbb{R}^m\to\mathbb{R}^n$ such that $F(x,y)=0$ and $x=f(y)$ are equivalent in the set $\mathbb{R}^n\times\mathbb{R}^m$. 
\end{proposition}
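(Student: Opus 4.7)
My plan is to first invoke the known global implicit function theorem for locally Lipschitz maps to obtain a unique locally Lipschitz $f$, and then upgrade the estimate to a global Lipschitz bound using hypothesis (3) via a Clarke chain-rule argument on the graph.

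Hypotheses (1) and (2) coincide with the standing assumptions of the global implicit function theorem for locally Lipschitz mappings in \cite{galewski2018global}, which produces a unique locally Lipschitz $f:\mathbb{R}^m\to\mathbb{R}^n$ with $\{F(x,y)=0\}=\{x=f(y)\}$. Existence on each slice follows from fiberwise coercivity: any minimizer $x_0$ of $\varphi_y$ satisfies $0\in\partial_x\varphi_y(x_0)\subseteq\{U^{\top}F(x_0,y):U\in\partial_xF(x_0,y)\}$, and maximal rank of $\partial_xF$ forces $F(x_0,y)=0$. Clarke's non-smooth inverse function theorem provides the local Lipschitz branches that a standard continuation argument assembles into the global $f$.

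The genuinely new point is to turn hypothesis (3) into a pointwise bound on $Df$. Since $\tilde{F}(x,y)=(y,F(x,y))$, the affine correspondence sending $[A\mid B]$ to $\begin{pmatrix}0 & I_m \\ A & B\end{pmatrix}$ commutes with limits and convex combinations, so every $M\in\partial\tilde{F}(x,y)$ takes this block form with $[A\mid B]\in\partial F(x,y)$, and conversely. Hypothesis (3) asserts exactly that every such $M$ is invertible with $\|M^{-1}\|\leq 1/\upsilon$; block inversion yields
\begin{equation*}
M^{-1}=\begin{pmatrix}-A^{-1}B & A^{-1} \\ I_m & 0\end{pmatrix},
\end{equation*}
so that $\|A^{-1}B\|\leq 1/\upsilon$ for every $[A\mid B]\in\partial F(x,y)$. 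Since $f$ is locally Lipschitz, Rademacher's theorem gives classical differentiability a.e., and applying Clarke's chain rule \cite{clarke1990optimization} to the identically zero map $G(y):=F(f(y),y)$ produces, at each such $y$, some $[A\mid B]\in\partial F(f(y),y)$ with $A\,Df(y)+B=0$. The previous bound then forces $\|Df(y)\|\leq 1/\upsilon$ almost everywhere, and integrating along line segments gives $|f(y_1)-f(y_2)|\leq(1/\upsilon)|y_1-y_2|$.

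The main obstacle is that the Clarke chain rule is only a set-valued inclusion, so $0\in\partial G(y)$ merely furnishes \emph{some} $[A\mid B]$ satisfying $A\,Df(y)+B=0$, rather than identifying it explicitly. This is precisely why hypothesis (3) is formulated on $\tilde{F}$ instead of $\partial_xF$ alone: the block structure of $\tilde{F}$ is preserved under convex combinations and limits, so the bound $\|A^{-1}B\|\leq 1/\upsilon$ applies to whichever $[A\mid B]$ the chain rule selects, and the global Lipschitz estimate follows without additional regularity assumptions on $F$.
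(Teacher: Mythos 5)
Your first step (existence and uniqueness of a locally Lipschitz $f$ from hypotheses (1)--(2) via the global implicit function theorem of Galewski--Koniorczyk, built on Clarke's inverse function theorem) coincides with the paper's. For the global Lipschitz bound you then depart from the paper: the paper simply traces the constant through the proof of Clarke's Theorem 7.1.1, where the local Lipschitz inverse has constant $2/\upsilon$ once the distance in hypothesis (3) is bounded below by $\upsilon$ uniformly in $(x,y)$, and concatenates these uniform local estimates. Your block computation of $\partial\tilde F(x,y)$ and of $M^{-1}$, yielding $\|A^{-1}B\|\le 1/\upsilon$ for every $[A\,|\,B]\in\partial F(x,y)$, is correct and is a nice quantitative reformulation of hypothesis (3).

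The gap is in the chain-rule step. Clarke's Jacobian chain rule gives only $\{0\}=\partial G(y)\subseteq\overline{\mathrm{co}}\,\{MN: M\in\partial F(f(y),y),\ N\in\partial h(y)\}$ with $h(y)=(f(y),y)$, and two problems follow. First, at a Rademacher point $\partial h(y)$ need not reduce to the singleton $\{Dh(y)\}$ (pointwise differentiability is not strict differentiability), so the $N$'s may involve elements $K\in\partial f(y)$ other than $Df(y)$. Second, the set of products $\{MN\}$ is not convex, so $0$ being a limit of convex combinations $\sum_i\lambda_i(A_iK_i+B_i)$ does not produce a single pair $[A\,|\,B]$ with $A\,Df(y)+B=0$; your closing paragraph treats this as a mere non-identification issue, but the existence of such a pair is exactly what is left unproved. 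The step is repairable: fix a unit direction $e$, apply the mean value theorem of Theorem \ref{mean value and calculus} to $0=F(f(y+te),y+te)-F(f(y),y)$, divide by $t$, and use upper semicontinuity together with the convexity and compactness of $\partial F(f(y),y)$ to obtain, for each $e$, some $[A_e\,|\,B_e]\in\partial F(f(y),y)$ with $A_e\,Df(y)e+B_ee=0$; then $|Df(y)e|=|A_e^{-1}B_ee|\le 1/\upsilon$, which suffices for $\|Df(y)\|\le 1/\upsilon$ almost everywhere and hence for the global Lipschitz bound. With that replacement your argument becomes a valid alternative to the paper's constant-tracking route.
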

\begin{proof}
See the \hyperref[maximum prin]{appendix}.
\end{proof}

\kong

\noindent We intend to use Proposition \ref{general_implicit} to solve the Issacs condition with Lipschitz mappings. The following result in \cite{varah1975lower} turns out to be helpful.

\kong

\begin{theorem}[\cite{varah1975lower}]
\label{inverse_norm_matrix}
Assume $\boldsymbol{A}\in\mathbb{R}^{n\times n}$ is strictly diagonally dominant (by rows) matrix and set the `gap' $\alpha=\min_{1\leq k\leq n}\{\boldsymbol{A}_{kk}-\sum_{j\neq k}|\boldsymbol{A}_{kj}|\}$. Then, $\|\boldsymbol{A}^{-1}\|_{\infty}\leq 1\,/\,\alpha$, where $\|\cdot\|_\infty$ is the matrix norm induced by vector $\infty$-norm.
\end{theorem}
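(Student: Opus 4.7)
The plan is to prove this by the standard inverse-of-diagonally-dominant-matrix argument, which reduces everything to a componentwise lower bound on $\|Ay\|_\infty$ in terms of $\|y\|_\infty$.

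First I would observe that strict diagonal dominance with positive gap $\alpha > 0$ already guarantees $\boldsymbol{A}$ is invertible (Levy--Desplanques theorem), so $\boldsymbol{A}^{-1}$ exists and $\|\boldsymbol{A}^{-1}\|_\infty$ is well defined. By the characterization $\|\boldsymbol{A}^{-1}\|_\infty = \sup_{x\neq 0}\|\boldsymbol{A}^{-1}x\|_\infty / \|x\|_\infty$, after the substitution $y = \boldsymbol{A}^{-1}x$ it suffices to establish
\begin{equation*}
\|y\|_\infty \;\leq\; \frac{1}{\alpha}\,\|\boldsymbol{A}y\|_\infty \qquad \text{for every } y \in \mathbb{R}^n.
\end{equation*}

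To prove this, fix an arbitrary nonzero $y \in \mathbb{R}^n$ and pick an index $k$ attaining $|y_k| = \|y\|_\infty$. Expanding the $k$-th coordinate of $\boldsymbol{A}y$ and using the reverse triangle inequality together with $|y_j| \leq |y_k|$ gives
\begin{equation*}
|(\boldsymbol{A}y)_k| \;=\; \Bigl| \boldsymbol{A}_{kk}\,y_k + \sum_{j\neq k} \boldsymbol{A}_{kj}\,y_j \Bigr| \;\geq\; \boldsymbol{A}_{kk}\,|y_k| - \sum_{j\neq k}|\boldsymbol{A}_{kj}|\,|y_j| \;\geq\; \Bigl(\boldsymbol{A}_{kk} - \sum_{j\neq k}|\boldsymbol{A}_{kj}|\Bigr)\,\|y\|_\infty.
\end{equation*}
By the definition of $\alpha$, the parenthesised quantity is at least $\alpha$, and since $\|\boldsymbol{A}y\|_\infty \geq |(\boldsymbol{A}y)_k|$ we conclude $\|\boldsymbol{A}y\|_\infty \geq \alpha\,\|y\|_\infty$, which is the desired inequality.

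There is really no obstacle here; the only point that warrants a sentence is the sign convention. The statement as written uses $\boldsymbol{A}_{kk}$ rather than $|\boldsymbol{A}_{kk}|$ in the definition of $\alpha$, which implicitly requires $\boldsymbol{A}_{kk} \geq \sum_{j\neq k}|\boldsymbol{A}_{kj}| \geq 0$, so the absolute value in the estimate above can be dropped without loss. This is consistent with the intended application to $M$-matrices arising in \eqref{M-matrix remark}, where the diagonal entries are positive by construction, so the bound $\|\boldsymbol{A}^{-1}\|_\infty \leq 1/\alpha$ follows immediately from the display above.
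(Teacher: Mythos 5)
Your argument is correct and is precisely the standard proof (indeed Varah's own): bounding $|(\boldsymbol{A}y)_k|$ from below at a coordinate where $|y_k|=\|y\|_\infty$ gives $\|\boldsymbol{A}y\|_\infty\geq\alpha\|y\|_\infty$, which is equivalent to the stated bound. The paper itself imports this result from \cite{varah1975lower} without proof, so there is nothing to compare against beyond noting that your remark on the sign convention (the gap being written with $\boldsymbol{A}_{kk}$ rather than $|\boldsymbol{A}_{kk}|$, consistent with the positive-diagonal matrices it is applied to) is apt.
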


\kong

\begin{theorem}
\label{Verification}
(1) There exist unique Lipschitz functions $\psi^a, \psi^b:\mathbb{R}^N\to\mathbb{R}^N$ such that $\Psi^b(\boldsymbol{\delta}^b, \boldsymbol{y})=0$ and $\boldsymbol{\delta}^b=\psi^b(\boldsymbol{y})$ are equivalent in the set $\mathbb{R}^N\times\mathbb{R}^N$. The same is true for $\Psi^a(\boldsymbol{\delta}^a, \boldsymbol{y})=0$ and $\boldsymbol{\delta}^a=\psi^a(\boldsymbol{y})$.

(2) An admissible strategy profile $(\hat{\boldsymbol{\delta}}^i)_{i\in {1, \dots, N}} \in (\mathbb{A}\times\mathbb{A})^N$ forms a Nash equilibrium if and only if the profile, together with the adjoint processes $(Y^i, M^i)_{i\in {1, \dots, N}}$, satisfies the system of FBSDEs
\begin{equation}
    \left\{
    \begin{aligned}
     dQ_t^i &= b_t\,\Lambda\big(\psi^{i,b}(\boldsymbol{Y}_t)-\bar{\psi}^{i,b}(\boldsymbol{Y}_t)\big)\,dt-a_t\,\Lambda\big(\psi^{i,a}(\boldsymbol{Y}_t)-\bar{\psi}^{i,a}(\boldsymbol{Y}_t)\big)\,dt,\\
    \, dY_t^i &= 2\phi_t^i\,Q_t^i\,dt+dM_t^i,\\
    Q_0^i &= q_0^i, \quad Y_T^i = -2A^i\,Q_T^i,
    \end{aligned}
    \right.
    \label{general FBSDE}
\end{equation}
for all $i$. Here, the additional superscript $i$ of $\psi^b$ indicates the $i$-th entry of the vector and $\Bar{\psi}^{i,b}(\boldsymbol{y}):=\min_{j\neq i}\psi^{j,b}(\boldsymbol{y})$.
\end{theorem}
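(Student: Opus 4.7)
The plan is to prove Part (1) by applying the global non-smooth implicit function theorem (Proposition \ref{general_implicit}) to the mappings $\Psi^a,\Psi^b$ defined in \eqref{def_mapping}, and to derive Part (2) by combining the resulting Lipschitz implicit functions with Proposition \ref{gen_sto_max}. By bid-ask symmetry the two cases are identical in structure, so I would work throughout with $\Psi^b$. Let $\mu:=1/(2-\inf_x\Lambda(x)\Lambda''(x)/\Lambda'(x)^2)>0$, which by Assumption \ref{general_inten} and Lemma \ref{inten_fun} bounds $(\delta^*)'\in[\mu,1]$. Since $\min$ and the truncation $\cdot\vee(-\xi)\wedge\xi$ are $1$-Lipschitz, $\Psi^b$ is globally (hence locally) Lipschitz. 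Writing $\Psi^{i,b}(\boldsymbol{\delta}^b,\boldsymbol{y})=\delta^{i,b}-G^{i}(\boldsymbol{\delta}^b,\boldsymbol{y})$ with $|G^{i}|\leq\xi$, one obtains $|\Psi^{i,b}|\geq|\delta^{i,b}|-\xi$, which yields coercivity of $\varphi_{\boldsymbol{y}}:=\tfrac12|\Psi^b(\cdot,\boldsymbol{y})|^2$ uniformly in $\boldsymbol{y}$.

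The hard part is verifying the two remaining hypotheses of Proposition \ref{general_implicit}, and the main idea is to exhibit an $M$-matrix-like structure of the Clarke generalized Jacobian. For any $A\in\partial_{\boldsymbol{\delta}^b}\Psi^b(\boldsymbol{\delta}^b,\boldsymbol{y})$ and any $i$, chain-rule arguments for Clarke subdifferentials yield that, since $\delta^{i,b}$ does not appear in $\bar{\delta}^{i,b}=\min_{j\neq i}\delta^{j,b}$, one has $A_{ii}=1$, while for $k\neq i$ the entry $A_{ik}=-\tau_i\rho_i c_{ik}$ where $\tau_i\in[0,1]$ is the truncation multiplier, $\rho_i:=1-(\delta^*)'(-y^i-\bar{\delta}^{i,b})\in[0,1-\mu]$, and the $c_{ik}\geq 0$ with $\sum_{k\neq i}c_{ik}\leq 1$ are the convex weights of the Clarke subdifferential of $\min$. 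Hence $\sum_{k\neq i}|A_{ik}|\leq 1-\mu$, so each row is strictly diagonally dominant with gap at least $\mu$, and Theorem \ref{inverse_norm_matrix} gives $\|A^{-1}\|_\infty\leq 1/\mu$ uniformly in $(\boldsymbol{\delta}^b,\boldsymbol{y})$, which settles the maximal rank requirement. For the distance condition on $\partial\tilde{\Psi}\,\tilde{S}$, every $U\in\partial\tilde{\Psi}$ has block form with top-left $0$, top-right $I$, bottom-left $A$, bottom-right $B=\partial_{\boldsymbol{y}}\Psi^b$; a direct computation shows $B$ is diagonal with entries $\tau_i(\delta^*)'\in[0,1]$. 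For a unit vector $v=(v_1,v_2)$, $|Uv|^2=|v_2|^2+|Av_1+Bv_2|^2$; splitting the two cases $|v_2|\geq\eta$ and $|v_2|<\eta$ and combining $|Av_1|\geq(\mu/\sqrt{N})|v_1|$ (from $\|A^{-1}\|_\infty\leq 1/\mu$) with $\|B\|\leq1$, an elementary optimization in $\eta$ produces a uniform lower bound $\upsilon>0$.

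With all hypotheses of Proposition \ref{general_implicit} in place, Part (1) yields the Lipschitz implicit functions $\psi^a,\psi^b$. For Part (2), Proposition \ref{gen_sto_max} reduces the Nash equilibrium property to the pointwise Isaacs condition \eqref{issacs condition}. The derivation leading to \eqref{max_hamil_1}, based on the separation of $Q^i$ from $\boldsymbol{\delta}^i$ in the Hamiltonian \eqref{hamilton} and on Lemma \ref{inten_fun}, identifies this condition with $\Psi^a(\hat{\boldsymbol{\delta}}^a,\boldsymbol{Y})=\Psi^b(\hat{\boldsymbol{\delta}}^b,\boldsymbol{Y})=0$, which by Part (1) is equivalent to $\hat{\boldsymbol{\delta}}^a=\psi^a(\boldsymbol{Y})$ and $\hat{\boldsymbol{\delta}}^b=\psi^b(\boldsymbol{Y})$. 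Substituting these expressions into the controlled state dynamics for $Q^i$ and keeping the adjoint BSDEs for $Y^i$ produces exactly the system \eqref{general FBSDE}.
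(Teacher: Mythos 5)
Your proposal is correct and follows essentially the same route as the paper: verify the hypotheses of Proposition \ref{general_implicit} for $\Psi^b$ (global Lipschitz continuity, coercivity from the $\xi$-truncation, maximal rank via strict row diagonal dominance with gap $\inf_x(\delta^*)'(x)$ and Varah's bound, then the uniform distance condition for the augmented map $\tilde F$), and then invoke the stochastic maximum principle to identify the Isaacs condition with $\Psi^a=\Psi^b=0$ and hence with the feedback form $\psi^a(\boldsymbol{Y}),\psi^b(\boldsymbol{Y})$ in \eqref{general FBSDE}. The only variation is cosmetic: where the paper bounds the smallest singular value of $\tilde F$'s generalized Jacobian by explicitly inverting the block matrix and estimating $\|\cdot\|_2$ via $\|\cdot\|_\infty$, you obtain the same uniform lower bound by a direct two-case estimate on $|Uv|$ for unit vectors, which is equally valid.
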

\begin{proof}
(1) We look at the bid side, where Proposition \ref{general_implicit} is applied for this proof. First, let us show the Lipschitz property of $\Psi^b$ and verify the first two conditions stated in Proposition \ref{general_implicit}. To show that $\Psi^b$ is Lipschitz, it suffices to prove the Lipschitz property of $\bar{\delta}^{i,b}$ for all $i$, since $\delta^*$ has a bounded derivative. Indeed, the third condition of Assumption \ref{inten_assu} along with Lemma \ref{inten_fun} infers such boundedness. Given any $\boldsymbol{\alpha}^b$, $\boldsymbol{\beta}^b\in\mathbb{R}^N$, let us set $j:=\argmin_{l\neq i}\alpha^{l,b}$, $k:=\argmin_{l\neq i}\beta^{l,b}$ and then observe the following:
\begin{equation*}
\begin{aligned}
    \text{if \;} \alpha^{j,b}\geq\beta^{k,b}, &\text{\; then \;} 0\leq\alpha^{j,b}-\beta^{k,b}\leq \alpha^{k,b}-\beta^{k,b};\\
    \text{if \;} \alpha^{j,b}\leq\beta^{k,b}, &\text{\; then \;} 0\geq\alpha^{j,b}-\beta^{k,b}\geq \alpha^{j,b}-\beta^{j,b}.\\
\end{aligned}
\end{equation*}
This observation helps us deduce that
\begin{equation*}
|\Bar{\alpha}^{i,b}-\Bar{\beta}^{i,b}|=|\alpha^{j,b}-\beta^{k,b}|\leq\max\big(|\alpha^{j,b}-\beta^{j,b}|,\,|\alpha^{k,b}-\beta^{k,b}|\big)\leq |\alpha^{j,b}-\beta^{j,b}|+|\alpha^{k,b}-\beta^{k,b}|,
\end{equation*}
and hence function $\Psi^b$ is Lipschitz. The coercive property is clear because the second term of 
\begin{equation*}
     \Psi_i^b(\boldsymbol{\delta}^b, \boldsymbol{y})=\delta^{i,b}-\big[\Bar{\delta}^{i,b}+\delta^*(-y^i-\Bar{\delta}^{i,b})\big]\vee(-\xi)\wedge\xi
\end{equation*}
is bounded by $\xi$ for all $(\boldsymbol{\delta}^b, \boldsymbol{y})$. Finally, cases when $\Psi_i^b$ is differentiable consist of
\begin{equation*}
\begin{aligned}
    \Psi_i^b(\boldsymbol{\delta}^b, \boldsymbol{y})&=\delta^{i,b}-\xi,\\
    \Psi_i^b(\boldsymbol{\delta}^b, \boldsymbol{y})&=\delta^{i,b}-\big[\delta^{j,b}+\delta^*(-y^i-\delta^{j,b})\big],\\
    \Psi_i^b(\boldsymbol{\delta}^b, \boldsymbol{y})&=\delta^{i,b}+\xi,
\end{aligned} 
\end{equation*}
for a unique index $j \neq i$. Consequently, whenever $\Psi^b$ is differentiable with respect to $\boldsymbol{\delta}^b$, the $i$-th row of the Jacobian matrix $\nabla_{\boldsymbol{\delta}^b}\Psi^b$ is a vector with $1$ in the $i$-th coordinate, and $0$ or $-1+(\delta^*)'(-y^i-\delta^{j,b})$ in the $j$-th coordinate. Since
\begin{equation*}
0<\inf_{x\in\mathbb{R}}(\delta^*)'(x)\leq\sup_{x\in\mathbb{R}}(\delta^*)'(x)\leq 1
\end{equation*}
due to Lemma \ref{inten_fun}, it follows $\nabla_{\boldsymbol{\delta}^b}\Psi^b$ is always strictly row diagonally dominant and thus is of maximal rank. Defined as the convex hull of selected limits in $\nabla_{\boldsymbol{\delta}^b} \Psi^b$, each matrix in $\partial_{\boldsymbol{\delta}^b} \Psi^b$ is also strictly diagonally dominant, hence has a maximal rank.

Define the function $\tilde{F}:\mathbb{R}^{2N}\to\mathbb{R}^{2N}$ by $\tilde{F}(\boldsymbol{y}, \boldsymbol{\delta}^b) = \big(\boldsymbol{y}, \,\Psi^b(\boldsymbol{\delta}^b, \boldsymbol{y})\big)$. Whenever $\Tilde{F}$ is differentiable, the Jacobian matrix $\nabla\Tilde{F}$ has the block form
\begin{equation}
\nabla\Tilde{F}(\boldsymbol{\delta}^b, \boldsymbol{y})=
\begin{pmatrix}
I & 0\\
\nabla_{\boldsymbol{y}}\Psi^b(\boldsymbol{\delta}^b, \boldsymbol{y}) & \nabla_{\boldsymbol{\delta}^b}\Psi^b(\boldsymbol{\delta}^b, \boldsymbol{y})
\end{pmatrix}
,
\label{block_matrix}
\end{equation}
where $I\in\mathbb{R}^{N\times N}$ is the identity matrix and $\nabla_{\boldsymbol{y}}\Psi^b(\boldsymbol{\delta}^b, \boldsymbol{y})$ is a diagonal matrix with 
\begin{equation*}
    \big[\nabla_{\boldsymbol{y}}\Psi^b(\boldsymbol{\delta}^b, \boldsymbol{y})\big]_{ii} \text{\; being \;} 0 \text{\; or \;} (\delta^*)'(-y^i-\delta^{j,b})
\end{equation*}
for all $i$ and some $j\neq i$. Taking the advantage of the singular value decomposition (SVD), we now show the smallest singular value of $\nabla\Tilde{F}(\boldsymbol{\delta}^b, \boldsymbol{y})$ is bounded away from $0$, in order to meet the last condition of Proposition \ref{general_implicit}. First, recall that the smallest singular value of a matrix is the reciprocal of the $2$-norm of its inverse, i.e.,
\begin{equation*}
    \underline{\sigma}(\boldsymbol{\delta}^b, \boldsymbol{y}) =\frac{1}{\|\nabla\Tilde{F}(\boldsymbol{\delta}^b, \boldsymbol{y})^{-1}\|_2}.
\end{equation*}
Here, we denote by $\underline{\sigma}(\boldsymbol{\delta}^b, \boldsymbol{y})$ the smallest singular value of $\nabla\Tilde{F}(\boldsymbol{\delta}^b, \boldsymbol{y})$ and use $\|\cdot\|_p$ to indicate the matrix norm induced by the vector $p$-norm. Since
\begin{equation*}
    \nabla\Tilde{F}(\boldsymbol{\delta}^b, \boldsymbol{y})^{-1}=
    \begin{pmatrix}
I & 0\\
-\nabla_{\boldsymbol{\delta}^b}\Psi^b(\boldsymbol{\delta}^b, \boldsymbol{y})^{-1}\,\nabla_{\boldsymbol{y}}\Psi^b(\boldsymbol{\delta}^b, \boldsymbol{y}) & \nabla_{\boldsymbol{\delta}^b}\Psi^b(\boldsymbol{\delta}^b, \boldsymbol{y})^{-1}
\end{pmatrix}
,
\end{equation*}
an application of the triangle inequality and the matrix norm inequality $\|\cdot\|_2\leq \sqrt{N}\,\|\cdot\|_\infty$ implies
\begin{equation*}
\begin{aligned}
    \|\nabla\Tilde{F}(\boldsymbol{\delta}^b, \boldsymbol{y})^{-1}\|_2&\leq \|I\|_2+ \|\nabla_{\boldsymbol{\delta}^b}\Psi^b(\boldsymbol{\delta}^b, \boldsymbol{y})^{-1}\,\nabla_{\boldsymbol{y}}\Psi^b(\boldsymbol{\delta}^b, \boldsymbol{y})\|_2 +\|\nabla_{\boldsymbol{\delta}^b}\Psi^b(\boldsymbol{\delta}^b, \boldsymbol{y})^{-1}\|_2\\
    &\leq 1 + \|\nabla_{\boldsymbol{\delta}^b}\Psi^b(\boldsymbol{\delta}^b, \boldsymbol{y})^{-1}\|_2 \, \|\nabla_{\boldsymbol{y}}\Psi^b(\boldsymbol{\delta}^b, \boldsymbol{y})\|_2 +\|\nabla_{\boldsymbol{\delta}^b}\Psi^b(\boldsymbol{\delta}^b, \boldsymbol{y})^{-1}\|_2\\
    &\leq 1+\big(1+\sup_{x\in\mathbb{R}}\,(\delta^*)'(x)\big)\;\|\nabla_{\boldsymbol{\delta}^b}\Psi^b(\boldsymbol{\delta}^b, \boldsymbol{y})^{-1}\|_2\\
    &\leq 1+\sqrt{N}\,\big(1+\sup_{x\in\mathbb{R}}\,(\delta^*)'(x)\big)\,\|\nabla_{\boldsymbol{\delta}^b}\Psi^b(\boldsymbol{\delta}^b, \boldsymbol{y})^{-1}\|_\infty.
\end{aligned}
\end{equation*}
In view of Theorem \ref{inverse_norm_matrix}, one can obtain a uniform lower bounded of the singular value 
\begin{equation}
\begin{aligned}
    \underline{\sigma}(\boldsymbol{\delta}^b, \boldsymbol{y}) &=\frac{1}{\|\nabla\Tilde{F}(\boldsymbol{\delta}^b, \boldsymbol{y})^{-1}\|_2}\\
    &\geq \Bigg\{\, 1 + \sqrt{N} \; \big[1+\sup_{x\in\mathbb{R}}\,(\delta^*)'(x)\big] \; \frac{1}{\inf_{x\in\mathbb{R}}\,(\delta^*)'(x)} \; \Bigg\}^{-1},
\label{small_singular_val}
\end{aligned}
\end{equation}
where we remark that the right hand side is independent of $(\boldsymbol{\delta}^b, \boldsymbol{y})$. According to the SVD, matrix $\nabla\Tilde{F}$ can be represented as
\begin{equation}
    \nabla\Tilde{F}(\boldsymbol{\delta}^b, \boldsymbol{y})=\mathscr{U}(\boldsymbol{\delta}^b, \boldsymbol{y})\,\Sigma(\boldsymbol{\delta}^b, \boldsymbol{y})\,\mathscr{V}^{*}(\boldsymbol{\delta}^b, \boldsymbol{y}),
    \label{svd}
\end{equation}
where $\mathscr{U}(\boldsymbol{\delta}^b, \boldsymbol{y})$ is a square orthogonal matrix, $\Sigma(\boldsymbol{\delta}^b, \boldsymbol{y})$ is square diagonal matrix with non-negative singular values on the diagonal, $\mathscr{V}(\boldsymbol{\delta}^b, \boldsymbol{y})$ is a square orthogonal matrix, and $\mathscr{V}^{*}(\boldsymbol{\delta}^b, \boldsymbol{y})$ is its transpose, for every $\boldsymbol{\delta}^b$ and $\boldsymbol{y}$. Because any orthogonal matrix is also an isometry, the only matrix in \eqref{svd} that will change the norm of a vector is $\Sigma(\boldsymbol{\delta}^b, \boldsymbol{y})$. While the smallest singular value of $\nabla\Tilde{F}(\boldsymbol{\delta}^b, \boldsymbol{y})$ is uniformly bounded away from $0$, the third condition of Proposition \ref{general_implicit} now holds whenever differentiable. Finally, it suffices to notice that every matrix $D \in \partial \Tilde{F}(\boldsymbol{\delta}^b, \boldsymbol{y})$ has the same partitioned structure as in \eqref{block_matrix}:
\begin{equation*}
    D =
    \begin{pmatrix}
I & 0\\
D_1 & D_2
\end{pmatrix}
,
\end{equation*}
where $D_1\in\mathbb{R}^{N\times N}$ is some diagonal matrix with entries being non-negative and bounded by $\sup_{x\in\mathbb{R}}\,(\delta^*)'(x)$,
and $D_2\in\mathbb{R}^{N\times N}$ is a $Z$-matrix (see Definition \ref{matrix_type}) with $1$ on the diagonal and positive row sums. The diagonal dominance `gap' of the $D_2$ is still $\inf_{x\in\mathbb{R}}\,(\delta^*)'(x)$. Notice that
\begin{equation*}
    D^{-1} =
    \begin{pmatrix}
I & 0\\
(D_2)^{-1}\,D_1 & (D_2)^{-1}
\end{pmatrix}
,
\end{equation*}
and a similar computation gives
\begin{equation*}
\begin{aligned}
    \|D^{-1}\|_2
    &\leq 1 + \|(D_2)^{-1}\|_2 \, \|D_1\|_2 +\|(D_2)^{-1}\|_2\\
    &\leq 1+\sqrt{N}\,\big(1+\sup_{x\in\mathbb{R}}\,(\delta^*)'(x)\big)\,\|(D_2)^{-1}\|_\infty.
\end{aligned}
\end{equation*}
Its smallest singular
value $\underline{\sigma}_D$ is bounded away from $0$ by the same constant as in \eqref{small_singular_val}:
\begin{equation*}
    \underline{\sigma}_D=\frac{1}{\|D^{-1}\|_2}\geq \Bigg\{\, 1 + \sqrt{N} \; \Big[1+\sup_{x\in\mathbb{R}}\,(\delta^*)'(x)\Big] \; \frac{1}{\inf_{x\in\mathbb{R}}\,(\delta^*)'(x)} \; \Bigg\}^{-1}.
\end{equation*}
As all conditions of Proposition \ref{general_implicit} have been checked, the first part of the proof is complete.

(2) By the stochastic maximum principle \ref{gen_sto_max}, an admissible strategy profile $(\boldsymbol{\delta}^i)_{i\in {1, \dots, N}}$ forms a Nash equilibrium if and only if the profile, together with the adjoint processes $(Y^i, M^i)_{i\in {1, \dots, N}}$, satisfies the Issacs condition \eqref{equiv issac}. Subsequently, to fulfil the Issacs condition, the implicit function theorem \ref{general_implicit} infers that $\boldsymbol{\delta}^a_t=\psi^a(\boldsymbol{Y}_t)$ and $\boldsymbol{\delta}^b_t=\psi^b(\boldsymbol{Y}_t)$, where $\psi^a, \psi^b$ are Lipschitz functions defined in the previous part of this theorem. The FBSDE system is thus the outcome of these two procedures.
\end{proof}

\kong

\begin{example}
\label{exp game}
Let $\Lambda(\delta)= \exp(-\gamma \, \delta)$ for some $\gamma>0$. By maximizing the Hamiltonian \eqref{hamilton}, the optimal control in feedback form reads
\begin{equation*}
    \delta^{i, b}= \Big( \frac{1}{\gamma}-y^i \Big)\vee(-\xi)\wedge\xi \text{ \; and \; } \delta^{i, a}= \Big( \frac{1}{\gamma}+y^i \Big) \vee(-\xi)\wedge\xi.
\end{equation*}
If we assume here the adjoint processes $(Y^i)_{i\in\{1, \dots, N\}}$ are all bounded so that the truncation does not influence if large enough, the FBSDE suggested by the stochastic maximum principle can be neatly written as
\begin{equation}
    \left\{
    \begin{aligned}
     dQ_t^i &= b_t\,\exp\Big(-\gamma\,\big(\max_{j\neq i}Y_t^j -Y_t^i \big)\Big)\,dt-a_t\,\exp\Big(-\gamma\,\big( Y_t^i - \min_{j\neq i} Y_t^j \big) \Big)\,dt,\\
    \, dY_t^i &= 2\phi_t^i\,Q_t^i\,dt+dM_t^i,\\
    Q_0^i &= q_0^i, \quad Y_T^i = -2A^i\,Q_T^i,
    \end{aligned}
    \right.
    \label{exponential FBSDE}
\end{equation}
for all $i$. Although the absence of the truncation leads to a non-Lipschitz FBSDE, we will find a unique bounded solution.
\end{example}

\vspace{0.2cm}

\section{General Game: $Z$-matrix and $M$-matrix}
\label{paper 2 section 4}
\noindent Thanks to the Lipschitz property of $\psi^a, \psi^b$ and the uniform boundedness of admissible strategy $\mathbb{A}$, the FBSDE system \eqref{general FBSDE} is therefore Lipschitz and a local well-posedness result is well-known; see \cite{carmona2018probabilistic}. To expand the analysis to global well-posedness, we need more information on the generalized derivative of the forward equations
\begin{equation*}
\begin{aligned}    
    \rho(t,\boldsymbol{Y}_t)&:=\rho^b(t,\boldsymbol{Y}_t)+\rho^a(t,\boldsymbol{Y}_t),\\
    \rho^b(t,\boldsymbol{Y}_t) &:=\Big\{ b_t\,\Lambda\big(\psi^{i,b}(\boldsymbol{Y}_t)-\bar{\psi}^{i,b}(\boldsymbol{Y}_t)\big) \Big\}_{i\in\{1,\cdots, N\}},\\
    \rho^a(t,\boldsymbol{Y}_t) &:= \Big\{ -a_t\,\Lambda\big(\psi^{i,a}(\boldsymbol{Y}_t)-\bar{\psi}^{i,a}(\boldsymbol{Y}_t)\big) \Big\}_{i\in\{1,\cdots, N\}},
\end{aligned}
\end{equation*}
with respect to $\boldsymbol{Y}$, where function $\rho:[0,T]\times\Omega\times \mathbb{R}^N\to \mathbb{R}^N$ is Lipschitz in the space variable. In this section, we will see its Jacobian matrix is a $Z_+$-matrix. Moreover, the forward equation of the corresponding variational FBSDE system will take the form of \eqref{M-matrix remark}. We first introduce essential concepts in matrix algebra that are crucial for our subsequent analysis.

\kong

\begin{definition}[\cite{horn1994topics}]
    A square matrix $\bold{A}$ is \textit{(strictly) row diagonally dominant} if $|\bold{A}_{ii}|>\sum_{j \neq i} |\bold{A}_{ij}|$ for all $i$. The \textit{column diagonal dominance} is similarly defined. A square matrix $\bold{A}$ is \textit{(strictly) diagonally dominant of row entries} if $|\bold{A}_{ii}| > |\bold{A}_{ij}|$ for any $i,j$. The \textit{diagonal dominance of column entries} is similarly defined. A matrix $\bold{A}$ is said \textit{non-negative} if $\bold{A}_{ij}\geq 0$ for all $i,j$. We remark the difference between non-negativeness and \textit{positive semi-definiteness}. 
\end{definition}

\kong

\begin{definition}
\label{matrix_type}
The class of \textit{$Z$-matrices} are those matrices whose off-diagonal entries are less than or equal to zero. Denote by \textit{$Z_+$-matrices} the class of $Z$-matrices with non-negative diagonal entries. A matrix is called \textit{non-negative stable} if all its eigenvalues have non-negative real parts. Finally, an \textit{$M$-matrix} is a $Z$-matrix that is also non-negative stable.
\end{definition}

\kong

\begin{remark}
It is worth noting that in some literature, such as \cite{horn1994topics}, an $M$-matrix is defined as a $Z$-matrix with eigenvalues whose real parts are strictly positive. In our definition, we include both the $M$-matrix and the singular $M$-matrix with respect to their terminology.
\end{remark}

\kong

\noindent $M$-matrices play a crucial role in various areas of mathematics, including matrix theory, numerical analysis, and optimization. We conclude this paragraph with a simple result as follows:

\kong

\begin{lemma}
\label{Z-matrix}
    A $Z$-matrix with non-negative row sums is an $M$-matrix. Especially, a $Z$-matrix with zero row sums, defined as a $M_0$-matrix, is an $M$-matrix. The same is true for the case of column sums.
\end{lemma}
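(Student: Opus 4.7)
The plan is to deduce the result from Gershgorin's circle theorem, after first observing that the row-sum condition forces weak diagonal dominance. Let $A$ be a $Z$-matrix with non-negative row sums. For each index $i$, the off-diagonal entries satisfy $A_{ij} \leq 0$ for $j \neq i$, so $|A_{ij}| = -A_{ij}$. The non-negativity of the $i$-th row sum then reads
\begin{equation*}
0 \leq \sum_{j} A_{ij} = A_{ii} - \sum_{j \neq i} |A_{ij}|,
\end{equation*}
which gives simultaneously $A_{ii} \geq 0$ and $A_{ii} \geq \sum_{j \neq i} |A_{ij}|$. Thus $A$ is a $Z_+$-matrix that is (weakly) row diagonally dominant.

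Next I would invoke Gershgorin's theorem: every eigenvalue $\lambda$ of $A$ lies in some disc
\begin{equation*}
\{\, z \in \mathbb{C} : |z - A_{ii}| \leq R_i \,\}, \qquad R_i := \sum_{j \neq i} |A_{ij}|.
\end{equation*}
Combining this with $A_{ii} \geq R_i \geq 0$ yields $\operatorname{Re}(\lambda) \geq A_{ii} - R_i \geq 0$. Hence every eigenvalue of $A$ has non-negative real part, so $A$ is non-negative stable and therefore an $M$-matrix in the sense of Definition \ref{matrix_type}. The special case of zero row sums (the $M_0$-matrix) is an immediate instance, since zero row sums are in particular non-negative.

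For the analogous column-sum statement, I would simply pass to the transpose $A^\top$. Off-diagonal entries of $A^\top$ are the off-diagonal entries of $A$, so $A^\top$ is again a $Z$-matrix; non-negative column sums of $A$ translate into non-negative row sums of $A^\top$, to which the argument above applies. Since $A$ and $A^\top$ share the same spectrum, $A$ is non-negative stable as well, hence an $M$-matrix.

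No step in this outline looks delicate: the only content is rewriting the row-sum condition to extract diagonal dominance of the absolute off-diagonal entries and then quoting Gershgorin. The single thing worth being careful about is that Gershgorin only produces non-strict inequalities, which is exactly why the conclusion lands in our (possibly singular) notion of $M$-matrix rather than the stricter version referenced in Remark after Definition \ref{matrix_type}.
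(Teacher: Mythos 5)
Your proof is correct, and it takes a genuinely different route from the paper's. The paper invokes the characterization (Theorem 2.5.3 of Horn and Johnson) that a $Z$-matrix $\boldsymbol{A}$ is a non-singular $M$-matrix if and only if $\boldsymbol{A}+tI$ is non-singular for all $t\geq 0$: it perturbs $\boldsymbol{B}$ to $\boldsymbol{B}+\epsilon I$, which is \emph{strictly} row diagonally dominant and hence a non-singular $M$-matrix with eigenvalues of positive real part, and then lets $\epsilon\searrow 0$, appealing to continuity of the spectrum. Your argument replaces both the citation and the limiting step by a single direct application of Gershgorin's theorem, which tolerates weak diagonal dominance from the outset: the row-sum condition gives $A_{ii}\geq R_i\geq 0$, every eigenvalue lies in a disc of radius $R_i$ about $A_{ii}$, and hence $\operatorname{Re}(\lambda)\geq A_{ii}-R_i\geq 0$. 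This is more elementary and self-contained, and it makes transparent exactly where the non-strict inequalities land you in the (possibly singular) notion of $M$-matrix used in Definition \ref{matrix_type}. The transpose argument for the column-sum case is the natural one and matches what the paper leaves implicit. The only thing your write-up trades away is the explicit connection to the non-singular $M$-matrix characterization, which the paper reuses elsewhere (e.g.\ in the proof of Theorem \ref{deterministic game} when deducing non-singularity of $V(t)$ from strict diagonal dominance); but for the statement of this lemma alone, your route is cleaner.
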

\begin{proof}
    According to Theorem 2.5.3 of \cite{horn1994topics}, a $Z$-matrix $\boldsymbol{A}$ is a non-singular $M$-matrix if and only if $\boldsymbol{A} + t I$ is non-singular for all $t\geq0$. Let $\boldsymbol{B}$ be a $Z$-matrix with non-negative row sums. From that theorem, we know that $\boldsymbol{B} + \epsilon I$ is an $M$-matrix for any $\epsilon>0$ due to strict row dominance, and thus eigenvalues of $\boldsymbol{B} + \epsilon I$ have positive real parts. Finally, eigenvalues of $\boldsymbol{B}$ have non-negative real parts due to the continuity.
\end{proof}

\kong

Recall that implicit functions $\psi^a$ and $\psi^b$ satisfy 
\begin{equation*}
    \Psi^b(\psi^b(\boldsymbol{y}), \boldsymbol{y})=0 \quad \text{and} \quad \Psi^a(\psi^a(\boldsymbol{y}), \boldsymbol{y})=0.
\end{equation*}
We take the total derivative with respect to $\boldsymbol{y}$ on both sides to see
\begin{equation*}
    \nabla\Psi^b(\psi^b(\boldsymbol{y}), \boldsymbol{y})=\boldsymbol{0} \quad \text{and} \quad \nabla\Psi^a(\psi^a(\boldsymbol{y}), \boldsymbol{y})=\boldsymbol{0},
\end{equation*}
where $\boldsymbol{0}$ here is an zero matrix. Although  both the implicit functions $\psi^a, \psi^b$, and mappings $\Psi^a, \Psi^b$ are Lipschitz, in general we can not further write
\begin{equation*}
\begin{aligned}
    \partial\psi^b(\boldsymbol{y}) &= - \, \big[\partial_{\boldsymbol{\delta}}\Psi^b(\psi^b(\boldsymbol{y}), \boldsymbol{y})\big]^{-1} \, \partial_{\boldsymbol{y}}(\psi^b(\boldsymbol{y}), \boldsymbol{y}),\\
    \partial\psi^a(\boldsymbol{y}) &= - \, \big[\partial_{\boldsymbol{\delta}}\Psi^a(\psi^a(\boldsymbol{y}), \boldsymbol{y})\big]^{-1} \, \partial_{\boldsymbol{y}}(\psi^a(\boldsymbol{y}), \boldsymbol{y}),
\end{aligned}
\end{equation*}
which is true in the smooth case; see Theorem \ref{smooth implicit}. Regarding the calculus of the generalized derivative (see \cite{clarke2013functional} and \cite{clason2017nonsmooth} for references), while several results on the sum rule and chain rule are discussed, the subset relations presented are insufficient for our analysis of the derivative of the implicit function. Since additional regularity is required to obtain the equality relation, we can not use them directly but instead introduce a `region-by-region' method. According to the definition and the Lipschitz property, functions $\Psi^a$ and  $\Psi^b$ are non-differentiable in some closed zero-measure sets $\mathscr{D}^a$ and $\mathscr{D}^b\subset \mathbb{R}^N\times \mathbb{R}^N$ accordingly. On the other hand, the graph $(\psi^b(\boldsymbol{y}), \boldsymbol{y})$ resides in the same space $\mathbb{R}^N\times \mathbb{R}^N$. To analyze the gradient, the subsequent result divides the examination into two scenarios: when $\Psi^b$ is differentiable at $(\psi^b(\boldsymbol{y}), \boldsymbol{y})$ and when it is not at that point. The following smooth version of the implicit function theorem turns out to be helpful. Although we only state the global result, one should note that it also holds locally.

\kong

\begin{theorem}[\cite{idczak2014global}, \cite{galewski2018global}]
\label{smooth implicit}
Assume that $F : \mathbb{R}^n \times \mathbb{R}^m \to \mathbb{R}^n$ is a $C^1$ mapping such that:

\kong

\begin{itemize}
    \item [1.] For every $y\in\mathbb{R}^m$, the functional $\varphi_y:\mathbb{R}^n\to\mathbb{R}$ given by the formula
    \begin{equation*}
        \varphi_y(x):=\frac{1}{2}\,|F(x,y)|^2
    \end{equation*}
    is coercive, i.e., $\lim_{|x|\to\infty}\varphi_y(x)=\infty$;\\

    \item[2.] The Jacobian matrix $\nabla_xF(x,y)$ is of maximal rank for all $(x,y)\in\mathbb{R}^n\times\mathbb{R}^m$.
\end{itemize}
\kong
Then, there exists a unique function $f : \mathbb{R}^m\to \mathbb{R}^n$ such that equations $F(x, y) = 0$ and $x = f(y)$ are equivalent in the set $\mathbb{R}^n\times \mathbb{R}^m$. Moreover, function $f$ is also continuously differentiable with
\begin{equation*}
    \nabla f(y) = -\,\big[\nabla_x F(f(y), y)\big]^{-1} \, \nabla_y F(f(y), y).\\
\end{equation*}
\end{theorem}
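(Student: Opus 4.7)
The plan is to fix $y$, prove existence and uniqueness of a zero $x = f(y)$ of $F(\cdot, y)$ by combining a direct minimization with a global invertibility argument, and then upgrade $f$ to a $C^1$ map via the classical \emph{local} smooth implicit function theorem. For fixed $y \in \mathbb{R}^m$ the functional $\varphi_y(x) = \tfrac{1}{2}|F(x,y)|^2$ is continuous and coercive, hence attains a global minimizer $x^{*} \in \mathbb{R}^n$. The first-order condition $[\nabla_x F(x^{*}, y)]^{T} F(x^{*}, y) = 0$, together with invertibility of the square Jacobian $\nabla_x F(x^{*}, y)$ (which is what ``maximal rank'' means for an $n \times n$ matrix), forces $F(x^{*}, y) = 0$. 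For uniqueness I would invoke the Hadamard--Caccioppoli global inverse function theorem: coercivity of $|F(\cdot, y)|$ is exactly properness of $F(\cdot, y) : \mathbb{R}^n \to \mathbb{R}^n$, and the maximal-rank hypothesis makes it a local $C^1$-diffeomorphism; a proper local diffeomorphism of $\mathbb{R}^n$ onto itself is a global $C^1$-diffeomorphism, so $F(x,y) = 0$ has a unique solution, which I denote $f(y)$.

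To establish $C^1$ regularity of $f$, I would fix an arbitrary $y_0 \in \mathbb{R}^m$, set $x_0 = f(y_0)$, and apply the local smooth implicit function theorem at $(x_0, y_0)$, whose invertibility hypothesis is already in place. This yields open neighborhoods $U \ni x_0$ and $V \ni y_0$ and a $C^1$ function $g : V \to U$ such that $F(x,y) = 0$ in $U \times V$ is equivalent to $x = g(y)$. The global uniqueness from the previous paragraph forces $g = f$ on $V$, so $f$ inherits continuous differentiability in a neighborhood of every $y_0$, and differentiating the identity $F(f(y), y) = 0$ yields the stated formula $\nabla f(y) = -[\nabla_x F(f(y), y)]^{-1} \nabla_y F(f(y), y)$.

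The main obstacle is global uniqueness: coercivity plus the variational argument only delivers \emph{some} zero of $F(\cdot, y)$, and local invertibility by itself does not preclude multiple preimages of $0$. Bridging the gap is exactly what Hadamard--Caccioppoli does, converting the two standing hypotheses---properness and local invertibility---into a global diffeomorphism; this is the single non-routine ingredient. Once injectivity of $F(\cdot, y)$ is secured, existence and the $C^1$ regularity (with the explicit gradient formula) follow by completely standard arguments from local implicit function theory.
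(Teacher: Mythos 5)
The paper does not prove Theorem \ref{smooth implicit}; it is imported verbatim from \cite{idczak2014global} and \cite{galewski2018global} as a known result, so there is no in-paper argument to compare against. Your proof is correct as a self-contained derivation. Existence via minimizing the coercive functional $\varphi_y$ and killing $[\nabla_x F(x^*,y)]^{T}F(x^*,y)=0$ with the invertible square Jacobian is exactly the variational mechanism used in the cited sources; where you diverge is uniqueness, which you obtain from the Hadamard--Caccioppoli theorem (coercivity of $|F(\cdot,y)|$ $\Leftrightarrow$ properness, plus local diffeomorphism, gives a global diffeomorphism of $\mathbb{R}^n$), whereas Idczak and Galewski--R\u{a}dulescu argue variationally, using a mountain-pass/Palais--Smale argument to rule out two distinct zeros. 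Both routes are legitimate; the topological one you chose is arguably cleaner here and has the side benefit of delivering surjectivity of $F(\cdot,y)$ for free (making the minimization step logically redundant, though harmless). The $C^1$ upgrade by patching the local implicit function theorem against global uniqueness, and the resulting formula $\nabla f(y)=-[\nabla_x F(f(y),y)]^{-1}\nabla_y F(f(y),y)$, are standard and correctly executed. One cosmetic caveat: a proper local diffeomorphism of $\mathbb{R}^n$ need not be assumed ``onto itself''---surjectivity is part of the conclusion of Hadamard--Caccioppoli, not a hypothesis---but this does not affect the validity of the argument.
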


\kong

\noindent The following lemma analyzes the derivative of $\rho$ in the differentiable region, utilizing the smooth implicit function theorem provided.

\kong

\begin{lemma}
\label{grad at smooth region}
For any $\boldsymbol{y}$ such that $(\psi^b(\boldsymbol{y}), \boldsymbol{y})\notin\mathscr{D}^b$ and $(\psi^a(\boldsymbol{y}), \boldsymbol{y})\notin\mathscr{D}^a$, the following statements hold:
\kong

\begin{itemize}
    \item[(i)] matrix $\nabla_{\boldsymbol{y}} \psi^b$ is element-wise uniformly bounded by some constant independent of $\xi$. It has non-positive entries and is diagonally dominant of column entries;\\
    \vspace{-0.2cm}
    
    \item[(ii)] matrix $\nabla_{\boldsymbol{y}} \psi^a$ is element-wise uniformly bounded by some constant independent of $\xi$. It has non-negative entries and is diagonally dominant of column entries;\\
    \vspace{-0.2cm}

    \item[(iii)] the Jacobian $\nabla_{\boldsymbol{y}} \rho(t,\boldsymbol{y})$ is an $M$-matrix. Notably, it is an $M_0$-matrix in the absence of $\xi$.\\
    \vspace{-0.2cm}
\end{itemize}
\end{lemma}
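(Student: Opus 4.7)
Since the point $\boldsymbol{y}$ lies outside $\mathscr{D}^b \cup \mathscr{D}^a$, in a neighbourhood the active minimizer $j_b(i):=\mathrm{argmin}_{l\neq i}\psi^{l,b}(\boldsymbol{y})$ (and its ask analogue) is unique and both $\Psi^b, \Psi^a$ are $C^1$ there. The hypotheses of the smooth implicit function theorem (Theorem \ref{smooth implicit}) were already verified in the proof of Theorem \ref{Verification}, so
\[
\nabla_{\boldsymbol{y}}\psi^b(\boldsymbol{y}) \;=\; -\bigl[\nabla_{\boldsymbol{\delta}}\Psi^b\bigr]^{-1}\nabla_{\boldsymbol{y}}\Psi^b\big|_{(\psi^b(\boldsymbol{y}),\boldsymbol{y})}.
\]
Writing $d_i:=(\delta^*)'(-y^i-\psi^{j_b(i),b}(\boldsymbol{y}))$ and $c_i:=1-d_i$, direct differentiation shows $\nabla_{\boldsymbol{\delta}}\Psi^b = I - B$, where each row $i$ of $B$ has a single entry $c_i$ at column $j_b(i)$ (and is zero whenever $i$ is truncated), while $\nabla_{\boldsymbol{y}}\Psi^b = \mathrm{diag}(d_i)$. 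Assumption \ref{inten_assu} and Lemma \ref{inten_fun} provide a uniform $\eta>0$ with $d_i\in[\eta,1]$, hence $c_i\in[0,1-\eta]$; so $A:=I-B$ is an $M$-matrix (strictly row-diagonally dominant) with $\|A^{-1}\|_\infty\leq 1/\eta$ and $A^{-1}\geq 0$.

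\textbf{Parts (i) and (ii).} The $j$-th column $x:=A^{-1}e_j$ satisfies the one-step recursion $x_i = c_i\,x_{j_b(i)}$ for $i\neq j$ and $x_j = 1 + c_j\,x_{j_b(j)}$, with the convention $x_i=\delta_{ij}$ whenever $i$ is truncated. Tracing the unique forward path $i\to j_b(i)\to j_b(j_b(i))\to\cdots$ under the functional map $j_b$, the path either terminates at a truncated node or enters a cycle; in each case the recursion together with $c_i\in[0,1)$ yields $0\leq x_i\leq x_j$ (on cycles not containing $j$ the cyclic linear system forces $x\equiv 0$; on tree branches and paths into $j$ the bound follows by induction on the step count). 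Thus $A^{-1}$ is column-diagonally dominant of entries, and $\nabla_{\boldsymbol{y}}\psi^b = -A^{-1}\mathrm{diag}(d_j)$ inherits all the claims of part (i): non-positive entries, column-diagonal dominance of entries, and the bound $1/\eta$, independent of $\xi$. Part (ii) is symmetric: the only change is $\nabla_{\boldsymbol{y}}\Psi^a=-\mathrm{diag}(d_i^a)$, which flips the signs so that $\nabla_{\boldsymbol{y}}\psi^a$ becomes non-negative.

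\textbf{Part (iii).} Differentiation gives
\[
[\nabla_{\boldsymbol{y}}\rho^b]_{ik} \;=\; b_t\,\Lambda'\!\bigl(\psi^{i,b}-\bar\psi^{i,b}\bigr)\cdot\bigl(\partial_{y^k}\psi^{i,b}-\partial_{y^k}\psi^{j_b(i),b}\bigr),
\]
and likewise for $\rho^a$ with an overall factor $-a_t\Lambda'(\cdot)\geq 0$. The recursion $(A^{-1})_{ik}=c_i(A^{-1})_{j_b(i),k}$ for $i\neq k$ gives $\partial_{y^k}\psi^{i,b}\geq\partial_{y^k}\psi^{j_b(i),b}$, while the column dominance of $A^{-1}$ gives the reverse inequality at $i=k$. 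Since $\Lambda'<0$, the off-diagonal entries of $\nabla_{\boldsymbol{y}}\rho^b$ are $\leq 0$ and the diagonal entries are $\geq 0$; adding the analogous contributions from $\rho^a$, the Jacobian $\nabla_{\boldsymbol{y}}\rho$ is a $Z$-matrix with non-negative diagonal. To upgrade to an $M$-matrix via Lemma \ref{Z-matrix}, it suffices to check that the row sums are non-negative. Let $v:=A^{-1}\mathbf{d}$ with $\mathbf{d}=(d_i)_i$; then $-v_i$ equals the row sum of $\nabla_{\boldsymbol{y}}\psi^b$ at row $i$. The same recursion $v_i = d_i + c_iv_{j_b(i)}$, together with the boundary values $v_k=0$ at truncated nodes and $v_k\equiv 1$ on every cycle (obtained by solving the closed cyclic identity $(1-\prod c)(1-v)=0$), forces $v\in[0,1]^N$. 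Rearranging yields $v_i-v_{j_b(i)} = d_i(1-v_{j_b(i)})\geq 0$, so the row sum of $\nabla_{\boldsymbol{y}}\rho^b$ at row $i$ equals $b_t\Lambda'(\cdot)(v_{j_b(i)}-v_i)\geq 0$; the analogous row sum from $\rho^a$ is also non-negative, and Lemma \ref{Z-matrix} completes the proof.

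\textbf{Main obstacle.} The combinatorial analysis of the functional graph $i\mapsto j_b(i)$ underlies both parts (i)/(ii) (via the column-dominance of $A^{-1}$) and the $[0,1]$-bound on $v$ used in part (iii). Its connected components are trees rooted at truncated nodes, or trees hanging off cycles; the delicate step is the cyclic case, where the induced cyclic linear subsystem must be solved in closed form to pin down the correct boundary values ($0$ in the column-dominance argument when $j$ is off-cycle, and $1$ in the row-sum argument).
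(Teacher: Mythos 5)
Your reduction of the problem to the functional graph $i\mapsto j_b(i)$ is a legitimate reformulation of what the paper does by brute force: since each agent is compared with the minimum over the others, in the differentiable region this graph is always a single $2$-cycle (the smallest and second-smallest indices) with every other node pointing directly into the minimizer, and your recursions $x_i=c_i x_{j_b(i)}$ and $v_i=d_i+c_i v_{j_b(i)}$, specialized to that shape, reproduce exactly the explicit inverse and the Jacobian \eqref{Jacobian of control 1} that the paper computes. Parts (i) and (ii), and the $Z_+$-structure (non-negative diagonal, non-positive off-diagonal) of $\nabla_{\boldsymbol{y}}\rho$ in part (iii), are correct.

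There is, however, a genuine gap in your part (iii) at the truncated rows. Your key inequality $v_i-v_{j_b(i)}=d_i\,(1-v_{j_b(i)})\geq 0$ is derived from the recursion $v_i=d_i+c_i v_{j_b(i)}$, which holds only at non-truncated nodes; at a truncated node you have instead $v_i=0$ and $d_i=0$, so $v_i-v_{j_b(i)}=-v_{j_b(i)}\leq 0$, and the corresponding row sum of $\nabla_{\boldsymbol{y}}\rho^b$ equals $b_t\,\Lambda'(\cdot)\,v_{j_b(i)}\leq 0$, typically strictly negative. This is not a technicality: in the configuration where only index $1$ escapes the truncation, the paper exhibits the Jacobian explicitly as a matrix whose only non-zero column is the first, with entries $C_1, -C_2,\dots,-C_N$ and $C_i\geq 0$, so rows $2,\dots,N$ have row sums $-C_i<0$. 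Hence Lemma \ref{Z-matrix} cannot be invoked on row sums (nor on column sums, since the first column sums to $C_1-\sum_{i\geq 2}C_i$), and your argument proves only the $Z_+$-property in these cases, not non-negative stability. The paper closes this by treating the truncation configurations separately: the resulting Jacobians have a block (lower-)triangular structure — a zero row/column for the truncated indices coupled to an $M_0$-block or a non-negative diagonal block for the rest — whose eigenvalues are read off directly via Laplace expansion, giving the $M$-matrix property without any row-sum condition. In the no-truncation case your computation does give $v\equiv 1$ and hence exactly zero row sums, matching the paper's $M_0$ conclusion there; it is only the boundary-active cases that need the separate structural argument.
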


\begin{proof}
We start with the bid side and $\rho^b$. Since $\mathscr{D}^b$ is closed, if $(\psi^b(\boldsymbol{y}), \boldsymbol{y})\notin\mathscr{D}^b$, then there exists a neighbourhood $\mathscr{B}_{\boldsymbol{y}}$ of $\boldsymbol{y}$ such that $(\psi^b(\boldsymbol{x}), \boldsymbol{x})\notin\mathscr{D}^b$ for any $\boldsymbol{x}\in \mathscr{B}_{\boldsymbol{y}}$. Recalling the definition \eqref{def_mapping} of $\Psi^b$, the non-smoothness is caused by the min function and the truncation by $\pm\xi$. Let us first look at the case when the truncation has no effect and consider the possible ordering: 
\begin{equation}
    \psi^{1,b}(\boldsymbol{y}) < \psi^{2,b}(\boldsymbol{y}) < \cdots < \psi^{N,b}(\boldsymbol{y}).
    \label{example_ordering_bid}
\end{equation}
Given above relation, in the neighbourhood $\mathscr{B}_{\boldsymbol{y}}$, function $\Psi^b$ is equal to $\Tilde{\Psi}^b$ defined as
\begin{equation*}
\begin{aligned}
    \tilde{\Psi}^{1,b}(\boldsymbol{\delta}^b, \boldsymbol{y})&=\delta^{1,b}-\big[\delta^{2,b}+\delta^*(-y^1-\delta^{2,b})\big],\\
    \tilde{\Psi}^{i,b}(\boldsymbol{\delta}^b, \boldsymbol{y})&=\delta^{i,b}-\big[\delta^{1,b}+\delta^*(-y^i-\delta^{1,b})\big],
\end{aligned}
\end{equation*}
for $i\in\{2,\dots, N\}$. Noting that $\Tilde{\Psi}^b$ is differentiable, we then check the conditions in Theorem \ref{smooth implicit}. The property of maximal rank can be verified similarly as in Theorem \ref{Verification}. Fixing any $\boldsymbol{y}\in\mathbb{R}^N$, the triangle inequality yields
\begin{equation*}
    |\Tilde{\Psi}^{1,b}(\boldsymbol{\delta}^b, \boldsymbol{y})|+|\Tilde{\Psi}^{2,b}(\boldsymbol{\delta}^b, \boldsymbol{y})|\geq |\Tilde{\Psi}^{1,b}(\boldsymbol{\delta}^b, \boldsymbol{y})+\Tilde{\Psi}^{2,b}(\boldsymbol{\delta}^b, \boldsymbol{y})|=|\delta^*(-y^2-\delta^{1,b})+\delta^*(-y^1-\delta^{2,b})|.
\end{equation*}
If the right hand side of above equation does not explode when $|\boldsymbol{\delta}^b|\to\infty$, only the following three cases can happen because the derivative of $\delta^*$ is non-negative and bounded away from $0$:\\
\vspace{-0.2cm}

\begin{itemize}
    \item[(i)] both $\delta^{1,b}$ and $\delta^{2,b}$ are bounded;\\
    \vspace{-0.2cm}

    \item[(ii)] $\delta^{1,b}\to-\infty$ and $\delta^{2,b}\to\infty$;\\
    \vspace{-0.2cm}

    \item[(iii)] $\delta^{1,b}\to\infty$ and $\delta^{2,b}\to-\infty$.\\
    \vspace{-0.2cm}
\end{itemize}
If case (i) is true, since $|\boldsymbol{\delta}^b|\to\infty$, there exists an index $j\notin \{1, 2\}$ such that $|\delta^{j,b}|\to\infty$ and thus $|\Tilde{\Psi}^{j, b}(\boldsymbol{\delta}^b, \boldsymbol{y})|=|\delta^{j,b}-[\delta^{1,b}+\delta^*(-y^j-\delta^{1,b})]|\to\infty$. Turning to case (ii), in view of $(\delta^*)'\in(0, 1)$, one can see $\delta^{2,b}+\delta^*(-y^1-\delta^{2,b})$ is increasing with respect to $\delta^{2,b}$. This gives
\begin{equation*}
    \Tilde{\Psi}^{1,b}(\boldsymbol{\delta}^b, \boldsymbol{y}) = \delta^{1,b}-\big[\delta^{2,b}+\delta^*(-y^1-\delta^{2,b})\big]\to-\infty.
\end{equation*}
Finally, since case (iii) is symmetric to case (ii), we see the $\ell_1$-norm of $\Tilde{\Psi}^b(\boldsymbol{\delta}^b, \boldsymbol{y})$ will always explode when $|\boldsymbol{\delta}^b|\to\infty$. The coercive property follows from the equivalence of norms in the finite-dimensional space. Therefore, the local version of Theorem \ref{smooth implicit} guarantees an implicit function $\tilde{\psi}^b$ and the uniqueness further implies $\tilde{\psi}^b = \psi^b$ in the neighborhood $\mathscr{B}_{\boldsymbol{y}}$. We then learn that $\psi^b$ is differentiable in $\mathscr{B}_{\boldsymbol{y}}$ with
\begin{equation*}
    \nabla \psi^b(\boldsymbol{y}) = -\,\big[\nabla_{\boldsymbol{\delta}} \tilde{\Psi}^b(\psi^b(\boldsymbol{y}), \boldsymbol{y})\big]^{-1}\, \nabla_{\boldsymbol{y}} \tilde{\Psi}^b(\psi^b(\boldsymbol{y}), \boldsymbol{y}),
\end{equation*}
for any $\boldsymbol{y}\in\mathscr{B}_{\boldsymbol{y}}$. Note that $\rho^b$ is thus also differentiable at such $\boldsymbol{y}$.

The Jacobian matrix $\nabla_{\boldsymbol{\delta}} \tilde{\Psi}^b(\psi^b(\boldsymbol{y}), \boldsymbol{y})$ reads
\begin{equation*}
\begin{bmatrix}
1 & (-1)\,\big[1-(\delta^*)'(-\psi^{2,b}(\boldsymbol{y})-y^1)\big] & 0 & \cdots & 0\\
(-1)\,\big[1-(\delta^*)'(-\psi^{1,b}(\boldsymbol{y})-y^2)\big] & 1 & 0 & \cdots & 0 \\
(-1)\,\big[1-(\delta^*)'(-\psi^{1,b}(\boldsymbol{y})-y^3)\big] & 0 & 1 & \cdots & 0 \\
\vdots & \vdots & \vdots & \ddots & \vdots\\
(-1)\,\big[1-(\delta^*)'(-\psi^{1,b}(\boldsymbol{y})-y^N)\big] & 0 & 0 & \cdots & 1 
\end{bmatrix}.
\end{equation*}
Define
\begin{equation*}
    \mathfrak{a}_1=(-1)\,\big[ 1 - (\delta^*)'(-\psi^{2,b}(\boldsymbol{y})-y^1) \big]  \text{\; and \;} \mathfrak{a}_i=(-1) \, \big[ 1 - (\delta^*)'(-\psi^{1,b}(\boldsymbol{y})-y^i) \big]
\end{equation*}
for $i\in\{2,\dots,N\}$ and observe that each $a_i\in(-1, 0]$. Direct calculations yield
\begin{equation*}
    \big[\nabla_{\boldsymbol{\delta}} \tilde{\Psi}^b(\psi^b(\boldsymbol{y}), \boldsymbol{y})\big]^{-1}=\frac{1}{1-\mathfrak{a}_1\mathfrak{a}_2}\,
    \begin{bmatrix}
1 & - \mathfrak{a}_1 & 0 & \cdots & 0\\
- \mathfrak{a}_2 & 1 & 0 & \cdots & 0 \\
- \mathfrak{a}_3 & \mathfrak{a}_1 \mathfrak{a}_3 & 1-\mathfrak{a}_1 \mathfrak{a}_2 & \cdots & 0 \\
\vdots & \vdots & \vdots & \ddots & \vdots\\
-\mathfrak{a}_N& \mathfrak{a}_1 \mathfrak{a}_N & 0 & \cdots & 1-\mathfrak{a}_1 \mathfrak{a}_2
\end{bmatrix}.
\end{equation*}
Since the Jacobian matrix $\nabla_{\boldsymbol{y}} \tilde{\Psi}^b(\psi^b(\boldsymbol{y}), \boldsymbol{y})$ reads
\begin{equation*}
\nabla_{\boldsymbol{y}} \tilde{\Psi}^b(\psi^b(\boldsymbol{y}), \boldsymbol{y})=\,
\begin{bmatrix}
    \mathfrak{a}_1+1 & 0 & \cdots & 0\\
    0 & \mathfrak{a}_2+1 & \cdots & 0\\
    \vdots & \vdots & \ddots & \vdots\\
    0 & 0 & \cdots & \mathfrak{a}_N+1
\end{bmatrix},
\end{equation*}
we can compute the Jacobian matrix $\nabla \psi^b(\boldsymbol{y})$ via
\begin{equation}
\begin{aligned}
    \nabla\psi^b(\boldsymbol{y}) &= \frac{1}{1-\mathfrak{a}_1\mathfrak{a}_2} \cdot\\
&    
\begin{bmatrix}
    -\mathfrak{a}_1-1 & \mathfrak{a}_1(\mathfrak{a}_2+1) & 0 & \cdots & 0\\
    \mathfrak{a}_2(\mathfrak{a}_1+1) & -\mathfrak{a}_2-1 & 0 & \cdots & 0\\
    \mathfrak{a}_3(\mathfrak{a}_1+1) & -\mathfrak{a}_1\mathfrak{a}_3(\mathfrak{a}_2+1) & (\mathfrak{a}_1\mathfrak{a}_2-1)(\mathfrak{a}_3+1) & \cdots & 0\\
    \vdots & \vdots & \vdots & \ddots & \vdots\\
    \mathfrak{a}_N(\mathfrak{a}_1+1) & -\mathfrak{a}_1\mathfrak{a}_N(\mathfrak{a}_2+1) & 0 & \cdots &(\mathfrak{a}_1\mathfrak{a}_2-1)(\mathfrak{a}_N+1)
\end{bmatrix}.
\label{Jacobian of control 1}
\end{aligned}
\end{equation}
It is evident that $\nabla\psi^b(\boldsymbol{y})$ has non-positive entries and is diagonally dominant of column entries, for which an element-wise bound can be
\begin{equation*}
    \frac{2}{ 1 - \big[ 1 - \inf_{x\in\mathbb{R}}(\delta^*)'(x) \big]^2 }.
\end{equation*}
Because we are considering the scenario \eqref{example_ordering_bid}, each agent is affected by the best offer (or equivalently the smallest gap $\delta$) from the others. The interaction between agents now is
\begin{equation*}
    \Xi^b(\boldsymbol{y}):=\Big(\psi^{1,b}(\boldsymbol{y})-\psi^{2,b}(\boldsymbol{y}),\, \psi^{2,b}(\boldsymbol{y})-\psi^{1,b}(\boldsymbol{y}),\, \psi^{3,b}(\boldsymbol{y})-\psi^{1,b}(\boldsymbol{y}),\, \dots,\, \psi^{N,b}(\boldsymbol{y})-\psi^{1,b}(\boldsymbol{y})\Big),
\end{equation*}
and the Jacobian $\nabla \Xi^b(\boldsymbol{y})$ is given by
\begin{equation}
\begin{aligned}
        \nabla &\Xi^b(\boldsymbol{y})=\frac{1}{1-\mathfrak{a}_1\mathfrak{a}_2}\cdot\\
&
\begin{bmatrix}
    -(\mathfrak{a}_1+1)(\mathfrak{a}_2+1) & (\mathfrak{a}_1+1)(\mathfrak{a}_2+1) & 0 & \cdots & 0\\
    
    (\mathfrak{a}_1+1)(\mathfrak{a}_2+1) & -(\mathfrak{a}_1+1)(\mathfrak{a}_2+1) & 0 & \cdots & 0\\
    
   (\mathfrak{a}_1+1)(\mathfrak{a}_3+1) & -\mathfrak{a}_1(\mathfrak{a}_2+1)(\mathfrak{a}_3+1) & (\mathfrak{a}_1\mathfrak{a}_2-1)(\mathfrak{a}_3+1) & \cdots & 0\\
   
    \vdots & \vdots & \vdots & \ddots & \vdots\\
    
    (\mathfrak{a}_1+1)(\mathfrak{a}_N+1) & -\mathfrak{a}_1(\mathfrak{a}_2+1)(\mathfrak{a}_N+1) & 0 & \cdots &(\mathfrak{a}_1\mathfrak{a}_2-1)(\mathfrak{a}_N+1)\\
\end{bmatrix}.
\label{jacobian of interaction}
\end{aligned}
\end{equation}

\noindent It is straightforward to observe that the matrix \eqref{jacobian of interaction} has a zero row sum. Recalling $\mathfrak{a}_k\in(-1,0]$ for all $k\in\{1, \dots, N\}$, we can further determine
\begin{equation*}
    (\mathfrak{a}_1+1)(\mathfrak{a}_i+1) \geq 0, \quad
    -\mathfrak{a}_1(\mathfrak{a}_1+1)(\mathfrak{a}_i+1) \geq 0, \quad \mathfrak{a}_1\mathfrak{a}_2-1 \leq 0.
\end{equation*}
Consequently, the matrix \eqref{jacobian of interaction} has non-positive diagonal entries and non-negative off-diagonal entries, with zero row sums. Because $\Lambda$ is a decreasing function, the Jacobian $\nabla_{\boldsymbol{y}}\rho^b(t,\boldsymbol{y})$ equals to the multiplication of a non-positive diagonal matrix with $\nabla \Xi^b(\boldsymbol{y})$. Therefore, the matrix $\nabla_{\boldsymbol{y}}\rho^b(t,\boldsymbol{y})$ exhibits non-negative diagonal entries and non-positive off-diagonal entries, with zero row sums. By Lemma \ref{Z-matrix}, it is an $M_0$-matrix and thus an $M$-matrix. Since $\Psi^b$ is symmetric with respect to the index, any permutation of the ordering \eqref{example_ordering_bid} will result in a matrix with the same property as above.

Then, let us turn to the case when the truncation by $\xi$ is in effect. That is to say, given $\boldsymbol{y}$, there exists some $k$ such that
\begin{equation*}
    \Psi^{k,b}(\psi^b(\boldsymbol{y}), \boldsymbol{y})=\psi^{k,b}(\boldsymbol{y})-\big[\Bar{\psi}^{k,b}(\boldsymbol{y})+\delta^*(-y^k-\Bar{\psi}^{k,b}(\boldsymbol{y}))\big]\vee(-\xi)\wedge\xi=\psi^{k,b}(\boldsymbol{y})-\xi.
\end{equation*}
Suppose this is true for all $k$, condition \eqref{equiv issac} yields $\psi^{k,b}(\boldsymbol{y})=\xi$ for any $k$ and hence $\nabla_{\boldsymbol{y}}\rho^b(t,\boldsymbol{y})$ is a zero matrix, which is an $M$-matrix. Next, suppose there exists only one index, denoted by $1$, such that the truncation has no influence:
\begin{equation*}
\begin{aligned}
    \Psi^{1,b}(\psi^b(\boldsymbol{y}), \boldsymbol{y})&=\psi^{1,b}(\boldsymbol{y})-\big[\Bar{\psi}^{1,b}(\boldsymbol{y})+\delta^*(-y^1-\Bar{\psi}^{1,b}(\boldsymbol{y}))\big]\vee(-\xi)\wedge\xi\\
    &=\psi^{1,b}(\boldsymbol{y})-\big[\xi+\delta^*(-y^1-\xi)\big].
\end{aligned}
\end{equation*}
One can observe that $\psi^{1,b}(\boldsymbol{y})=\psi^{1,b}(y^1)$ is decreasing with respect to $y^1$. The interaction between agents is
\begin{equation*}
    \Xi^b(\boldsymbol{y}):=\Big(\psi^{1,b}(\boldsymbol{y})-\xi,\, \xi-\psi^{1,b}(\boldsymbol{y}),\, \xi-\psi^{1,b}(\boldsymbol{y}),\, \dots,\, \xi-\psi^{1,b}(\boldsymbol{y})\Big).
\end{equation*}
Because $\Lambda$ is decreasing, for constants $C_i\geq 0$, it can be deduced that
\begin{equation*}
\nabla_{\boldsymbol{y}} \rho^b(t, \boldsymbol{y})=\,
\begin{bmatrix}
    \hspace{0.28cm} C_1 & 0 & \cdots & 0\\
    -C_2 & 0 & \cdots & 0\\
    \vdots & \vdots & \ddots & \vdots\\
    -C_N & 0 & \cdots & 0
\end{bmatrix},
\end{equation*}
which is an $M$-matrix by the method in the proof of Lemma \ref{Z-matrix}. Finally, if there exist more than one index such that the truncation has no influence. In this case, let us suppose that truncation is only applied to index $1$ but not the others, that is,
\begin{equation*}
\begin{aligned}
    \Psi^{1,b}(\psi^b(\boldsymbol{y}), \boldsymbol{y})&=\psi^{1,b}(\boldsymbol{y})-\big[\Bar{\psi}^{1,b}(\boldsymbol{y})+\delta^*(-y^1-\Bar{\psi}^{1,b}(\boldsymbol{y}))\big]\vee(-\xi)\wedge\xi=\psi^{1,b}(\boldsymbol{y})-\xi,\\
    \Psi^{k,b}(\psi^b(\boldsymbol{y}), \boldsymbol{y})&=\psi^{k,b}(\boldsymbol{y})-\big[\Bar{\psi}^{k,b}(\boldsymbol{y})+\delta^*(-y^k-\Bar{\psi}^{k,b}(\boldsymbol{y}))\big]\vee(-\xi)\wedge\xi\\
    &=\psi^{k,b}(\boldsymbol{y})-\big[\Bar{\psi}^{k,b}(\boldsymbol{y})+\delta^*(-y^k-\Bar{\psi}^{k,b}(\boldsymbol{y}))\big],
\end{aligned}
\end{equation*}
for $k \neq 1$. Note that index $1$ has no impact on the remaining $N-1$ indices. If we further assume the ordering $\psi^{2,b}(\boldsymbol{y}) < \psi^{3,b}(\boldsymbol{y}) < \cdots < \psi^{N,b}(\boldsymbol{y})$, the result from the previous discussion yields
\begin{equation*}
    \nabla\psi^b(\boldsymbol{y})=\frac{1}{1-\mathfrak{c}_2\mathfrak{c}_3} \begin{bmatrix}
        0 & 0 & 0 & \cdots & 0\\
     0 & -\mathfrak{c}_2-1 & \mathfrak{c}_2(\mathfrak{c}_3+1) & \cdots & 0\\
    0 & \mathfrak{c}_3(\mathfrak{c}_2+1) & -\mathfrak{c}_3-1 & \cdots & 0\\
    \vdots & \vdots & \vdots & \ddots & \vdots\\
    0 & \mathfrak{c}_N(\mathfrak{c}_2+1) & -\mathfrak{c}_2\mathfrak{c}_N(\mathfrak{c}_3+1) & \cdots &(\mathfrak{c}_2\mathfrak{c}_3-1)(\mathfrak{c}_N+1)
    \end{bmatrix},
\end{equation*}
where
\begin{equation*} 
    \mathfrak{c}_2=(-1)\,\big[1-(\delta^*)'(-\psi^{3,b}(\boldsymbol{y})-y^2)\big] \in (-1, 0], \; \mathfrak{c}_i=(-1)\,\big[1-(\delta^*)'(-\psi^{2,b}(\boldsymbol{y})-y^i)\big] \in (-1,0]
\end{equation*}
for $i\in\{3, \cdots, N\}$. Since the interaction between agents now is
\begin{equation*}
    \Big(\psi^{1,b}(\boldsymbol{y})-\psi^{2,b}(\boldsymbol{y}),\, \psi^{2,b}(\boldsymbol{y})-\psi^{3,b}(\boldsymbol{y}),\, \psi^{3,b}(\boldsymbol{y})-\psi^{2,b}(\boldsymbol{y}),\, \dots,\, \psi^{N,b}(\boldsymbol{y})-\psi^{2,b}(\boldsymbol{y})\Big),
\end{equation*}
the Jacobian of $\rho^b$ admits the representation as the following block matrix:
\begin{equation*}
\nabla_{\boldsymbol{y}} \rho^b(t, \boldsymbol{y})=\,
\begin{bmatrix}
    0 & -\Vec{C}\\
    \Vec{0} & \hspace{0.3cm} D
\end{bmatrix},
\end{equation*}
where $\Vec{0}\in\mathbb{R}^{N-1}$ is the zero vector, $\Vec{C}\in\mathbb{R}^{N-1}$ is a vector with non-negative entries, and $D\in\mathbb{R}^{(N-1)\times(N-1)}$ is an $M_0$-matrix. Through an application of the Laplace expansion, one can see $\nabla_{\boldsymbol{y}} \rho^b(t, \boldsymbol{y})$ is an $M$-matrix. Since $\Psi^b$ is symmetric with respect to the index, above situations with different indices will not change the desired property of the Jacobian matrix.

Finally, we examine the case when the truncation by $-\xi$ happens. Especially, given $\boldsymbol{y}$, there exists some $k$ such that
\begin{equation*}
    \Psi^{k,b}(\psi^b(\boldsymbol{y}), \boldsymbol{y})=\psi^{k,b}(\boldsymbol{y})-\big[\Bar{\psi}^{k,b}(\boldsymbol{y})+\delta^*(-y^k-\Bar{\psi}^{k,b}(\boldsymbol{y}))\big]\vee(-\xi)\wedge\xi=\psi^{k,b}(\boldsymbol{y})+\xi.
\end{equation*}
Suppose this is true for multiple indices; for example $1$ and $2$. Since $-\xi$ is the smallest value admissible, the smallest and second smallest value of the vector $\psi^b(\boldsymbol{y})$ are then both $-\xi$. As a result, for $j\notin \{1,2\}$ it holds that
\begin{equation*}
    \psi^{j,b}(\boldsymbol{y})=\Bar{\psi}^{j,b}(\boldsymbol{y})+\delta^*(-y^j-\Bar{\psi}^{j,b}(\boldsymbol{y}))=-\xi+\delta^*(-y^j+\xi),
\end{equation*}
which infers $\psi^{j,b}(\boldsymbol{y})=\psi^{j,b}(y^j)$ is decreasing in $y^j$. Because the interaction between agents now reads
\begin{equation*}
    \Big(\psi^{1,b}(\boldsymbol{y})-\psi^{2,b}(\boldsymbol{y})=0,\, \psi^{2,b}(\boldsymbol{y})-\psi^{1,b}(\boldsymbol{y})=0,\, \psi^{3,b}(\boldsymbol{y})+\xi,\, \dots,\, \psi^{N,b}(\boldsymbol{y})+\xi\Big),
\end{equation*}
the Jacobian of $\rho^b$ admits the following block representation:
\begin{equation*}
\nabla_{\boldsymbol{y}} \rho^b(t, \boldsymbol{y})=\,
\begin{bmatrix}
    O & O\\
    O & D
\end{bmatrix},
\end{equation*}
where $O\in\mathbb{R}^{2\times2}$ is the zero matrix and $D\in\mathbb{R}^{2\times2}$ is a diagonal matrix with non-negative entries. It is now straightforward to see that $\nabla_{\boldsymbol{y}} \rho^b(t, \boldsymbol{y})$ is an $M$-matrix. Next, suppose there is only index $1$ with truncation being applied. The definition of implicit function then yields
\begin{equation*}
    \psi^{1,b}(\boldsymbol{y})=-\xi \text{ \; and \; } \psi^{j,b}(\boldsymbol{y})=-\xi+\delta^*(-y^j+\xi),
\end{equation*}
for $j \neq 1$. Let $m=\argmin_{j \neq 1} \psi^{j,b}(\boldsymbol{y})$. With the interaction being
\begin{equation*}
    \Big(\psi^{1,b}(\boldsymbol{y}) - \psi^{m,b}(\boldsymbol{y}),\, \psi^{2,b}(\boldsymbol{y}) + \xi,\, \psi^{3,b}(\boldsymbol{y}) + \xi,\, \dots,\, \psi^{N,b}(\boldsymbol{y}) + \xi \Big),
\end{equation*}
the Jacobian of $\rho^b$ admits the representation as the following block matrix:
\begin{equation*}
\nabla_{\boldsymbol{y}} \rho^b(t, \boldsymbol{y})=\,
\begin{bmatrix}
    0 & -\Vec{C}\\
    \Vec{0} & \hspace{0.3cm} D
\end{bmatrix}.
\end{equation*}
Here, again $\Vec{C}$ is a non-negative vector, $D$ is non-negative diagonal matrix, and thus $\nabla_{\boldsymbol{y}} \rho^b(t, \boldsymbol{y})$ is an $M$-matrix. We remark again that change of index will not alter the targeted property of the matrix, and also the composition of different truncation can be discussed likewise.

The ask side can be verified in like manner, while we briefly summarize the calculation results when all truncation is of no effect. Consider the possible ordering 
\begin{equation}
    \psi^a_1(\boldsymbol{y})>\psi^a_2(\boldsymbol{y})>\cdots>\psi^a_N(\boldsymbol{y}).
    \label{example_ordering_ask}
\end{equation}
The interaction between agents now is
\begin{equation*}
    \Xi^a(\boldsymbol{y}):=\Big(\psi_1^a(\boldsymbol{y})-\psi_N^a(\boldsymbol{y}),\, \psi_2^a(\boldsymbol{y})-\psi_N^a(\boldsymbol{y}),\, \dots,\, \psi_{N-1}^a(\boldsymbol{y})-\psi_N^a(\boldsymbol{y}),\, \psi_N^a(\boldsymbol{y})-\psi_{N-1}^a(\boldsymbol{y})\Big).
\end{equation*}
Let us define
\begin{equation}
    \mathfrak{b}_N=(-1)\,\big[1-(\delta^*)'(y^N-\psi^{N-1,a}(\boldsymbol{y}))\big]  \text{\; and \;} \mathfrak{b}_i=(-1)\,\big[1-(\delta^*)'(y^i-\psi^{N,a}(\boldsymbol{y}))\big]
    \label{jacobian of interaction_2}
\end{equation}
for $i\in\{2,\dots,N\}$ and notice that $\mathfrak{b}_k\in(-1,0]$ for all $k\in\{1,\dots,N\}$. The Jacobian of the control $\psi^a(\boldsymbol{y})$ is given by
\begin{equation}
\begin{aligned}
    &\nabla  \psi^a(\boldsymbol{y})=\frac{1}{1-\mathfrak{b}_{N-1}\mathfrak{b}_N}\cdot\\
    &
\begin{bmatrix}
    (1-\mathfrak{b}_{N-1}\mathfrak{b}_N)(\mathfrak{b}_1+1) & 0 & \cdots & \mathfrak{b}_1\mathfrak{b}_N(\mathfrak{b}_{N-1}+1)& -\mathfrak{b}_1(\mathfrak{b}_{N}+1)\\
    
    0 & (1-\mathfrak{b}_{N-1}\mathfrak{b}_N)(\mathfrak{b}_2+1) & \cdots & \mathfrak{b}_2\mathfrak{b}_N(\mathfrak{b}_{N-1}+1)& -\mathfrak{b}_2(\mathfrak{b}_{N}+1)\\
   
    \vdots & \vdots & \ddots & \vdots & \vdots\\
    
    0 & 0 & \cdots & \mathfrak{b}_{N-1}+1& -\mathfrak{b}_{N-1}(\mathfrak{b}_N+1)\\

    0 & 0 & \cdots & -(\mathfrak{b}_{N-1}+1)\mathfrak{b}_N & \mathfrak{b}_N+1
\end{bmatrix}.
\label{Jacobian of control 2}
\end{aligned}
\end{equation}
and resulting $\nabla \Xi^a(\boldsymbol{y})$ reads
\begin{equation}
\begin{aligned}
    &\nabla \Xi^a(\boldsymbol{y})=\frac{1}{1-\mathfrak{b}_{N-1}\mathfrak{b}_N}\cdot\\
&\begin{bmatrix}
    (1-\mathfrak{b}_{N-1}\mathfrak{b}_N)(\mathfrak{b}_1+1) & 0 & \cdots & \mathfrak{b}_N(\mathfrak{b}_1+1)(\mathfrak{b}_{N-1}+1)& -(\mathfrak{b}_1+1)(\mathfrak{b}_{N}+1)\\
    
    0 & (1-\mathfrak{b}_{N-1}\mathfrak{b}_N)(\mathfrak{b}_2+1) & \cdots & \mathfrak{b}_N(\mathfrak{b}_2+1)(\mathfrak{b}_{N-1}+1)& -(\mathfrak{b}_2+1)(\mathfrak{b}_{N}+1)\\

    \vdots & \vdots & \ddots & \vdots & \vdots\\
    
    0 & 0 & \cdots & (\mathfrak{b}_{N-1}+1)(\mathfrak{b}_N+1)& -(\mathfrak{b}_{N-1}+1)(\mathfrak{b}_N+1)\\

    0 & 0 & \cdots & -(\mathfrak{b}_{N-1}+1)\,(\mathfrak{b}_N+1) & (\mathfrak{b}_{N-1}+1)(\mathfrak{b}_N+1)
\end{bmatrix}.
\label{jacobian of ask interaction}
\end{aligned}
\end{equation}
Since $-\Lambda$ is increasing, the Jacobian $\nabla_{\boldsymbol{y}}\rho^a(t,\boldsymbol{y})$ is the multiplication of a non-negative diagonal matrix with $\nabla \Xi^a(\boldsymbol{y})$. Therefore, the matrix $\nabla_{\boldsymbol{y}}\rho^a(t,\boldsymbol{y})$ exhibits non-negative diagonal entries and non-positive off-diagonal entries, with zero row sums. By Lemma \ref{Z-matrix}, it is an $M_0$-matrix and thus an $M$-matrix.
\end{proof}

\kong

\noindent We then seek the generalization of the above result to the whole space.

\kong

\begin{theorem}
\label{global_Jacobian}
    For all $\boldsymbol{y}\in \mathbb{R}^N$, the following statements hold:
    \kong
    
\begin{itemize}
    \item[(i)] any matrix in $\partial_{\boldsymbol{y}} \psi^b(\boldsymbol{y})$ is element-wise uniformly bounded by some constant independent of $\xi$. The matrix is non-positive and is diagonally dominant of column entries;\\
    \vspace{-0.2cm}
    
    \item[(ii)] any matrix in $\partial_{\boldsymbol{y}} \psi^a(\boldsymbol{y})$ is element-wise uniformly bounded by some constant independent of $\xi$. The matrix is non-negative and is diagonally dominant of column entries;\\
    \vspace{-0.2cm}

    \item[(iii)] any matrix in $\partial_{\boldsymbol{y}}\rho(t,\boldsymbol{y})$ is an $Z_+$-matrix, and notably an $M_0$-matrix in the absence of $\xi$.\\
    \vspace{-0.2cm}
\end{itemize}
\end{theorem}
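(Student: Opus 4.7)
The plan is to lift Lemma \ref{grad at smooth region} from the open region where $\Psi^b$ and $\Psi^a$ are differentiable at the graphs $(\psi^b(\boldsymbol{y}),\boldsymbol{y})$, $(\psi^a(\boldsymbol{y}),\boldsymbol{y})$ to the whole space by exploiting the definition of the Clarke generalized Jacobian as the convex hull of limits of ordinary gradients at nearby differentiability points. The three structural properties to be preserved---non-positivity (respectively non-negativity), the $\xi$-independent element-wise bound, and the column diagonal dominance $|A_{ii}|\geq|A_{ji}|$---are all closed under limits, and for the sign-restricted matrices in question they are also convex conditions. The argument therefore reduces to a density statement plus a finite bookkeeping of the smooth regions.

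Concretely, I would first partition $\mathbb{R}^N$, up to a closed set with empty interior, into finitely many open regions indexed by a permutation giving a strict ordering of the components of $\psi^b(\boldsymbol{y})$ together with a truncation profile in $\{-\xi,\mathrm{int},+\xi\}^N$ specifying which components are saturated at which bound. Since each $\psi^{i,b}$ is Lipschitz in $\boldsymbol{y}$, the exceptional set
\begin{equation*}
Z^b:=\bigcup_{i\neq j}\big\{\boldsymbol{y}:\psi^{i,b}(\boldsymbol{y})=\psi^{j,b}(\boldsymbol{y})\big\}\cup\bigcup_i\big\{\boldsymbol{y}:|\psi^{i,b}(\boldsymbol{y})|=\xi\big\}
\end{equation*}
is a closed union of Lipschitz level sets with empty interior, so the smooth regions form a dense open subset of $\mathbb{R}^N$. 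Inside any such region, the case analysis in the proof of Lemma \ref{grad at smooth region} supplies an explicit formula for $\nabla\psi^b(\boldsymbol{y})$ which is continuous and exhibits all three structural properties with a bound depending only on $\inf_{x\in\mathbb{R}}(\delta^*)'(x)$ and $\sup_{x\in\mathbb{R}}(\delta^*)'(x)$, hence $\xi$-independent. The ask side is symmetric.

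Next I invoke the standard representation
\begin{equation*}
\partial\psi^b(\boldsymbol{y})=\operatorname{co}\Big\{\lim_{k\to\infty}\nabla\psi^b(\boldsymbol{y}_k):\boldsymbol{y}_k\to\boldsymbol{y},\; \psi^b\text{ differentiable at }\boldsymbol{y}_k\Big\}.
\end{equation*}
By density of the smooth regions, the sequences $\{\boldsymbol{y}_k\}$ may be restricted to $\mathbb{R}^N\setminus Z^b$, so every limit matrix satisfies the conclusions of Lemma \ref{grad at smooth region}. These conclusions are preserved when passing to the convex hull: non-positivity and the uniform element-wise bound are manifestly convex, while column dominance $|A_{ii}|\geq|A_{ji}|$ in the non-positive regime is equivalent to the linear inequality $A_{ji}\geq A_{ii}$, which is also closed under convex combinations. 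This proves (i), and (ii) follows by the symmetric argument on the ask side. For (iii), each smooth-region Jacobian $\nabla_{\boldsymbol{y}}\rho(t,\boldsymbol{y})$ is shown in Lemma \ref{grad at smooth region} to be an $M$-matrix, hence a fortiori a $Z_+$-matrix; since the $Z_+$ condition (non-negative diagonal, non-positive off-diagonal) is a closed convex cone in matrix space, the property passes to all elements of $\partial_{\boldsymbol{y}}\rho(t,\boldsymbol{y})$.

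I expect the main obstacle to be the combinatorial bookkeeping of the region decomposition: one must verify that the finitely many cases (orderings $\times$ truncation profiles) treated in Lemma \ref{grad at smooth region} genuinely exhaust the smooth regions and produce a common $\xi$-independent element-wise bound. The weakening from $M$-matrix in the smooth regions to $Z_+$-matrix in the statement of (iii) is unavoidable and reflects the fact that non-negative stability of $Z$-matrices is not in general preserved under convex combinations.
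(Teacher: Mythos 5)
Your overall architecture matches the paper's: take Lemma \ref{grad at smooth region} as the base case, represent $\partial_{\boldsymbol{y}}\psi^b(\boldsymbol{y})$ as a convex hull of limiting gradients, and observe that non-positivity, the $\xi$-independent bound, the (linearized) column dominance, and the $Z_+$ structure are all closed under limits and convex combinations. That last bookkeeping is fine. The gap is in the reduction step: you claim the sequences in the Clarke representation ``may be restricted to $\mathbb{R}^N\setminus Z^b$'' by density of the smooth regions. The generalized Jacobian is only insensitive to the removal of sets of Lebesgue measure zero (Clarke's theorem on null-set exclusion); a closed set with empty interior need not be null, and density of the complement is not the relevant hypothesis. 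Worse, your $Z^b$ is not even nowhere dense here: on the open region where the truncation saturates several components simultaneously one has $\psi^{i,b}\equiv\psi^{j,b}\equiv\xi$, so both the tie sets $\{\psi^{i,b}=\psi^{j,b}\}$ and the sets $\{|\psi^{i,b}|=\xi\}$ contain open subsets of positive measure. Excluding them from the representation discards genuine elements of $\partial_{\boldsymbol{y}}\psi^b(\boldsymbol{y})$ (limits of gradients taken at differentiability points inside those regions), so your argument proves the asserted properties only for a possibly proper subset of the generalized Jacobian.

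Even where the exceptional set is genuinely thin, establishing that is not free: the paper's regions (i) and (ii) show that a tie at the minimum or at the second minimum forces $y^j=y^k$ via the implicit relations $\psi^{j,b}(\boldsymbol{y})=\bar{\psi}^{j,b}(\boldsymbol{y})+\delta^*(-y^j-\bar{\psi}^{j,b}(\boldsymbol{y}))$ together with $(\delta^*)'\in(\epsilon,1)$, and only this computation places the untruncated tie sets inside hyperplanes. The paper therefore does not attempt your exclusion at all; instead it directly analyzes the gradients of $\psi^b$ and $\rho^b$ at and near the points where $\Psi^b$ fails to be differentiable at $(\psi^b(\boldsymbol{y}),\boldsymbol{y})$ --- obtaining explicit formulae for the implicit functions on the tie sets and block/zero structures on the saturated regions --- and verifies the $Z_+$ and dominance properties of whatever limiting matrices arise there. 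That case analysis is exactly the content your proposal omits, and it is needed: the conclusion happens to survive on those regions (the extra gradients are zero rows or explicit $M_0$-type blocks), but this must be checked rather than defined away.
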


\begin{proof}

We begin with the bid side. Fix any $\boldsymbol{y}\in \mathbb{R}^N$. Since $\rho^b$ is Lipschitz with respect to $\boldsymbol{y}$, the set $\partial_{\boldsymbol{y}}\rho^b(t,\boldsymbol{y})$ is the convex hull generated by matrices of the kind $\lim_{\boldsymbol{y}_n\to \boldsymbol{y}} \nabla_{\boldsymbol{y}}\rho^b(t,\boldsymbol{y}_n)$, where $\rho^b$ is differentiable at $\boldsymbol{y}_n$ for each $n$. Considering that Lemma \ref{grad at smooth region} has studied $\boldsymbol{y}$ such that $(\psi^b(\boldsymbol{y}), \boldsymbol{y})\notin\mathscr{D}^b$, it suffices to explore $\boldsymbol{y}$ such that $\Psi^b$ is not differentiable at $(\psi^b(\boldsymbol{y}), \boldsymbol{y})$. Scenarios in $\mathscr{D}^b$ can be categorized into three regions:
\kong

\begin{itemize}
    \item[(i)] there are multiple indices achieving the minimum, i.e., for some indices $j$ and $k$, it holds
    \begin{equation}
    \psi^{j,b}(\boldsymbol{y})=\psi^{k,b}(\boldsymbol{y})=\min_l \psi^{l,b}(\boldsymbol{y});
        \label{region i}
    \end{equation}
    
    \vspace{-0.1cm}
    
    \item[(ii)] there are multiple indices achieving the second minimum, i.e., for some indices $j$ and $k$, we set $l:=\argmin_i \psi^{i,b}(\boldsymbol{y})$ and then the following holds
    \begin{equation}
        \min_{i\neq j, k, l} \psi^{i,b}(\boldsymbol{y})\geq\psi^{j,b}(\boldsymbol{y})=\psi^{k,b}(\boldsymbol{y})>\psi^{l,b}(\boldsymbol{y});
        \label{region ii}
    \end{equation}
    
    \vspace{-0.1cm}

    \item[(iii)] there exists some index $k$ such that it touches the truncation level:
    \begin{equation*}
        \Bar{\psi}^{k,b}(\boldsymbol{y})+\delta^*(-y^k-\Bar{\psi}^{k,b}(\boldsymbol{y}))=\pm\xi.
    \end{equation*}
\end{itemize}

\kong

\noindent Region (i) refers to the area where $\Psi^b$ is not differentiable for the `non-minimizing' indices. Given any indices $j, k$ with property \eqref{region i}, first we can observe that
\begin{equation*}
    \rho^{j,b}(t,\boldsymbol{y}) = \rho^{k,b}(t,\boldsymbol{y})=b_t\,\Lambda(0).
\end{equation*}
Therefore, the $j$-th and $k$-th rows in the (generalized) Jacobian matrix are $N$-dimensional zero vectors. Since $\psi^{j,b}$, $\psi^{k,b}$ are implicit functions, it further implies
\begin{equation}
\begin{aligned}
    \psi^{j,b}(\boldsymbol{y})&=\psi^{k,b}(\boldsymbol{y})+\delta^*(-y^j-\psi^{k,b}(\boldsymbol{y})),\\
    \psi^{k,b}(\boldsymbol{y})&=\psi^{j,b}(\boldsymbol{y})+\delta^*(-y^k-\psi^{j,b}(\boldsymbol{y})),
\end{aligned}
\label{region_i_}
\end{equation}
for any $\boldsymbol{y}$ in region (i). Due to the fact that $(\delta^*)'\in(\epsilon, 1)$ for some $\epsilon>0$, we can deduce from \eqref{region_i_} that
\begin{equation}
    y^j=y^k \text{ \; and \; } \psi^{j,b}(\boldsymbol{y})=-y^j-\check{\delta}^*(0)=-y^k-\check{\delta}^*(0)=\psi^{k,b}(\boldsymbol{y}),
    \label{region_i__}
\end{equation}
for any $\boldsymbol{y}$ in region (i), where $\check{\delta}^*$ is the inverse function of $\delta^*$. Based on \eqref{region_i__}, the implicit functions of the others now possess explicit formulae as 
\begin{equation}
\begin{aligned}
    \psi^{l,b}(\boldsymbol{y})&=\psi^{j,b}(\boldsymbol{y})+\delta^*(-y^l-\psi^{j,b}(\boldsymbol{y}))\\
    &=-y^j-\check{\delta}^*(0)+\delta^*\big(-y^l+y^j+\check{\delta}^*(0)\big)=-y^k-\check{\delta}^*(0)+\delta^*\big(-y^l+y^k+\check{\delta}^*(0)\big)
\end{aligned}
\end{equation}
for $l\neq j, k$. It suffices to consider $y^j$ only. We infer that $\psi^{l,b}$ is decreasing in both $y^l$ and $y^k$. The partial derivative with respect to $y^k$ is no smaller than $-1$, together with \eqref{region_i__} implying the diagonal dominance of column entries. Bearing in mind that $j$ achieves the minimum, we then calculate the gradient of interactions
\begin{equation*}
    \nabla \big(\psi^{l,b}-\psi^{j,b}\big)(\boldsymbol{y})=\Big( \cdots, \underbrace{(\delta^*)'\big(-y^l+y^j+\check{\delta}^*(0)\big)}_{\text{$j$-th entry}}, \cdots, \underbrace{-(\delta^*)'\big(-y^l+y^j+\check{\delta}^*(0)\big)}_{\text{$l$-th entry}}, \cdots\Big).
\end{equation*}
The $l$-th entry of $\nabla \big(\psi^{l,b}-\psi^{j,b}\big)$ is negative and its `off-diagonal' entries are non-negative. Because $\Lambda$ is decreasing, we conclude that $\nabla_{\boldsymbol{y}}\rho^b(t,\boldsymbol{y})$ is an $M_0$-matrix in region (i). The case that more than two indices satisfy property \eqref{region i} can be discussed in an analogous manner.

Region (ii) refers to the area where $\Psi^b$ is not differentiable for the `minimizing’ indices. Since the overlapping at the minimum value has been discussed, we now assume that $l$ is the unique minimizing index and consider the property
\begin{equation}
        \min_{i\neq j, k, l} \psi^{i,b}(\boldsymbol{y})>\psi^{j,b}(\boldsymbol{y})=\psi^{k,b}(\boldsymbol{y})>\psi^{l,b}(\boldsymbol{y}).
        \label{region_ii_}
    \end{equation}
Since $\psi^{j,b}, \psi^{k,b}$ are implicit functions, we similarly have 
\begin{equation*}
\begin{aligned}
    \psi^{j,b}(\boldsymbol{y})&=\psi^{l,b}(\boldsymbol{y})+\delta^*(-y^j-\psi^{l,b}(\boldsymbol{y})),\\
    \psi^{k,b}(\boldsymbol{y})&=\psi^{l,b}(\boldsymbol{y})+\delta^*(-y^k-\psi^{l,b}(\boldsymbol{y})),
\end{aligned}
\end{equation*}
and subsequently $y^j=y^k$, for any $\boldsymbol{y}$ in region (ii). Let $\mathcal{I}$ be the index set defined as $\mathcal{I}=\{1, \dots, N\}\backslash k$. According to \eqref{region_ii_} and the continuity of implicit functions, there exists a neighborhood $\mathscr{B}\subseteq\mathbb{R}^{N-1}$ of $(y^i)_{i\in\mathcal{I}}$ such that
\begin{equation*}
        \min_{i\in\mathcal{I},\; i\neq j,l} \psi^{i,b}(\boldsymbol{y})>\psi^{j,b}(\boldsymbol{y})>\psi^{l,b}(\boldsymbol{y})
\end{equation*}
for any $(y^i)_{i\in\mathcal{I}}\in\mathscr{B}$. Notice that $(\Psi^{i,b})_{i\in\mathcal{I}}$ is differentiable in $\mathscr{B}$. While $(\psi^{i,b})_{i\in\mathcal{I}}$ is the implicit functions for $(\Psi^{i,b})_{i\in\mathcal{I}}$ in $\mathscr{B}$, the uniqueness implies that $(\psi^{i,b})_{i\in\mathcal{I}}$ satisfies the property stated in Lemma \ref{grad at smooth region}. After applying the same argument for $k$, we can deduce in region (ii) that functions $\psi^{j,b}$ and $\psi^{k,b}$ are equivalent. Consequently, the resulting Jacobian $\nabla_{\boldsymbol{y}}\rho^b(t,\boldsymbol{y})$ is still an $M_0$-matrix. The case that more than two indices satisfy property \eqref{region_ii_} can be again discussed analogously.

The analysis of region (iii) is already included in the proof of Lemma \ref{grad at smooth region}. We can then conclude that $\nabla_{\boldsymbol{y}}\rho^b(t,\boldsymbol{y})$ is an $M$-matrix whenever it is well-defined. Recall that
\begin{equation*}
        \partial_{\boldsymbol{y}}\rho^b(t,\boldsymbol{y})= \text{co }\big\{ \lim_{n\to\infty} \nabla_{\boldsymbol{y}}\rho^b(t,\boldsymbol{y}_n) \,:\, \boldsymbol{y_n}\to \boldsymbol{y}, \, \nabla_{\boldsymbol{y}}\rho^b(t,\boldsymbol{y}_n) \text{\, is well-defined} \big\}.
\end{equation*}
Although the convex combination of $M$-matrices may lead to a non-$M$-matrix, any matrix in $\partial_{\boldsymbol{y}}\rho^b(t,\boldsymbol{y})$
must be a $Z_+$-matrix, considering any $M$-matrix has non-negative diagonal elements and non-positive off-diagonal elements. A similar discussion infers that any matrix in $\partial_{\boldsymbol{y}}\rho^a(t,\boldsymbol{y})$ is a $Z_+$-matrix. The statement that any matrix in $\partial_{\boldsymbol{y}}\rho(t,\boldsymbol{y})$ is an $Z_+$-matrix then follows from the sum rule in Theorem \ref{sum rule}.
\end{proof}

\vspace{0.2cm}

\section{General Game: Ordering Property and Decoupling Field}
\label{paper 2 section 5}
\noindent We will consider the homogeneous case for the global well-posedness. Similar to the linear scenario, the ordering property will be derived first, with an immediate application in removing the constraint $\xi$. Moreover, the property simplifies the analysis from $Z_+$-matrices to $M_0$-matrices. Afterward, we will present the connection between the FBSDE and the characteristic BSDE. Let us begin with the following assumption:

\kong

\begin{assumption}
    (1) The agents are homogeneous in penalty coefficients. Specifically, there are bounded $\phi:=(\phi_t)_{t\in[0,T]}\in\mathbb{H}^2$ and $A\in L^2(\Omega, \mathcal{F}_T)$, such that $\phi^i=\phi$ and $A^i=A$ for all $i$.
    
    (2) The index of the agent indicates the rank of her initial inventory as follows:
    \begin{equation*}
        q_0^1\leq q_0^2 \leq \cdots \leq q_0^{N-1} \leq q_0^{N}.
    \end{equation*}
\end{assumption}

\kong

\noindent In analogy to the linear case, the ordering property under the general intensity function is presented in the following lemma, utilizing matrix properties in Theorem \ref{global_Jacobian}.

\kong

\begin{lemma}
\label{ordering in general}
Suppose the FBSDE \eqref{general FBSDE} has a solution on $[0,T]$, then the corresponding equilibrium satisfies
\begin{equation*}
\begin{aligned}
    \psi^{1,b}(\boldsymbol{Y}_t)\leq\psi^{2,b}(\boldsymbol{Y}_t)\leq &\cdots \leq \psi^{N-1,b}(\boldsymbol{Y}_t)\leq \psi^{N,b}(\boldsymbol{Y}_t),\\
    \psi^{1,a}(\boldsymbol{Y}_t)\geq\psi^{2,a}(\boldsymbol{Y}_t)\geq &\cdots \geq \psi^{N-1,a}(\boldsymbol{Y}_t)\geq \psi^{N,a}(\boldsymbol{Y}_t)
\end{aligned}
\end{equation*}
almost surely for all $t\in[0,T]$.
\end{lemma}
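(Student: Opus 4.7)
My plan is to mimic the linear-case argument of Lemma \ref{no_crossing}: for each pair $i<j$ I reduce the pairwise comparison to a scalar FBSDE of exactly the form already handled by Lemma 3.6 of \cite{guo2023macroscopic}, using the full permutation symmetry of the problem (homogeneous case) together with the $Z_{+}$-structure of $\partial_y\rho$ from Theorem \ref{global_Jacobian}.

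Fix $i<j$, so that $q_0^j-q_0^i\geq 0$, and set $\Delta Q=Q^j-Q^i$, $\Delta Y=Y^j-Y^i$, $\Delta M=M^j-M^i$. The backward side drops out of \eqref{general FBSDE} as $d\Delta Y=2\phi\,\Delta Q\,dt+d\Delta M$ with $\Delta Y_T=-2A\,\Delta Q_T$. For the forward side, the symmetry of the mappings $\Psi^{a/b}$ under permutations of the $N$ agents and the uniqueness in Theorem \ref{Verification} give the equivariance $\psi^{a/b}(\sigma_{ij}\boldsymbol{y})=\sigma_{ij}\psi^{a/b}(\boldsymbol{y})$, where $\sigma_{ij}$ denotes the transposition of indices $i$ and $j$; consequently $\rho^j(t,\boldsymbol{y})=\rho^i(t,\sigma_{ij}\boldsymbol{y})$. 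I then apply Clarke's mean value theorem to the scalar Lipschitz map $\rho^i(t,\cdot)$ on the segment $[\boldsymbol{Y}_t,\sigma_{ij}\boldsymbol{Y}_t]$ to produce some $\zeta\in\partial_y\rho^i(t,z)$ with
\[
\rho^j(t,\boldsymbol{Y}_t)-\rho^i(t,\boldsymbol{Y}_t)=\langle\zeta,\sigma_{ij}\boldsymbol{Y}_t-\boldsymbol{Y}_t\rangle=(\zeta_i-\zeta_j)\,\Delta Y_t.
\]
Since $\zeta$ is the $i$-th row of a matrix in $\partial_y\rho(t,z)$ and every such matrix is a $Z_{+}$-matrix by Theorem \ref{global_Jacobian}, $\zeta_i\geq 0\geq\zeta_j$. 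The bounded measurable coefficient $h_t:=(\rho^j(t,\boldsymbol{Y}_t)-\rho^i(t,\boldsymbol{Y}_t))/\Delta Y_t$ (set to $0$ where $\Delta Y_t=0$) is therefore non-negative, and $(\Delta Q,\Delta Y,\Delta M)$ solves the same scalar affine FBSDE studied as \eqref{linear_FBSDE} in the linear setting. Lemma 3.6 of \cite{guo2023macroscopic} then delivers $\Delta Y_t\leq 0$ throughout, i.e.\ $Y^i_t\geq Y^j_t$.

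It remains to transport the sign of $\Delta Y$ to the ordering of $\psi^b$ and $\psi^a$. In the interior of the truncation box, the implicit relation $\psi^{k,b}-\bar{\psi}^{k,b}=\delta^{*}(-Y^k-\bar{\psi}^{k,b})$, differenced at $k=i,j$ and combined with the mean value theorem for $\delta^{*}$ (smooth with $(\delta^{*})'\in(0,1)$ by Lemma \ref{inten_fun}), gives $\Delta\psi^b-\Delta\bar{\psi}^b=-\alpha\,(\Delta Y+\Delta\bar{\psi}^b)$ for some $\alpha\in(0,1)$. A short case split on the position of $\min_{k\neq i,j}\psi^{k,b}$ relative to $\psi^{i,b}$ and $\psi^{j,b}$ shows $\Delta\bar{\psi}^b=-\beta\,\Delta\psi^b$ for some $\beta\in[0,1]$, which is the non-smooth counterpart of the ratio estimate invoked in Lemma \ref{no_crossing}. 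Combining the two identities yields $\Delta\psi^b=-\alpha\,\Delta Y\,/\,[1+(1-\alpha)\beta]$, of opposite sign to $\Delta Y$, so $\Delta\psi^b\geq 0$; the ask side is symmetric and gives $\Delta\psi^a\leq 0$. Active truncation is handled separately: if $\psi^{i,b}=\xi$, monotonicity of $\delta^{*}$ together with $Y^j\leq Y^i$ forces $\psi^{j,b}=\xi$ as well, and the boundary $-\xi$ is analogous.

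The hard part is extracting the sign of $h_t$ in Step 2. Without the $Z_{+}$-structure of $\partial_y\rho$ from Theorem \ref{global_Jacobian}, one has no way to certify a signed scalar linearisation of the vector-valued, non-smooth map $\rho$; that is precisely what the non-smooth analysis of Section \ref{paper 2 section 4} was developed to deliver. Once $h_t\geq 0$ is in hand, the remaining argument is a careful but elementary calculation that parallels the linear case, with the mean value theorem for the smooth function $\delta^{*}$ taking the role of the explicit linear formula.
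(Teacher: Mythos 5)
Your proof is correct and follows the same overall strategy as the paper's: reduce each pairwise comparison to a scalar affine FBSDE of the type \eqref{linear_FBSDE}, extract the sign of the coefficient from the matrix structure established in Theorem \ref{global_Jacobian}, invoke Lemma 3.6 of \cite{guo2023macroscopic} to get the sign of $\Delta Y$, and then transport that sign back to the controls. Two of your implementation choices differ from the paper's and are worth noting. First, you obtain the signed coefficient $h_t$ via permutation equivariance $\rho^j(t,\boldsymbol{y})=\rho^i(t,\sigma_{ij}\boldsymbol{y})$ and Clarke's mean value theorem along the segment $[\boldsymbol{Y}_t,\sigma_{ij}\boldsymbol{Y}_t]$; the paper instead builds the one-variable functions $\varrho^i,\varrho^j$ by freezing the other coordinates and must separately verify $\varrho(t,0)=0$ through the implicit relation. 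Your route makes that vanishing automatic (since $\sigma_{ij}\boldsymbol{Y}=\boldsymbol{Y}$ when $Y^i=Y^j$), which is a genuine economy. One technical caveat: $\partial_{\boldsymbol{y}}\rho^i(t,z)$ is not literally the set of $i$-th rows of matrices in $\partial_{\boldsymbol{y}}\rho(t,z)$ (the containment between a generalized Jacobian and the product of component gradients goes the other way); you should instead argue that the component gradient is the convex hull of limits of $\nabla\rho^i$ taken along differentiability points of the full map $\rho$, where the row sign pattern $\zeta_i\ge 0\ge\zeta_j$ holds and is preserved under convex combinations. Second, for the final transfer you manipulate the implicit relation directly with the scalar mean value theorem for $\delta^*$ and the $1$-Lipschitz contraction of the min, arriving at $\Delta\psi^b=-\alpha\,\Delta Y/[1+(1-\alpha)\beta]$; the paper instead applies the non-smooth mean value theorem to the $2\times 2$ block of $\partial\psi^b$ and uses its non-positivity and column diagonal dominance. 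Your version is more elementary here, needing only $(\delta^*)'\in(0,1)$ rather than the full statement of Theorem \ref{global_Jacobian}(i), and your boundary argument for the active truncation (monotonicity of $z\mapsto z+\delta^*(-y-z)$ forcing $\psi^{j,b}=\xi$ whenever $\psi^{i,b}=\xi$ and $Y^j\le Y^i$) is sound, if terse.
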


\begin{proof}
Denote by $(\boldsymbol{Q}, \boldsymbol{Y}, \boldsymbol{M})$ the solution of the FBSDE \eqref{general FBSDE}. For any $i\in\{1,\dots, N\}$, we first define the function $\varrho^i$ as
\begin{equation*}
\begin{aligned}
    b_t\,\Lambda\big(\psi^{i,b}(\boldsymbol{y})-&\bar{\psi}^{i,b}(\boldsymbol{y})\big) - a_t\,\Lambda\big(\psi^{i,a}(\boldsymbol{y})-\bar{\psi}^{i,a}(\boldsymbol{y})\big)\\
    &=b_t\,\Lambda\big(\psi^{i,b}(\dots, \underbrace{y^i-y^j+y^j}_{\text{i-th entry}}, \dots)-\bar{\psi}^{i,b}(\dots, y^i-y^j+y^j, \dots)\big)\\
    &\hspace{2cm} -a_t\,\Lambda\big(\psi^{i,a}(\dots, y^i-y^j+y^j, \dots)-\bar{\psi}^{i,a}(\dots, y^i-y^j+y^j, \dots)\big)\\
    &=\varrho^i(t,y^i-y^j),
\end{aligned}
\end{equation*}
where function $\varrho^i$ is defined as
\begin{equation*}
\begin{aligned}
    \varrho^i(t, x) &= b_t\,\Lambda\big(\psi^{i,b}(\dots, \underbrace{x+y^j}_{\text{i-th entry}}, \dots)-\bar{\psi}^{i,b}(\dots, x + y^j, \dots)\big)\\
    &\hspace{2cm} - a_t\,\Lambda\big(\psi^{i,a}(\dots, x+y^j, \dots)-\bar{\psi}^{i,a}(\dots, x+y^j, \dots)\big)
\end{aligned}
\end{equation*}
for a given vector $\boldsymbol{y}$. According to Theorem \ref{global_Jacobian}, function $\varrho^i$ is non-decreasing with respect to the second variable. We pick $j \geq i$ as some other index and define function $\varrho^j$ through the same trick on the $j$-th entry. Note that $\varrho^j$ is also non-decreasing in the space variable. If one further set $(\Delta Q, \Delta
 Y, \Delta M):=(Q^i-Q^j, Y^i-Y^j, M^i-M^j)$, the following can then be obtained:
\begin{equation*}
    d\Delta Q_t = \big( \varrho^i(t,Y_t^i-Y_t^j) - \varrho^j(t, -Y_t^i+Y_t^j) \big) \, dt= \varrho(t,Y_t^i-Y_t^j) \, dt.
\end{equation*}
Here, function $\varrho$ is defined as $\varrho(t,x) = \varrho^i(t, x) - \varrho^j(t, -x)$ and it is non-decreasing with respect to $x$. Think of the case when $\Delta Y_t = Y_t^i - Y_t^j = 0$. Since $\psi^{i,b}$, $\psi^{j,b}$ are implicit functions, it holds that
\begin{equation}
\label{implicit in ordering}
\begin{aligned}
    \psi^{i,b}(\boldsymbol{y})&= \big[ \bar{\psi}^{i,b}(\boldsymbol{y})+\delta^*(-y^i-\bar{\psi}^{i,b}(\boldsymbol{y})) \big] \vee(-\xi)\wedge\xi, \\
    \psi^{j,b}(\boldsymbol{y})&= \big[ \bar{\psi}^{j,b}(\boldsymbol{y})+\delta^*(-y^j-\bar{\psi}^{j,b}(\boldsymbol{y})) \big] \vee(-\xi)\wedge\xi.
\end{aligned}
\end{equation}
Given $y^i=y^j$, let us assume that $\psi^{i,b}(\boldsymbol{y})>\psi^{j,b}(\boldsymbol{y})$. The definition first yields 
\begin{equation*}
    \bar{\psi}^{i,b}(\boldsymbol{y})= \psi^{j,b}(\boldsymbol{y}) \wedge \min_{k\neq i, j} \psi^{k,b}(\boldsymbol{y}) \leq \psi^{i,b}(\boldsymbol{y}) \wedge \min_{k\neq i, j} \psi^{k,b}(\boldsymbol{y}) = \bar{\psi}^{j,b}(\boldsymbol{y}).
\end{equation*}
Since $(\delta^*)'\in(0,1)$, the function defined by
\begin{equation*}
    z \mapsto z + \delta^*(-y^i-z)
\end{equation*}
is increasing. Consequently, this leads to a contradiction that
\begin{equation*}
    \bar{\psi}^{i,b}(\boldsymbol{y})+\delta^*(-y^i-\bar{\psi}^{i,b}(\boldsymbol{y})) \leq \bar{\psi}^{j,b}(\boldsymbol{y})+\delta^*(-y^j-\bar{\psi}^{j,b}(\boldsymbol{y})).
\end{equation*}
The contradiction when $\psi^{i,b}(\boldsymbol{y})<\psi^{j,b}(\boldsymbol{y})$ can be found symmetrically. Since the relation for $\psi^{i,a}$ and $\psi^{j,a}$ can be analyzed similarly, it is now evident that $\Delta Y_t=Y_t^i-Y_t^j=0$ yields $\psi^{i,b}(\boldsymbol{Y}_t)=\psi^{j,b}(\boldsymbol{Y}_t)$, $\psi^{i,a}(\boldsymbol{Y}_t)=\psi^{j,a}(\boldsymbol{Y}_t)$, and thus $\varrho(t,0)=0$. Literally, agent $i$ and $j$ execute the same control. Consequently, we can write
\begin{equation*}
    d\Delta Q_t = \varrho(t,\Delta Y_t)\,dt = \big(\varrho(t,\Delta Y_t) - \varrho(t,0)\big)\,dt=\iota_t\,\Delta Y_t \, dt,
\end{equation*}
where $\iota\in\mathbb{H}^2$ is non-negative and bounded because of the Lipschitz property of $\varrho$. One can now see $(\Delta Q, \Delta
 Y, \Delta M)$ solves the FBSDE
\begin{equation*}
    \left\{
    \begin{aligned}
     \, d\Delta Q_t &= \iota_t\,\Delta Y_t \, dt,\\
     d\Delta Y_t &= 2\phi_t\,\Delta Q_t\,dt+d\Delta M_t,\\
    \Delta Q_0 &= q_0^i-q_0^j, \quad \Delta Y_T = -2A\,\Delta Q_T.
    \end{aligned}
    \right.
\end{equation*}
In accordance with \cite{guo2023macroscopic}, not only the above FBSDE is well-posed, but also we know $\Delta Q$ always has the same sign with $q_0^i-q_0^j$ and $\Delta Y$ has a different sign with $\Delta Q$. Hence, when $q_0^i-q_0^j\leq0$, it follows that $\Delta Q_t\leq 0$ and $\Delta Y_t\geq 0$.

It suffices to study how $\Delta y$ affects the controls of agent $i$ and $j$. For convenience, here we write $ \psi^{b}(y^i, y^j) = \psi^{b}(\dots, y^i, \dots, y^j, \dots)$. By the mean value theorem \ref{mean value and calculus}, we can have
\begin{equation*}
\begin{pmatrix}
        \, \psi^{i,b}(y^i, y^j) \, \\
        \, \psi^{j,b}(y^i, y^j) \,
\end{pmatrix} - 
\begin{pmatrix}
        \, \psi^{i,b}(y^i, y^i) \, \\
        \, \psi^{j,b}(y^i, y^i) \,
\end{pmatrix} = \boldsymbol{K} \,
\begin{pmatrix}
        \, 0 \, \\
        \, -\Delta y \,
\end{pmatrix} = 
\begin{pmatrix}
    k_{11} & k_{12} \\
    k_{21} & k_{22}
\end{pmatrix} \,
\begin{pmatrix}
        \, 0 \, \\
        \, -\Delta y \,
\end{pmatrix}
\end{equation*}
where $\boldsymbol{K}\in\mathbb{R}^{2 \times 2}$ is the convex combination of matrices with properties described in Theorem \ref{global_Jacobian}. Therefore, matrix $\boldsymbol{K}$ has non-positive entries and is diagonally dominant of column entries. We can further deduce 
\begin{equation*}
\begin{aligned}
    \psi^{i,b}(y^i, y^j) - \psi^{i,b}(y^i, y^i) - \psi^{j,b}(y^i, y^j) + \psi^{j,b}(y^i, y^i) &= - k_{12} \, \Delta y + k_{22} \, \Delta y\\
    \psi^{i,b}(y^i, y^j) - \psi^{j,b}(y^i, y^j) &= (k_{22} - k_{12}) \, \Delta y.
\end{aligned}
\end{equation*}
Given $\Delta y \geq 0$, the fact that $k_{22} \leq k_{12} \leq 0$ yields $\psi^{i,b}(y^i, y^j) - \psi^{j,b}(y^i, y^j) \leq 0$. The ask side can be discussed similarly.
\end{proof}

\kong

\noindent As the first application of the ordering property, we use it in the following result to remove the regularization term $\xi$ in the $\mathbb{A}$. This is achieved by establishing a bound for the solution that is independent of $\xi$. 

\kong

\begin{proposition}
\label{remove xi}
Suppose that the FBSDE \eqref{general FBSDE} has a solution $(\boldsymbol{Q}, \boldsymbol{Y}, \boldsymbol{M})$ on $[0,T]$ and $\xi \geq |\delta^*(0)|$, then it holds almost surely that $|\boldsymbol{Y}_t^i| \leq C$ for any $i$ and $t$, where $C>0$ is some constant independent of $\xi$.
\end{proposition}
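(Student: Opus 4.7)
The plan is to bound $|Y_t^i|$ uniformly in $\xi$ by producing at each stage a constant that does not see the truncation threshold. I will first control pairwise differences $|Y_t^i - Y_t^j|$ via the one-dimensional machinery from the proof of Lemma \ref{ordering in general}, then convert this into a uniform bound on the spreads $|\psi^{i,b}(\boldsymbol{Y}_t) - \bar\psi^{i,b}(\boldsymbol{Y}_t)|$ (and analogously on the ask side) through the $\xi$-free Lipschitz estimate of Theorem \ref{global_Jacobian}, and finally close the estimate by bounding $Q^i$ directly from its dynamics and invoking the backward representation of $Y^i$.

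For the first step, fix $i$ and $j$ with $q_0^i \leq q_0^j$. The differences $\Delta Q := Q^i - Q^j$, $\Delta Y := Y^i - Y^j$, and $\Delta M := M^i - M^j$ solve the linear FBSDE derived in the proof of Lemma \ref{ordering in general},
\begin{equation*}
    d\Delta Q_t = \iota_t\,\Delta Y_t\,dt,\quad d\Delta Y_t = 2\phi_t\,\Delta Q_t\,dt + d\Delta M_t,\quad \Delta Y_T = -2A\,\Delta Q_T,
\end{equation*}
with $\iota_t \geq 0$ and $\Delta Q_0 = q_0^i - q_0^j \leq 0$. The sign analysis cited from \cite{guo2023macroscopic} forces $\Delta Q_t \leq 0$ and $\Delta Y_t \geq 0$, while $\iota\,\Delta Y \geq 0$ makes $\Delta Q$ non-decreasing; hence $\Delta Q_t \in [q_0^i - q_0^j,\,0]$, so $|\Delta Q_t| \leq |q_0^i - q_0^j|$ almost surely. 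Substituting into $\Delta Y_t = -2\mathbb{E}_t\bigl[A\,\Delta Q_T + \int_t^T \phi_s\,\Delta Q_s\,ds\bigr]$ produces the $\xi$-free bound $|Y_t^i - Y_t^j| \leq C_1\,|q_0^i - q_0^j|$ with $C_1 := 2(\bar A + \bar\phi\,T)$.

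The second step exploits the label-permutation symmetry of the homogeneous game: writing $\sigma_{ij}$ for the transposition of coordinates $i$ and $j$, a direct check of the implicit equation defining $\psi^b$ shows $\psi^b(\sigma_{ij}\boldsymbol{y}) = \sigma_{ij}\psi^b(\boldsymbol{y})$, hence $\psi^{j,b}(\boldsymbol{y}) = \psi^{i,b}(\sigma_{ij}\boldsymbol{y})$. Theorem \ref{global_Jacobian}(i) bounds the entries of every matrix in $\partial\psi^b$ by some $L_0$ independent of $\xi$, so each $\psi^{i,b}$ is Lipschitz in $\boldsymbol{y}$ with a $\xi$-free constant $L := N L_0$, and therefore
\begin{equation*}
    |\psi^{i,b}(\boldsymbol{Y}_t) - \psi^{j,b}(\boldsymbol{Y}_t)| = |\psi^{i,b}(\boldsymbol{Y}_t) - \psi^{i,b}(\sigma_{ij}\boldsymbol{Y}_t)| \leq L\,|Y_t^i - Y_t^j|.
\end{equation*}
The ordering of Lemma \ref{ordering in general} identifies $\bar\psi^{i,b}(\boldsymbol{Y}_t)$ as $\psi^{2,b}(\boldsymbol{Y}_t)$ when $i=1$ and as $\psi^{1,b}(\boldsymbol{Y}_t)$ when $i\geq 2$, so $|\psi^{i,b}(\boldsymbol{Y}_t) - \bar\psi^{i,b}(\boldsymbol{Y}_t)| \leq C_2 := L\,C_1\,\max_{j,k}|q_0^j - q_0^k|$ uniformly in $\xi$, and the same argument applies on the ask side.

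In the third step, continuity of $\Lambda$ makes $M_\Lambda := \sup_{|x|\leq C_2}\Lambda(x)$ finite and $\xi$-free, so the drift of $Q^i$ is pointwise bounded by $(\bar a + \bar b)\,M_\Lambda$, giving $|Q_t^i| \leq |q_0^i| + T(\bar a + \bar b)\,M_\Lambda$. Plugging this into $Y_t^i = -2\mathbb{E}_t\bigl[A\,Q_T^i + \int_t^T \phi_s\,Q_s^i\,ds\bigr]$ yields the stated bound $C$. The main obstacle is that $\Lambda$ is not globally Lipschitz, so a naive attempt to bound drifts in terms of the unbounded $\boldsymbol{Y}$ would produce constants that blow up with $\xi$; the ordering property is precisely what localises all arguments of $\Lambda$ to a compact interval determined only by the spread of the initial inventories. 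The assumption $\xi \geq |\delta^*(0)|$ enters to guarantee that the admissible set contains the natural symmetric baseline quote $\delta^*(0)$, so that the Jacobian estimate of Theorem \ref{global_Jacobian} is genuinely usable without the truncation interfering with the homogeneous configuration underlying the above comparison arguments.
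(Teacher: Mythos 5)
Your proof is correct, and it takes a genuinely different route from the paper's. The paper first bounds the ``middle'' agents $i\in\{2,\dots,N-1\}$ two-sidedly (their $\Lambda$-arguments are non-negative by the ordering, so the drift of $Q^i$ is at most $(\bar a+\bar b)\Lambda(0)$), obtains only one-sided bounds for the extremal agents $1$ and $N$, and then closes the loop by a monotonicity/bootstrapping argument on the spread $\psi^{1,b}-\psi^{2,b}$ using the $Z_+$-structure of Theorem \ref{global_Jacobian} together with the normalization $\psi^b(\boldsymbol{0})=\boldsymbol{0}$ (which is where $\xi\geq|\delta^*(0)|$ actually enters). You instead upgrade Lemma \ref{ordering in general} to a quantitative statement --- since $\Delta Q$ is monotone and sign-constant it is trapped in $[q_0^i-q_0^j,0]$, giving the $\xi$-free bound $|Y^i_t-Y^j_t|\leq 2(\bar A+\bar\phi T)|q_0^i-q_0^j|$ --- then convert this into a uniform bound on all quote spreads via the permutation equivariance $\psi^b(\sigma_{ij}\boldsymbol{y})=\sigma_{ij}\psi^b(\boldsymbol{y})$ (valid in the homogeneous case by uniqueness of the implicit function) and the $\xi$-independent element-wise Jacobian bound of Theorem \ref{global_Jacobian}(i) through the non-smooth mean value theorem. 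This treats all agents symmetrically in one pass, avoids the one-sided bootstrapping entirely, and yields an explicit constant in terms of the initial inventory spread; the price is the extra equivariance observation, which is genuinely needed to turn a bound on $|Y^i-Y^j|$ into one on $|\psi^{i,b}(\boldsymbol{Y})-\psi^{j,b}(\boldsymbol{Y})|$. One small remark: your closing explanation of the hypothesis $\xi\geq|\delta^*(0)|$ is not quite what it is for --- your argument never uses it (all the ingredients you invoke hold for arbitrary $\xi>0$), whereas the paper needs it to pin down $\psi^b(\boldsymbol{0})=\boldsymbol{0}$ and to bound the quotes themselves at the end of its proof; leaving a hypothesis unused is not a gap, but the stated justification is inaccurate.
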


\begin{proof}
Given the ordering property in Lemma \ref{ordering in general}, the forward equations for $i\in\{2, \dots, N-1\}$ now becomes
\begin{equation*}
\begin{aligned}
    dQ_t^i&=b_t \, \Lambda(\psi^{i,b}(\boldsymbol{Y}_t)-\bar{\psi}^{i,b}(\boldsymbol{Y}_t))\,dt-a_t \, \Lambda(\psi^{i,a}(\boldsymbol{Y}_t)-\bar{\psi}^{i,a}(\boldsymbol{Y}_t))\,dt\\
    &=b_t \, \Lambda(\psi^{i,b}(\boldsymbol{Y}_t)-\psi^{1,b}(\boldsymbol{Y}_t))\,dt - a_t \, \Lambda(\psi^{i,a}(\boldsymbol{Y}_t)-\psi^{N,a}(\boldsymbol{Y}_t))\,dt.
\end{aligned}
\end{equation*}
Since $\psi^{i,b}(\boldsymbol{Y}_t)\geq\psi^{1,b}(\boldsymbol{Y}_t)$ and $\psi^{i,a}(\boldsymbol{Y}_t)\geq\psi^{N,a}(\boldsymbol{Y}_t)$, it infers $|Q_t^i|\leq |q_0^i|+(\bar{a}+\bar{b})\,\Lambda(0)\,T$ 
for all $t\in[0,T]$. Due to the fact that
\begin{equation}
    Y_t^i=\mathbb{E}_t\Big[-2A\,Q_T^i-2\int_t^T \phi_s\,Q_s^i\,ds\Big],
    \label{relation of backward variable}
\end{equation}
we can then conclude $(Y^i)_{i\in\{2, \dots, N-1\}}$ are uniformly bounded by some constant independent of the constraint $\xi$. For agent $1$, we first observe
\begin{equation*}
    Q_t^1\geq q_0^1-\int_0^t a_s\,\Lambda(\psi^{1,a}(\boldsymbol{Y}_s)-\psi^{N,a}(\boldsymbol{Y}_s))\,ds \geq q_0^1-\Bar{a}\,\Lambda(0)\,T.
\end{equation*}
By the expression in \eqref{relation of backward variable}, a lower bound of $Q^1$ provides an upper bound for $Y^1$ that is also independent of $\xi$. One can obtain a uniform lower bound for $Y^N$ from a symmetric argument. Via the $Z_+$-matrix property in Theorem \ref{global_Jacobian}, the function
\begin{equation}
    \psi^{1,b}(Y_t^1, Y_t^2, \dots, Y_t^{N-1}, Y_t^N)-\psi^{2,b}(Y_t^1, Y_t^2, \dots, Y_t^{N-1}, Y_t^N)
    \label{temp expr in non lip}
\end{equation}
is non-increasing with respect to $Y_t^1$ and non-decreasing in $\{Y_t^2, \dots, Y_t^N\}$, the generalized derivatives of which are all bounded element-wise by some constant independent of $\xi$. In addition, considering the following facts:
\kong

\begin{itemize}
    \item[(1)] $Y^1$ is upper bounded and $Y^N$ is lower bounded;\\
    \vspace{-0.2cm}
    
    \item[(2)] $(Y^j)_{j\in\{2, \dots, N-1\}}$ are bounded;\\ 
    \vspace{-0.2cm}
    
    \item[(3)] all above bounds are independent of $\xi$;\\ 
    \vspace{-0.2cm}

    \item[(4)] $ \psi^{1,b}(0, 0, \dots, 0)-\psi^{2,b}(0, 0, \dots, 0) = 0$,
\end{itemize}
\kong
\noindent we can derive a lower bound for expression \eqref{temp expr in non lip} and it leads to an upper bound for $Q^1$, both of which are again independent of $\xi$. It follows $Y^1$ are bounded by some constant independent of $\xi$; the same is true for $Y^N$ through a similar argument. On the other hand, it can be directly checked that $\psi^{b}(\boldsymbol{0})=\boldsymbol{0}$ provided $\xi \geq |\delta^*(0)|$. Together with the boundedness of $\boldsymbol{Y}$, the Lipschitz continuity of $\psi^b$ yields $\psi^{i,b}(\boldsymbol{Y}_t)\leq C$ for all $i$ and $t$, where $C$ is a constant independent of $\xi$.
\end{proof}
\kong

\noindent Hence, we force $\xi$ to be large enough.

\kong
\begin{assumption}
    The constraint $\xi$ is chosen to be larger than the constant specified in Theorem \ref{remove xi}. As a result, the truncation has no effect. 
\end{assumption}

\begin{remark}
    The absence of the truncation $\xi$ renders the FBSDE  \eqref{general FBSDE} non-Lipschitz. We will still regard \eqref{general FBSDE} as a Lipschitz FBSDE due to the boundedness of its solution. From the perspective of the non-Lipschitz FBSDE, what we will find is the unique bounded solution.
\end{remark}

\kong

\noindent The following statement is then an immediate consequence of Theorem \ref{global_Jacobian}.
\kong

\begin{theorem}
\label{global_Jacobian_M0}
    For any $\boldsymbol{y}\in \mathbb{R}^N$, any matrix in $\partial_{\boldsymbol{y}}\rho(t,\boldsymbol{y})$ is an $M_0$-matrix.
\end{theorem}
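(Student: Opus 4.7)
The plan is to refine the argument of Theorem \ref{global_Jacobian} by exploiting the fact that the current assumption eliminates the truncation region, which was the only obstruction to obtaining zero row sums previously. First I would revisit the smooth case of Lemma \ref{grad at smooth region} in the \emph{no-truncation} scenario: the explicit formula \eqref{jacobian of interaction} for $\nabla \Xi^b(\boldsymbol{y})$ (and its ask-side counterpart \eqref{jacobian of ask interaction}) manifestly has zero row sums, and left-multiplying by the diagonal matrix with entries $b_t \Lambda'(\cdot)$ (respectively $-a_t \Lambda'(\cdot)$) preserves both the $Z$-structure and the zero row-sum property. Consequently, wherever $\rho$ is differentiable in $\boldsymbol{y}$, each of $\nabla_{\boldsymbol{y}} \rho^b(t, \boldsymbol{y})$ and $\nabla_{\boldsymbol{y}} \rho^a(t, \boldsymbol{y})$ is an $M_0$-matrix, and hence so is their sum $\nabla_{\boldsymbol{y}} \rho(t, \boldsymbol{y})$.

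Next I would address the non-smooth points by returning to the regions (i)-(iii) decomposition used in the proof of Theorem \ref{global_Jacobian}. Region (iii) is precisely the truncation scenario and is void under the current assumption. For region (i), where multiple indices attain the minimum, the earlier argument already showed that rows corresponding to non-minimizing indices vanish, while the remaining derivative of interactions has non-positive off-diagonal entries with zero row sums. For region (ii), the indices sharing the second-smallest value reduce to equivalent implicit functions by a uniqueness argument, so the Jacobian at nearby smooth points falls under the $M_0$-formula already derived in Lemma \ref{grad at smooth region}. Therefore every limiting Jacobian obtained along sequences of smooth points approaching a non-smooth one is an $M_0$-matrix.

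To finish, I would invoke the definition of the Clarke generalized derivative as the convex hull of such limits. Since both the non-positivity of off-diagonal entries and the equality of each row sum to zero are linear conditions on matrix entries, they are preserved under both limits and convex combinations: if $A^{(k)}$ are $M_0$-matrices and $\lambda_k \geq 0$ with $\sum_k \lambda_k = 1$, then $\sum_k \lambda_k A^{(k)}$ has non-positive off-diagonal entries and row sums equal to $\sum_k \lambda_k \cdot 0 = 0$. The one place demanding care is confirming that \emph{every} sequence of smooth points approaching a non-smooth one yields an $M_0$-matrix in the limit, not merely a generic $Z_+$-matrix as in Theorem \ref{global_Jacobian}; this is exactly where the removal of the truncation is essential, since the $\pm\xi$ scenarios were the sole source of non-zero row sums in the earlier classification. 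The conclusion then follows.
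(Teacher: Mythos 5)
Your proposal is correct and follows essentially the same route as the paper: with the truncation removed, the Jacobians of $\rho^b$ and $\rho^a$ are $M_0$ wherever defined (by the computations already carried out for Theorem \ref{global_Jacobian}), and the conclusion follows because the zero-row-sum and sign conditions defining $M_0$-matrices are preserved under limits and convex combinations. The paper's proof is just a terser statement of exactly this argument.
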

\begin{proof}
When $\xi$ has no effect, matrices $\nabla_{\boldsymbol{y}}\rho^b(t,\boldsymbol{y})$ and $\nabla_{\boldsymbol{y}}\rho^a(t,\boldsymbol{y})$ are always $M_0$-matrix whenever differentiable, according to the proof of Theorem \ref{global_Jacobian}. It suffices to observe that $M_0$-matrices are stable under convex combinations.
\end{proof}

\kong

Thanks to the Lipschitz property of the FBSDE \eqref{general FBSDE}, it is well-known that there exists $\Delta>0$ being small enough, such that \eqref{general FBSDE} is well-posed on the horizon $[T-\Delta, T]$ via the contraction mapping principle; i.e., the original initial time $0$ is replaced by $T-\Delta$. Given any $t\in[T-\Delta, T]$ and $\boldsymbol{q}\in\mathbb{R}^N$ as the initial time and condition, there exists a unique solution $(\boldsymbol{Q}^{t,\boldsymbol{q}}, \boldsymbol{Y}^{t,\boldsymbol{q}}, \boldsymbol{M}^{t,\boldsymbol{q}})$ to \eqref{general FBSDE}, where the superscript denotes the dependence on the initial data. We can then define a function $u:[T-\Delta, T]\times \Omega \times \mathbb{R}^N\to\mathbb{R}^N$ through
\begin{equation}
    u(t, \boldsymbol{q}):=\boldsymbol{Y}^{t,\boldsymbol{q}},
    \label{local_contract}
\end{equation}
which is known as the decoupling field. 

\kong

\begin{definition}
    Let $t \in [0, T]$. A function $u: [t, T] \times \Omega  \times \mathbb{R}^N \to \mathbb{R}^N$, with $u(T,\boldsymbol{q})=-2A\,\boldsymbol{q}$ a.e., is called a \textit{decoupling field} for the FBSDE on $[t, T]$ if, for all $t_1, t_2 \in [t, T]$ with $t_1 < t_2$ and
    any $\mathcal{F}_{t_1}$-measurable $\boldsymbol{\eta}: \Omega\to \mathbb{R}^N$, there exist progressively measurable processes $(\boldsymbol{Q},\boldsymbol{Y},\boldsymbol{Z})$ on $[t_1,t_2]$ such that
    \begin{equation*}
        \begin{aligned}
            \boldsymbol{Q}_s &= \boldsymbol{\eta} + \int_{t_1}^s \rho(r, \boldsymbol{Y}_r)\,dr,\\
            \boldsymbol{Y}_s &= \boldsymbol{Y}_{t_2} - \int_s^{t_2} 2\phi_r\,\boldsymbol{Q}_r\,dr -\int_s^{t_2} \boldsymbol{Z}_r\,dW_r,\\
            \boldsymbol{Y}_s &= u(s,\boldsymbol{Q}_s),
        \end{aligned}
    \end{equation*}
    for all $s \in [t_1, t_2]$. In particular, we want all integrals to be well defined.
\end{definition}

\kong

\noindent The theory of decoupling fields, originally introduced by \cite{ma2015well} for one-dimensional equations, has been extended to multi-dimensional equations through subsequent works such as \cite{fromm2013existence}, \cite{fromm2015theory}, and \cite{ankirchner2020optimal}. The fundamental idea is that, if the decoupling field can be regularly extended over the whole prescribed time horizon, it then ensures the well-posedness of the FBSDE on that horizon. Some essential results are provided below, but we refer the reader to \cite{ankirchner2020optimal} for an excellent short summary.

\kong

\begin{definition}[\cite{ankirchner2020optimal}]
Denote by $\mathscr{L}_{u(s,\cdot)}$ the Lipschitz coefficient of $u$ with respect to the space variable at time $s$:
\begin{equation*}
    \mathscr{L}_{u(s,\cdot)}:=\inf\Big\{L>0 \, : \, |u(s,\boldsymbol{q}')-u(s, \boldsymbol{q})|\leq L\,|\boldsymbol{q}'-\boldsymbol{q}| \, \text{ almost surely for all } \boldsymbol{q}', \boldsymbol{q}\in\mathbb{R}^N\Big\}.
\end{equation*}
A decoupling field $u: [t, T] \times \Omega \times \mathbb{R}^N \to \mathbb{R}^N$ is called \textit{(weakly) regular} if
\begin{equation*}
    \sup_{s\in[t, T]}\mathscr{L}_{u(s,\cdot)}<\infty \text{ \; and \; } \sup_{s\in[t, T]} \|u(s,\cdot,0)\|_\infty < \infty,
\end{equation*}
where $\|\cdot\|_\infty$ denotes the $L^\infty$-norm of random variables.
\end{definition}

\kong

\noindent The decoupling field \eqref{local_contract} constructed by the contraction mapping principle is indeed regular. Therefore, one can apply the fixed point method again, extending the definition of the decoupling field to a longer horizon. The first part of the following theorem provides a generalized result of this constructive procedure, while the second part reveals the connection between the decoupling field and the well-posedness of FBSDEs.

\kong

\begin{theorem}[\cite{ankirchner2020optimal}]
\label{local_time}
(1) There exists a time $t \in [0, T]$ such that the FBSDE has a unique (up to modification) decoupling
field $u$ on $[t, T]$ that is also regular.

(2) If there exists a regular decoupling field $u$ of the corresponding FBSDE on some interval $[t, T]$, then for any initial condition $\boldsymbol{Q}_t =\eta \in \mathbb{R}^N$ there is a unique solution $(\boldsymbol{Q},\boldsymbol{Y},\boldsymbol{Z})$
of the FBSDE on $[t, T]$ satisfying
\begin{equation*}
    \sup_{s\in[t,T]}\mathbb{E}\big[|\boldsymbol{Q}_s|^2\big] + \sup_{s\in[t,T]}\mathbb{E}\big[|\boldsymbol{Y}_s|^2\big]+ \mathbb{E}\Big[\int_t^T|\boldsymbol{Z}_s|^2\,ds\Big]<\infty.
\end{equation*}
\end{theorem}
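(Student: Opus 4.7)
The plan is to handle the two parts separately, both resting on the Lipschitz structure already established for \eqref{general FBSDE}. For part (1), the strategy is a Picard-type contraction argument. Given the Lipschitz property of $\rho$ in $\boldsymbol{y}$ (with some constant $L_\rho$ independent of $(t,\omega)$ by Theorem \ref{global_Jacobian}) and the boundedness of $\phi$ and $A$, I would consider the map $\mathcal{T}:(\boldsymbol{Y},\boldsymbol{Z})\mapsto(\boldsymbol{Y}',\boldsymbol{Z}')$ defined as follows: given $(\boldsymbol{Y},\boldsymbol{Z})$, solve the forward ODE with drift $\rho(r,\boldsymbol{Y}_r)$ starting from $\boldsymbol{Q}_{T-\Delta}=\boldsymbol{q}$ to obtain $\boldsymbol{Q}$; then solve the linear BSDE with driver $2\phi_r\boldsymbol{Q}_r$ and terminal $-2A\boldsymbol{Q}_T$ to produce $(\boldsymbol{Y}',\boldsymbol{Z}')$. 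Standard $L^2$ estimates combined with Gronwall's inequality show that on an interval $[T-\Delta,T]$ with $\Delta$ depending only on $L_\rho,\bar\phi,\bar A$, the map $\mathcal{T}$ is a contraction on $\mathbb{S}^2\times\mathbb{H}^2$. The decoupling field is then defined by $u(T-\Delta,\boldsymbol{q}):=\boldsymbol{Y}_{T-\Delta}$.

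To verify regularity of $u$, I would take two deterministic initial conditions $\boldsymbol{q}',\boldsymbol{q}$, run the FBSDE for both, and subtract. The difference $(\Delta\boldsymbol{Q},\Delta\boldsymbol{Y},\Delta\boldsymbol{Z})$ satisfies a linear FBSDE whose coefficients are bounded by $L_\rho,\bar\phi,\bar A$. For the same small $\Delta$, the $L^2$ estimate gives $\|\Delta\boldsymbol{Y}_{T-\Delta}\|_{L^2}\leq C\,|\boldsymbol{q}'-\boldsymbol{q}|$ with a deterministic $C$; since $u(T-\Delta,\boldsymbol{q})$ is $\mathcal{F}_{T-\Delta}$-measurable, an $\omega$-wise argument (or a density/separability argument over rational $\boldsymbol{q}$) upgrades this into the a.s.\ Lipschitz bound $\mathscr{L}_{u(T-\Delta,\cdot)}\leq C$. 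The bound $\|u(T-\Delta,0)\|_\infty<\infty$ follows from the a priori $L^\infty$ estimate of Proposition \ref{remove xi} applied with $\boldsymbol{q}=0$. Extending the construction backwards by pasting the contraction on consecutive small sub-intervals yields regularity on $[t,T]$ for the largest such $t$.

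For part (2), given a regular decoupling field $u$ on $[t,T]$ and $\boldsymbol{Q}_t=\eta\in\mathbb{R}^N$, the key observation is that the forward equation closes:
\begin{equation*}
    d\boldsymbol{Q}_s=\rho\bigl(s,u(s,\boldsymbol{Q}_s)\bigr)\,ds,\qquad \boldsymbol{Q}_t=\eta.
\end{equation*}
Because $\rho(s,\cdot)$ is Lipschitz and $u(s,\cdot)$ is Lipschitz uniformly in $s$, the composition is Lipschitz in $\boldsymbol{q}$; combined with $\sup_s\|u(s,0)\|_\infty<\infty$, this yields linear growth. Since there is no martingale term in the forward equation, classical pathwise ODE theory produces, for a.e.\ $\omega$, a unique adapted solution $\boldsymbol{Q}$ with $\sup_s\mathbb{E}[|\boldsymbol{Q}_s|^2]<\infty$. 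Setting $\boldsymbol{Y}_s:=u(s,\boldsymbol{Q}_s)$ and applying the martingale representation theorem to the $L^2$-martingale $\mathbb{E}_s[-2A\boldsymbol{Q}_T-\int_s^T 2\phi_r\boldsymbol{Q}_r\,dr]$ yields $\boldsymbol{Z}\in\mathbb{H}^2$, and one checks directly that the resulting triple solves \eqref{general FBSDE} with the claimed integrability.

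Uniqueness is the point where the decoupling field property does the work: by the very definition of a decoupling field, any solution $(\boldsymbol{Q},\boldsymbol{Y},\boldsymbol{Z})$ on $[t,T]$ must satisfy $\boldsymbol{Y}_s=u(s,\boldsymbol{Q}_s)$ on every sub-interval, so $\boldsymbol{Q}$ solves the same closed Lipschitz forward equation, which has a unique solution. The main obstacle I anticipate is in part (1): one must make $\mathcal{T}$ invariant on an appropriate ball in $\mathbb{S}^2\times\mathbb{H}^2$ and simultaneously track both the Lipschitz modulus of $u$ in $\boldsymbol{q}$ and the $L^\infty$ bound on $u(\cdot,0)$, so that the contraction constant and the regularity bounds depend only on $L_\rho,\bar\phi,\bar A$ (and not on the solution being iterated). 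This is what determines the admissible $\Delta$ and is the technical core of the construction.
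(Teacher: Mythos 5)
First, a framing point: the paper does not prove Theorem \ref{local_time} at all --- it is quoted verbatim from \cite{ankirchner2020optimal}, so there is no in-paper argument to compare against and you are reconstructing the proof of the cited result. Your part (1) follows the standard route and is essentially sound: local solvability of the Lipschitz FBSDE by contraction on $[T-\Delta,T]$ with $\Delta$ depending only on the Lipschitz data, and a Lipschitz bound on $u(T-\Delta,\cdot)$ from the stability estimate for the difference of two solutions. The one step you should not wave at is the upgrade from $\|\Delta\boldsymbol{Y}_{T-\Delta}\|_{L^2}\leq C\,|\boldsymbol{q}'-\boldsymbol{q}|$ to the almost-sure bound that the definition of regularity requires: the standard device is to run the same stability estimate \emph{conditionally} on $\mathcal{F}_{T-\Delta}$, so that $\mathbb{E}\big[|\Delta\boldsymbol{Y}_{T-\Delta}|^2\,\big|\,\mathcal{F}_{T-\Delta}\big]=|\Delta\boldsymbol{Y}_{T-\Delta}|^2$ yields the pointwise bound for each fixed pair of initial conditions, and only then does the separability/continuity argument over rational $\boldsymbol{q}$ produce a single null set for all pairs. (Also, part (1) only asks for \emph{some} $t$, so the final pasting step is not needed there; one contraction step suffices.)

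The genuine gap is in the uniqueness argument of part (2). You assert that ``by the very definition of a decoupling field, any solution must satisfy $\boldsymbol{Y}_s=u(s,\boldsymbol{Q}_s)$.'' That is not what the definition says: it asserts that for every $t_1<t_2$ and every $\mathcal{F}_{t_1}$-measurable initial condition there \emph{exist} processes satisfying the three displayed relations; it does not assert that \emph{every} solution of the FBSDE is decoupled by $u$. That implication is precisely the nontrivial content one has to prove, and without it your reduction of uniqueness to the closed Lipschitz forward ODE does not go through. The repair is the standard backward-pasting argument: partition $[t,T]$ into sub-intervals of a fixed length $\Delta$ determined by the Lipschitz constants of $\rho$, $\phi$, $A$ \emph{and} by $\sup_{s}\mathscr{L}_{u(s,\cdot)}$ (this is exactly where the regularity hypothesis enters); on $[T-\Delta,T]$ the contraction principle gives uniqueness of the solution with prescribed $\boldsymbol{Q}_{T-\Delta}$, so any given solution coincides there with the one furnished by the definition of $u$ and hence satisfies $\boldsymbol{Y}_s=u(s,\boldsymbol{Q}_s)$; then treat $[T-2\Delta,T-\Delta]$ as a Lipschitz FBSDE with terminal condition $u(T-\Delta,\cdot)$ and iterate. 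A related, smaller soft spot is your existence construction: after setting $\boldsymbol{Y}_s:=u(s,\boldsymbol{Q}_s)$, verifying that this process has the required backward dynamics is not a ``direct check'' but follows from the same sub-interval pasting; alternatively, existence on $[t,T]$ is immediate from the definition applied with $t_1=t$ and $t_2=T$, since $u(T,\boldsymbol{q})=-2A\boldsymbol{q}$ forces the correct terminal condition.
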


\kong

\noindent Since the extension of the decoupling field is a key consideration, several important questions arise: (1) How far can the decoupling field be extended? (2) What are the implications if it cannot be extended further? Here are some answers to these questions:

\kong

\begin{theorem}[\cite{ankirchner2020optimal}]
    Define the maximal interval $I_{\textnormal{max}} \subseteq [0, T]$ of the FBSDE as the union of all intervals $[t, T] \subseteq [0, T]$, such that there exists a regular decoupling field $u$ on $[t, T]$. Then, there exists a unique regular decoupling field $u$ on $I_{\textnormal{max}}$. Furthermore, either $I_{\textnormal{max}} = [0, T]$ or $I_{\textnormal{max}} = (t_{\textnormal{min}}, T]$ with $t_{\textnormal{min}} \in [0, T)$. In the latter case, we have $\lim_{t\searrow t_{\textnormal{min}}}
    \mathscr{L}_{u(t,\cdot)} = \infty$.
\end{theorem}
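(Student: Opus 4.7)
The plan is to assemble the decoupling field on $I_{\textnormal{max}}$ by gluing the local decoupling fields provided by Theorem \ref{local_time}, and then to obtain the dichotomy and the blow-up characterisation through a continuation argument that fails precisely when the Lipschitz constant explodes.

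First, I would establish that decoupling fields on overlapping sub-intervals are consistent. Suppose $u_1$ is a regular decoupling field on $[t_1, T]$ and $u_2$ is one on $[t_2, T]$, with $t_1 \le t_2$. The restriction $u_1|_{[t_2, T]}$ is also a regular decoupling field on $[t_2, T]$; part (2) of Theorem \ref{local_time} gives the uniqueness of the triple $(\boldsymbol{Q}, \boldsymbol{Y}, \boldsymbol{Z})$ for any $\mathcal{F}_{t_2}$-measurable initial condition, and tracing back through the identity $\boldsymbol{Y}_s = u_i(s, \boldsymbol{Q}_s)$ forces $u_1 = u_2$ on $[t_2, T]$. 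Consequently, the prescription $u(s, \boldsymbol{q}) := u_t(s, \boldsymbol{q})$ for any $t \in I_{\textnormal{max}}$ with $t \le s$ is unambiguous, and the resulting $u$ is, by construction, a regular decoupling field on every closed sub-interval $[t, T] \subseteq I_{\textnormal{max}}$.

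Next, I would derive the dichotomy. Since $I_{\textnormal{max}}$ is a union of intervals ending at $T$ it is itself an interval ending at $T$; set $t_{\textnormal{min}} := \inf I_{\textnormal{max}}$. Assume for contradiction that $I_{\textnormal{max}} = [t_{\textnormal{min}}, T]$ with $t_{\textnormal{min}} > 0$, so $u(t_{\textnormal{min}}, \cdot)$ is Lipschitz with some finite constant $L$. I would then restart \eqref{general FBSDE} on a time horizon ending at $t_{\textnormal{min}}$ with terminal data $\boldsymbol{Y}_{t_{\textnormal{min}}} = u(t_{\textnormal{min}}, \boldsymbol{Q}_{t_{\textnormal{min}}})$. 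The forward coefficient $\rho$ is bounded and Lipschitz in $\boldsymbol{Y}$ by Proposition \ref{remove xi} and Theorem \ref{Verification}, the backward driver is linear in $\boldsymbol{Q}$, and the new terminal map is $L$-Lipschitz; a Banach fixed-point argument on $[t_{\textnormal{min}} - \varepsilon, t_{\textnormal{min}}]$, with $\varepsilon > 0$ depending only on $L$ and the structural constants, produces a regular decoupling field on that short interval. Gluing it with the existing one yields a regular decoupling field on $[t_{\textnormal{min}} - \varepsilon, T]$, contradicting the minimality of $t_{\textnormal{min}}$; hence either $t_{\textnormal{min}} = 0$ or $t_{\textnormal{min}} \notin I_{\textnormal{max}}$.

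Finally, assuming $I_{\textnormal{max}} = (t_{\textnormal{min}}, T]$ with $t_{\textnormal{min}} > 0$, I would verify $\lim_{t \searrow t_{\textnormal{min}}} \mathscr{L}_{u(t,\cdot)} = \infty$ by contradiction. If there were a sequence $t_n \searrow t_{\textnormal{min}}$ and a finite $L$ with $\mathscr{L}_{u(t_n,\cdot)} \le L$ for all $n$, I would invoke the same quantitative local fixed-point argument with terminal time $t_n$ and $L$-Lipschitz terminal map $u(t_n, \cdot)$. Since the length $\varepsilon$ of the extension window depends only on $L$ and on universal constants, picking $n$ large enough that $t_n - \varepsilon < t_{\textnormal{min}}$ yields a regular decoupling field on $[t_n - \varepsilon, t_n]$ which, after gluing, extends $u$ strictly to the left of $t_{\textnormal{min}}$, contradicting its definition. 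The main obstacle is precisely making this local-existence $\varepsilon$ genuinely quantitative and independent of the starting time: the contraction estimate on $[\tau - \varepsilon, \tau]$ must be controlled in terms only of $L$ and the Lipschitz constant of $\rho$ (together with the bound on $\phi$), so that extensions starting from $t_n$ reach strictly past $t_{\textnormal{min}}$. Without such a uniform quantitative statement the sequence of Lipschitz constants $\mathscr{L}_{u(t_n,\cdot)}$ could remain bounded without producing a genuine left extension, so the blow-up assertion cannot follow.
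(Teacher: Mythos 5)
This statement is imported by the paper directly from \cite{ankirchner2020optimal} and is not proved in the text, so there is no in-paper argument to compare against; your outline is essentially the standard continuation proof from that literature (glue local fields by uniqueness, show $I_{\textnormal{max}}$ is an interval ending at $T$, rule out a closed left endpoint with $t_{\textnormal{min}}>0$, and characterise non-extendability by blow-up of the Lipschitz constant). The gluing and dichotomy steps are fine as written.

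The one genuine gap is the one you flag yourself: the entire blow-up characterisation rests on a \emph{quantitative} local well-posedness lemma, namely that for a Lipschitz FBSDE with an $L$-Lipschitz ($\mathcal{F}_\tau$-measurable) terminal map one obtains a regular decoupling field on $[\tau-\varepsilon,\tau]$ with $\varepsilon=\varepsilon(L)>0$ depending only on $L$ and the structural constants, uniformly in the terminal time $\tau$ and in the terminal map itself. As submitted, you state this as an obstacle rather than proving it, so the final assertion $\lim_{t\searrow t_{\textnormal{min}}}\mathscr{L}_{u(t,\cdot)}=\infty$ is not established. The gap is fillable in the present setting: the coefficients of \eqref{general FBSDE} are uniformly Lipschitz in $(\boldsymbol{Q},\boldsymbol{Y})$ (the forward driver $\rho$ by Theorem \ref{Verification}, the backward driver by the bound $\bar{\phi}$ on $\phi$), with Lipschitz constants independent of $\omega$ and $t$; running the usual Picard/contraction map on $(\boldsymbol{Q},\boldsymbol{Y})$ over $[\tau-\varepsilon,\tau]$, the contraction factor is of the form $C(L)\,\varepsilon$ with $C(L)$ depending only on $L$, $\bar{\phi}$, and the Lipschitz constant of $\rho$, and the fixed point inherits an $L$-dependent Lipschitz bound in the initial datum. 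Hence $\varepsilon$ may indeed be chosen as a function of $L$ alone, which is exactly what your final contradiction needs: a bounded subsequence $\mathscr{L}_{u(t_n,\cdot)}\leq L$ would yield extensions of uniform length reaching strictly past $t_{\textnormal{min}}$. You should either prove this uniform lemma explicitly or cite it (it is Lemma 2.1.2/Theorem 2.1.3-type material in \cite{ankirchner2020optimal} and the decoupling-field references therein); without it the argument is incomplete, with it the proof closes.
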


\kong

\noindent The regularity of the decoupling field, which ensures the well-posedness of FBSDEs, is studied in the latter theorem through the analysis of the corresponding characteristic BSDE. The ideas of variational FBSDEs and characteristic BSDEs are initially introduced for one-dimensional FBSDEs in \cite{ma2015well}. While \cite{hu2022path} presents a one-dimensional characteristic BSDE for multi-dimensional equations, the construction method suffers from the loss of critical information from the original FBSDE. The resulting BSDE is hence difficult to analyze. To overcome this challenge, we introduce a multi-dimensional characteristic BSDE that preserves more information from the original FBSDE, albeit at the expense of increased dimension. A mean value theorem in non-smooth analysis is presented below.

\kong

\begin{theorem}[\cite{clarke1990optimization}]
\label{mean value and calculus}
Let $F$ be Lipschitz on an open convex set $U$ in $\mathbb{R}^n$, and let $x$ and $y$ be two points in $U$. Then, it holds that
\begin{equation*}
F(y) - F(x) \in \textnormal{co} \Big\{ k \cdot (y - x) \,|\, k \in \partial F(z) \textnormal{\, and \,} z \textnormal{\, is in the line segment between $x$ and $y$ } \Big \}.\\
\end{equation*}
\end{theorem}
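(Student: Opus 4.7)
The plan is to combine a one-parameter parameterization of the segment with a Hahn--Banach separation, routed through the Clarke directional derivative. Let $z_t := x + t(y-x)$ and $g(t) := F(z_t)$; by convexity of $U$, this defines $g$ on $[0,1]$. Since $F$ is Lipschitz, so is $g$, hence $g$ is absolutely continuous and
\begin{equation*}
F(y) - F(x) \;=\; g(1) - g(0) \;=\; \int_0^1 g'(t)\,dt,
\end{equation*}
where $g'$ exists Lebesgue-a.e.\ on $[0,1]$.

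For an arbitrary $v \in \mathbb{R}^m$, set $h(t) := v^\top g(t)$. At each $t$ where $h$ is differentiable, a one-sided limit comparison against the Clarke directional derivative of $v^\top F$ yields
\begin{equation*}
h'(t) \;\leq\; (v^\top F)^\circ(z_t;\, y-x) \;=\; \max\{\zeta \cdot (y-x) : \zeta \in \partial(v^\top F)(z_t)\},
\end{equation*}
where the last equality is the support-function characterization of the Clarke subdifferential. Invoking the scalar chain rule $\partial(v^\top F)(z) \subseteq v^\top \partial F(z)$ and writing $S := \{k(y-x) : k \in \partial F(z),\, z \in [x,y]\}$ for the target set, we obtain $h'(t) \leq \sup\{v^\top s : s \in S\}$. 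Integrating gives
\begin{equation*}
v^\top\bigl(F(y) - F(x)\bigr) \;=\; \int_0^1 h'(t)\,dt \;\leq\; \sup\{v^\top s : s \in S\}.
\end{equation*}

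Applying the same argument to $-v$ yields the matching lower inequality $v^\top(F(y) - F(x)) \geq \inf\{v^\top s : s \in S\}$, so $F(y) - F(x)$ cannot be strictly separated from $\operatorname{co}(S)$ by any linear functional. Compactness of $[x,y]$ together with the upper semicontinuity of $\partial F$ with nonempty compact convex values shows that $S$ is compact, hence $\operatorname{co}(S)$ is compact and in particular closed; Hahn--Banach separation for closed convex sets then forces $F(y) - F(x) \in \operatorname{co}(S)$. The main obstacle is the scalar differentiation step: one cannot invoke the classical chain rule, because $F$ may fail to be differentiable on the entire segment (a Lebesgue-null subset of $\mathbb{R}^n$ can contain the whole $[x,y]$). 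This forces us to work intrinsically with the Clarke directional derivative and its support-function characterization rather than with a pointwise gradient identity, and it is the key technical content that the sandwich bound for $h'(t)$ still closes up through the set inclusion $\partial(v^\top F)(z) \subseteq v^\top \partial F(z)$.
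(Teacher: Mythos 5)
The paper does not prove this statement; it is imported verbatim (as Proposition 2.6.5 of Clarke's monograph) with only a citation, so there is no in-paper argument to compare against. Your proof is correct and is essentially the standard one. All the load-bearing steps check out: the composition $g(t)=F(x+t(y-x))$ is Lipschitz hence absolutely continuous, so $F(y)-F(x)=\int_0^1 g'(t)\,dt$; wherever $h=v^\top g$ is differentiable, the ordinary derivative is a one-sided limit and hence bounded by the $\limsup$ defining $(v^\top F)^\circ(z_t;y-x)$, which is the support function of $\partial(v^\top F)(z_t)$; the inclusion $\partial(v^\top F)(z)\subseteq v^\top\partial F(z)$ (Clarke, Prop.\ 2.6.2) pushes the bound into the target set $S$; and compactness of $S$ (closed graph plus uniform boundedness of $\partial F$ over the compact segment, then a continuous image) makes $\operatorname{co}(S)$ compact so the separation argument closes. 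Your diagnosis of the obstacle is also the right one: the segment is Lebesgue-null in $\mathbb{R}^n$ for $n\ge 2$, so one cannot write $g'(t)=\nabla F(z_t)(y-x)$ pointwise, and routing through the Clarke directional derivative is the correct fix. The only difference from Clarke's own proof is cosmetic: he obtains the scalar inequality $v^\top(F(y)-F(x))\le\sup_{s\in S}v^\top s$ by invoking Lebourg's scalar mean value theorem for $v^\top F$, whereas you derive it by direct integration; your route is marginally more self-contained since it needs only absolute continuity rather than the scalar mean value theorem. Two trivial slips worth fixing: $v$ should range over the codomain $\mathbb{R}^l$, not $\mathbb{R}^m$, and the two-sided sandwich is redundant since applying the upper bound to all $v$ (including $-v$) already gives the separation.
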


\kong

\noindent Such mean value theorem is utilized in the proof of the following theorem, which establishes a connection between the well-posedness of the FBSDE and that of the backward stochastic Riccati equation (BSRE).

\kong

\begin{theorem}
\label{mult_char_bsre}
Consider the BSRE of the following type:
\begin{equation}
    d\mathcal{X}_t= (2\phi_t\, I - \mathcal{X}_t \, \mathcal{B}_t \, \mathcal{X}_t)\,dt+\mathcal{Z}_t\,dW_t, \quad \mathcal{X}_T=-2A\,I.
    \label{char_BSRE}
\end{equation}
Note that $\mathcal{X}_t$ and $\mathcal{B}_t$ are $N \times N$ matrices, and $I$ represents the identity matrix. Additionally, the process $\mathcal{B}$ satisfies the following: (1) each entry is a bounded process in $\mathbb{H}^2$, and (2) the matrix $\mathcal{B}_t$ is of $M_0$-type for every $t$.

Suppose the BSRE \eqref{char_BSRE} accepts a unique solution $\mathcal{X}$ such that each entry is a bounded process in $\mathbb{H}^2$, then the FBSDE \eqref{general FBSDE} has a unique solution $(\boldsymbol{Q}, \boldsymbol{Y}, \boldsymbol{M})$ in $(\mathbb{S}^2 \times \mathbb{H}^2 \times \mathbb{M})^N$.
\end{theorem}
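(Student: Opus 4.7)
The plan is to invoke the theory of decoupling fields from Theorem \ref{local_time}. Local well-posedness of the Lipschitz FBSDE provides a maximal interval $I_{\max}\subseteq [0,T]$ carrying a unique regular decoupling field $u$; either $I_{\max}=[0,T]$, in which case Theorem \ref{local_time}(2) delivers the global solution, or $I_{\max}=(t_{\min},T]$ with $\lim_{t\searrow t_{\min}}\mathscr{L}_{u(t,\cdot)}=\infty$. It therefore suffices to establish a uniform bound on the Lipschitz constant of $u$ over $I_{\max}$.

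To that end, I fix $s\in I_{\max}$ and two initial data $\boldsymbol{q},\boldsymbol{q}'\in\mathbb{R}^N$, and denote by $(Q^\bullet,Y^\bullet,M^\bullet)$ the associated solutions of \eqref{general FBSDE} on $[s,T]$, with differences $\Delta Q,\Delta Y,\Delta M$. By the non-smooth mean value theorem (Theorem \ref{mean value and calculus}) combined with a standard measurable selection, there exists a progressively measurable matrix-valued process $B$ such that at each $(t,\omega)$ the matrix $B_t$ is a convex combination of elements of $\partial_{\boldsymbol{y}}\rho(t,\cdot)$ evaluated on the segment between $Y^{\boldsymbol{q}}_t$ and $Y^{\boldsymbol{q}'}_t$, and
\begin{equation*}
\rho(t,Y^{\boldsymbol{q}'}_t)-\rho(t,Y^{\boldsymbol{q}}_t)=B_t\,\Delta Y_t.
\end{equation*}
Theorem \ref{global_Jacobian_M0}, together with the preservation of the $Z$-matrix property and zero row sums under convex combinations, shows that each $B_t$ is of $M_0$-type, and the Lipschitz property of $\rho$ ensures $B$ is bounded. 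Hence the variational FBSDE
\begin{equation*}
\left\{
\begin{aligned}
d\Delta Q_t &= B_t\,\Delta Y_t\,dt,\\
d\Delta Y_t &= 2\phi_t\,\Delta Q_t\,dt+d\Delta M_t,\\
\Delta Q_s &= \boldsymbol{q}'-\boldsymbol{q},\quad \Delta Y_T=-2A\,\Delta Q_T,
\end{aligned}
\right.
\end{equation*}
falls under the scope of the theorem's hypothesis with $\mathcal{B}=B$.

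Applying that hypothesis, let $\mathcal{X}$ be the corresponding bounded solution of the BSRE \eqref{char_BSRE}. An It\^o product computation on $\mathcal{X}_t\,\Delta Q_t$ together with the definition $\mathcal{U}_t:=\Delta Y_t-\mathcal{X}_t\,\Delta Q_t$ leads, after cancellation of the $2\phi_t\Delta Q_t$ terms, to a linear BSDE
\begin{equation*}
d\mathcal{U}_t=-\mathcal{X}_tB_t\mathcal{U}_t\,dt+d\tilde{M}_t,\quad \mathcal{U}_T=-2A\,\Delta Q_T-(-2A\,I)\,\Delta Q_T=0,
\end{equation*}
with bounded drift coefficient $\mathcal{X}_tB_t$; standard BSDE uniqueness yields $\mathcal{U}\equiv 0$, so $\Delta Y_s=\mathcal{X}_s\,\Delta Q_s$ and $|u(s,\boldsymbol{q}')-u(s,\boldsymbol{q})|\leq\|\mathcal{X}\|_\infty\,|\boldsymbol{q}'-\boldsymbol{q}|$. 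Since $\|\mathcal{X}\|_\infty$ does not depend on $s$, the sought uniform Lipschitz estimate follows, ruling out $I_{\max}\subsetneq[0,T]$ and completing the proof.

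I expect the main technical obstacle to be the measurable selection step producing the progressively measurable process $B$ from the set-valued map $\partial_{\boldsymbol{y}}\rho$, as well as the verification that the convex-combination structure preserves the $M_0$-type property needed to legally invoke the BSRE hypothesis; the remaining ingredients---It\^o's product rule, the linear BSDE uniqueness, and the decoupling-field machinery---are either routine or have been prepared in earlier sections.
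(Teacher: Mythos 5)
Your proposal is correct and follows essentially the same route as the paper: reduce to the variational FBSDE via the non-smooth mean value theorem and Theorem \ref{global_Jacobian_M0}, use the hypothesized bounded BSRE solution to get the affine relation $\Delta Y_s=\mathcal{X}_s\,\Delta Q_s$, and conclude a uniform Lipschitz bound on the decoupling field that rules out $I_{\max}\subsetneq[0,T]$. The only (minor) difference is that you verify the affine relation directly by an It\^o computation on $\mathcal{U}_t=\Delta Y_t-\mathcal{X}_t\,\Delta Q_t$ and linear BSDE uniqueness, whereas the paper matches coefficients in the ansatz and appeals to uniqueness of the regular decoupling field of the variational FBSDE; your version is, if anything, slightly more explicit on that step.
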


\begin{proof}
Based the Lipschitz nature of the FBSDE, there exists some $s\in(0,T)$ such that: (1) FBSDE \eqref{general FBSDE} has a unique regular decoupling
field $u$ on the horizon $[s, T]$; (2) it also accepts a unique solution on this horizon with any deterministic initial condition. Fixing any $\boldsymbol{\eta}, \tilde{\boldsymbol{\eta}}\in\mathbb{R}^N$ as two initial data, let us denote by $(\boldsymbol{Q}, \boldsymbol{Y}, \boldsymbol{M})$ and $(\tilde{\boldsymbol{Q}}, \tilde{\boldsymbol{Y}}, \tilde{\boldsymbol{M}})$ the two corresponding solutions. It then holds:
\begin{equation*}
        \begin{aligned}
            \boldsymbol{Q}_t &= \boldsymbol{\eta} + \int_{s}^t \rho(r, \boldsymbol{Y}_r)\,dr,\\
            \boldsymbol{Y}_t &= -2A\,\boldsymbol{Q}_T - \int_t^{T} 2\phi_r\,\boldsymbol{Q}_r\,dr -\int_t^{T} d\boldsymbol{M}_r,\\
            \boldsymbol{Y}_t &= u(t, \boldsymbol{Q}_t)
        \end{aligned}
    \end{equation*}
for any $t\in[s,T]$; the case of $(\tilde{\boldsymbol{Q}}, \tilde{\boldsymbol{Y}}, \tilde{\boldsymbol{M}})$ is similar. If we define $(\boldsymbol{\mathscr{Q}}, \boldsymbol{\mathscr{Y}}, \boldsymbol{\mathscr{M}}):=(\tilde{\boldsymbol{Q}}-\boldsymbol{Q}, \tilde{\boldsymbol{Y}}-\boldsymbol{Y}, \tilde{\boldsymbol{M}}-\boldsymbol{M})$, by taking the difference of two FBSDEs, it is straightforward to see that the backward equation becomes
\begin{equation*}
    \boldsymbol{\mathscr{Y}}_t = -2A\,\boldsymbol{\mathscr{Q}}_T - \int_t^{T} 2\phi_r\,\boldsymbol{\mathscr{Q}}_r\,dr -\int_t^{T} d\boldsymbol{\mathscr{M}}_r.
\end{equation*}
According to Theorem \ref{mean value and calculus}, the forward equation can be obtained by
\begin{equation*}
    d\boldsymbol{\mathscr{Q}}_t = \big[ \rho(t, \tilde{\boldsymbol{Y}}_t)-\rho(t, \boldsymbol{Y}_t) \big]\, dt = \mathcal{B}_t \, \boldsymbol{\mathscr{Y}}_t \, dt,
\end{equation*}
where $\mathcal{B}:=(\mathcal{B}_t)_{t\in[0,T]}$ is an $M_0$-matrix for any $t$ by Theorem \ref{global_Jacobian_M0}. Indeed, $M_0$-matrices are closed under convex combinations. It is also not hard to see that $\mathcal{B}$ is continuous with respect to time and element-wise bounded. We can then conclude that $(\boldsymbol{\mathscr{Q}}, \boldsymbol{\mathscr{Y}}, \boldsymbol{\mathscr{M}})$ solves the FBSDE
\begin{equation}
    \left\{
    \begin{aligned}
     d\boldsymbol{\mathscr{Q}}_t &= \mathcal{B}_t \, \boldsymbol{\mathscr{Y}}_t \, dt,\\
    \, d\boldsymbol{\mathscr{Y}}_t &= 2\phi_t \,  \boldsymbol{\mathscr{Q}}_t \, dt + d\boldsymbol{\mathscr{M}}_t,\\
    \boldsymbol{\mathscr{Q}}_s &= \Tilde{\boldsymbol{\eta}} - \boldsymbol{\eta}, \quad \boldsymbol{\mathscr{Y}}_T = -2A\,\boldsymbol{\mathscr{Q}}_T.
    \end{aligned}
    \right.
    \label{variat FBSDE}
\end{equation}
Equation \eqref{variat FBSDE} is the multi-dimensional version of the \textit{variational FBSDE} in \cite{ma2015well}.

To examine the variational FBSDE, its linear structure suggests the affine ansatz
\begin{equation}
    \boldsymbol{\mathscr{Y}}_t= \upsilon(t,\boldsymbol{\mathscr{Q}}_t):=\mathcal{X}_t\,\boldsymbol{\mathscr{Q}}_t
    \label{linear ansatz}
\end{equation}
for some matrix-valued process $\mathcal{X}$ to be specified. Via matching the coefficients in
\begin{equation*}
    d\boldsymbol{\mathscr{Y}}_t = \big(d\mathcal{X}_t+\mathcal{X}_t \, \mathcal{B}_t \, \mathcal{X}_t\,dt\big)\,\boldsymbol{\mathscr{Q}}_t = 2\phi_t \,
 \boldsymbol{\mathscr{Q}}_t\,dt + d\boldsymbol{\mathscr{M}}_t,
\end{equation*}
it turns out that $\mathcal{X}$ solves the BSRE
\begin{equation}
     d\mathcal{X}_t = (2\phi_t\, I - \mathcal{X}_t \, \mathcal{B}_t \, \mathcal{X}_t)\,dt + \mathcal{Z}_t \, dW_t, \quad \mathcal{X}_T=-2A\,I.
     \label{characteristic BSDE}
\end{equation}
Equation \eqref{characteristic BSDE} is the multi-dimensional generalization of the \textit{characteristic BSDE} in \cite{ma2015well}. The existence of a unique bounded $\mathcal{X}$ is ensured by the assumption. Upon examining the definition, we know the function $\upsilon$ defined in \eqref{linear ansatz} serves as the unique regular decoupling field for the FBSDE \eqref{variat FBSDE} due to the boundedness of $\mathcal{X}$. Here, we only need to focus on the bounded solution $\mathcal{X}$. Indeed, the unboundedness will render the decoupling field non-Lipschitz, contradicting the existence of a regular decoupling field on $[s, T]$. Given the variational FBSDE \eqref{variat FBSDE}, we can infer
\begin{equation}
    |u(s,\tilde{\eta}) - u(s,\eta)| = |\tilde{\boldsymbol{Y}}_s-\boldsymbol{Y}_s| = |\boldsymbol{\mathscr{Y}}_s| = |\mathcal{X}_s\,\boldsymbol{\mathscr{Q}}_s| \leq  \|\mathcal{X}_s\|_2 \cdot |\tilde{\eta}-\eta|,
    \label{lip_estimate}
\end{equation}
where $\|\mathcal{X}_s\|_2$ denotes the spectral norm of matrices. Since the BSRE is well-posed on the entire horizon and the solution $\mathcal{X}$ is bounded, the estimate \eqref{lip_estimate} infers that there is no $t_{\min}\geq 0$, such that $\lim_{s\searrow t_{\textnormal{min}}}
    \mathscr{L}_{u(s,\cdot)} = \infty$. Consequently, the decoupling field $u$ can be extended to the entire horizon $[0,T]$ and the FBSDE is then globally well-posed.
\end{proof}

\vspace{0.2cm}

\section{General Game: Well-posedness and Properties}
\label{paper 2 section 6}
\noindent Revisiting the linear case, it becomes evident that the ordering property plays a pivotal role in simplifying the $N$-player game down to a four-player scenario. This quartet comprises agents numbered $1, 2, N-1$, and $N$, specifically those occupying the highest and lowest inventory levels. We can even condense this four-player dimension into just two when the competition is \textit{complete at the boundary}, that is to say,
    \begin{equation}
    \psi^{1,b}(\boldsymbol{Y}_t) = \inf_{i \neq 1}\psi^{i,b}(\boldsymbol{Y}_t) \text{ \; and \; } \psi^{N,a}(\boldsymbol{Y}_t) = \inf_{i\neq N}\psi^{i,a}(\boldsymbol{Y}_t),
    \label{perfect compet}
    \end{equation}
for all $t$. For a finite number of players, case \eqref{perfect compet} happens if and only if $q_0^1=q_0^2$ and $q_0^{N-1}=q_0^N$ because
    \begin{equation*}
    \psi^{1,b}(\boldsymbol{Y}_t) = \psi^{2,b}(\boldsymbol{Y}_t) = \min_{i}\psi^{i,b}(\boldsymbol{Y}_t) \text{ \; and \; } \psi^{N,a}(\boldsymbol{Y}_t) = \psi^{N-1,a}(\boldsymbol{Y}_t) = \min_{i}\psi^{i,a}(\boldsymbol{Y}_t),
    \end{equation*}
as a direct consequence of Lemma \eqref{ordering in general}. In this scenario, it suffices to solve the two-dimensional BSRE \eqref{char_BSRE} since agent $2$ (resp. $N-1$) is a `copy' of agent $1$ (resp. $N$). Utilizing this finite-player characterization, we investigate another scenario in an infinite-player setting, where identical players are no longer necessary.

\kong

\begin{proposition}
\label{paper 2 mean field game}
Assume the BSRE \eqref{char_BSRE} is well-posed when $N=2$. Consider an infinite number of players indexed by $\mathcal{I}$ with initial inventories satisfying $\sup_{i\in\mathcal{I}}|q_0^i|<\infty$. Suppose that there exist no-duplicate sequences $(k_n)_{n\in\mathbb{N}}, (l_n)_{n\in\mathbb{N}}\subseteq \mathcal{I}$ such that
    \begin{equation*}
        \lim_{n\to\infty}q_0^{k_n}=\inf_{i\in\mathcal{I}} q_0^i \text{\quad and \quad} \lim_{n\to\infty} q_0^{l_n}=\sup_{i\in\mathcal{I}} q_0^i,
    \end{equation*}
    then \eqref{perfect compet} holds and there exists a Nash equilibrium.
\end{proposition}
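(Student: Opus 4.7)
The strategy is to construct a candidate equilibrium by using the well-posed two-dimensional BSRE to resolve the game at the extremes of the inventory distribution, and then to equip every remaining player $i\in\mathcal{I}$ with the unique single-agent best response against the externals thus produced. Let $\underline{q}:=\inf_{i\in\mathcal{I}}q_0^i$ and $\bar{q}:=\sup_{i\in\mathcal{I}}q_0^i$; both are finite by the boundedness assumption on the initial inventories. First I would solve the two-dimensional BSRE \eqref{char_BSRE} corresponding to the pair of representative types (``low'' with inventory $\underline{q}$ and ``high'' with inventory $\bar{q}$), each of which perceives its own quote as the best counter-quote on its own side, exactly as in the finite-player discussion preceding the proposition. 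By hypothesis this BSRE is well-posed, and through Theorem \ref{mult_char_bsre} this yields an equilibrium pair of processes $(\hat{Q}^{\text{L}},\hat{Y}^{\text{L}},\hat{\boldsymbol{\delta}}^{\text{L}})$ and $(\hat{Q}^{\text{H}},\hat{Y}^{\text{H}},\hat{\boldsymbol{\delta}}^{\text{H}})$.

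Next, for every $i\in\mathcal{I}$ I would solve the single-agent stochastic control problem of \cite{guo2023macroscopic} with initial inventory $q_0^i$, treating $\hat{\delta}^{\text{L},b}$ as the exogenous best bid and $\hat{\delta}^{\text{H},a}$ as the exogenous best ask, producing a unique adapted optimizer $\boldsymbol{\delta}^i=(\delta^{i,a},\delta^{i,b})$. A single-agent ordering property analogous to Lemma \ref{ordering in general} shows that $q_0\mapsto \delta^{\cdot,b}$ is non-decreasing while $q_0\mapsto \delta^{\cdot,a}$ is non-increasing, and continuous dependence of the Lipschitz FBSDE on its initial condition gives $\delta^{k_n,b}\to \hat{\delta}^{\text{L},b}$ and $\delta^{l_n,a}\to\hat{\delta}^{\text{H},a}$ almost surely along the prescribed sequences. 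Combining the monotonicity with the convergence along $(k_n)$ and $(l_n)$ yields $\inf_{j\in\mathcal{I}}\delta^{j,b}=\hat{\delta}^{\text{L},b}$ and $\inf_{j\in\mathcal{I}}\delta^{j,a}=\hat{\delta}^{\text{H},a}$; because both sequences have no duplicates, removing any single index $i$ still leaves infinitely many approximants and the infima are unchanged, so $\bar{\delta}^{i,b}=\hat{\delta}^{\text{L},b}$ and $\bar{\delta}^{i,a}=\hat{\delta}^{\text{H},a}$ for every $i$. This is precisely condition \eqref{perfect compet}; simultaneously it confirms that the externals actually faced by $i$ under the candidate profile coincide with those against which $\boldsymbol{\delta}^i$ was declared optimal, so $(\boldsymbol{\delta}^i)_{i\in\mathcal{I}}$ is a Nash equilibrium.

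The main obstacle is establishing rigorously the continuity and monotonicity of the single-agent map $q_0\mapsto\boldsymbol{\delta}^{q_0}$ against fixed externals. Continuity should reduce to standard Lipschitz FBSDE stability, since with the externals fixed the single-agent forward--backward system inherits the same Lipschitz structure as in Section \ref{paper 2 section 3} but on a scalar forward variable; monotonicity follows from the decoupling-field plus variational-FBSDE argument of Lemma \ref{ordering in general}, now applied to two copies of the single-agent problem sharing identical external flows. A secondary concern is internal consistency at the two extreme types: one must verify that the strategy $\hat{\boldsymbol{\delta}}^{\text{L}}$ produced by the two-dimensional BSRE coincides with the single-agent optimizer at $q_0=\underline{q}$ when externals are set to $(\hat{\delta}^{\text{L},b},\hat{\delta}^{\text{H},a})$, but this is immediate from the first-order characterization used in constructing the two-player equilibrium, in which the low type's best external bid is by construction its own self-copy $\hat{\delta}^{\text{L},b}$ and its best external ask is $\hat{\delta}^{\text{H},a}$.
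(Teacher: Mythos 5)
Your proposal is correct and follows essentially the same route as the paper: the paper likewise resolves the boundary via duplicated extreme types (four artificial players, i.e.\ two self-copied pairs reducing to the two-dimensional BSRE), then equips each $i\in\mathcal{I}$ with the unique single-agent best response against the resulting pseudo-best ask/bid, and closes the argument with monotonicity in $q_0$, a Lipschitz estimate giving convergence along $(k_n)$ and $(l_n)$, and the no-duplicates observation to preserve the infima after removing any one index. The consistency check you flag at the extreme types is handled in the paper exactly as you anticipate, by noting the artificial copies themselves solve the same auxiliary control problem.
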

\begin{proof}
Let $\tilde{1}, \tilde{2}, \Tilde{3}$ and $\Tilde{4}$ be four artificial players with initial inventories given by
\begin{equation*}
    q_0^{\tilde{1}} = q_0^{\tilde{2}} = \inf_{i\in\mathcal{I}} q_0^i \text{ \; and \; } q_0^{\tilde{3}} = q_0^{\tilde{4}} = \sup_{i\in\mathcal{I}} q_0^i.
\end{equation*}
We look at the four-player game of $(\tilde{1}, \tilde{2}, \Tilde{3}, \Tilde{4})$, the equilibrium of which exists by the well-posedness of two-dimensional BSRE \eqref{char_BSRE} discussed above. Denote by $(\boldsymbol{\beta}^{\Tilde{1}}, \boldsymbol{\beta}^{\Tilde{2}}, \boldsymbol{\beta}^{\Tilde{3}}, \boldsymbol{\beta}^{\Tilde{4}})$ the equilibrium profile and it holds for all $t$ that
\begin{equation*}
\begin{aligned}
    \beta_t^{\Tilde{1}, b} &= \beta_t^{\Tilde{2}, b} \leq \beta_t^{\Tilde{3}, b} = \beta_t^{\Tilde{4}, b},\\
    \beta_t^{\Tilde{4}, a} &= \beta_t^{\Tilde{3}, a} \leq \beta_t^{\Tilde{2}, a} = \beta_t^{\Tilde{4}, a}.
\end{aligned}
\end{equation*}
Note that $(\boldsymbol{\beta}^{\Tilde{1}}, \boldsymbol{\beta}^{\Tilde{2}}, \boldsymbol{\beta}^{\Tilde{3}}, \boldsymbol{\beta}^{\Tilde{4}})$ are all bounded due to Proposition \ref{remove xi}. If processes $\beta^{\Tilde{4}, a}$ and $\beta^{\Tilde{1}, b}$ are regarded as the pseudo-best ask and bid strategies, for agent $i\in\mathcal{I}$ let us consider the stochastic optimal control problem, where the inventory controlled by $\boldsymbol{\delta}^i\in\mathbb{H}^2\times\mathbb{H}^2$ reads
\begin{equation*}
    dQ_t^i = b_t\,\Lambda(\delta_t^{i, b}-\beta_t^{\tilde{1}, b})\,dt - a_t\,\Lambda(\delta_t^{i, a}-\beta_t^{\Tilde{4}, a})\,dt.
\end{equation*}
The agent $i$ aims at maximizing the associated control-version objective functional
\begin{equation}
\mathbb{E}\Big[\int_0^T\delta_t^{i,a}\,a_t\,\Lambda(\delta_t^{i,a}-\beta_t^{\tilde{4},a})\,dt+\int_0^T\delta_t^{i,b}\,b_t\,\Lambda(\delta_t^{i,b}-\beta_t^{\tilde{1},b})\,dt-\int_0^T \phi_t\, \big(Q_t^i\big)^2\,dt-A\,\big(Q_T^i\big)^2 \Big].
\label{aux_control}
\end{equation}
The control problem above is slightly more general than the one studied in \cite{guo2023macroscopic}, where the original best ask and bid strategies $(0,0)$ is replaced by $(\beta^{\Tilde{4}, a}, \beta^{\Tilde{1}, b})$. However, the stochastic maximum principle can still be applied to obtain the optimal feedback control 
\begin{equation}
    \hat{\delta}_t^{i,a} = \beta_t^{\tilde{4}, a} + \delta^*(Y_t^i - \beta_t^{\tilde{4}, a}) \text{ \; and  \; } \hat{\delta}_t^{i,b} = \beta_t^{\tilde{1}, b} + \delta^*(-Y_t^i - \beta_t^{\tilde{1}, b}),
    \label{aux_feedback_control}
\end{equation}
where the adjoint process $Y^i$ solves the FBSDE
\begin{equation}
      \left\{
    \begin{aligned}
     dQ_t^i &= b_t\,\Lambda\big(\delta^*(-Y_t^i - \beta_t^{\tilde{1}, b})\big)\,dt-a_t\,\Lambda\big(\delta^*(Y_t^i - \beta_t^{\tilde{4}, a})\big)\,dt,\\
    \, dY_t^i &= 2\phi_t\,Q_t^i\,dt+dM_t^i,\\
    Q_0^i &= q_0^i, \quad Y_T^i = -2A\,Q_T^i.
    \end{aligned}
    \right.
    \label{aux_control_fbsde}
\end{equation}
Similar to the techniques in \cite{guo2023macroscopic}, to solve the non-Lipschitz FBSDE \eqref{aux_control_fbsde} we first impose the regularizer $\xi>0$ to the forward equation so that it becomes
\begin{equation*}
    dQ_t^i = b_t\,\Lambda\big(\delta^*(-Y_t^i - \beta_t^{\tilde{1}, b})\vee(-\xi)\wedge\xi\big)\,dt-a_t\,\Lambda\big(\delta^*(Y_t^i - \beta_t^{\tilde{4}, a})\vee(-\xi)\wedge\xi\big)\,dt.
\end{equation*}
Since the equation is then Lipschitz, we prove the well-posedness the regularized FBSDE. Consequently, the solution $Y^i$ turns out to be bounded by some constant independent of $\xi$, which helps us remove the regularizer $\xi$ and obtain a solution solving the original equation. Finally, one can see the solution $(Q^i, Y^i, M^i)\in \mathbb{S}^2 \times \mathbb{S}^2 \times \mathbb{M}$ obtained for FBSDE \eqref{aux_control_fbsde} is also unique by a continuation argument as in \cite{peng1999fully}. Note that the control problem \eqref{aux_feedback_control} and \eqref{aux_control_fbsde} are also solved by artificial agent $\tilde{2}$ and $\Tilde{3}$; for illustration, we have 
\begin{equation*}
        \beta_t^{\Tilde{2}, a} = \beta_t^{\tilde{4}, a} + \delta^*(Y_t^{\tilde{2}} - \beta_t^{\tilde{4}, a}) \text{ \; and  \; } \beta_t^{\tilde{2}, b} = \beta_t^{\tilde{1}, b} + \delta^* (-Y_t^{\Tilde{2}} - \beta_t^{\tilde{1}, b}),
\end{equation*}
where $Y^{\tilde{2}}$ solves
\begin{equation*}
    \left\{
    \begin{aligned}
     dQ_t^{\Tilde{2}} &= b_t\,\Lambda\big(\delta^*(-Y_t^{\Tilde{2}} - \beta_t^{\tilde{1}, b})\big)\,dt-a_t\,\Lambda\big(\delta^*(Y_t^{\Tilde{2}} - \beta_t^{\tilde{4}, a})\big)\,dt,\\
    \, dY_t^{\Tilde{2}} &= 2\phi_t\,Q_t^{\Tilde{2}}\,dt+dM_t^{\Tilde{2}},\\
    Q_0^{\Tilde{2}} &= q_0^{\Tilde{2}}, \quad Y_T^{\Tilde{2}} = -2A\,Q_T^{\Tilde{2}}.
    \end{aligned}
    \right.
\end{equation*}
For any $i\in\mathcal{I}$, referring to \cite{guo2023macroscopic}, the monotonicity of the control with respect to the initial inventory yields

\begin{itemize}
    \item[(1)] $\hat{\delta}_t^{i, a} \geq \beta_t^{\Tilde{3}, a} = \beta_t^{\Tilde{4}, a}$,\\
    \vspace{-0.2cm}

    \item[(2)] $\hat{\delta}_t^{i, b} \geq \beta_t^{\Tilde{2}, b} = \beta_t^{\Tilde{1}, a}$,\\
    \vspace{-0.2cm}

    \item[(3)] $\hat{\delta}_t^{i, a} - \beta_t^{\Tilde{3}, a}\leq C\,(q_0^{\tilde{3}}-q_0^i)$ \,and\, $\hat{\delta}_t^{i, b} - \beta_t^{\Tilde{2}, b}\leq C\,(q_0^i-q_0^{\tilde{2}})$
\end{itemize}

\noindent for all $t$, where $C>0$ is some constant. Considering the assumption on sequences $(k_n)_{n\in\mathbb{N}}$ and $(l_n)_{n\in\mathbb{N}}$, we conclude that 
\begin{equation*}
    \inf_{i\in\mathcal{I}} \hat{\delta}_t^{i, b} = \beta_t^{\Tilde{2}, b} = \lim_{n \to \infty} \hat{\delta}_t^{k_n, b} \text{ \; and \; } \inf_{i\in\mathcal{I}} \hat{\delta}_t^{i, a} = \beta_t^{\Tilde{3}, a} = \lim_{n \to \infty} \hat{\delta}_t^{l_n, a}.
\end{equation*}
Finally, recalling that both $(k_n)_{n\in\mathbb{N}}$ and $(l_n)_{n\in\mathbb{N}}$ are non-duplicate, the strategy profile $(Q^i, Y^i, M^i)_{i\in\mathcal{I}}$ is a Nash equilibrium since
\begin{equation*}
\begin{aligned}
    dQ_t^i &= b_t\,\Lambda\big(\delta^*(-Y_t^i - \beta_t^{\tilde{1}, b})\big)\,dt - a_t\,\Lambda\big(\delta^*(Y_t^i - \beta_t^{\tilde{4}, a})\big)\,dt\\
    &=  b_t\,\Lambda\big(\delta^*(-Y_t^i - \inf_{\mathcal{I}\ni j \neq i} \hat{\delta}_t^{j, b})\big)\,dt - a_t\,\Lambda\big(\delta^*(Y_t^i - \inf_{\mathcal{I}\ni j \neq i} \hat{\delta}_t^{j, a})\big)\,dt.
\end{aligned}
\end{equation*}
In other words, processes $\beta^{\Tilde{4}, a}$ and $\beta^{\Tilde{1}, b}$ are the genuine best ask and bid strategies.
\end{proof}

\kong

The remainder of this section is dedicated to the well-posedness of BSRE \eqref{char_BSRE}. When the coefficients $a, b, \phi$, and $A$ are all deterministic, then $\mathcal{B}$ is also deterministic and thus the term $\mathcal{Z}$ becomes zero, simplifying BSRE \eqref{char_BSRE} into a matrix Riccati equation. Due to its extensive applications, the well-posedness of the matrix Riccati equation has attracted significant attention, with an early exploration credited to \cite{wonham1968matrix}. However, existing literature has predominantly focused on equations where the coefficients are either positive definite or symmetric. When $a, b, \phi$ and $A$ are random, the term $\mathcal{Z}$ must be non-zero, rendering \eqref{char_BSRE} genuinely stochastic. While \cite{peng1990general} introduced a stochastic adaptation of Bellman's quasi-linearization method in \cite{wonham1968matrix}, similar to the deterministic scenario, cases when coefficients are neither symmetric nor positive definite are rarely studied. To circumvent such difficulties, we first study the deterministic \eqref{char_BSRE} using the well-known Radon's lemma, followed by an analysis of the stochastic version, building upon insights gained in the previous step.

Assume $a, b, \phi$, and $A$ are all deterministic and consider the following linear matrix differential equations:
\begin{equation}
\begin{pmatrix}
    V'(t) \\
    U'(t)
\end{pmatrix}
=
\begin{pmatrix}
    0 & \mathcal{B}_t\\
    2\phi_t\, I & 0
\end{pmatrix} 
\begin{pmatrix}
    V(t) \\
    U(t)
\end{pmatrix}
; \quad
\begin{pmatrix}
    V(T) \\
    U(T)
\end{pmatrix}
=
\begin{pmatrix}
    I \\
    -2A\,I
\end{pmatrix}.
    \label{radon system}
\end{equation}
The purpose of Radon's lemma is to establish a connection between the matrix Riccati equation \eqref{char_BSRE} and the linear equation \eqref{radon system}:

\kong

\begin{lemma}[\cite{freiling2002survey}]
\label{radon lemma}
(1) Let $\mathcal{X}$ be a solution
of the deterministic Riccati equation \eqref{char_BSRE} on some interval $\mathcal{J}\subseteq [0,T]$ such that $T\in\mathcal{J}$. Denote by $V$ the unique solution of the linear equation
\begin{equation*}
    V'(t)=\mathcal{B}_t \, \mathcal{X}_t \, V(t); \quad V(T)=I,
\end{equation*}
for $t\in \mathcal{J}$ and set $U(t)=\mathcal{X}_t\,V(t)$. Then, the matrix $\begin{pmatrix}
    V(t) \\ U(t)
\end{pmatrix}$
defines for $t \in \mathcal{J}$ the solution of the linear differential equation \eqref{radon system};

(2) Suppose $\begin{pmatrix}
    V(t) \\ U(t)
\end{pmatrix}$
is on some interval $\mathcal{J} \subseteq [0,T]$ a solution of the linear differential equation \eqref{radon system} such that $\det V(t)\neq 0$ for all $t \in \mathcal{J}$, then
\begin{equation}
    \mathcal{X}: \mathcal{J}\to \mathbb{R}^{N\times N}, \quad t \mapsto U(t)\, V(t)^{-1}=:\mathcal{X}_t
    \label{radon bridge}
\end{equation}
is a solution to the deterministic Riccati equation \eqref{char_BSRE} on $\mathcal{J}$.
\end{lemma}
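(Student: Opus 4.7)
The plan is to prove both directions of Radon's lemma by direct computation, exploiting the algebraic identity that the Riccati nonlinearity $\mathcal{X}\,\mathcal{B}_t\,\mathcal{X}$ is precisely what arises and cancels when one linearises via the ansatz $\mathcal{X}_t = U(t)\,V(t)^{-1}$. Since $a,b,\phi,A$ are deterministic and $\mathcal{B}$ is continuous in time, all ODEs below are classical, and no stochastic analysis is required at this stage; only matrix calculus and the product/inverse rules are needed.

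For part (1), I would first invoke standard linear ODE theory to obtain a unique solution $V$ on $\mathcal{J}$ to $V'(t) = \mathcal{B}_t\,\mathcal{X}_t\,V(t)$ with $V(T)=I$, noting that $t \mapsto \mathcal{B}_t\,\mathcal{X}_t$ is continuous because $\mathcal{X}$ is $C^1$ on $\mathcal{J}$ as a solution of \eqref{char_BSRE} and $\mathcal{B}$ is continuous by assumption. Setting $U(t) := \mathcal{X}_t\,V(t)$, the first block-row of \eqref{radon system} is immediate: $V' = \mathcal{B}_t\,\mathcal{X}_t\,V = \mathcal{B}_t\,U$. For the second block-row, I would differentiate $U$ by the product rule and substitute the Riccati identity $\mathcal{X}' = 2\phi_t\,I - \mathcal{X}\,\mathcal{B}_t\,\mathcal{X}$, obtaining
\[
U' \;=\; \mathcal{X}'\,V + \mathcal{X}\,V' \;=\; (2\phi_t\,I - \mathcal{X}\,\mathcal{B}_t\,\mathcal{X})\,V + \mathcal{X}\,\mathcal{B}_t\,\mathcal{X}\,V \;=\; 2\phi_t\,V,
\]
so the cross term $\mathcal{X}\,\mathcal{B}_t\,\mathcal{X}\,V$ cancels exactly. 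The terminal conditions are trivial: $V(T)=I$ by construction and $U(T) = \mathcal{X}_T\,V(T) = -2A\,I$.

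For part (2), the hypothesis $\det V(t)\neq 0$ on $\mathcal{J}$ guarantees that $\mathcal{X}_t := U(t)\,V(t)^{-1}$ is well defined and $C^1$. I would then apply the standard identity $(V^{-1})' = -V^{-1}\,V'\,V^{-1}$ together with the two rows of \eqref{radon system} to compute
\[
\mathcal{X}' \;=\; U'\,V^{-1} + U\,(V^{-1})' \;=\; 2\phi_t\,V\,V^{-1} - U\,V^{-1}\,V'\,V^{-1} \;=\; 2\phi_t\,I - \mathcal{X}\,\mathcal{B}_t\,U\,V^{-1},
\]
and since $U\,V^{-1} = \mathcal{X}$, the right-hand side collapses to $2\phi_t\,I - \mathcal{X}\,\mathcal{B}_t\,\mathcal{X}$, which is the Riccati generator. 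The terminal value matches: $\mathcal{X}_T = U(T)\,V(T)^{-1} = -2A\,I$.

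There is no substantive obstacle; the lemma is essentially a bookkeeping exercise in matrix calculus, and its utility for our paper lies entirely in what it enables downstream rather than in the proof itself. The one point worth flagging is that part (1) produces $V$ with no a priori guarantee of invertibility on $\mathcal{J}$, while part (2) must take invertibility as a hypothesis; this asymmetry is precisely why the subsequent well-posedness analysis for the matrix Riccati equation \eqref{char_BSRE} must be phrased in terms of controlling $\det V$ on the full horizon $[0,T]$, since here $\mathcal{B}_t$ is only an $M_0$-matrix and the classical symmetric/positive-definite machinery does not apply.
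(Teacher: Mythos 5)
Your computation is correct and complete: both directions follow from the product rule, the identity $(V^{-1})'=-V^{-1}V'V^{-1}$, and the exact cancellation of the quadratic term, which is the standard proof of Radon's lemma. The paper itself imports this result from the cited survey of Freiling without proof, so there is no in-paper argument to compare against; your verification (including the observation that invertibility of $V$ is automatic in neither direction and must be tracked separately, which is exactly what Theorem \ref{deterministic game} does) is the expected one.
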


\kong

\noindent Radon’s lemma reveals that the matrix Riccati equation \eqref{char_BSRE} is locally equivalent to the linear differential equation \eqref{radon system}. This equivalence persists until a potentially finite blow-up time, as indicated by \eqref{radon bridge}. Furthermore, it is evident from \eqref{radon bridge} that the solution $\mathcal{X}$ of the matrix Riccati equation experiences blow-ups at moments when $\det V(t)$ vanishes. Hence, we shift our focus to the linear differential equation \eqref{radon system} and verify the non-singularity of $V$.

\kong

\begin{theorem}
\label{deterministic game}
    Assume the coefficients $a, b, \phi$, and $A$ are all deterministic. Then the matrix Riccati equation \eqref{char_BSRE} accepts a unique bounded solution $\mathcal{X}$ for cases:
    
    \begin{itemize}
        \item[(i)] when $\phi_t=0$ for all $t$;\\
        \vspace{-0.2cm}

        \item[(ii)] when $N=2$.
        
    \end{itemize}
    
    \noindent Moreover, in both cases, the solution $\mathcal{X}_t$ has row sum $-2A - 2\int_t^T \phi_s\,ds$ for all $t$.
\end{theorem}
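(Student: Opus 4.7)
The plan is to apply Radon's lemma (Lemma \ref{radon lemma}) to recast each case in terms of a linear matrix system or a scalar Riccati, exploiting the defining property $\mathcal{B}_t\mathbf{1}=0$ of $M_0$-matrices (where $\mathbf{1}$ denotes the all-ones vector) to follow the kernel direction.

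For case (i), with $\phi\equiv 0$ the ODE system \eqref{radon system} decouples: $U'(t)=2\phi_t V(t)=0$ gives $U(t)\equiv -2A\,I$, so $V(t)=I+2A\int_t^T\mathcal{B}_s\,ds$. The integrand is pointwise an $M_0$-matrix, and the integral inherits the non-positive off-diagonal and zero row-sum structure; Lemma \ref{Z-matrix} then upgrades it to an $M$-matrix whose eigenvalues have non-negative real parts. Consequently the eigenvalues of $V(t)$ have real parts at least $1$, so $V(t)$ is invertible throughout $[0,T]$. Radon's lemma delivers $\mathcal{X}_t=-2A\,V(t)^{-1}$, continuous and therefore bounded on the compact interval. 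The identity $\mathcal{B}_s\mathbf{1}=0$ yields $V(t)\mathbf{1}=\mathbf{1}$ and hence $\mathcal{X}_t\mathbf{1}=-2A\,\mathbf{1}$, which matches $-2A-2\int_t^T\phi_s\,ds=-2A$. Uniqueness is the standard variational argument: the difference of two bounded solutions solves a linear ODE with bounded coefficients and zero terminal condition, hence vanishes.

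For case (ii), with $N=2$ I would first establish the row-sum formula $\mathcal{X}_t\mathbf{1}=c_2(t)\,\mathbf{1}$ with $c_2(t):=-2A-2\int_t^T\phi_s\,ds$, for every bounded solution. The vector $y_t:=\mathcal{X}_t\mathbf{1}$ satisfies the linear ODE $y_t'=2\phi_t\,\mathbf{1}-\mathcal{X}_t\mathcal{B}_t\,y_t$ with $y_T=-2A\,\mathbf{1}$, and $c_2(t)\,\mathbf{1}$ is visibly a solution because $\mathcal{B}_t\mathbf{1}=0$; since $\mathcal{X}$ is bounded, linear uniqueness forces equality. Writing $\mathcal{B}_t=\bigl(\begin{smallmatrix}\alpha_t & -\alpha_t\\-\beta_t & \beta_t\end{smallmatrix}\bigr)$ with $\alpha_t,\beta_t\geq 0$, and $\mathcal{X}_t=\bigl(\begin{smallmatrix}x_t & c_2-x_t\\ y_t & c_2-y_t\end{smallmatrix}\bigr)$, direct expansion of $\mathcal{X}_t\mathcal{B}_t\mathcal{X}_t$ collapses the matrix Riccati: $d_t:=x_t-y_t$ satisfies the scalar Riccati
\[
d_t'=2\phi_t-(\alpha_t+\beta_t)\,d_t^2,\qquad d_T=-2A,
\]
while $x_t$ and $y_t$ individually solve linear ODEs whose coefficients are determined by $d$, $c_2$, and the given $\alpha,\beta,\phi$.

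The main obstacle is the global bounded well-posedness of this scalar Riccati; once it is secured, the linear ODEs for $x$ and $y$ admit unique bounded solutions that reassemble into $\mathcal{X}$, and uniqueness of $\mathcal{X}$ follows by running the same reduction on any second bounded solution (same row sum, same $d$, same $x$ and $y$). In reversed time $\tilde d(\tau):=d(T-\tau)$ the equation becomes $\tilde d'=(\alpha+\beta)\,\tilde d^2-2\phi$ with $\tilde d(0)=-2A\leq 0$. Comparison against the supersolution $B\equiv 0$ (note $B'=0\geq -2\phi=f(\tau,0)$) confines $\tilde d$ to $(-\infty,0]$; on this half-line the estimate $\tfrac{d}{d\tau}\tilde d^2=2(\alpha+\beta)\tilde d^3-4\phi\,\tilde d\leq 4\bar\phi\,|\tilde d|$ integrates, via $(\sqrt{\tilde d^2})'\leq 2\bar\phi$, to $|\tilde d(\tau)|\leq 2A+2\bar\phi\,T$, closing the usual bootstrap for global existence on $[0,T]$.
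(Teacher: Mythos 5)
Your proposal is correct. For case (i) it is essentially the paper's argument: solve \eqref{radon system} explicitly, observe that $V(t)=I+2A\int_t^T\mathcal{B}_s\,ds$ is a $Z_+$-matrix with row sum $1$, and invert; the only cosmetic difference is that you certify invertibility through the eigenvalue shift of an $M$-matrix rather than through strict diagonal dominance and Theorem \ref{inverse_norm_matrix}, and both are valid. For case (ii), however, you take a genuinely different route from the paper's proof of this theorem. The paper stays inside Radon's lemma \ref{radon lemma} and runs a Banach fixed-point iteration on the cone of continuous $Z_+$-matrix-valued functions with row sum $1$, using the two-dimensional fact that the product $\mathcal{B}_s[-U(s)]$ of an $M_0$-matrix with a $Z_+$-matrix of constant row sum is again $M_0$; the payoff is the structural by-product that $V(t)^{-1}$ is non-negative with row sum $1$. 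You instead prove the row-sum identity $\mathcal{X}_t\mathbf{1}=(-2A-2\int_t^T\phi_s\,ds)\,\mathbf{1}$ \emph{a priori} for any bounded solution, by noting that $\mathcal{X}_t\mathbf{1}$ and the scalar candidate both solve the same linear ODE with the same terminal data (using $\mathcal{B}_t\mathbf{1}=0$), and then collapse the $2\times2$ Riccati to the scalar equation $d'=2\phi-(\alpha+\beta)d^2$ plus two linear ODEs. This is precisely the reduction the paper only deploys later, in the stochastic Theorem \ref{2_dim bsre} (there $\vartheta_1=\chi_1+\chi_2-\mathcal{S}$ plays the role of your $d$), so your argument unifies the deterministic case (ii) with the stochastic treatment and avoids Radon's lemma and the fixed-point machinery altogether; your a priori row-sum lemma moreover holds for every $N$, not just $N=2$. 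The cost is that you do not obtain the non-negativity and row-sum structure of $V(t)^{-1}$ that the paper's construction yields along the way. Your remaining steps — the consistency of the second column under $c_2'=2\phi$, the comparison and a priori bound $|d|\le 2\bar A+2\bar\phi T$ for the scalar Riccati, and uniqueness via Gronwall on the difference of two bounded solutions — are all sound.
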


\begin{proof}
    If $\phi \equiv 0$ as in case (i), then $U$ and $V$ can be solved one-by-one in the linear system \eqref{radon system}. This allows us to write the solution explicitly as
    \begin{equation*}
        U(t) = -2A \, I \text{ \; and \; } V(t) = I + 2A\,\int_t^T \mathcal{B}_s\,ds.
    \end{equation*}
    Recalling that $\mathcal{B}_s$ is an $M_0$-matrix for all $s$, then $\int_t^T \mathcal{B}_s\,ds$ is also an $M_0$-matrix and $V(t)$ is then a $Z_+$-matrix with row sum $1$. By Lemma \ref{Z-matrix}, we know $V(t)$ is an $M$-matrix, the non-singularity of which can be deduced from its strict diagonal dominance. According to Lemma \ref{radon lemma}, the unique solution $\mathcal{X}$ of matrix Riccati equation \eqref{char_BSRE} can be represented by $\mathcal{X}_t = U(t) \, V(t)^{-1}$. While $V(t)$ is a strictly diagonally dominant $M$-matrix, it further implies $V(t)^{-1}$ is a non-negative matrix and is strictly diagonally dominant of column entries. Defining $\Vec{1}=(1, \dots, 1)$, the fact $V(t)\cdot \Vec{1} = \vec{1}$ yields
    \begin{equation*}
        \Vec{1} = I \cdot \Vec{1} = V(t)^{-1} \cdot V(t) \cdot \Vec{1} = V(t)^{-1} \cdot \Vec{1}.
    \end{equation*}
    This tells us that $V(t)^{-1}$ has row sum $1$. Denote by $C([0,T];\mathbb{R}^{2\times 2})$ the set of matrix-valued functions on $[0,T]$ that is element-wise continuous. Let $C_1([0,T];\mathbb{R}^{2\times 2})\subset C([0,T];\mathbb{R}^{2\times 2})$ be the class such that, for any fixed $t$, the associated matrix is a $Z_+$-matrix with row sum $1$. In the set $C([0,T];\mathbb{R}^{2\times 2})$, consider the norm
    \begin{equation*}
        \|P\|_{\ell}:=\sup_{t\in[0,T]} e^{-\ell \, (T-t)} \cdot \|P(t)\|_F,
    \end{equation*}
    where $\ell$ is a positive constant to be specified, and $\|\cdot\|_F$ represents the Frobenius norm. The supremum in the $\|\cdot\|_\ell$ ensures that both $C([0,T];\mathbb{R}^{2\times 2})$ and $C_1([0,T];\mathbb{R}^{2\times 2})$ are complete with respect to $\|\cdot\|_\ell$. Picking any $P \in C_1([0,T];\mathbb{R}^{2\times 2})$, let us solve the linear equation
    \begin{equation*}
        U'(t) = 2\phi_t \, P(t); \quad U(T) = -2A\, I,
    \end{equation*}
    and denote by $U$ its solution. This yields 
    \begin{equation*}
        -U(t)= 2A\,I + 2\int_t^T \phi_s\,P(s)\,ds
    \end{equation*}
    and consequently $-U(t)$ is a $Z_+$-matrix with row sum $2A + 2\int_t^T \phi_s\,ds$ for any $t$. Given such $U$, we move on to the other linear equation
    \begin{equation*}
        V'(t)=\mathcal{B}_t \, U(t); \quad V(T)=I
    \end{equation*}
    and its solution $V$ reads
    \begin{equation*}
        V(t)= I+\int_t^T \mathcal{B}_s \big[-U(s)\big]\,ds.
    \end{equation*}
    Considering the following facts: (1) $\mathcal{B}_s$ is an $M_0$-matrix; (2) $-U(s)$ is $Z_+$-matrix with a uniform row sum $2A + 2\int_t^T \phi_s\,ds$; (3) both of them are $2$-by-$2$ matrices, some direct calculations infer $\mathcal{B}_s [ - U(s) ]$ is an $M_0$-matrix. Indeed, we apply the $2$-dimensional condition only for the property that $M$-matrices are closed under multiplication in $2$-dimension. The matrix $V(t)$ is thus a $Z_+$-matrix with row sum $1$. In summary, the above procedure defines a map
    \begin{equation*}
        C_1([0,T];\mathbb{R}^{2\times 2}) \ni P \hookrightarrow \mathfrak{T}(P) = V \in C_1([0,T];\mathbb{R}^{2\times 2}).
    \end{equation*}
    We proceed to show $\mathfrak{T}$ is a contraction with respect to $\| \cdot \|_\ell$ for some large $\ell$.

    For any $P, \Tilde{P} \in C_1([0,T];\mathbb{R}^{2\times 2})$, write $V = \mathfrak{T}(P)$ and $\Tilde{V} = \mathfrak{T}(\Tilde{P})$. Utilizing the Fubini's theorem, the map $\mathfrak{T}$ can be briefly written as
    \begin{equation*}
    \begin{aligned}
        V(t) &= I + 2\int_t^T \mathcal{B}_s \Big(A\,I + \int_s^T \phi_u\,P(u)\,du\Big)\,ds\\
        &=I + 2A\int_t^T\mathcal{B}_s\,ds + 2\int_t^T \Big(\int_t^u \mathcal{B}_s\,ds\Big)\, \phi_u \, P(u)\, du.
    \end{aligned}
    \end{equation*}
    We can then see it holds for all $t$ that
    \begin{equation*}
    \begin{aligned}
        \|\Tilde{V}(t)-V(t)\|_F & \leq 2\int_t^T \big\| \phi_u \, \Big(\int_t^u \mathcal{B}_s\,ds\Big) \big\|_F \cdot \|\Tilde{P}_u-P_u\|_F \, du\\
        &\leq C \int_t^T e^{\ell\,(T-u)} \cdot e^{-\ell\,(T-u)} \, \|\Tilde{P}_u-P_u\|_F \, du\\
        &\leq C \, \frac{e^{\ell\,(T-t)}-1}{\ell} \, \|\Tilde{P}-P\|_\ell,
    \end{aligned}
    \end{equation*}
    which immediately yields
    \begin{equation*}
    \begin{aligned}
        e^{-\ell \, (T-t)} \, \|\Tilde{V}(t)-V(t)\|_F &\leq C \, \frac{e^{\ell\,(T-t)} - 1}{\ell \, e^{\ell\,(T-t)}} \, \|\Tilde{P}-P\|_\ell,\\
        \|\tilde{V}-V\|_\ell &\leq \frac{C}{\ell} \, \|\Tilde{P}-P\|_\ell.
    \end{aligned}
    \end{equation*}
    It suffices to pick $\ell > C$ for $\mathfrak{T}$ to be a contraction. By the Banach fixed-point theorem, the map $\mathfrak{T}$ accepts a unique fixed point, corresponding to the unique solution of the linear differential equation \eqref{radon system}. Therefore, we can conclude for all $t$ that:
    
    \begin{itemize}
        \item[(1)] $V(t)$ as the solution of \eqref{radon system} is a $Z_+$-matrix with row sum $1$ and thus is non-singular;\\
        \vspace{-0.2cm}
        
        \item[(2)] $-U(t)$ is a $Z_+$-matrix with row sum $2A + 2\int_t^T \phi_s\,ds$.
    \end{itemize}

    \noindent Discussed above, the matrix $V(t)^{-1}$ is then a non-negative matrix with row sum $1$. We can now see $-U(t) \, V(t)^{-1}$ has the row sum $-2(A+\int_t^T \phi_s\,ds)$.
\end{proof}

\kong

\begin{remark}
    Case (i) in Theorem \ref{deterministic game} pertains to the differential game of $N$ risk-neutral players, while case (ii) is concerned with the two-player differential game. The following Theorem \ref{2_dim bsre} addresses the two-player stochastic differential game.
\end{remark}

\kong

\noindent One crucial characteristic of the solution, shown in Theorem \ref{deterministic game}, is the uniform row sum of the solution $\mathcal{X}_t$ across all $t$. This attribute helps solve the stochastic Riccati equation and further enables the derivation of additional properties of the solution.

\kong

\begin{theorem}
\label{2_dim bsre}
    If $N=2$, then the BSRE \eqref{char_BSRE} accepts a unique bounded solution $\mathcal{X}$. Furthermore, for all $t$ the matrix $\mathcal{X}_t$ is non-positive with row sum $-2 \, \mathbb{E}_t[A+\int_t^T \phi_s \, ds]$, and is strictly diagonally dominant of  column entries.
\end{theorem}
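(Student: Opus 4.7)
The plan is to exploit the very restrictive algebraic structure of $2$-dimensional $M_0$-matrices: any such $\mathcal{B}_t$ takes the form
\begin{equation*}
\mathcal{B}_t = \begin{pmatrix} c_1(t) & -c_1(t) \\ -c_2(t) & c_2(t) \end{pmatrix},
\end{equation*}
with bounded non-negative $c_1, c_2$, and in particular $\mathcal{B}_t\boldsymbol{1} = 0$ for $\boldsymbol{1} := (1,1)^\top$. The first step is to establish the row-sum identity $\mathcal{X}_t\boldsymbol{1} = \beta_t\boldsymbol{1}$ with $\beta_t := -2\,\mathbb{E}_t[A + \int_t^T \phi_s\,ds]$: multiplying the BSRE on the right by $\boldsymbol{1}$ one has the quadratic contribution $\mathcal{X}_t\mathcal{B}_t(\mathcal{X}_t\boldsymbol{1})$, so the candidate $\beta_t\boldsymbol{1}$ solves the resulting vector-valued linear BSDE (since $\mathcal{B}_t\boldsymbol{1}=0$ annihilates the quadratic term), and uniqueness of linear BSDEs forces the claimed identity. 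This already yields the prescribed row sum.

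Using the row-sum identity I would parametrise
\begin{equation*}
\mathcal{X}_t = \begin{pmatrix} \alpha_t & \beta_t - \alpha_t \\ \gamma_t & \beta_t - \gamma_t \end{pmatrix},
\end{equation*}
and observe the key algebraic identity $\mathcal{B}_t\mathcal{X}_t = (\alpha_t-\gamma_t)\,\mathcal{B}_t$, which follows directly from $\mathcal{B}_t\boldsymbol{1}=0$ together with the chosen form of $\mathcal{X}_t$. The matrix quadratic term then decouples as $\mathcal{X}_t\mathcal{B}_t\mathcal{X}_t = (\alpha_t-\gamma_t)\,\mathcal{X}_t\mathcal{B}_t$, with $\mathcal{X}_t\mathcal{B}_t$ rank-one and rows proportional to $(1,-1)$. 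Reading off the $(1,1)$ and $(2,1)$ entries, and setting $\eta_t := \alpha_t-\gamma_t$ and $\mathfrak{b}_t := c_1(t) + c_2(t)$, the matrix BSRE collapses to a scalar Riccati BSDE
\begin{equation*}
d\eta_t = (2\phi_t - \eta_t^2\,\mathfrak{b}_t)\,dt + \tilde z_t\,dW_t, \qquad \eta_T = -2A,
\end{equation*}
together with two linear BSDEs for $\alpha_t$ and $\gamma_t$ whose coefficients depend only on the already-known $\eta$ and $\beta$.

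The scalar Riccati BSDE for $\eta$ is of exactly the stochastic LQ-Riccati type handled by Lemma 3.6 of \cite{guo2023macroscopic}, which yields a unique bounded solution together with the sign bounds $\beta_t \leq \eta_t \leq 0$: the upper bound by comparison against $\tilde\eta \equiv 0$, and the lower bound from the representation $\eta_t = \beta_t + \mathbb{E}_t[\int_t^T \eta_s^2\,\mathfrak{b}_s\,ds]$. Given $\eta$, the linear BSDEs for $\alpha$ and $\gamma$ have uniformly bounded coefficients; applying the integrating factor $\mu_t := \exp(\int_0^t \eta_s\,\mathfrak{b}_s\,ds)$ produces explicit conditional-expectation formulae, and tracking signs with $\eta, \beta \leq 0$ and $c_1, c_2, \phi, A \geq 0$ delivers $\alpha_t, \gamma_t \leq 0$. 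The same manoeuvre applied to $\alpha_t-\beta_t$ and $\gamma_t-\beta_t$, each of which solves a linear BSDE with zero terminal and non-negative source, furnishes $\beta_t \leq \alpha_t$ and $\beta_t \leq \gamma_t$, so every entry of $\mathcal{X}_t$ is non-positive. Reassembling $(\alpha, \gamma, \beta)$ into $\mathcal{X}$ produces the claimed bounded solution; uniqueness is immediate, since any other bounded solution must inherit the row-sum identity of the first step, must therefore fall into the same scalar parametrisation, and is then pinned down by the uniqueness of the scalar Riccati and linear BSDEs.

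Finally, the strict column diagonal dominance reduces algebraically to the single inequality $\eta_t < 0$: once $\beta_t \leq \alpha_t \leq \gamma_t \leq 0$, both $|\alpha_t| > |\gamma_t|$ and $|\beta_t-\gamma_t| > |\beta_t-\alpha_t|$ become equivalent to $\alpha_t < \gamma_t$. The strict inequality follows from the strong comparison principle for the scalar Riccati BSDE under mild non-triviality of $(A, \phi)$. The main obstacle I expect is precisely the well-posedness and sign control of that scalar Riccati BSDE in the genuinely stochastic setting---its driver has quadratic growth in $\eta$ and $\mathfrak{b}$ is random, so the deterministic Radon-lemma route of Theorem~\ref{deterministic game} is unavailable and one must invoke the stochastic LQ-Riccati framework. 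Once that scalar step is granted, the remainder of the construction is essentially algebraic and, as in the deterministic case, crucially leans on the $2$-dimensional setting where $M_0$-matrices multiply in a particularly tame way.
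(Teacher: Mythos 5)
Your proposal is correct and follows essentially the same route as the paper: impose the row-sum ansatz, exploit the rank-one structure of $2$-dimensional $M_0$-matrices to collapse the matrix BSRE to a scalar Riccati BSDE for the row difference (your $\eta$ is exactly the paper's $\vartheta_1=\chi_1+\chi_2-\mathcal{S}$), then solve linear BSDEs for the remaining entries and track signs. Your a priori derivation of the row-sum identity by right-multiplying by $(1,1)^\top$ is a slightly cleaner justification than the paper's verification-by-substitution, and your caveat that strict column dominance needs non-triviality of $(A,\phi)$ is a fair observation the paper glosses over, but the argument is otherwise the same.
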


\begin{proof}
    Since the solution exhibits a uniform row sum of $-2A-2\int_t^T \phi_s\,ds$ for all $t$ in the deterministic case, we consider the ansatz that the row sum becomes $\mathcal{S}_t:=\mathbb{E}_t[-2A-2\int_t^T \phi_s\,ds]$ in the stochastic context. Specifically, we propose the form
    \begin{equation}
    \label{bsre_ansatz}
        \mathcal{X}_t = \begin{bmatrix}
           \mathcal{X}_{11}(t) & \mathcal{X}_{12}(t)\\
            \mathcal{X}_{21}(t) & \mathcal{X}_{22}(t)
        \end{bmatrix} =
        \begin{bmatrix}
            \chi_1(t) & \mathcal{S}_t - \chi_1(t)\\
            \mathcal{S}_t - \chi_2(t) & \chi_2(t)
        \end{bmatrix}
    \end{equation}
    where $\chi_1, \chi_2$ are some processes to be determined. Noting that $\mathcal{B}$ can be represented as
    \begin{equation*}
        \mathcal{B}_t =
        \begin{bmatrix}
            \mathfrak{b}_1(t)&  -\mathfrak{b}_1(t)\\
            -\mathfrak{b}_2(t)& \mathfrak{b}_2(t)
        \end{bmatrix}
    \end{equation*}
    for some bounded non-negative processes $\mathfrak{b}_1$ and $\mathfrak{b_2}$, we compute
    \begin{equation*}
    \begin{aligned}
        &\mathcal{X}_t \, \mathcal{B}_t \, \mathcal{X}_t\\
        & \hspace{0.6cm} = 
        \begin{bmatrix}
             (\mathfrak{b}_1\chi_1 - \mathfrak{b}_2\mathcal{S} + \mathfrak{b}_2\chi_1)\cdot(\chi_1 - \mathcal{S} + \chi_2) & -(\mathfrak{b}_1\chi_1 -\mathfrak{b}_2\mathcal{S} + \mathfrak{b}_2\chi_1)\cdot(\chi_1 - \mathcal{S} + \chi_2)\\
            -(\mathfrak{b}_1\chi_2 - \mathfrak{b}_1\mathcal{S} + \mathfrak{b}_2\chi_2)\cdot(\chi_1 - \mathcal{S} + \chi_2) & (\mathfrak{b}_1\chi_2 - \mathfrak{b}_1\mathcal{S} + \mathfrak{b}_2\chi_2)\cdot(\chi_1 - \mathcal{S} + \chi_2)
        \end{bmatrix},
    \end{aligned}
    \end{equation*}
where the $t$ variable is omitted for conciseness whenever appropriate. Since $\mathbb{E}_t[-2A - 2\int_0^T \phi_s\,ds]$ is a martingale, we define $\mathcal{H}\in\mathbb{H}^2$ such that
\begin{equation*}
    \mathbb{E}_t\big[ -2A - 2\int_0^T \phi_s\,ds \big] = \int_0^t \mathcal{H}_s \, dW_s.
\end{equation*}
Plugging \eqref{bsre_ansatz} back to \eqref{char_BSRE}, suppose the BSDE
\begin{equation*}
    d\chi_1(t) =  \Big[ -\big( \mathfrak{b}_1\chi_1 - \mathfrak{b}_2\mathcal{S} + \mathfrak{b}_2\chi_1 \big) \big(\chi_1 - \mathcal{S} + \chi_2 \big) + 2\phi_t \Big]\, dt + \mathcal{Z}_{11}(t) \, dW_t
\end{equation*} 
with $\chi_1(T) = -2A$ is well-posed, we can have
\begin{equation*}
\begin{aligned}
    d\mathcal{X}_{12}(t) &= d\mathcal{S}_t - d\chi_1(t)\\
    &= 2\phi_t\,dt + \mathcal{H}_t\,dW_t - \Big[ - \big( \mathfrak{b}_1\chi_1 - \mathfrak{b}_2\mathcal{S} + \mathfrak{b}_2\chi_1 \big) \big(\chi_1 - \mathcal{S} + \chi_2 \big) + 2\phi_t \Big]\, dt - \mathcal{Z}_{11}(t) \, dW_t\\
    &= \big( \mathfrak{b}_1\chi_1 - \mathfrak{b}_2\mathcal{S} + \mathfrak{b}_2\chi_1 \big) \big(\chi_1 - \mathcal{S} + \chi_2 \big) \, dt + \big( \mathcal{H}_t - \mathcal{Z}_{11}(t) \big) \, dW_t
\end{aligned}
\end{equation*}
and $\mathcal{X}_{12}(T) = -2A + 2A = 0$. Since it suffices to let $\mathcal{Z}_{12}(t) = \mathcal{H}_t - \mathcal{Z}_{11}(t)$, the hypothesis on the position of $\mathcal{X}_{12}(t)$ has been checked; the one of $\mathcal{X}_{21}(t)$ can be justified via similar calculations. We can now shift our focus to the BSDEs for $\chi_1$ and $\chi_2$:
\begin{equation}
\begin{aligned}
    d\chi_1(t) &=  \Big[ - \big( \mathfrak{b}_1\chi_1 - \mathfrak{b}_2\mathcal{S} + \mathfrak{b}_2\chi_1 \big) \cdot \big(\chi_1 - \mathcal{S} + \chi_2 \big) + 2\phi_t \Big]\, dt + \mathcal{Z}_{11} (t) \, dW_t; \quad \chi_1(T) = -2A,\\
    d\chi_2(t) &=  \Big[ - \big(\mathfrak{b}_1\chi_2 - \mathfrak{b}_1\mathcal{S} + \mathfrak{b}_2\chi_2 \big) \cdot \big(\chi_1 - \mathcal{S} + \chi_2\big) + 2\phi_t \Big]\, dt + \mathcal{Z}_{22} (t) \, dW_t; \quad \chi_2(T) = -2A.
    \label{bsde_chi}
\end{aligned}
\end{equation}
Setting $\vartheta_1:=\chi_1+\chi_2-\mathcal{S}$ and $\vartheta_2:=\chi_1-\mathcal{S}$, the system \eqref{bsde_chi} can be equivalently transformed to
\begin{equation}
\begin{aligned}
    d\vartheta_1(t) &= \Big[ - \big(\mathfrak{b}_1(t)+\mathfrak{b}_2(t) \big) \, \vartheta_1(t)^2 + 2\phi_t \Big]\, dt + \tilde{\mathcal{Z}}_{1}(t)\,dW_t; \quad \vartheta_1(T) = -2A,\\
    d\vartheta_2(t) &=   - \vartheta_1(t) \, \Big[  \big( \mathfrak{b}_1(t) + \mathfrak{b}_2(t) \big) \, \vartheta_2(t) +\mathfrak{b}_1(t)\,\mathcal{S}_t \Big] \, dt + \tilde{\mathcal{Z}}_{2}(t)\,dW_t; \quad \vartheta_2(T) = 0.
    \label{bsde_chi_equiv}
\end{aligned}
\end{equation}
Notice the first equation of \eqref{bsde_chi_equiv} is decoupled and its well-posedness is studied in \cite{guo2023macroscopic}. We thus know there exists a unique bounded solution $\vartheta_1$, satisfying $\vartheta_1(t) \in [ - 2( \Bar{A} + \bar{\phi} \, T), 0 ]$ for all $t$. Whereas the second equation in \eqref{bsde_chi_equiv} is linear, we can write the solution explicitly as
\begin{equation*}
    \vartheta_2(t)=\mathbb{E}_t \Big [ \int_t^T \vartheta_1(u) \, \mathfrak{b}_1(u) \, \mathcal{S}_u \, e^{\int_t^u \vartheta_1(s) \, (\mathfrak{b}_1(s) + \mathfrak{b}_2(s) )\,ds} \, du \Big] \in[0,  4 \Bar{\mathfrak{b}} \, (\Bar{A} + \Bar{\phi} \, T)^2 \, T ],
\end{equation*}
where $\Bar{\mathfrak{b}}$ is the upper bound for the process $\mathfrak{b}_1$.  Bounded solutions $\chi_1$ and $\chi_2$ can then be constructed based on $\vartheta_1$ and $\vartheta_2$. On the other hand, given the process $\vartheta_1$, the process $\chi_1$ also satisfies the linear BSDE
\begin{equation*}
    d\chi_1(t)=\Big[ - \vartheta_1 \, ( \mathfrak{b}_1 + \mathfrak{b}_2 ) \, \chi_1 + \big( 2\phi +  \mathfrak{b}_2 \, \mathcal{S} \, \vartheta_1 \big) \Big] \, dt + \mathcal{Z}_{11}(t) \, dW_t;  \quad \chi_1(T) = -2A,
\end{equation*}
the unique solution of which reads
\begin{equation*}
\begin{aligned}
    \chi_1(t) &= -\, \mathbb{E} \Big[ 2A \, e^{\int_t^T \vartheta_1(u) \, (\mathfrak{b}_1(u) + \mathfrak{b}_2(u)) \, du} + \int_t^T \big( 2\phi_u +  \mathfrak{b}_2(u) \, \mathcal{S}(u) \, \vartheta_1(u) \big) \, e^{\int_t^u \vartheta_1(s) \, (\mathfrak{b}_1(s) + \mathfrak{b}_2(s)) \, ds} \, du\Big]\\
    &\leq 0.
\end{aligned}
\end{equation*}
Combining the results on $\chi_1$, $\vartheta_1$, and $\vartheta_2$, in the matrix form \eqref{bsre_ansatz} we can conclude that: (i) entries of the first row are non-positive; (2) the difference $\mathcal{X}_{11}-\mathcal{X}_{21} \leq 0$. With a symmetric discussion for the $\chi_2$, the following can be similarly achieved: (iii) entries of the second row are non-positive; (iv) the difference $\mathcal{X}_{22}-\mathcal{X}_{12} \leq 0$. For the uniqueness, it suffices to see that any two bounded solutions solve the same Lipschitz BSDE, which only admits a unique solution.
\end{proof}

\kong

\noindent Since we have solved the two-dimensional case completely, the infinite-player game studied in the beginning of this section then has a Nash equilibrium.

\kong

\begin{proposition}
    The infinite-player game described in Proposition \ref{paper 2 mean field game} has a Nash equilibrium.
\end{proposition}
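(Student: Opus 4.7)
The plan is essentially to invoke the two results immediately preceding this proposition. Proposition \ref{paper 2 mean field game} explicitly presupposes the well-posedness of the BSRE \eqref{char_BSRE} in the case $N=2$, and under that hypothesis it constructs a Nash equilibrium for the infinite-player game from the equilibrium of a four-player auxiliary game (where the four artificial players carry the extreme initial inventories). The only missing ingredient is therefore a verification that this hypothesis actually holds.

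The verification is supplied by Theorem \ref{2_dim bsre}, which establishes the existence and uniqueness of a bounded solution $\mathcal{X}$ to the two-dimensional BSRE, together with the structural properties (non-positivity, row sum, column diagonal dominance) needed to apply Theorem \ref{mult_char_bsre} and hence to obtain the equilibrium of the underlying four-player game. Once the four-player equilibrium exists, the argument in the proof of Proposition \ref{paper 2 mean field game} yields the auxiliary best-quote processes $\beta^{\tilde{1},b}$ and $\beta^{\tilde{4},a}$, and each player $i\in\mathcal{I}$ solves a stochastic control problem of the type treated in \cite{guo2023macroscopic} with these pseudo-best quotes. The monotonicity of the individual optimal feedback strategies with respect to the initial inventory, combined with the no-duplicate sequences $(k_n), (l_n)$ approximating $\inf_{i}q_0^i$ and $\sup_i q_0^i$, ensures that the infimum of bid and ask quotes across $\mathcal{I}$ coincides with $\beta^{\tilde{2},b}$ and $\beta^{\tilde{3},a}$, making the pseudo-best quotes genuinely best.

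So concretely I would write: by Theorem \ref{2_dim bsre}, the BSRE \eqref{char_BSRE} admits a unique bounded solution when $N=2$, which is exactly the standing assumption of Proposition \ref{paper 2 mean field game}; therefore Proposition \ref{paper 2 mean field game} applies directly and furnishes a Nash equilibrium. There is no genuine obstacle here, since the delicate work — both the construction in Proposition \ref{paper 2 mean field game} and the analytic well-posedness in Theorem \ref{2_dim bsre} — has already been carried out.

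\begin{proof}
By Theorem \ref{2_dim bsre}, the BSRE \eqref{char_BSRE} admits a unique bounded solution whenever $N=2$, which is precisely the hypothesis required by Proposition \ref{paper 2 mean field game}. Applying that proposition yields the existence of a Nash equilibrium.
\end{proof}
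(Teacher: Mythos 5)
Your proposal is correct and matches the paper's own (implicit) argument exactly: the paper derives this proposition by noting that Theorem \ref{2_dim bsre} settles the two-dimensional BSRE completely, thereby discharging the standing hypothesis of Proposition \ref{paper 2 mean field game}, which then furnishes the equilibrium. Nothing further is needed.
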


\kong

The remaining of this section is devoted to the economic interpretation. We regard the price impact as the price movement triggered by order imbalances. Given some $q_0\in\mathbb{R}$, consider the following condition:
\begin{equation}
    q_0^i = q_0 \text{\, for all \,} i.
    \label{weak comp}
\end{equation}
Since \eqref{weak comp} indicates that all agents share the same initial inventory, the ordering property \ref{ordering in general} tells us that they also share the same strategy in the equilibrium. Hence, we treat the price competition under \eqref{weak comp} as \textit{weak}, and it holds for all $i$ that
\begin{equation*}
    dQ_t^i = b_t\,\Lambda(0)\,dt-a_t\,\Lambda(0)\,dt
\end{equation*}
and $Y^i$ can be subsequently computed explicitly. For simplicity, here we let $q_0=0$ and $\phi_t = \phi$, with $\phi$ and $A$ being non-negative constants. Also, we let $\Lambda$ be the exponential function as in Example \ref{exp game}. The ask price at time $t$ then reads
\begin{equation}
\begin{aligned}
    S_t + \frac{1}{\gamma} &+ 2\big[A + \phi(T-t)\big] \int_0^t (a_u - b_u) \, du \\
    & + 2A \, \mathbb{E}_t\Big[ \int_t^T (a_u - b_u) \, du \Big] + 2\phi \, \mathbb{E}_t\Big[ \int_t^T (T - u)(a_u - b_u) \, du \Big],
    \label{explicit price}
\end{aligned}
\end{equation}
where integral terms are considered as the price impact. The impact is further linear if $\phi = 0$. We regard the integral in the first line of \eqref{explicit price} as the \textit{ex post} impact since it encapsulates the influence of previous orders. Additionally, integrals in the second line of \eqref{explicit price}---indicating the adjustment in correspondence to the expected future imbalances---are considered as the \textit{ex ante} impact.

We then look at the two-player game for convenience. First, note that the decoupling field can be seen as a generalization of the derivative of the terminal payoff. More specifically, fixing any $s\in [0,T]$, we can truncate the solution $(\boldsymbol{Q}_{t}, \boldsymbol{Y}_t, \boldsymbol{M}_t)_{t\in[0,T]}$ with respect to time, resulting in $(\boldsymbol{Q}_{t}, \boldsymbol{Y}_t, \boldsymbol{M}_t)_{t\in[0,s]}$. In reference to agent $1$, the corresponding strategy $(\psi^{1,a}(\boldsymbol{Y}_t), \psi^{1,b}(\boldsymbol{Y}_t))_{t\in[0,s]}$ will solve an optimal control problem if $u^1(s, (q^1, Q^2_s))$, as a function of $q^1$, can be regarded as the derivative of a new terminal penalty at time $s$. The stochastic maximum principle implies that a sufficient condition for this is: the terminal penalty is concave with respect to $q^1$. To verify the concavity, the definition of the matrix $\chi$ and Theorem \ref{2_dim bsre} yield 
\begin{equation}
    u^1(s, (\tilde{q}^1, q^2)) - u^1(s, (q^1, q^2)) = -C \, (\tilde{q}^1 -q^1)
    \label{con monotoncity}
\end{equation}
for any $\tilde{q}^1, q^1\in\mathbb{R}$, where $C$ is some non-negative random variable. If $P(s, q^1, q^2)$ is some function such that $\partial P / \partial q^1 = u^1$, it can be inferred from \eqref{con monotoncity} that $P$ is concave in $q^1$. Hence, the process $(\psi^{1,a}(\boldsymbol{Y}_t), \psi^{1,b}(\boldsymbol{Y}_t))_{t\in[0,s]}$ solves the stochastic control problem with $P(s, \cdot, Q_s^2)$ as the terminal penalty. On the other hand, due the symmetry $u^1(s, (q^1, q^2)) = u^2(s, (q^2, q^1))$, the new terminal penalty for agent $2$ can be $P(s, \cdot, Q_s^1)$. It suffices to look at one player.

We introduce the $\mathcal{F}_s$-measurable random variable
\begin{equation}
\begin{aligned}
    \mathfrak{q} := \mathbb{E}_s\Big[ A \, \int_s^T \Lambda(0)\,(a_u - b_u) \,du + \int_s^T &\phi_r \, \big( \int_s^r \Lambda(0)\,(a_u  - b_u) \,du\big) \,dr \Big] \\
    & \Big/ \; \mathbb{E}_s\Big[A + \int_s^T \phi_r \, dr \Big],
\end{aligned}
\label{benchmark q}
\end{equation}
where the penalty coefficients are forced to be non-zero. Note that $\mathfrak{q}$ represents a weighted average
of future order imbalances, determined by the penalty coefficients, and $\mathfrak{q}=0$ if $s=T$. By the definition of decoupling fields, the value of $u(s, (\mathfrak{q}, \mathfrak{q}))$ is determined by $(Y_s^1, Y_s^2)$ that solves the FBSDE
\begin{equation}
    \left\{
    \begin{aligned}
     dQ_t^i &= b_t\,\Lambda\big(\psi^{i,b}(\boldsymbol{Y}_t)-\bar{\psi}^{i,b}(\boldsymbol{Y}_t)\big)\,dt-a_t\,\Lambda\big(\psi^{i,a}(\boldsymbol{Y}_t)-\bar{\psi}^{i,a}(\boldsymbol{Y}_t)\big)\,dt,\\
    \, dY_t^i &= 2\phi_t\,Q_t^i\,dt+dM_t^i,\\
    Q_s^i &= \mathfrak{q}, \quad Y_T^i = -2A\,Q_T^i,
    \end{aligned}
    \right.
    \label{bench fbsde}
\end{equation}
for all $i$. Because the initial inventories are identical, it follows 
\begin{equation*}
    Y_s^i = - \mathbb{E}_s\Big[ A \, \big(\mathfrak{q} + \int_s^T \Lambda(0)\,(b_r-a_r) \,dr\big) + \int_s^T \phi_r \, \big(\mathfrak{q} + \int_s^r \Lambda(0)\,(b_u-a_u) \,du \big)\Big].
\end{equation*}
Due to the choice \eqref{benchmark q}, we can now see $Y_s^i=0$ and thus $u(s, (\mathfrak{q}, \mathfrak{q}))$ is a vector with zeros. The concavity implies that the agent is still penalizing the excessive inventories at time $s$. Moreover, conditional on $Q_s^2 = \mathfrak{q}$, the penalty is applied to the deviation from $\mathfrak{q}$ as a generalization of $0$ at time $T$. For the case that $q^2 > q$, Theorem \ref{2_dim bsre} again yields
\begin{equation*}
    u^1(s, (\mathfrak{q}, q^2)) = u^1(s, (\mathfrak{q}, \mathfrak{q}))  -C \, (q^2 - \mathfrak{q}) \leq u^1(s, (\mathfrak{q}, \mathfrak{q}))=0,
\end{equation*}
where $C$ is some non-negative random variable. Notice the (generalized) derivative of $u^1$ with respect to $q^1$ is bounded away from $0$ if the same condition holds for coefficients $\phi$ and $A$. Then, there exists a random variable $q^* \leq \mathfrak{q}$ such that $u^1(s, (q^*, q^2))=0$. In the economic context, it signifies that if the inventory level of the other player is higher than the benchmark value $\mathfrak{q}$ at time $s$, the target of her own inventory will be smaller than $\mathfrak{q}$. The case $q^2 < \mathfrak{q}$ can be similarly discussed.

\vspace{0.2cm}

\section{Heterogeneous Risk Coefficients}
\label{paper 2 section 7}
\noindent While the preceding sections mainly focus on homogeneous agents, this final section delves into heterogeneous risk coefficients under the linear setting. In addition to the well-posedness result, we will present an example to show that the desirable ordering property in the homogeneous case breaks down.

We consider a two-player game in the linear setting of Section \ref{paper 2 section 2}.  Recall that the inventory and cash of the agent $1$ are modelled by $X_t^1$ and $Q_t^1$ accordingly:
\begin{gather*}
    X_t^1 = \int_0^t(S_u + \delta_u^{1,a})\,a_u\,(\,\zeta-\gamma\,(\delta_u^{1,a}-\delta_u^{2,a}) \,)\, du -\int_0^t(S_u - \delta_u^{1,b}) \, b_u \, (\,\zeta-\gamma\,(\delta_u^{1,b}-\delta_u^{2,b}) \,) \, du,\\
    Q_t^1 = q_0^1-\int_0^t a_u \,(\,\zeta - \gamma\,(\delta_u^{1,a}-\delta_u^{2,a}) \,) \, du + \int_0^t b_u \, (\,\zeta-\gamma\,(\delta_u^{1,b} - \delta_u^{2,b})\,)\,du.
    \nonumber   
\end{gather*}
Note that $\bar{\delta}^{1,a}$ can be replaced by $\delta^{2,a}$. The ones of agent $2$ are symmetric. Controlling $\boldsymbol{\delta}^i\in\mathbb{H}^2 \times \mathbb{H}^2$, the agent $i\in\{1,2\}$ aims at maximizing the objective functional
\begin{equation*}
\begin{aligned}
    J(\boldsymbol{\delta}^i; \boldsymbol{\delta}^{-i}):& = \mathbb{E}\Big[X_T^i + S_T\,Q_T^i - \int_0^T \phi_t^i \, \big(Q_t^i\big)^2\,dt - A^i\, \big(Q_T^i\big)^2 \Big]\\
     &= \mathbb{E}\Big[\int_0^T \delta_t^{i,a} \, a_t \, (\,\zeta-\gamma\,(\delta_t^{i,a}-\bar{\delta}_t^{i,a})\,)\,dt + \int_0^T \delta_t^{i,b} \, b_t \, (\,\zeta - \gamma\,(\delta_t^{i,b}-\bar{\delta}_t^{i,b})\,)\,dt\\
     &\hspace{3cm} -\int_0^T \phi_t^i \, \big(Q_t^i\big)^2\,dt - A^i\,\big(Q_T^i\big)^2 \Big],
\end{aligned}
\end{equation*}
where penalties $\phi^i:=(\phi_t^i)_{t\in[0,T]}\in\mathbb{H}^2$ and $A^i\in L^2(\Omega, \mathcal{F}_T)$ are non-negative, satisfying $\phi_t^i\leq\bar{\phi}$ and $A^i\leq\bar{A}$ for some constants $\bar{\phi}, \bar{A}>0$. The goal is to find a Nash equilibrium in the same sense as \eqref{Nash}.

\kong

\begin{theorem}    
\label{heter game}
A strategy profile $(\boldsymbol{\delta}^j)_{1\leq j\leq 2} \in (\mathbb{H}^2\times\mathbb{H}^2)^{2}$ forms a Nash equilibrium if and only if 
\begin{equation}
    \begin{aligned}
        \delta_t^{1,a} &= \frac{2}{3}\,Y_t^1 + \frac{1}{3}\, Y_t^2 + \frac{\zeta}{\gamma}, \quad \delta_t^{1,b} = -\frac{2}{3}\,Y_t^1 - \frac{1}{3}\, Y_t^2 + \frac{\zeta}{\gamma},\\
        \delta_t^{2,a} &= \frac{1}{3}\,Y_t^1 + \frac{2}{3}\, Y_t^2 + \frac{\zeta}{\gamma},  \quad \delta_t^{2,b} = -\frac{1}{3}\,Y_t^1 - \frac{2}{3}\, Y_t^2 + \frac{\zeta}{\gamma},
    \end{aligned}
    \label{alge_trick}
\end{equation}
where $(Y^1, Y^2)$ solves the following system of FBSDEs:
\begin{equation}
\left\{
\begin{aligned}
\;& dQ_t^1  = \zeta \, (b_t - a_t) \, dt + \gamma \, (a_t + b_t)\,(Y_t^1 -Y_t^2)/3 \, dt, \\
& dY_t^1 = 2\phi_t^1 \, Q_t^1 \, dt + dM^1_t, \\
& Q_0^1 = q_0^1,\quad Y_T^1 = -A^1\,Q_T^1;
\end{aligned}
\right.
\label{heter_1}
\end{equation}
\begin{equation}
\left\{
\begin{aligned}
\;& dQ_t^2  = \zeta \, (b_t - a_t) \, dt + \gamma \, (a_t + b_t)\,(Y_t^2 -Y_t^1)/3 \, dt, \\
& dY_t^2 = 2\phi_t^2 \, Q_t^2 \, dt + d M^2_t, \\
& Q_0^2 = q_0^2, \quad Y_T^2 = -A^2\,Q_T^2,
\end{aligned}
\right.
\label{heter_2}
\end{equation}
\noindent with $M^i_t\in\mathbb{M}$ for all $i \in \{1, 2\}$.
\end{theorem}

\begin{proof}
With reference to the homogeneous case, the heterogeneity of the penalty parameters does not affect the concavity of the objective functional. By Theorem \ref{conv_ana}, strategy profile $(\boldsymbol{\delta}^j)_{1\leq j\leq 2} \in (\mathbb{H}^2\times\mathbb{H}^2)^{2}$ forms a Nash equilibrium if and only if it solves
    \begin{equation}
\left\{
\begin{aligned}
\;& dQ_t^i  = -a_t\,\big(\zeta+\gamma\bar{\delta}^{i,a}_t-\gamma\delta_t^{i,a}\big)dt + b_t\,\big(\zeta+\gamma\bar{\delta}^{i,b}_t-\gamma\delta_t^{i,b}\big)dt, \\
& d\delta_t^{i,a} = d\bar{\delta}^{i,a}_t/2 + \phi_t\,Q_t^i\,dt - d\tilde{M}_t^i,\\
& d\delta_t^{i,b} = d\bar{\delta}^{i,b}_t/2 - \phi_t\,Q_t^i\,dt + d\tilde{M}_t^i,\\
& Q_0^i = q_0^i,\quad \delta_T^{i,a} = \zeta/(2\gamma) + \bar{\delta}^{i,a}_T/2-A\,Q_T^i,\quad \delta_T^{i,b} = \zeta/(2\gamma) + \bar{\delta}^{i,b}_T/2 + A\,Q_T^i,
\end{aligned}
\right.
\label{origin_conv}
\end{equation}
with $\tilde{M}^i \in \mathbb{M}$ for all $i \in \{1, 2\}$. System \eqref{heter_1}-\eqref{heter_2} can be obtain from \eqref{origin_conv} via the transform \eqref{alge_trick}.
\end{proof}

\kong

We introduce the following processes and random variables:
\begin{gather*}    
    \boldsymbol{H}_t = \frac{\gamma}{3} \, (a_t + b_t) \, \begin{bmatrix}
        1 & -1\\
        -1 & 1  
    \end{bmatrix}, \hspace{0.4cm}
    \boldsymbol{G}_t =  
    \begin{pmatrix}
        \, \zeta \, (b_t - a_t) \, \\
        \, \zeta \, (b_t - a_t) \, 
        \end{pmatrix}, \\
    \boldsymbol{D}_t = 
    \begin{bmatrix}
        2 \, \phi_t^1 & 0\\
        0 & 2 \, \phi_t^2  
    \end{bmatrix}, \hspace{0.4cm}
    \boldsymbol{A} = \begin{bmatrix}
        2 A^1 & 0\\
        0 & 2 A^2  
    \end{bmatrix}.
\end{gather*}
System \eqref{heter_1}-\eqref{heter_2} can then be written neatly in the vector form
\begin{equation}
\left\{
\begin{aligned}
\;& d\boldsymbol{Q}_t  = \boldsymbol{G}_t \, dt + \boldsymbol{H}_t \, \boldsymbol{Y}_t \, dt, \\
& d\boldsymbol{Y}_t = \boldsymbol{D}_t \, \boldsymbol{Q}_t \, dt + d\boldsymbol{M}_t, \\
& \boldsymbol{Q}_0 = \boldsymbol{q}_0, \quad \boldsymbol{Y}_T = - \boldsymbol{A} \, \boldsymbol{Q}_T.
\end{aligned}
\right.
\label{heter}
\end{equation}

\kong

\begin{theorem}
    The FBSDE \eqref{heter} accepts a unique solution $(\boldsymbol{Q}, \boldsymbol{Y}, \boldsymbol{M}) \in (\mathbb{S}^2 \times \mathbb{S}^2 \times \mathbb{M})^2$. Moreover, the solution has the relation
    \begin{equation}
        \boldsymbol{Y}_t = \boldsymbol{R}_t \, \boldsymbol{Q}_t + \boldsymbol{P}_t,
        \label{linear_decoup}
    \end{equation}
    where the matrix-valued process $\boldsymbol{R}$ solves the BSRE
    \begin{equation}
        d \boldsymbol{R}_t = \big( \boldsymbol{D}_t - \boldsymbol{R}_t \, \boldsymbol{H}_t \, \boldsymbol{R}_t \big)\, dt + \boldsymbol{Z}_t \, dW_t, \quad \boldsymbol{R}_T = -\boldsymbol{A}.
        \label{posit_def_bsre}
    \end{equation}
\end{theorem}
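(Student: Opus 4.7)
The plan is to solve \eqref{heter} via the affine decoupling ansatz \eqref{linear_decoup}, $\boldsymbol{Y}_t = \boldsymbol{R}_t \boldsymbol{Q}_t + \boldsymbol{P}_t$, which reduces the FBSDE to the matrix-valued BSRE \eqref{posit_def_bsre} for $\boldsymbol{R}$ together with a linear vector BSDE for $\boldsymbol{P}$. First, I would apply It\^{o}'s formula to the ansatz, substitute $d\boldsymbol{Q}_t = (\boldsymbol{G}_t + \boldsymbol{H}_t \boldsymbol{Y}_t)\,dt$ and match the drift against $\boldsymbol{D}_t \boldsymbol{Q}_t \, dt$; the coefficient of $\boldsymbol{Q}_t$ produces \eqref{posit_def_bsre} exactly, while the inhomogeneous part yields
\begin{equation*}
d\boldsymbol{P}_t = -\big(\boldsymbol{R}_t \boldsymbol{G}_t + \boldsymbol{R}_t \boldsymbol{H}_t \boldsymbol{P}_t\big)\,dt + d\tilde{\boldsymbol{M}}_t, \quad \boldsymbol{P}_T = 0.
\end{equation*}

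The central task is then to build a bounded solution to the BSRE. Since $\boldsymbol{D}_t$, $\boldsymbol{A}$, $\boldsymbol{H}_t$ and the terminal condition are all symmetric, I look for a symmetric $\boldsymbol{R}_t = \begin{pmatrix} r_{11}(t) & r_{12}(t) \\ r_{12}(t) & r_{22}(t) \end{pmatrix}$. The structural observation that unlocks the construction is the rank-one factorisation $\boldsymbol{H}_t = h_t \,(e_1 - e_2)(e_1 - e_2)^{\top}$ with $h_t := \gamma(a_t + b_t)/3$, which forces the nonlinear term $\boldsymbol{R}_t \boldsymbol{H}_t \boldsymbol{R}_t$ to depend on $\boldsymbol{R}$ only through $u_t := r_{11}(t) - r_{12}(t)$ and $v_t := r_{12}(t) - r_{22}(t)$. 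A direct computation then shows that the scalar $s_t := u_t - v_t$ satisfies the decoupled BSRE
\begin{equation*}
ds_t = \big(\, 2(\phi_t^1 + \phi_t^2) - h_t\, s_t^2 \,\big)\,dt + z_t^s\,dW_t, \quad s_T = -2(A^1 + A^2),
\end{equation*}
whose well-posedness and uniform boundedness follow from \cite{guo2023macroscopic} (and were invoked inside the proof of Theorem \ref{2_dim bsre}). Given such $s_t$, each of $u_t$ and $v_t$ obeys a linear BSDE with bounded coefficients, e.g.\ $du_t = (2\phi_t^1 - h_t s_t u_t)\,dt + z^u_t\,dW_t$ with $u_T = -2A^1$, and the off-diagonal $r_{12}$ is recovered from $dr_{12}(t) = -h_t u_t v_t\,dt + z^{12}_t\,dW_t$ with $r_{12}(T) = 0$. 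Setting $r_{11} = u + r_{12}$ and $r_{22} = r_{12} - v$ produces a bounded symmetric solution of \eqref{posit_def_bsre}; uniqueness within bounded solutions follows by running the cascade in the same order, each stage being a scalar BSDE with at most one solution.

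With $\boldsymbol{R}$ bounded, the BSDE for $\boldsymbol{P}$ is linear with bounded drivers and admits a unique bounded solution; substituting $\boldsymbol{Y}_t = \boldsymbol{R}_t \boldsymbol{Q}_t + \boldsymbol{P}_t$ into the forward equation then gives a pathwise linear ODE for $\boldsymbol{Q}$ whose solution lies in $\mathbb{S}^2$, and $\boldsymbol{M}$ is read off from the BSDE for $\boldsymbol{Y}$. Uniqueness for \eqref{heter} follows from the same decoupling: the difference of any two solutions satisfies the homogeneous system with $\boldsymbol{G}\equiv 0$ and $\boldsymbol{q}_0 = 0$, so the ansatz with $\boldsymbol{P}\equiv 0$ together with the bounded $\boldsymbol{R}$ forces $\Delta \boldsymbol{Q} \equiv 0$ and hence $\Delta \boldsymbol{Y} \equiv 0$. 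The principal obstacle is the BSRE step: the heterogeneity $\phi_t^1 \neq \phi_t^2$ breaks the uniform row-sum ansatz used in Theorem \ref{2_dim bsre}, and it is precisely the rank-one degeneracy of $\boldsymbol{H}_t$ that salvages a scalar decoupling into tractable pieces.
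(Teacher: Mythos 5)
Your overall architecture---the affine ansatz \eqref{linear_decoup}, coefficient matching to obtain \eqref{posit_def_bsre} and the linear BSDE for $\boldsymbol{P}$, then reconstruction of $(\boldsymbol{Q},\boldsymbol{Y},\boldsymbol{M})$---coincides with the paper's. Where you genuinely diverge is the well-posedness of the BSRE: the paper notes that $\boldsymbol{D}_t$, $\boldsymbol{A}$, $\boldsymbol{H}_t$ are symmetric positive semi-definite and appeals to the class of BSREs in \cite{peng1992stochastic}, whereas you exploit the rank-one factorisation $\boldsymbol{H}_t=h_t\,\boldsymbol{w}\boldsymbol{w}^{\top}$, $\boldsymbol{w}=(1,-1)^{\top}$, to reduce the matrix equation to the scalar BSRE for $s=u-v$ (exactly the one-dimensional equation from \cite{guo2023macroscopic} reused in Theorem \ref{2_dim bsre}), followed by linear BSDEs for $u,v$ and an explicit conditional-expectation formula for $r_{12}$. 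I checked the algebra: for symmetric $\boldsymbol{R}$ one has $\boldsymbol{R}\boldsymbol{H}\boldsymbol{R}=h\,(u,v)^{\top}(u,v)$, and the drifts and terminal values of $u$, $v$, $s$ come out as you state. This is a self-contained, constructive replacement for the citation of Peng, and it buys an explicit representation of $\boldsymbol{R}$ where the paper only gets abstract existence; it also makes transparent why the row-sum ansatz of Theorem \ref{2_dim bsre} is dispensable here. One caveat: your cascade only produces and identifies \emph{symmetric} bounded solutions; for uniqueness of the bounded solution of \eqref{posit_def_bsre} without assuming symmetry you should instead use the observation from the proof of Theorem \ref{2_dim bsre} that any two bounded solutions solve one and the same Lipschitz BSDE.

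The genuine gap is your uniqueness argument for the FBSDE \eqref{heter}. The difference $(\Delta\boldsymbol{Q},\Delta\boldsymbol{Y},\Delta\boldsymbol{M})$ of two arbitrary solutions does satisfy the homogeneous linear system, but you are not entitled to impose $\Delta\boldsymbol{Y}_t=\boldsymbol{R}_t\,\Delta\boldsymbol{Q}_t$ on it: that relation is the conclusion to be established, not a property of an arbitrary solution, so the step ``the ansatz with $\boldsymbol{P}\equiv 0$ forces $\Delta\boldsymbol{Q}\equiv 0$'' is circular. Two standard repairs are available. Either argue, as the paper does, that $\boldsymbol{q}\mapsto\boldsymbol{R}_t\boldsymbol{q}+\boldsymbol{P}_t$ is a regular decoupling field (its Lipschitz constant is controlled by the uniform bound on $\boldsymbol{R}$) and invoke Theorem \ref{local_time}; or run the monotonicity computation: apply It\^o's formula to $\langle\Delta\boldsymbol{Y}_t,\Delta\boldsymbol{Q}_t\rangle$ and use $\Delta\boldsymbol{Q}_0=0$, $\Delta\boldsymbol{Y}_T=-\boldsymbol{A}\,\Delta\boldsymbol{Q}_T$ to obtain
\[
-\,\mathbb{E}\big[\Delta\boldsymbol{Q}_T^{\top}\boldsymbol{A}\,\Delta\boldsymbol{Q}_T\big]=\mathbb{E}\int_0^T\big(\Delta\boldsymbol{Q}_t^{\top}\boldsymbol{D}_t\,\Delta\boldsymbol{Q}_t+\Delta\boldsymbol{Y}_t^{\top}\boldsymbol{H}_t\,\Delta\boldsymbol{Y}_t\big)\,dt,
\]
whence both sides vanish, $\boldsymbol{H}_t\Delta\boldsymbol{Y}_t\equiv 0$, so $\Delta\boldsymbol{Q}\equiv 0$ and then $\Delta\boldsymbol{Y}\equiv 0$ from the backward equation. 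With either repair your proof is complete.
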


\begin{proof}
    Given the linearity of system \eqref{heter}, we think of the linear ansatz \eqref{linear_decoup}. By plugging \eqref{linear_decoup} back to \eqref{heter} and matching the coefficients, we can see $\boldsymbol{R}$ solves the BSRE \eqref{posit_def_bsre} and $\boldsymbol{P}$ satisfies the linear BSDE
    \begin{equation}
        d\boldsymbol{P}_t = - \big( \boldsymbol{R}_t \, \boldsymbol{H}_t \, \boldsymbol{P}_t + \boldsymbol{R}_t \, \boldsymbol{G}_t \big) + \tilde{\boldsymbol{Z}}_t \, dW_t, \quad \boldsymbol{P}_T = 0.
        \label{equation_P}
    \end{equation}
    Let us look at \eqref{posit_def_bsre} and first remark that both $\boldsymbol{D}_t$ and $\boldsymbol{A}$ are symmetric positive semi-definite matrices for all $t$. The same is true for $\boldsymbol{H}_t$ because it is a symmetric $M_0$-matrix. Equation \eqref{posit_def_bsre} is similar to the type of BSRE studied in \cite{peng1992stochastic}, from which we learn there exists a unique bounded solution $\boldsymbol{R}$. Consequently, the linear BSDE \eqref{equation_P} has bounded coefficients. The resulting Lipschitz property further guarantees the well-posedness of $\boldsymbol{P}$. A solution $(\boldsymbol{Q}, \boldsymbol{Y}, \boldsymbol{M})$ can be obtained by inserting \eqref{linear_decoup} back to \eqref{heter}. Realizing that \eqref{linear_decoup} is also a regular decoupling field, the solution is also unique according to Theorem \ref{local_time}.
\end{proof}

\kong

\begin{example}
    In this example, we will see the ordering property may break down under heterogeneity risk preferences. The following conditions are examined:
    \kong
    
    \begin{itemize}
        \item order flows satisfy $a_t = b_t = 3/2$ for all $t$;\\
        \vspace{-0.2cm}

        \item random variables $A^1$ and $A^2$ are deterministic, with $A^1>A^2>0$;\\
        \vspace{-0.2cm}

        \item running penalties satisfy $\phi_t^1 = \phi_t^2 = 1$;\\
        \vspace{-0.2cm}
        
        \item it holds $0 < q_0^1 < q_0^2 < \frac{A^1}{A^2}\, q_0^1$ and $T$ is large enough.
    \end{itemize}
    \kong
    Due to the lack of noises, BSRE \eqref{posit_def_bsre} degenerates to the Riccati equation
    \begin{equation}
    \begin{aligned}
         d \boldsymbol{R}_t &= \big( \boldsymbol{D}_t - \boldsymbol{R}_t \, \boldsymbol{H}_t \, \boldsymbol{R}_t \big)\, dt\\
         &=\bigg( \begin{bmatrix}
             1 & 0\\
             0 & 1
         \end{bmatrix} - \boldsymbol{R}_t \, \gamma \, \begin{bmatrix}
             1 & 0 \\
             -1 &0
         \end{bmatrix}\,\begin{bmatrix}
             1 & 0 \\
             -1 & 0
         \end{bmatrix}^* \, \boldsymbol{R}_t \bigg)\,dt
    \end{aligned}
         \label{deter_riccati}
    \end{equation}
    such that $\boldsymbol{R}_T = -\boldsymbol{A}$. We apply the asymptotic result studied in Theorem 2.1 by \cite{wonham1968matrix}, after checking that:
    \kong
    
    \begin{itemize}
        \item matrix $\begin{bmatrix}
             1 & 0 \\
             -1 & 0
             \end{bmatrix}$ is stable because it is an $M_0$-matrix;\\
        \vspace{-0.1cm}

        \item $\bigg( \begin{bmatrix}
            1 & 0\\ 
            0 & 1 
        \end{bmatrix}, \boldsymbol{0} \bigg)$ is observable since matrix $\begin{bmatrix}
            1 & 0 & 0 & 0\\
            0 & 1 & 0 & 0 
        \end{bmatrix}$ is of rank $2$.
    \end{itemize}
\kong
Consequently, we know there exists a matrix $\boldsymbol{R}_\infty\in\mathbb{R}^{2\times 2}$ that is the unique negative definite solution of algebraic Riccati equation
\begin{equation}
\begin{aligned}
    0 &= \begin{bmatrix}
             1 & 0\\
             0 & 1
         \end{bmatrix} - \boldsymbol{R} \, \gamma \, \begin{bmatrix}
             1 & 0 \\
             -1 &0
         \end{bmatrix}\,\begin{bmatrix}
             1 & 0 \\
             -1 & 0
         \end{bmatrix}^* \, \boldsymbol{R}\\
         & = \begin{bmatrix}
             1 & 0\\
             0 & 1
         \end{bmatrix} - \boldsymbol{R} \, \gamma \, \begin{bmatrix}
             1 & -1 \\
             -1 & 1
         \end{bmatrix}\, \boldsymbol{R}. 
    \label{alge_riccati}    
\end{aligned}
\end{equation}
Moreover, it holds that
\begin{equation}
    \boldsymbol{R}_\infty = \lim_{t \to -\infty} \boldsymbol{R}_t.
    \label{asymptot}
\end{equation}

We claim that $\boldsymbol{R}_\infty$ is both symmetric and persymmetric. Indeed, given $\boldsymbol{R}_\infty$ as the solution of \eqref{asymptot}, consider the permutation transform
\begin{equation*}
    \tilde{\boldsymbol{R}}_\infty := \begin{bmatrix}
        0 & 1 \\
        1 & 0
    \end{bmatrix} \,
    \boldsymbol{R}_\infty \,
    \begin{bmatrix}
        0 & 1 \\
        1 & 0
    \end{bmatrix}. 
\end{equation*}
Because direct calculation yields
\begin{equation*}
\begin{aligned}
    0 &=  \begin{bmatrix}
        0 & 1 \\
        1 & 0
    \end{bmatrix} \, \begin{bmatrix}
             1 & 0\\
             0 & 1
         \end{bmatrix} \,  \begin{bmatrix}
        0 & 1 \\
        1 & 0
    \end{bmatrix} -  \begin{bmatrix}
        0 & 1 \\
        1 & 0
    \end{bmatrix} \, \boldsymbol{R}_\infty \, \gamma \, \begin{bmatrix}
             1 & -1 \\
             -1 & 1
         \end{bmatrix}\, \boldsymbol{R}_\infty \,  \begin{bmatrix}
        0 & 1 \\
        1 & 0
    \end{bmatrix}\\
    & =  \begin{bmatrix}
             1 & 0\\
             0 & 1
         \end{bmatrix} - \tilde{\boldsymbol{R}}_\infty \, \gamma \, \begin{bmatrix}
             1 & -1 \\
             -1 & 1
         \end{bmatrix}\, \tilde{\boldsymbol{R}}_\infty,
\end{aligned}
\end{equation*}
matrix $\tilde{\boldsymbol{R}}_\infty$ solves the same equation \eqref{asymptot} as $\boldsymbol{R}_\infty$. The symmetric and persymmetric property thus can be deduced from the uniqueness of the solution. We then write
\begin{equation*}
    \boldsymbol{R}_\infty = \begin{bmatrix}
        \mathfrak{r}_1 & \mathfrak{r}_2\\
        \mathfrak{r}_2 & \mathfrak{r}_1
    \end{bmatrix}.
\end{equation*}
Note that $\mathfrak{r}_1 - \mathfrak{r}_2 <0$ due to the negative definiteness. Defining 
\begin{equation*}
    \begin{bmatrix}
        \, Y_\infty^1 \, \\
        \, Y_\infty^2 \,
    \end{bmatrix} :=
    \boldsymbol{R}_\infty \,
    \begin{bmatrix}
        \, q_0^1 \,\\
        \, q_0^2 \,
    \end{bmatrix},
\end{equation*}
we can derive 
\begin{equation*}
    Y_\infty^1 - Y_\infty^2 = (\mathfrak{r}_1 - \mathfrak{r}_2) \, (q_0^1 - q_0^2) > 0. 
\end{equation*}
Since the condition $a_t = b_t$ yields $\boldsymbol{G}_t = \boldsymbol{P}_t = 0$, the decoupling field \eqref{linear_decoup} tells us 
\begin{equation*}
    \begin{bmatrix}
        \, Y_0^1 \, \\
        \, Y_0^2 \,
    \end{bmatrix} =
    \boldsymbol{R}_0 \,
    \begin{bmatrix}
        \, q_0^1 \,\\
        \, q_0^2 \,
    \end{bmatrix}.
\end{equation*}
Combined with the asymptotic behavior \eqref{asymptot}, we can find $T$ large enough such that $Y_0^1 - Y_0^2 > 0$.

To show it can not be true that $Y_t^1 - Y_t^2 \geq 0$ for all $t$, let us assume $\Delta Y_t := Y_t^1 - Y_t^2 \geq 0$ and it follows by definition
\begin{equation*}
    Q_T^1 = q_0^1 + \int_0^T \Delta Y_t \,dt.
\end{equation*}
If we similarly define $\Delta Q_t := Q_t^1 - Q_t^2$, it turns out $(\Delta Q, \Delta Y)$ solves
\begin{equation*}
\left\{
\begin{aligned}
\;& d\Delta Q_t  = 2\, \Delta Y_t \, dt, \\
& d\Delta Y_t = \Delta Q_t \, dt, \\
& \Delta Q_0 = q_0^1 - q_0^2, \quad \Delta Y_T = -2A^2\, \Delta Q_T - 2(A^1 - A^2) \, Q_T^1.
\end{aligned}
\right.
\end{equation*}
We then find 
\begin{equation*}
\begin{aligned}
    \Delta Y_T &= -2A^2\, \Big( q_0^1 - q_0^2 + 2\int_0^T \Delta Y_t \, dt \Big) - 2(A^1 - A^2) \, \Big( q_0^1 + \int_0^T \Delta Y_t \,dt\Big)\\
    & = -2A^2\, \Big(\frac{A^1}{A^2}\,q_0^1 - q_0^2 \Big) - 2(A^1 + A^2)\, \int_0^T \Delta Y_t \,dt\\
    &<0.
\end{aligned}
\end{equation*}
Therefore, although $\Delta Y_0 >0$, it can not hold for all $t$ and there exist some time $s<T$ such that $\Delta Y_s <0$. Finally, utilizing \eqref{alge_trick}, the sign of $\Delta Y$ determines which agent quotes a better price. For example, the positive value of $\Delta Y_0$ indicates that agent $1$ provides a worse sell price (but a better buy price) than agent $2$. For an economical interpretation, we first remark that heterogeneity lies in the terminal penalty coefficient. As $T$ becomes sufficiently large, the solution $\boldsymbol{R}_0$ converges to $\boldsymbol{R}_\infty$, solving the algebraic equation \eqref{alge_riccati} independent of the terminal condition. Essentially, the influence exerted by the terminal penalty diminishes as we move backward in time. Consequently, agents are approximately homogeneous at the initial time. As discussed earlier, the better buy price offered by agent $1$ results from the lower inventory level. However, as $T$ approaches, this approximation loses validity, leading to the `crossing' of quotes.\\
\end{example}

\kong

\sloppy
\printbibliography

@article{almgren2001optimal,
  title={Optimal execution of portfolio transactions},
  author={Almgren, Robert and Chriss, Neil},
  journal={Journal of Risk},
  volume={3},
  pages={5--40},
  year={2001}
}

@article{bertsimas1998optimal,
  title={Optimal control of execution costs},
  author={Bertsimas, Dimitris and Lo, Andrew W},
  journal={Journal of financial markets},
  volume={1},
  number={1},
  pages={1--50},
  year={1998},
  publisher={Elsevier}
}

@article{ho1981optimal,
  title={Optimal dealer pricing under transactions and return uncertainty},
  author={Ho, Thomas and Stoll, Hans R},
  journal={Journal of Financial economics},
  volume={9},
  number={1},
  pages={47--73},
  year={1981},
  publisher={Elsevier}
}

@book{cartea2015algorithmic,
  title={Algorithmic and high-frequency trading},
  author={Cartea, {\'A}lvaro and Jaimungal, Sebastian and Penalva, Jos{\'e}},
  year={2015},
  publisher={Cambridge University Press}
}

@book{gueant2016financial,
  title={The Financial Mathematics of Market Liquidity},
  author={Gu{\'e}ant, Olivier},
  volume={33},
  year={2016},
  publisher={CRC Press}
}

@article{almgren2003optimal,
  title={Optimal execution with nonlinear impact functions and trading-enhanced risk},
  author={Almgren, Robert F},
  journal={Applied mathematical finance},
  volume={10},
  number={1},
  pages={1--18},
  year={2003},
  publisher={Taylor \& Francis}
}

@article{avellaneda2008high,
  title={High-frequency trading in a limit order book},
  author={Avellaneda, Marco and Stoikov, Sasha},
  journal={Quantitative Finance},
  volume={8},
  number={3},
  pages={217--224},
  year={2008},
  publisher={Taylor \& Francis}
}

@article{campi2020optimal,
  title={Optimal market making under partial information with general intensities},
  author={Campi, Luciano and Zabaljauregui, Diego},
  journal={Applied Mathematical Finance},
  volume={27},
  number={1-2},
  pages={1--45},
  year={2020},
  publisher={Taylor \& Francis}
}

@article{luo2021dynamic,
  title={Dynamic Equilibrium of Market Making with Price Competition},
  author={Luo, Jialiang and Zheng, Harry},
  journal={Dynamic Games and Applications},
  volume={11},
  number={3},
  pages={556--579},
  year={2021},
  publisher={Springer}
}

@article{gueant2017optimal,
  title={Optimal market making},
  author={Gu{\'e}ant, Olivier},
  journal={Applied Mathematical Finance},
  volume={24},
  number={2},
  pages={112--154},
  year={2017},
  publisher={Taylor \& Francis}
}

@book{horn1994topics,
  title={Topics in matrix analysis},
  author={Horn, Roger A and Johnson, Charles R},
  year={1994},
  publisher={Cambridge university press}
}

@article{wonham1968matrix,
  title={On a matrix Riccati equation of stochastic control},
  author={Wonham, William M},
  journal={SIAM Journal on Control},
  volume={6},
  number={4},
  pages={681--697},
  year={1968},
  publisher={SIAM}
}

@article{cont2022dynamics,
  title={Dynamics of market making algorithms in dealer markets: Learning and tacit collusion},
  author={Cont, Rama and Xiong, Wei},
  journal={Mathematical Finance},
  year={2022},
  publisher={Wiley Online Library}
}

@article{gueant2013dealing,
  title={Dealing with the inventory risk: a solution to the market making problem},
  author={Gu{\'e}ant, Olivier and Lehalle, Charles-Albert and Fernandez-Tapia, Joaquin},
  journal={Mathematics and financial economics},
  volume={7},
  pages={477--507},
  year={2013},
  publisher={Springer}
}

@article{gueant2012optimal,
  title={Optimal portfolio liquidation with limit orders},
  author={Gu{\'e}ant, Olivier and Lehalle, Charles-Albert and Fernandez-Tapia, Joaquin},
  journal={SIAM Journal on Financial Mathematics},
  volume={3},
  number={1},
  pages={740--764},
  year={2012},
  publisher={SIAM}
}

@article{bayraktar2014liquidation,
  title={Liquidation in limit order books with controlled intensity},
  author={Bayraktar, Erhan and Ludkovski, Michael},
  journal={Mathematical Finance},
  volume={24},
  number={4},
  pages={627--650},
  year={2014},
  publisher={Wiley Online Library}
}

@article{peng1990general,
  title={A general stochastic maximum principle for optimal control problems},
  author={Peng, Shige},
  journal={SIAM Journal on control and optimization},
  volume={28},
  number={4},
  pages={966--979},
  year={1990},
  publisher={SIAM}
}

@article{freiling2002survey,
  title={A survey of nonsymmetric Riccati equations},
  author={Freiling, Gerhard},
  journal={Linear algebra and its applications},
  volume={351},
  pages={243--270},
  year={2002},
  publisher={Elsevier}
}

@article{guo2023macroscopic,
  title={Macroscopic Market Making},
  author={Guo, Ivan and Jin, Shijia and Nam, Kihun},
  journal={arXiv preprint arXiv:2307.14129},
  year={2023}
}

@article{souza2022regularized,
  title={On regularized optimal execution problems and their singular limits},
  author={Souza, Max O and Thamsten, Yuri},
  journal={Applied Mathematical Finance},
  pages={1--31},
  year={2022},
  publisher={Taylor \& Francis}
}

@article{clason2017nonsmooth,
  title={Nonsmooth analysis and optimization},
  author={Clason, Christian},
  journal={arXiv preprint arXiv:1708.04180},
  year={2017}
}

@book{clarke2013functional,
  title={Functional analysis, calculus of variations and optimal control},
  author={Clarke, Francis},
  volume={264},
  year={2013},
  publisher={Springer}
}

@article{idczak2014global,
  title={A global implicit function theorem and its applications to functional equations},
  author={Idczak, Dariusz},
  journal={arXiv preprint arXiv:1401.4049},
  year={2014},
  publisher={Citeseer}
}

@book{clarke1990optimization,
  title={Optimization and nonsmooth analysis},
  author={Clarke, Frank H},
  year={1990},
  publisher={SIAM}
}

@article{varah1975lower,
  title={A lower bound for the smallest singular value of a matrix},
  author={Varah, James M},
  journal={Linear Algebra and its applications},
  volume={11},
  number={1},
  pages={3--5},
  year={1975},
  publisher={Elsevier}
}

@article{galewski2018global,
  title={On a global implicit function theorem for locally Lipschitz maps via non-smooth critical point theory},
  author={Galewski, Marek and R{\u{a}}dulescu, Marius},
  journal={Quaestiones Mathematicae},
  volume={41},
  number={4},
  pages={515--528},
  year={2018},
  publisher={Taylor \& Francis}
}

@book{carmona2018probabilistic,
  title={Probabilistic theory of mean field games with applications I-II},
  author={Carmona, Ren{\'e} and Delarue, Fran{\c{c}}ois and others},
  year={2018},
  publisher={Springer}
}

@article{bank2017hedging,
  title={Hedging with temporary price impact},
  author={Bank, Peter and Soner, H Mete and Vo{\ss}, Moritz},
  journal={Mathematics and financial economics},
  volume={11},
  number={2},
  pages={215--239},
  year={2017},
  publisher={Springer}
}

@article{ma2015well,
  title={On well-posedness of forward--backward SDEs—A unified approach},
  author={Ma, Jin and Wu, Zhen and Zhang, Detao and Zhang, Jianfeng},
  journal={The Annals of Applied Probability},
  volume={25},
  number={4},
  pages={2168--2214},
  year={2015},
  publisher={Institute of Mathematical Statistics}
}

@article{hu2022path,
  title={On path-dependent multidimensional forward-backward SDEs},
  author={Hu, Kaitong and Ren, Zhenjie and Touzi, Nizar},
  journal={Numerical Algebra, Control and Optimization},
  year={2022},
  publisher={American Institute of Mathematical Sciences}
}

@article{ankirchner2020optimal,
  title={Optimal position targeting via decoupling fields},
  author={Ankirchner, Stefan and Fromm, Alexander and Kruse, Thomas and Popier, Alexandre},
  journal={The Annals of Applied Probability},
  volume={30},
  number={2},
  pages={644--672},
  year={2020},
  publisher={Institute of Mathematical Statistics}
}

@article{fromm2015theory,
  title={Theory and applications of decoupling fields for forward-backward stochastic differential equations},
  author={Fromm, Alexander},
  year={2015},
  publisher={Humboldt-Universit{\"a}t zu Berlin, Mathematisch-Naturwissenschaftliche Fakult{\"a}t II}
}

@article{peng1999fully,
  title={Fully coupled forward-backward stochastic differential equations and applications to optimal control},
  author={Peng, Shige and Wu, Zhen},
  journal={SIAM Journal on Control and Optimization},
  volume={37},
  number={3},
  pages={825--843},
  year={1999},
  publisher={SIAM}
}

@article{peng1992stochastic,
  title={Stochastic hamilton--jacobi--bellman equations},
  author={Peng, Shige},
  journal={SIAM Journal on Control and Optimization},
  volume={30},
  number={2},
  pages={284--304},
  year={1992},
  publisher={SIAM}
}

@article{fromm2013existence,
  title={Existence, uniqueness and regularity of decoupling fields to multidimensional fully coupled FBSDEs},
  author={Fromm, Alexander and Imkeller, Peter},
  journal={arXiv preprint arXiv:1310.0499},
  year={2013}
}

@book{carmona2016lectures,
  title = {Lectures on BSDEs, Stochastic Control, and Stochastic Differential Games with Financial Applications},
  author={Carmona, Ren{\'e}},
  year={2016},
  publisher={SIAM}
}

\vspace{0.2cm}

\section{Appendix: Stochastic Maximum Principle}
\label{maximum prin}
\noindent This section is devoted to the game version of the stochastic maximum principle. We follow the argument in \cite{carmona2016lectures} that is devoted to the control problem, and start with the \textit{necessary} condition. Fix an admissible strategy profile $(\boldsymbol{\delta}^j)_{1\leq j\leq N}\in(\mathbb{A}\times\mathbb{A})^{N}$ and denote by $(Q^1, \dots, Q^N)\in(\mathbb{S}^2)^N$ the corresponding controlled inventory. Next, in view of player $i$, let us consider $\boldsymbol{\beta}\in \mathbb{A}\times\mathbb{A}$, which is uniformly bounded and satisfies $\boldsymbol{\delta}^i+\boldsymbol{\beta}\in\mathbb{A}\times\mathbb{A}$. The direction $\boldsymbol{\beta}$ will be used to compute the G\^ateaux derivative of $J^i$. For each $\epsilon > 0$ small enough, define an admissible control $\boldsymbol{\delta}^{i, \epsilon}\in \mathbb{A}\times\mathbb{A}$ as $\boldsymbol{\delta}^{i, \epsilon}_t = \boldsymbol{\delta}^i_t + \epsilon\,\boldsymbol{\beta}_t$, and the corresponding controlled state is denoted by $Q^{i,\epsilon}\in\mathbb{S}^2$. Let $V$ be the solution of the equation
\begin{equation*}
    dV_t=\partial_{\boldsymbol{\delta}}\theta(t, \boldsymbol{\delta}^i_t; \boldsymbol{\delta}^{-i}_t)\cdot\boldsymbol{\beta}_t\,dt,
\end{equation*}
with the initial condition $V_0 = 0$, where
\begin{equation*}
    \theta(t, \boldsymbol{\delta}^i; \boldsymbol{\delta}^{-i}):=-a_t\,\Lambda(\delta^{i,a}-\bar{\delta}^{i,a})+b_t\,\Lambda(\delta^{i,b}-\bar{\delta}^{i,b}).
\end{equation*}
It is clear that $V$ is uniformly bounded in any finite time horizon.

\kong

\begin{lemma}
The functional $\boldsymbol{\delta}\hookrightarrow J^i(\boldsymbol{\delta}; \boldsymbol{\delta}^{-i})$ is G\^ateaux differentiable and the derivative reads
\begin{equation}
\begin{aligned}
    \frac{d}{d\epsilon}J^i(\boldsymbol{\delta}^i+\epsilon\boldsymbol{\beta}; \boldsymbol{\delta}^{-i})\big|_{\epsilon=0}:&=\lim_{\epsilon\searrow0}\frac{1}{\epsilon}\,\big[J^i(\boldsymbol{\delta}^i+\epsilon\boldsymbol{\beta}; \boldsymbol{\delta}^{-i})-J^i(\boldsymbol{\delta}^i; \boldsymbol{\delta}^{-i})\big]\\
    &=\mathbb{E}\Big[\int_0^T\big(\partial_q f(t,Q_t^i,\boldsymbol{\delta}_t^i; \boldsymbol{\delta}^{-i}_t)\,V_t + \partial_{\boldsymbol{\delta}}f(t, Q_t^i, \boldsymbol{\delta}_t^i; \boldsymbol{\delta}^{-i}_t)\cdot\boldsymbol{\beta}_t\big)\,dt-2\,A^i\,V_T\,Q_T^i\Big],
    \label{gen_gat_deriv}
\end{aligned}
\end{equation}
where $f$ is the running payoff given as
\begin{equation*}
    f(t, Q_t^i, \boldsymbol{\delta}_t^i; \boldsymbol{\delta}^{-i}_t)=b_t\,\delta_t^{i,b}\,\Lambda(\delta_t^{i,b}-\bar{\delta}_t^{i,b})+a_t\,\delta_t^{i, a}\,\Lambda(\delta_t^{i,a}-\bar{\delta}_t^{i,a})-\phi_t^i\,\big(Q_t^i\big)^2.
\end{equation*}
\end{lemma}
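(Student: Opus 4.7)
The plan is to linearize the state dynamics along the admissible perturbation and then Taylor-expand the running and terminal payoffs, exchanging the limit and expectation by dominated convergence. The key observation is that although the map $\boldsymbol{\delta}\mapsto\bar\delta^{i,\cdot}$ is only Lipschitz (and not smooth) in general, the quantity $\bar\delta^{i,a}=\min_{j\neq i}\delta^{j,a}$ depends \emph{only} on $\boldsymbol{\delta}^{-i}$, so varying the $i$-th player's control leaves $\bar{\boldsymbol{\delta}}^i$ frozen. Hence the map $\boldsymbol{\delta}^i\mapsto\theta(t,\boldsymbol{\delta}^i;\boldsymbol{\delta}^{-i}_t)$ inherits the $C^2$-regularity of $\Lambda$ from Assumption~\ref{general_inten}, and similarly for the running payoff $f$ (which is in addition quadratic in $q$).

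First I would introduce the finite-difference quotient $\eta_t^\epsilon:=(Q_t^{i,\epsilon}-Q_t^i)/\epsilon$ and show $\eta_t^\epsilon\to V_t$ uniformly in $(t,\omega)$ as $\epsilon\searrow 0$. Since $\boldsymbol{\delta}^i,\boldsymbol{\delta}^{i,\epsilon}\in\mathbb{A}\times\mathbb{A}$ and $\boldsymbol{\beta}$ is uniformly bounded, all the arguments of $\Lambda$ remain in a compact set on which $\Lambda,\Lambda',\Lambda''$ are bounded. A second-order Taylor expansion of $\theta$ in its control argument then yields
\begin{equation*}
\theta(t,\boldsymbol{\delta}^i_t+\epsilon\boldsymbol{\beta}_t;\boldsymbol{\delta}^{-i}_t)-\theta(t,\boldsymbol{\delta}^i_t;\boldsymbol{\delta}^{-i}_t)=\epsilon\,\partial_{\boldsymbol{\delta}}\theta(t,\boldsymbol{\delta}^i_t;\boldsymbol{\delta}^{-i}_t)\cdot\boldsymbol{\beta}_t+O(\epsilon^2),
\end{equation*}
with an $O(\epsilon^2)$ term uniform in $(t,\omega)$. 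Integrating over $[0,t]$ and dividing by $\epsilon$ gives $\eta_t^\epsilon-V_t=O(\epsilon)$ uniformly, which in particular also shows $|Q_t^{i,\epsilon}-Q_t^i|=O(\epsilon)$.

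Next, I would expand the integrand of $J^i$ using the joint smoothness of $f$ in $(q,\boldsymbol{\delta})$:
\begin{equation*}
\tfrac{1}{\epsilon}\bigl[f(t,Q_t^{i,\epsilon},\boldsymbol{\delta}_t^{i,\epsilon};\boldsymbol{\delta}_t^{-i})-f(t,Q_t^i,\boldsymbol{\delta}_t^i;\boldsymbol{\delta}_t^{-i})\bigr]=\partial_qf(t,Q_t^i,\boldsymbol{\delta}_t^i;\boldsymbol{\delta}_t^{-i})\,\eta_t^\epsilon+\partial_{\boldsymbol{\delta}}f(t,Q_t^i,\boldsymbol{\delta}_t^i;\boldsymbol{\delta}_t^{-i})\cdot\boldsymbol{\beta}_t+O(\epsilon),
\end{equation*}
and similarly $\tfrac{1}{\epsilon}[(Q_T^{i,\epsilon})^2-(Q_T^i)^2]=2Q_T^i\,\eta_T^\epsilon+O(\epsilon)$. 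By the uniform $O(\epsilon)$ control above, dominated convergence applies after noting that $\partial_qf=-2\phi^i_t\,q$ so the dominating bound involves $\phi^i\in\mathbb{H}^2$, $A^i\in L^2(\Omega,\mathcal{F}_T)$, $Q^i\in\mathbb{S}^2$ and the uniform bounds on $\eta^\epsilon$, $\boldsymbol{\beta}$, and $\partial_{\boldsymbol{\delta}}f$, all of which are integrable. Taking $\epsilon\searrow 0$ yields formula~\eqref{gen_gat_deriv}.

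The main routine obstacle is verifying the integrability dominations uniformly in $\epsilon$, but this reduces to checking that the coefficients involved ($\phi^i,A^i,Q^i$, the bounded admissible controls, and the bounded derivatives of $\Lambda$ on the compact range of arguments) combine into $L^1$ quantities; no delicate stochastic analysis is required because the state equation has no diffusion component. The only conceptually noteworthy point is to emphasise that the non-smoothness of $\bar\delta^{i,\cdot}$ in the aggregate vector $\boldsymbol{\delta}$ is harmless here, since only the $i$-th coordinate is being perturbed.
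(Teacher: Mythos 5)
Your proposal is correct and follows essentially the same route as the paper: both rest on the observation that $\bar{\boldsymbol{\delta}}^i$ depends only on $\boldsymbol{\delta}^{-i}$ and is therefore frozen under the perturbation, then expand the difference quotients of the running payoff, the inventory penalty, and the terminal penalty, and pass to the limit by dominated convergence using the boundedness of the admissible controls, of $a,b,\phi^i,A^i$, and of $\Lambda$ and its derivatives on the resulting compact range of arguments. The only cosmetic difference is that you first establish the uniform convergence $\eta^\epsilon\to V$ and then Taylor-expand $f$ jointly in $(q,\boldsymbol{\delta})$, whereas the paper bounds each difference quotient term by term before applying dominated convergence; the substance is identical.
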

\begin{proof}
Regarding the ask side of the running payoff, we compute that
\begin{equation*}
\begin{aligned}
\frac{1}{\epsilon}\,&\mathbb{E}\Big[\int_0^T a_t\,(\delta_t^{i,a}+\epsilon\,\beta_t^a)\,\Lambda(\delta_t^{i,a}+\epsilon\,\beta_t^a-\bar{\delta}_t^{i,a})\,dt-\int_0^T a_t\,\delta_t^{i,a}\,\Lambda(\delta_t^{i,a}-\bar{\delta}_t^{i,a})\,dt\Big]\\
&=\mathbb{E}\Big[\int_0^Ta_t\,\delta_t^{i,a}\,\frac{1}{\epsilon}\,\big(\Lambda(\delta_t^{i,a}+\epsilon\,\beta_t^a-\bar{\delta}_t^{i,a})-\Lambda^a(\delta_t^{i,a}-\bar{\delta}_t^{i,a})\big)\,dt\Big]+\mathbb{E}\Big[\int_0^Ta_t\,\beta_t^a\,\Lambda(\delta_t^{i,a}+\epsilon\,\beta_t^a-\bar{\delta}_t^{i,a})\,dt\Big].
\end{aligned}
\end{equation*}
To perform the limiting procedure, it suffices to notice
\begin{equation*}
\begin{aligned}
    \int_0^Ta_t\,\delta_t^{i,a}\,\frac{1}{\epsilon}\,\big(\Lambda(\delta_t^{i,a}+\epsilon\,\beta_t^a-\bar{\delta}_t^{i,a})-\Lambda^a(\delta_t^{i,a}-\bar{\delta}_t^{i,a})\big)\,dt&\leq C\,\int_0^Ta_t\,|\delta_t^{i,a}|\,dt,\\
    \int_0^Ta_t\,\beta_t^a\,\Lambda(\delta_t^{i,a}+\epsilon\,\beta_t^a-\bar{\delta}_t^{i,a})\,dt &\leq C\,\int_0^T a_t\,dt,\\
    a_t\,\delta_t^{i,a}\,\frac{1}{\epsilon}\,\big(\Lambda(\delta_t^{i,a}+\epsilon\,\beta_t^a-\bar{\delta}_t^{i,a})-\Lambda^a(\delta_t^{i,a}-\bar{\delta}_t^{i,a})\big)&\leq C\,a_t\,|\delta_t^{i, a}|,\\
    a_t\,\beta_t^a\,\Lambda(\delta_t^{i,a}+\epsilon\,\beta_t^a-\bar{\delta}_t^{i,a}) &\leq C\, a_t,
\end{aligned}
\end{equation*}
and the integrability on the right hand sides. Hence, the dominated convergence theorem yields
\begin{equation*}
\begin{aligned}
    \lim_{\epsilon\to 0}\frac{1}{\epsilon}\,\mathbb{E}\Big[\int_0^T a_t\,(\delta_t^{i,a}&+\epsilon\,\beta_t^a)\,\Lambda(\delta_t^{i,a}+\epsilon\,\beta_t^a-\bar{\delta}_t^{i,a})\,dt-\int_0^T a_t\,\delta_t^{i,a}\,\Lambda(\delta_t^{i,a}-\bar{\delta}_t^{i,a})\,dt\Big]\\
    &=\mathbb{E}\Big[\int_0^Ta_t\,\delta_t^{i,a}\,\big(\Lambda(\delta_t^{i,a}-\bar{\delta}_t^{i,a})\big)'\,\beta_t^a\,dt\Big]+\mathbb{E}\Big[\int_0^Ta_t\,\beta_t^a\,\Lambda(\delta_t^{i,a}-\bar{\delta}_t^{i,a})\,dt\Big]\\
    &=\mathbb{E}\Big[\int_0^T\partial_{\delta^a}f(t, Q_t^i, \boldsymbol{\delta}_t^i; \boldsymbol{\delta}^{-i}_t)\,\beta_t^a\,dt\Big].
\end{aligned}
\end{equation*}
Since the bid side can be computed in the same way, the term with $\partial_{\boldsymbol{\delta}}f$ in \eqref{gen_gat_deriv} is thus verified. With respect to the running inventory penalty, we further calculate that
\begin{equation*}
\begin{aligned}
    \frac{1}{\epsilon}\,\mathbb{E}\Big[\int_0^T \phi_t^i\,&\big(Q_t^{i,\epsilon}\big)^2\,dt-\int_0^T \phi^i_t\,\big(Q_t^{i}\big)^2\,dt\Big]\\
    &=\mathbb{E}\Big[\int_0^T \phi_t^i\,\big(Q_t^{i,\epsilon}+Q_t^{i}\big)\,\Big(-\int_0^ta_u\,\frac{1}{\epsilon}\,\big[\Lambda(\delta_u^{i,a}+\epsilon\,\beta_u^a-\bar{\delta}_u^{i,a})-\Lambda(\delta_u^{i,a}-\bar{\delta}_u^{i,a})\big]\,du\\
    &\hspace{4.1cm}+\int_0^tb_u\,\frac{1}{\epsilon}\,\big[\Lambda(\delta_u^{i,b}+\epsilon\,\beta_u^b-\bar{\delta}_u^{i,b})-\Lambda(\delta_u^{i,b}-\bar{\delta}_u^{i,b})\big]\,du\Big)\,dt\Big].
\end{aligned}
\end{equation*}
Similarly, being aware of the boundedness of $\phi^i$, $Q^i$, $Q^{i, \epsilon}$, $a$,  $b$, as well as
\begin{equation*}
 \frac{1}{\epsilon}\,\big[\Lambda(\delta_u^{i,a}+\epsilon\,\beta_u^a-\bar{\delta}_u^{i,a})-\Lambda(\delta_u^{i,a}-\bar{\delta}_u^{i,a})\big]
 \text{\quad and \quad} \frac{1}{\epsilon}\,\big[\Lambda(\delta_u^{i,b}+\epsilon\,\beta_u^b-\bar{\delta}_u^{i,b})-\Lambda(\delta_u^{i,b}-\bar{\delta}_u^{i,b})\big],
\end{equation*}
we can conclude by the dominated convergence theorem that
\begin{equation*}
    \lim_{\epsilon\to 0}\,\frac{1}{\epsilon}\,\mathbb{E}\Big[\int_0^T \phi_t^i\,\big(Q_t^{i,\epsilon}\big)^2\,dt-\int_0^T \phi^i_t\,\big(Q_t^{i}\big)^2\,dt\Big]=2\,\mathbb{E}\Big[\int_0^T \phi_t^i\,Q_t^i\,V_t\,dt\Big].
\end{equation*}
While the terminal penalty part can be justified in the same way, the proof is complete.
\end{proof}

\kong

\noindent Let $(Y^i,M^i)$ be the adjoint processes associated with $(\boldsymbol{\delta}^j)_{1\leq j\leq N}$, i.e., processes $(Y,Z)$ solve the BSDE
\begin{equation*}
    dY_t^i=2\phi_t^i\,Q_t^i\,dt+dM_t^i, \quad Y_T^i=-2A^i\,Q_T^i.
\end{equation*}
The following duality relation provides an expression of the G\^ateaux derivative of the cost functional in terms of the Hamiltonian of the system.

\kong

\begin{lemma}
The duality relation is given by
\begin{equation*}
    \mathbb{E}[Y_T^i\,V_T]=\mathbb{E}\Big[\int_0^T\big(\partial_{\boldsymbol{\delta}}\theta(t, \boldsymbol{\delta}^i_t; \boldsymbol{\delta}^{-i}_t)\cdot\boldsymbol{\beta}_t\,Y_t^i-\partial_q f(t,Q_t^i,\boldsymbol{\delta}_t^i; \boldsymbol{\delta}^{-i}_t)\,V_t\big)\,dt\Big].
\end{equation*}
Such duality further implies
\begin{equation*}
    \frac{d}{d\epsilon}\,J^i(\boldsymbol{\delta}^i+\epsilon\boldsymbol{\beta}; \boldsymbol{\delta}^{-i})\big|_{\epsilon=0}=\mathbb{E}\Big[\int_0^T\partial_{\boldsymbol{\delta}}H^i(t, Q_t^i, Y_t^i, \boldsymbol{\delta}_t^i; \boldsymbol{\delta}_t^{-i})\cdot\boldsymbol{\beta}_t\,dt\Big].
\end{equation*}
\begin{proof}
The integration by parts yields
\begin{equation*}
\begin{aligned}
    Y_T^i\,V_T&=Y_0^i\,V_0+\int_0^T Y_t^i\,dV_t+\int_0^T V_t\,dY_t^i\\
    &=\int_0^TY_t\, \partial_{\boldsymbol{\delta}}\theta(t, \boldsymbol{\delta}^i_t; \boldsymbol{\delta}^{-i}_t)\cdot\boldsymbol{\beta}_t\,dt+\int_0^T 2\,\phi_t^i\,V_t\,Q_t^i\,dt+\int_0^T V_t\,dM_t^i.
\end{aligned}
\end{equation*}
The duality is obtain through taking the expectation. Using this relation, we can further compute
\begin{equation*}
\begin{aligned}
    \frac{d}{d\epsilon}\,J^i(\boldsymbol{\delta}^i+\epsilon\boldsymbol{\beta} &; \boldsymbol{\delta}^{-i})\big|_{\epsilon=0}\\
    &=\mathbb{E}\Big[\int_0^T\big(\partial_q f(t,Q_t^i,\boldsymbol{\delta}_t^i; \boldsymbol{\delta}^{-i}_t)\,V_t + \partial_{\boldsymbol{\delta}}f(t, Q_t^i, \boldsymbol{\delta}_t^i; \boldsymbol{\delta}^{-i}_t)\cdot\boldsymbol{\beta}_t\big)\,dt+V_T\,Y_T^i\Big]\\
    &=\mathbb{E}\Big[\int_0^TY_t^i\, \partial_{\boldsymbol{\delta}}\theta(t, \boldsymbol{\delta}^i_t; \boldsymbol{\delta}^{-i}_t)\cdot\boldsymbol{\beta}_t\,dt+\int_0^T \partial_{\boldsymbol{\delta}}f(t, Q_t^i, \boldsymbol{\delta}_t^i; \boldsymbol{\delta}^{-i}_t)\cdot\boldsymbol{\beta}_t\big)\,dt\Big]\\
    &=\mathbb{E}\Big[\int_0^T\partial_{\boldsymbol{\delta}}H^i(t, Q_t^i, Y_t^i, \boldsymbol{\delta}_t^i; \boldsymbol{\delta}_t^{-i})\cdot\boldsymbol{\beta}_t\,dt\Big].
\end{aligned}
\end{equation*}
\end{proof}
\end{lemma}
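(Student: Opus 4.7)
The plan is to apply It\^o's product rule to the semi-martingale product $Y^i_t V_t$, take expectations, and then plug the resulting identity into the G\^ateaux derivative formula derived in the preceding lemma. Since $V$ is a finite-variation process (its dynamics are absolutely continuous in time with no martingale part), the quadratic cross-variation $\langle V, M^i\rangle$ vanishes, so integration by parts reduces to
\begin{equation*}
Y^i_T V_T = Y^i_0 V_0 + \int_0^T Y^i_t\,dV_t + \int_0^T V_t\,dY^i_t.
\end{equation*}
Using $V_0=0$, substituting $dV_t = \partial_{\boldsymbol{\delta}}\theta(t, \boldsymbol{\delta}^i_t;\boldsymbol{\delta}^{-i}_t)\cdot\boldsymbol{\beta}_t\,dt$ and the adjoint dynamics $dY^i_t = 2\phi_t^i Q_t^i\,dt + dM^i_t$, and then recognizing that $\partial_q f(t, Q_t^i, \boldsymbol{\delta}_t^i;\boldsymbol{\delta}^{-i}_t) = -2\phi_t^i Q_t^i$, I obtain the pathwise identity
\begin{equation*}
Y^i_T V_T = \int_0^T Y^i_t\,\partial_{\boldsymbol{\delta}}\theta(t, \boldsymbol{\delta}^i_t;\boldsymbol{\delta}^{-i}_t)\cdot\boldsymbol{\beta}_t\,dt - \int_0^T V_t\,\partial_q f(t, Q_t^i, \boldsymbol{\delta}_t^i;\boldsymbol{\delta}^{-i}_t)\,dt + \int_0^T V_t\,dM^i_t.
\end{equation*}

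Taking expectations yields the claimed duality, provided the stochastic integral against $M^i$ has zero expectation. This is the only technical point to justify: since $\boldsymbol{\beta}$, $a$, $b$ and $\Lambda$ (applied to bounded arguments) are all uniformly bounded on $[0,T]$, the process $V$ is uniformly bounded, and $M^i\in\mathbb{M}$ is a continuous $L^2$-martingale; hence $\int_0^\cdot V_t\,dM^i_t$ is a genuine martingale started at zero, with zero expectation. This is the main (and only) obstacle, and it is routine.

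For the second identity, I would substitute the duality into the expression for the G\^ateaux derivative obtained in the previous lemma. Using the terminal condition $Y^i_T = -2A^i Q^i_T$ to identify $\mathbb{E}[-2A^i V_T Q^i_T] = \mathbb{E}[V_T Y^i_T]$ and then invoking the duality, the $\partial_q f\cdot V_t$ contributions cancel, leaving
\begin{equation*}
\frac{d}{d\epsilon} J^i(\boldsymbol{\delta}^i + \epsilon\boldsymbol{\beta};\boldsymbol{\delta}^{-i})\big|_{\epsilon=0} = \mathbb{E}\Big[\int_0^T \big(\partial_{\boldsymbol{\delta}} f(t, Q_t^i, \boldsymbol{\delta}_t^i;\boldsymbol{\delta}^{-i}_t) + Y^i_t\,\partial_{\boldsymbol{\delta}}\theta(t, \boldsymbol{\delta}^i_t;\boldsymbol{\delta}^{-i}_t)\big)\cdot\boldsymbol{\beta}_t\,dt\Big].
\end{equation*}
Finally, inspecting the Hamiltonian \eqref{hamilton} shows $H^i(t,q^i,y^i,\boldsymbol{\delta}^i;\boldsymbol{\delta}^{-i}) = \theta(t,\boldsymbol{\delta}^i;\boldsymbol{\delta}^{-i})\,y^i + f(t,q^i,\boldsymbol{\delta}^i;\boldsymbol{\delta}^{-i})$, so the integrand is exactly $\partial_{\boldsymbol{\delta}} H^i(t,Q^i_t,Y^i_t,\boldsymbol{\delta}^i_t;\boldsymbol{\delta}^{-i}_t)\cdot\boldsymbol{\beta}_t$, giving the desired expression.
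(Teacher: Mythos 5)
Your proposal is correct and follows essentially the same route as the paper: integration by parts on $Y^i_t V_t$ (with the cross-variation vanishing since $V$ has finite variation), taking expectations, and substituting into the G\^ateaux derivative using $Y^i_T=-2A^iQ^i_T$ and the decomposition $H^i=\theta\,y^i+f$. The only difference is that you explicitly justify the vanishing expectation of $\int_0^\cdot V_t\,dM^i_t$ via the boundedness of $V$, a point the paper leaves implicit.
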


\kong

\noindent With these preliminary steps, we have now arrived at the necessary condition for optimality.

\kong

\begin{theorem}[Necessary condition]
If an admissible strategy profile $(\boldsymbol{\delta}^j)_{1\leq j\leq N}$ is a Nash equilibrium, $(Q^j)_{1\leq j\leq N}$ are the corresponding controlled inventories, and $(Y^j, M^j)_{1\leq j\leq N}$ are the associated adjoint
processes, then it holds for any player $i$ that
\begin{equation*}
    H^i(t, Q_t^i, Y_t^i, \boldsymbol{\delta}_t^i; \boldsymbol{\delta}_t^{-i})\geq H^i(t, Q_t^i, Y_t^i, \boldsymbol{\beta}; \boldsymbol{\delta}_t^{-i}),
\end{equation*}
$dt\times d\mathbb{P}-a.s.$ for any $\boldsymbol{\beta}\in[-\xi, \xi]\times[-\xi, \xi]$.
\end{theorem}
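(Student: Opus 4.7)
The plan is to prove this necessary condition via a spike-variation argument. Fix player $i$; since $\hat{\boldsymbol{\delta}}^{-i}$ remains frozen when player $i$ deviates, the problem reduces to the single-agent optimality of $\hat{\boldsymbol{\delta}}^i$ against this fixed environment. Two structural features make the first-order expansion unusually clean: the drift $\theta(t, \boldsymbol{\delta}^i; \boldsymbol{\delta}^{-i}) := b_t\Lambda(\delta^{i,b}-\bar{\delta}^{i,b}) - a_t\Lambda(\delta^{i,a}-\bar{\delta}^{i,a})$ does not depend on $Q^i$, and the running/terminal penalties are exactly quadratic in $Q^i$; together these yield a closed-form $O(\epsilon^2)$ residual rather than a generic Taylor remainder. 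The two G\^ateaux-type lemmas already proved in the appendix do not suffice on their own, since concavity of $H^i$ in the control is not available, so a genuine needle-type perturbation is required.

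Concretely, I would fix a deterministic $\boldsymbol{\beta}_0 \in [-\xi,\xi]^2$, a time $\tau \in (0,T)$, an event $A \in \mathcal{F}_\tau$, $\epsilon \in (0, T-\tau)$, and define the admissible spike
$$\boldsymbol{\delta}^{i,\epsilon}_t(\omega) := \hat{\boldsymbol{\delta}}^i_t(\omega) + \mathbf{1}_A(\omega)\,\mathbf{1}_{[\tau, \tau+\epsilon]}(t)\,\big(\boldsymbol{\beta}_0 - \hat{\boldsymbol{\delta}}^i_t(\omega)\big).$$
Because $\theta$ is state-independent, the difference $\tilde{V}_t := Q^{i,\epsilon}_t - Q^i_t$ is continuous of finite variation with $\tilde{V}_t = 0$ on $[0,\tau)$ and $\sup_{t\in[0,T]}|\tilde{V}_t| \leq C\epsilon$, uniformly in $\tau, A, \boldsymbol{\beta}_0$. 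Expanding $\Delta J^i := J^i(\boldsymbol{\delta}^{i,\epsilon}; \hat{\boldsymbol{\delta}}^{-i}) - J^i(\hat{\boldsymbol{\delta}}^i; \hat{\boldsymbol{\delta}}^{-i})$ via the algebraic identities $-\phi^i_t[(Q^{i,\epsilon}_t)^2 - (Q^i_t)^2] = \partial_q f(t, Q^i_t, \cdot)\,\tilde{V}_t - \phi^i_t \tilde{V}_t^2$ and its terminal analogue, then applying the duality relation from the appendix (obtained by It\^o's formula on $Y^i \tilde{V}$, with $[Y^i, \tilde{V}] \equiv 0$ since $\tilde{V}$ is continuous of finite variation), I arrive at
$$\Delta J^i = \mathbb{E}\Big[\mathbf{1}_A\int_\tau^{\tau+\epsilon}\big(H^i(t, Q^i_t, Y^i_t, \boldsymbol{\beta}_0; \hat{\boldsymbol{\delta}}^{-i}_t) - H^i(t, Q^i_t, Y^i_t, \hat{\boldsymbol{\delta}}^i_t; \hat{\boldsymbol{\delta}}^{-i}_t)\big)\,dt\Big] + O(\epsilon^2),$$
where the residual bounds $\mathbb{E}\!\left[\int_0^T \phi^i_t \tilde{V}_t^2\,dt + A^i \tilde{V}_T^2\right]$ uniformly in the spike parameters.

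Nash optimality gives $\Delta J^i \leq 0$. Dividing by $\epsilon$ and invoking the Lebesgue differentiation theorem (applied path-wise, then exchanged with expectation by dominated convergence owing to the uniform bound on the integrand) yields, for almost every $\tau \in [0,T]$ and every $A \in \mathcal{F}_\tau$,
$$\mathbb{E}\big[\mathbf{1}_A\big(H^i(\tau, Q^i_\tau, Y^i_\tau, \boldsymbol{\beta}_0; \hat{\boldsymbol{\delta}}^{-i}_\tau) - H^i(\tau, Q^i_\tau, Y^i_\tau, \hat{\boldsymbol{\delta}}^i_\tau; \hat{\boldsymbol{\delta}}^{-i}_\tau)\big)\big] \leq 0.$$
Arbitrariness of $A \in \mathcal{F}_\tau$ upgrades this to the $\mathbb{P}$-a.s. inequality at fixed $\boldsymbol{\beta}_0$, and picking a countable dense subset of $[-\xi,\xi]^2$ together with continuity of $\boldsymbol{\beta} \mapsto H^i(t, q, y, \boldsymbol{\beta}; \cdot)$ extends the result to all $\boldsymbol{\beta} \in [-\xi,\xi]^2$ simultaneously, $dt\times d\mathbb{P}$-a.s. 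The main technical hurdle is the Fubini-plus-differentiation step, which must absorb the random Lebesgue-point exception sets into a single null set; this is routine once the $O(\epsilon^2)$ residual is controlled uniformly, which follows from boundedness of $a, b, \phi^i, A^i, \Lambda$, and $\xi$.
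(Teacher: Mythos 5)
Your proposal is correct in substance, but it takes a genuinely different route from the paper. The paper also perturbs only player $i$ against the frozen profile $\boldsymbol{\delta}^{-i}$, but it uses the \emph{convex} perturbation $\boldsymbol{\delta}^{i,\epsilon}=\boldsymbol{\delta}^i+\epsilon(\boldsymbol{\beta}-\boldsymbol{\delta}^i)$ together with the G\^ateaux-derivative and duality lemmas to obtain only the first-order variational inequality $\partial_{\boldsymbol{\delta}}H^i(t,Q^i_t,Y^i_t,\boldsymbol{\delta}^i_t;\boldsymbol{\delta}^{-i}_t)\cdot(\boldsymbol{\beta}-\boldsymbol{\delta}^i_t)\le 0$, and then upgrades this pointwise stationarity/KKT condition to global maximality over $[-\xi,\xi]^2$ by exploiting the specific shape of the Hamiltonian: by Assumption \ref{inten_assu} the map $\delta\mapsto\delta+\Lambda(\delta)/\Lambda'(\delta)$ is monotone, so $H^i$ is unimodal (first increasing, then decreasing) in each control component, and a point satisfying the first-order condition on the interval is automatically its global maximizer. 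Your spike variation reaches the same conclusion without invoking this unimodality at all: because the drift $\theta$ is state-independent and the penalties are exactly quadratic and concave in $Q^i$, the needle perturbation closes at first order with a signed $O(\epsilon^2)$ residual $\mathbb{E}[\int_0^T\phi^i_t\tilde V_t^2\,dt+A^i\tilde V_T^2]$, and the global inequality drops out directly. What each approach buys: the paper's argument is shorter given the two appendix lemmas already tailored to convex perturbations, but it leans on the structural hypothesis on $\Lambda$; yours is more robust (it would survive a $\Lambda$ whose Hamiltonian is not unimodal in $\delta$) at the cost of the standard but fiddly Lebesgue-differentiation step, where the exceptional null set of times a priori depends on the event $A\in\mathcal{F}_\tau$ and must be consolidated (e.g.\ by choosing $A=\{\mathbb{E}[\int_\tau^{\tau+\epsilon}g_t\,dt\mid\mathcal{F}_\tau]>0\}$, integrating over $\tau$, and passing to the $L^1(d\tau\times d\mathbb{P})$ limit). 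You correctly flag this as the main technical hurdle, and it is indeed routine for a bounded progressively measurable integrand, so I see no gap.
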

\begin{proof}
Fix any $i\in\{1, \dots, N\}$. Due to the convexity of the admissible space, given any admissible and bounded $\boldsymbol{\beta}\in\mathbb{A}\times\mathbb{A}$, we can choose the perturbation $\boldsymbol{\delta}^{i,\epsilon}=\boldsymbol{\delta}^i+\epsilon\,(\boldsymbol{\beta}-\boldsymbol{\delta}^i)$ which is still admissible. Since $(\boldsymbol{\delta}^j)_{1\leq j\leq N}$ is a Nash equilibrium, the following inequality should hold:
\begin{equation*}
    \frac{d}{d\epsilon}\,J^i(\boldsymbol{\delta}^i+\epsilon\,(\boldsymbol{\beta}-\boldsymbol{\delta}^i); \boldsymbol{\delta}^{-i})\big|_{\epsilon=0}=\mathbb{E}\Big[\int_0^T\partial_{\boldsymbol{\delta}}H^i(t, Q_t^i, Y_t^i, \boldsymbol{\delta}_t^i; \boldsymbol{\delta}_t^{-i})\cdot(\boldsymbol{\beta}_t-\boldsymbol{\delta}_t^i)\,dt\Big]\leq0.
\end{equation*}
From this we can see 
\begin{equation}
    \partial_{\boldsymbol{\delta}}H^i(t, Q_t^i, Y_t^i, \boldsymbol{\delta}_t^i; \boldsymbol{\delta}_t^{-i})\cdot(\boldsymbol{\beta}-\boldsymbol{\delta}_t^i)\leq 0,
    \label{smp_gener}
\end{equation}
$dt\times d\mathbb{P}-$a.s. for all $\boldsymbol{\beta}\in[-\xi, \xi]\times[-\xi, \xi]$. Look at the ask side for example. Regarding the condition \eqref{smp_gener}, it can only happen at time $t$ if one of the following three cases holds:
\begin{gather*}
    \delta_t^{i,a}\in(-\xi, +\xi) \text{\quad with \quad} \partial_{\boldsymbol{\delta}^a}H^i(t, Q_t^i, Y_t^i, \boldsymbol{\delta}_t^i; \boldsymbol{\delta}_t^{-i})=0, \\
    \delta_t^{i,a}=-\xi\leq \beta^a \text{\quad with \quad} \partial_{\boldsymbol{\delta}^a}H^i(t, Q_t^i, Y_t^i, \boldsymbol{\delta}_t^i; \boldsymbol{\delta}_t^{-i})\leq0,\\
    \delta_t^{i,a}=+\xi\geq \beta^a \text{\quad with \quad} \partial_{\boldsymbol{\delta}^a}H^i(t, Q_t^i, Y_t^i, \boldsymbol{\delta}_t^i; \boldsymbol{\delta}_t^{-i})\geq0.
\end{gather*}
Suppose that
\begin{equation*}
    \partial_{\boldsymbol{\delta}^a}H^i(t, Q_t^i, Y_t^i, \boldsymbol{\delta}_t^i; \boldsymbol{\delta}_t^{-i})=-a_t\, \big(\Lambda^a(\delta_t^{i,a}-\bar{\delta}_t^{i,a}) \big)'\,\Big[Y_t-\delta_t^{i,a}-\frac{\Lambda(\delta_t^{i,a}-\bar{\delta}_t^{i,a})}{\big(\Lambda(\delta_t^{i,a}-\bar{\delta}_t^{i,a})\big)'}\Big]=0,
\end{equation*}
the monotonicity of $\delta+\frac{\Lambda(\delta)}{(\Lambda(\delta))'}$ (see Assumption \ref{inten_assu}) implies that $\delta_t^{i,a}$ maximizes the corresponding part of the Hamiltonian. In another case, think of $\delta_t^{i,a}=-\xi$ with  $\partial_{\boldsymbol{\delta}^a}H^i(t, Q_t^i, Y_t^i, \boldsymbol{\delta}_t^i; \boldsymbol{\delta}_t^{-i})\leq0$. Since  $H^i$ is first increasing and then decreasing, this property infers that $\delta_t^{i,a}=-\xi$ is the maximizer of the Hamiltonian on the interval $[-\xi, \xi]$. While the remaining case and the bid side can be discussed in a similar fashion, all three cases imply that the strategy $\boldsymbol{\delta}^i$ maximizes the $H^i$ along the optimal paths. The proof is complete since the previous discussion holds for any player.
\end{proof}

\kong

The \textit{sufficient} condition is more straightforward to derive, stated as below.

\kong

\begin{theorem}[Sufficient condition]
Let $(\boldsymbol{\delta}^j)_{1\leq j\leq N}$ be an admissible strategy profile, $(Q^1\, \dots, Q^N)$ be the corresponding controlled inventories, and $(Y^j, M^j)_{1\leq j\leq N}$ be the associated adjoint processes. If it holds $dt\times d\mathbb{P}$-a.s. that
\begin{equation}
    H^i(t, Q_t^i, Y_t^i, \boldsymbol{\delta}_t^i; \boldsymbol{\delta}_t^{-i})=\sup_{\boldsymbol{\beta}\in [-\xi,\xi]^2}H^i(t, Q_t^i, Y_t^i, \boldsymbol{\beta}; \boldsymbol{\delta}_t^{-i}),
    \nonumber
\end{equation}
for all $i\in\{1, \dots, N\}$, then $(\boldsymbol{\delta}^j)_{1\leq j\leq N}$ is a Nash equilibrium.
\begin{proof}
Fix any $i\in\{1, \dots, N\}$. Let $\boldsymbol{\beta}\in \mathbb{A}\times\mathbb{A}$ be a generic admissible strategy and $Q^{i'}$ be the state process associated with the profile $(\boldsymbol{\beta},\boldsymbol{\delta}^{-i})$. Due to the concaveness of the terminal penalty, we have
\begin{equation}
    \begin{aligned}
        \mathbb{E}\big[-A^i\,\big(Q_T^i\big)^2 + A^i\,\big(Q^{i'}_T\big)^2\big]&\geq \mathbb{E}\big[-2 A^i\, Q^{i}_T\,(Q^{i}_T-Q^{i'}_T)\big]\\
        &=\mathbb{E}\big[Y_T^i\,(Q_T^i-Q^{i'}_T)\big]\\
        &=\mathbb{E}\Big[\int_0^T2\phi_t^i\,Q_t^i\,(Q_t^i-Q^{i'}_t)\,dt+\int_0^TY_t^i\,d(Q_t^i-Q^{i'}_t)\Big].\\
    \end{aligned}
    \nonumber
\end{equation}
By the definition of the Hamiltonian, one can also obtain
\begin{equation}
    \begin{aligned}    \mathbb{E}\int_0^T\big[ f(t, Q_t^i, \boldsymbol{\delta}_t^i; \boldsymbol{\delta}^{-i}_t) &- f(t, Q_t^{i'}, \boldsymbol{\beta}_t; \boldsymbol{\delta}^{-i}_t) \big]\,dt\\
    &=\mathbb{E}\int_0^T\big[H(t, Q_t^i, \boldsymbol{\delta}_t^i; \boldsymbol{\delta}^{-i}_t)-H(t, Q_t^{i'}, \boldsymbol{\beta}_t; \boldsymbol{\delta}^{-i}_t)\big]\,dt-\mathbb{E}\int_0^TY_t^i\,d(Q_t^i-Q^{i'}_t),
    \end{aligned}
    \nonumber
\end{equation}
where we recall that $f$ is the running payoff of agent $i$. Combining two results above, we can get
\begin{equation}
\begin{aligned}
    J^i(\boldsymbol{\delta}^i;\boldsymbol{\delta}^{-i})-J^i(\boldsymbol{\beta};\boldsymbol{\delta}^{-i})&=\mathbb{E}\int_0^T\big[ f(t, Q_t^i, \boldsymbol{\delta}_t^i; \boldsymbol{\delta}^{-i}_t) - f(t, Q_t^{i'}, \boldsymbol{\beta}_t; \boldsymbol{\delta}^{-i}_t) \big]\,dt+\mathbb{E}\big[-2A^i\,(Q_T^i)^2 +2A^i\, (Q^{i'}_T)^2\big]\\
    &\geq\mathbb{E}\int_0^T\big[H(t, Q_t^i, \boldsymbol{\delta}_t^i; \boldsymbol{\delta}^{-i}_t)-H(t, Q_t^{i'}, \boldsymbol{\beta}_t; \boldsymbol{\delta}^{-i}_t)+2\phi_t^i\,Q_t^i\,(Q_t^i-Q^{i'}_t)\big]\,dt\\
    &\geq\mathbb{E}\int_0^T\big[H(t, Q_t^i, \boldsymbol{\beta}_t; \boldsymbol{\delta}^{-i}_t)-H(t, Q_t^{i'}, \boldsymbol{\beta}_t; \boldsymbol{\delta}^{-i}_t)+2\phi_t^i\,Q_t^i\,(Q_t^i-Q^{i'}_t)\big]\,dt\\
    &=\mathbb{E}\int_0^T\phi_t^i\,\big(Q_t^i-Q^{i'}_t\big)^2\,dt\\
    &\geq 0.
\end{aligned}
    \nonumber
\end{equation}
The proof is complete since the previous discussion holds for any player.
\end{proof}
\end{theorem}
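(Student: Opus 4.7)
The plan is to establish, for every player $i$ and every admissible alternative $\boldsymbol{\beta}\in\mathbb{A}\times\mathbb{A}$, that $J^i(\boldsymbol{\delta}^i;\boldsymbol{\delta}^{-i})\geq J^i(\boldsymbol{\beta};\boldsymbol{\delta}^{-i})$. Fix $i$, let $Q^{i'}$ denote the inventory driven by $(\boldsymbol{\beta},\boldsymbol{\delta}^{-i})$, and write $\theta(t,\boldsymbol{\delta};\boldsymbol{\delta}^{-i})=b_t\Lambda(\delta^b-\bar{\delta}^b)-a_t\Lambda(\delta^a-\bar{\delta}^a)$ for the inventory drift, so that by construction $H^i=y\,\theta+f$ with $f$ the running payoff. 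Since $q\mapsto -A^i q^2$ is concave, the tangent-line inequality combined with the adjoint terminal condition $Y_T^i=-2A^i Q_T^i$ gives
\begin{equation*}
\mathbb{E}\big[-A^i(Q_T^i)^2+A^i(Q_T^{i'})^2\big]\;\geq\;\mathbb{E}\big[-2A^i Q_T^i(Q_T^i-Q_T^{i'})\big]\;=\;\mathbb{E}\big[Y_T^i(Q_T^i-Q_T^{i'})\big],
\end{equation*}
which will turn the terminal penalty difference into a linear functional that can be paired with the adjoint dynamics.

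Next I would apply It\^o's product rule to $Y^i(Q^i-Q^{i'})$ on $[0,T]$. Both inventories start from the common datum $q_0^i$, and the uniform bounds inherited from $\boldsymbol{\delta}^j\in\mathbb{A}$ make the $dM^i$-integral a genuine martingale, so its expectation vanishes and
\begin{equation*}
\mathbb{E}\big[Y_T^i(Q_T^i-Q_T^{i'})\big]=\mathbb{E}\!\int_0^T\!2\phi_t^i Q_t^i(Q_t^i-Q_t^{i'})\,dt+\mathbb{E}\!\int_0^T\!Y_t^i\big[\theta(t,\boldsymbol{\delta}_t^i;\boldsymbol{\delta}_t^{-i})-\theta(t,\boldsymbol{\beta}_t;\boldsymbol{\delta}_t^{-i})\big]\,dt.
\end{equation*}
From $H^i=y\,\theta+f$ the running-payoff integrand rearranges as
\begin{equation*}
f(t,Q_t^i,\boldsymbol{\delta}_t^i;\boldsymbol{\delta}_t^{-i})-f(t,Q_t^{i'},\boldsymbol{\beta}_t;\boldsymbol{\delta}_t^{-i})=\big[H^i(t,Q_t^i,Y_t^i,\boldsymbol{\delta}_t^i;\boldsymbol{\delta}_t^{-i})-H^i(t,Q_t^{i'},Y_t^i,\boldsymbol{\beta}_t;\boldsymbol{\delta}_t^{-i})\big]-Y_t^i\big[\theta(t,\boldsymbol{\delta}_t^i;\boldsymbol{\delta}_t^{-i})-\theta(t,\boldsymbol{\beta}_t;\boldsymbol{\delta}_t^{-i})\big],
\end{equation*}
so stitching the three displays together makes the $Y^i\,\theta$ contributions cancel and leaves
\begin{equation*}
J^i(\boldsymbol{\delta}^i;\boldsymbol{\delta}^{-i})-J^i(\boldsymbol{\beta};\boldsymbol{\delta}^{-i})\;\geq\;\mathbb{E}\!\int_0^T\!\Big\{H^i(t,Q_t^i,Y_t^i,\boldsymbol{\delta}_t^i;\boldsymbol{\delta}_t^{-i})-H^i(t,Q_t^{i'},Y_t^i,\boldsymbol{\beta}_t;\boldsymbol{\delta}_t^{-i})+2\phi_t^i Q_t^i(Q_t^i-Q_t^{i'})\Big\}dt.
\end{equation*}

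Finally I would exploit two features of the integrand. The Isaacs hypothesis gives $H^i(t,Q_t^i,Y_t^i,\boldsymbol{\delta}_t^i;\boldsymbol{\delta}_t^{-i})\geq H^i(t,Q_t^i,Y_t^i,\boldsymbol{\beta}_t;\boldsymbol{\delta}_t^{-i})$, and because the only $q$-dependence in $H^i$ is the concave quadratic $-\phi_t^i q^2$, one has $H^i(t,Q_t^i,Y_t^i,\boldsymbol{\beta}_t;\boldsymbol{\delta}_t^{-i})-H^i(t,Q_t^{i'},Y_t^i,\boldsymbol{\beta}_t;\boldsymbol{\delta}_t^{-i})=-\phi_t^i\big[(Q_t^i)^2-(Q_t^{i'})^2\big]$. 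Substituting these two bounds and completing the square reduces the lower bound to $\mathbb{E}\int_0^T\phi_t^i(Q_t^i-Q_t^{i'})^2\,dt\geq 0$, which is the desired best-response property; since $i$ was arbitrary, $(\boldsymbol{\delta}^j)_{1\leq j\leq N}$ is a Nash equilibrium. I do not anticipate a genuine obstacle: the single care point is to confirm that admissibility plus the uniform bounds make $Q^i, Q^{i'}, Y^i$ square-integrable, so that the $dM^i$-integral produced by the product-rule step is a true martingale and the entire chain of expectations is justified.
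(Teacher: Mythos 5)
Your proposal is correct and follows essentially the same route as the paper's proof: the tangent-line inequality for the concave terminal penalty, the It\^o/duality step pairing $Y^i$ with $Q^i-Q^{i'}$, the rewriting of the running payoff via $H^i = y\,\theta + f$, and the final combination of the Isaacs maximization with the quadratic $q$-dependence to complete the square into $\mathbb{E}\int_0^T\phi_t^i(Q_t^i-Q_t^{i'})^2\,dt\geq 0$. Your closing remark about verifying integrability so the martingale term has zero expectation is the right care point and is exactly what the paper relies on implicitly.
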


\vspace{0.2cm}

\section{Appendix: Non-smooth Analysis and Implicit Function}
\noindent Let $X$ be a separable Banach space. Given function $F:X\to\mathbb{R}$, the generalized directional derivative in $x \in X$ with respect to the direction $h\in X$ is given by
\begin{equation*}
    F^\circ(x;h):=\limsup_{\substack{y\to x \\ t\to 0}}\frac{1}{t} \, \big[F(y+t\,h)-F(y)\big].
\end{equation*}
In addition, we review the Rademacher's theorem.

\kong

\begin{theorem}[Rademacher]
    For some $m, l\in\mathbb{N}$, let $U\subseteq\mathbb{R}^m$ be open and $F:U \to \mathbb{R}^l$ be locally Lipschitz continuous. Then $F$ is Fr\'{e}chet differentiable in almost every $x\in U$.\\
\end{theorem}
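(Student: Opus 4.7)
The plan is to reduce to the scalar case and then combine a Fubini argument with a density/separability argument, with the final step being the passage from directional derivatives to full Fr\'echet differentiability. Since Fr\'echet differentiability of $F = (F_1, \dots, F_l)$ is equivalent to simultaneous Fr\'echet differentiability of each component $F_i$, and each $F_i$ inherits the local Lipschitz property from $F$, it suffices to prove the statement in the case $l = 1$. I will therefore assume $F : U \to \mathbb{R}$ is locally Lipschitz and, by covering $U$ with countably many balls on which the Lipschitz constant is uniform, I may further assume $F$ is globally Lipschitz on a bounded open subset.

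The first substantive step is to show that for each fixed unit vector $v \in \mathbb{R}^m$, the directional derivative
\begin{equation*}
    \partial_v F(x) = \lim_{t\to 0} \frac{F(x + t\,v) - F(x)}{t}
\end{equation*}
exists for Lebesgue-a.e.\ $x \in U$. The idea is to slice $U$ by lines parallel to $v$: on each such line the one-variable function $t \mapsto F(x_0 + t\,v)$ is Lipschitz, hence absolutely continuous, hence differentiable a.e.\ by the classical Lebesgue theorem. Applying Fubini's theorem in coordinates adapted to $v$ transfers this a.e.\ statement from lines to $U$ itself. Choosing a countable dense set $\{v_k\}_{k \in \mathbb{N}} \subseteq S^{m-1}$ and taking the union of the corresponding exceptional null sets, I obtain a single null set $N \subseteq U$ such that $\partial_{v_k} F(x)$ exists for every $x \in U \setminus N$ and every $k$.

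Next, I need to identify these directional derivatives with a linear map. Since $F$ is Lipschitz, its distributional partial derivatives lie in $L^\infty(U)$, so there is a measurable $g : U \to \mathbb{R}^m$ (essentially bounded) representing the weak gradient. Testing against smooth $\varphi \in C_c^\infty(U)$ and integrating by parts, one shows $\int_U \partial_v F \cdot \varphi \, dx = \int_U \langle g, v \rangle \, \varphi \, dx$, which, by varying $\varphi$ and using countability of $\{v_k\}$, gives $\partial_{v_k} F(x) = \langle g(x), v_k \rangle$ for a.e.\ $x$, enlarging $N$ if necessary. Hence on $U \setminus N$ the map $v_k \mapsto \partial_{v_k} F(x)$ is the restriction of the linear functional $\langle g(x), \cdot \rangle$.

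The hard part is upgrading this to true Fr\'echet differentiability at a.e.\ point. I would restrict to Lebesgue points of $g$ outside $N$, call the remaining set $U^*$, which still has full measure. Fix $x \in U^*$ and $\varepsilon > 0$. Using compactness of $S^{m-1}$, cover it by finitely many balls of radius $\varepsilon$ centered at suitable $v_{k_1}, \dots, v_{k_n}$ from the dense set. For each $v_{k_j}$, pick $t_j$ small enough that $|F(x+t\,v_{k_j}) - F(x) - t\,\langle g(x), v_{k_j}\rangle| \leq \varepsilon\, |t|$ for $|t| \leq t_j$. For an arbitrary small $h \in \mathbb{R}^m$, write $h = |h|\,v$ with $v \in S^{m-1}$, pick $v_{k_j}$ with $|v - v_{k_j}| < \varepsilon$, and estimate $F(x+h) - F(x) - \langle g(x), h\rangle$ by adding and subtracting $F(x + |h|\,v_{k_j})$: the first piece is controlled by the Lipschitz constant times $|h|\,|v - v_{k_j}| \leq L\,\varepsilon\,|h|$, and the second piece is controlled by the chosen $t_j$-estimate. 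Letting $\varepsilon \to 0$ after first letting $|h| \to 0$ yields Fr\'echet differentiability at $x$ with differential $g(x)$. The main obstacle is this last uniformity-in-direction argument, since one must ensure the estimates chosen for the finitely many $v_{k_j}$ suffice to control every nearby direction $v$, and this is where the Lipschitz hypothesis is critically used.
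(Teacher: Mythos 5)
The paper does not prove this statement: Rademacher's theorem is quoted in the appendix as a classical background result (used only to make sense of the Clarke generalized Jacobian), so there is no in-paper argument to compare against. Judged on its own, your proposal is the standard proof in the style of Evans--Gariepy and the overall structure is sound: reduction to $l=1$, existence a.e.\ of $\partial_v F$ for fixed $v$ via one-dimensional Lebesgue differentiation along lines plus Fubini, identification $\partial_v F = \langle g, v\rangle$ through the weak gradient and integration by parts (with dominated convergence justified by the Lipschitz bound on the difference quotients), a countable dense set of directions, and the compactness-of-$S^{m-1}$ argument to upgrade to Fr\'echet differentiability.

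Two small points you should tighten. First, the Fubini step requires knowing that the set $A_v = \{x : \partial_v F(x) \text{ does not exist}\}$ is Lebesgue measurable; this follows because $A_v$ is the set where $\limsup$ and $\liminf$ of the difference quotients taken over rational $t$ disagree, and these are Borel functions of $x$, but you should say so since Fubini cannot be applied to an arbitrary set. Second, in the final estimate your decomposition of $F(x+h)-F(x)-\langle g(x),h\rangle$ via $F(x+|h|\,v_{k_j})$ produces three pieces, not two: besides the Lipschitz term $|F(x+h)-F(x+|h|\,v_{k_j})| \le L\,|h|\,|v-v_{k_j}|$ and the directional-derivative term at $v_{k_j}$, there is the linear-map mismatch $|h|\,|\langle g(x), v_{k_j}-v\rangle| \le \|g(x)\|\,\varepsilon\,|h|$, which is controlled the same way but must be accounted for. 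Also, the restriction to Lebesgue points of $g$ is harmless but unnecessary; the argument only needs the full-measure set on which all the countably many directional derivatives exist and equal $\langle g(\cdot), v_k\rangle$. With these repairs the proof is complete.
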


\kong

\noindent In view of Rademacher's theorem, we introduce the generalized derivative and the generalized Jacobian matrix as follows.\\

\kong

\begin{definition}
(1) For a locally Lipschitz function $F : X \to \mathbb{R}$, its \textit{(Clarke) generalized derivative} in $x\in X$ is defined by
    \begin{equation*}
        \partial F(x):=\big\{ x^*\in X^* \,:\, \langle x^*, h\rangle_X \leq F^\circ(x;h), \; \forall h\in X\big\}.
    \end{equation*}
Here, $\langle\cdot, \cdot\rangle_X$ denotes the duality product between $X$ and $X^*$. Specifically, if $X=\mathbb{R}^m$, then $F$ is differentiable on $\mathbb{R}^m \, \backslash \, E_F$ for a set $E_F\subseteq \mathbb{R}^m$ of Lebesgue measure $0$. Then, the generalized derivative can be characterized by
\begin{equation*}
        \partial F(x)= \text{co }\big\{ \lim_{n\to\infty} \nabla F(x_n) \,:\, x_n\to x, \, x_n\notin E_F \, \big\},
\end{equation*}
where co $A$ denotes the convex hull of $A\subseteq \mathbb{R}^m$.

(2) For a locally Lipschitz function $F : \mathbb{R}^m \to \mathbb{R}^l$, its \textit{(Clarke) generalized Jacobian} in $x\in \mathbb{R}^m$ is defined as 
\begin{equation*}
    \partial F(x):= \text{co }\big\{ \lim_{n\to\infty} \nabla F(x_n) \,:\, x_n\to x, \, x_n\notin E_F \, \big\}.
\end{equation*}
Here, we let $F$ be differentiable on $\mathbb{R}^m \, \backslash \, E_F$ for a set $E_F\subseteq \mathbb{R}^m$ of Lebesgue measure $0$.\\
\end{definition}

\kong

\begin{remark}
Unlike some literature, we will not use $\partial_C$ to denote the generalized derivative. Instead, we will rely on the subscript notation to represent the partial derivative. For illustration,
\begin{equation*}
\begin{aligned}
    \partial_x F(x, y) &\text{ denotes the generalized derivative of $F$ with respect to its first variable;}\\ 
    \partial F(x, y) &\text{ denotes the generalized derivative of $F$ with respect to the variable } (x,y).
\end{aligned}
\end{equation*}
It is worth noting that in the smooth case, the generalized derivatives and classical derivatives coincide, so there should be no ambiguity in our notation.\\
\end{remark}

\kong

\noindent With these prerequisites in place, we now proceed to present the proof of the implicit function theorem.

\kong

\begin{proof}(Propostion \ref{general_implicit})
Given the first two conditions, one already knows there exists a unique locally Lipschitz function $f:\mathbb{R}^m\to\mathbb{R}^n$ such that $F(x,y)=0$ and $x=f(y)$ are equivalent in the set $\mathbb{R}^n\times\mathbb{R}^m$, according to Theorem 4 in \cite{galewski2018global}. It suffices to verify the global Lipschitz property.

First, the proof of Theorem 4 is accomplished by concatenating the local result given by Theorem 2 in \cite{galewski2018global}. For every $y\in\mathbb{R}^m$, the resulting function $f$ is Lipschitz in a neighborhood containing $y$, and thus is locally Lipschitz overall. Subsequently, the proof of Theorem 2 is built upon the inverse function theorem, referring to Theorem 7.1.1 in \cite{clarke1990optimization}. Illustrated by the Corollary on page 256 in \cite{clarke1990optimization}, we can define the function $\tilde{F}:\mathbb{R}^{m+n}\to\mathbb{R}^{m+n}$ by
    \begin{equation*}
        \tilde{F}(x,y)=\big[y, \;F(x,y)\big],
    \end{equation*}
and apply the inverse function theorem to $\Tilde{F}$ to justify the local version of the implicit function theorem (i.e., Theorem 2 in \cite{galewski2018global}). The function $f$ is then Lipschitz (in a neighborhood) because of the Lipschitz inverse according to Theorem 7.1.1. Finally, the Lipschitz coefficient of the Lipschitz inverse is $1/\delta$, provided that the distance between the distance between $\partial \tilde{F}(\hat{x},\hat{y})\,\tilde{S}$---letting $\tilde{S}$ signify the unit sphere in $\mathbb{R}^{m+n}$---and $0$ is $2\delta$, for a $(\hat{x},\hat{y})\in\mathbb{R}^n\times\mathbb{R}^m$ fixed at the beginning.

Therefore, given that there exists a constant $\upsilon>0$ such that the distance between $\partial \tilde{F}(x,y)\,\tilde{S}$ and $0$ is at least $\upsilon$ for all $(x,y)\in\mathbb{R}^n\times\mathbb{R}^m$, the above procedure tells us that the function $f$ is Lipschitz in every small neighborhood, with the Lipschitz coefficient being bounded uniformly by $2/\upsilon$. It follows $f$ is (globally) Lipschitz.
\end{proof}

\kong

\noindent We finish with the calculus rules of the generalized derivative.

\kong

\begin{theorem}[\cite{clason2017nonsmooth}, \cite{clarke1990optimization}]
\label{sum rule}
Let $F : X \to \mathbb{R}$ be locally Lipschitz continuous near $x \in X$ and $\kappa \in \mathbb{R}$, then
\begin{equation*}
    \partial (\kappa \, F)(x) = \kappa \partial F(x).
\end{equation*}
Let $G : X \to \mathbb{R}$ also be locally Lipschitz continuous near $x \in X$, then
\begin{equation*}
    \partial (F + G)(x) \subseteq \partial F(x) + \partial G(x): = \big\{ f + g \,:\, f \in \partial F(x), \, g \in \partial G(x) \big\}.
\end{equation*}
The above are also true for the case $F, G : \mathbb{R}^m \to \mathbb{R}^l$.\\
\end{theorem}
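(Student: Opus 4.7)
The plan is to handle the real-valued statements via the generalized directional derivative $F^\circ$ together with its duality with the Clarke subdifferential, and to transfer them to the vector-valued case by working with the limit-of-Jacobians characterization on a full-measure set of common differentiability points. I would organize the argument as: (i) scalar multiplication in the real case, (ii) sum rule in the real case via Hahn--Banach separation, (iii) both rules in the vector case by a subsequence-extraction argument.

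For step (i), I would split on the sign of $\kappa$. When $\kappa \geq 0$, the definition of $F^\circ$ immediately yields $(\kappa F)^\circ(x;h) = \kappa\, F^\circ(x;h)$, which lifts to $\partial(\kappa F)(x) = \kappa\,\partial F(x)$ via the definition of $\partial$ as a dual set cut out by those linear functionals. For $\kappa = -1$, the substitution $s = -t$ inside the limsup gives $(-F)^\circ(x;h) = F^\circ(x;-h)$, and the defining inequality $\langle x^*, h\rangle \leq (-F)^\circ(x;h)$ becomes, after $h \mapsto -h$, equivalent to $-x^* \in \partial F(x)$. Combining the two cases yields the result for arbitrary $\kappa \in \mathbb{R}$.

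For step (ii), I would first note the elementary subadditivity $(F+G)^\circ(x;h) \leq F^\circ(x;h) + G^\circ(x;h)$ from $\limsup(a+b) \leq \limsup a + \limsup b$. The main tool is then the standard duality $F^\circ(x;h) = \sup\{\langle y^*, h\rangle : y^* \in \partial F(x)\}$, which holds because $\partial F(x)$ is nonempty, convex and weak*-compact in $X^*$; the same is true for $G$, and hence for $\partial F(x) + \partial G(x)$. If some $x^* \in \partial(F+G)(x)$ failed to lie in $\partial F(x) + \partial G(x)$, a Hahn--Banach separation would furnish $h \in X$ with
\[
\langle x^*, h\rangle \;>\; \sup_{y^* \in \partial F(x),\, z^* \in \partial G(x)} \langle y^* + z^*, h\rangle \;=\; F^\circ(x;h) + G^\circ(x;h) \;\geq\; (F+G)^\circ(x;h),
\]
contradicting the defining inequality for $x^* \in \partial(F+G)(x)$.

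For step (iii), I would argue directly from the limit-of-Jacobians characterization. The non-differentiability sets $E_F$ and $E_G$ are Lebesgue-null by Rademacher, so $E := E_F \cup E_G$ is null too, and on $\mathbb{R}^m \setminus E$ one has $\nabla(F+G) = \nabla F + \nabla G$ and $\nabla(\kappa F) = \kappa \nabla F$ pointwise. For any sequence $x_n \to x$ in $\mathbb{R}^m \setminus E$ along which $\nabla(F+G)(x_n)$ converges, local Lipschitz continuity bounds $\{\nabla F(x_n)\}$ and $\{\nabla G(x_n)\}$ uniformly in operator norm, so I can extract a subsequence along which both converge simultaneously; the limits lie in $\partial F(x)$ and $\partial G(x)$ respectively, and sum to the original limit. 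Closing under convex hulls and using convexity of $\partial F(x) + \partial G(x)$ yields the inclusion, and the scalar rule comes out as an equality since $\kappa\,\partial F(x)$ already captures every such limit. The only genuine obstacle is step (ii): it is the single point at which the argument is not a direct computation, and it really requires the duality/weak*-compactness structure of the Clarke subdifferential, so the separation has to be set up with care; everything else reduces to elementary bookkeeping of measure-zero sets, manipulation of limsups, and compactness-based subsequence extraction.
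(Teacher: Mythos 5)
The paper gives no proof of this result at all: it is imported verbatim from Clarke and Clason with a citation, so there is nothing to compare your argument against except the textbook proofs themselves. Your outline is in fact the standard one. Step (i) is a correct direct computation (the identity $(-F)^\circ(x;h)=F^\circ(x;-h)$ via the substitution $z=y-th$ is exactly how Clarke does it), and step (ii) is Clarke's own proof of the scalar sum rule: subadditivity of the generalized directional derivative plus the fact that $F^\circ(x;\cdot)$ is the support function of the nonempty, convex, weak*-compact set $\partial F(x)$, followed by Hahn--Banach separation by an element of the predual. You are right to flag this as the one place where real structure is needed; just be aware that the support-function identity and weak*-compactness are themselves theorems (consequences of the sublinearity and positive homogeneity of $F^\circ(x;\cdot)$ and Alaoglu), not definitions, so a fully self-contained write-up would have to establish or cite them.

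There is one genuine, if standard, gap in step (iii). With the paper's definition, $\partial(F+G)(x)$ is the convex hull of limits $\lim_n \nabla(F+G)(x_n)$ over sequences $x_n\to x$ avoiding only $E_{F+G}$, whereas your decomposition $\nabla(F+G)(x_n)=\nabla F(x_n)+\nabla G(x_n)$ is available only for $x_n\notin E_F\cup E_G$. Since $E_{F+G}\subseteq E_F\cup E_G$ but the containment can be strict, your argument only controls the limits generated by sequences in the smaller set $\mathbb{R}^m\setminus(E_F\cup E_G)$ and therefore only bounds a potentially proper subset of the generating family of $\partial(F+G)(x)$. To close this you need the lemma that the generalized Jacobian is insensitive to enlarging the excluded set by any Lebesgue-null set $S$, i.e.\ $\partial H(x)=\mathrm{co}\{\lim_n \nabla H(x_n): x_n\to x,\ x_n\notin E_H\cup S\}$ (Clarke, Theorem 2.5.1 for gradients and the corresponding statement for generalized Jacobians). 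Applying it with $S=E_F\cup E_G$ makes your subsequence-extraction argument complete; the rest of step (iii), including the boundedness of the Jacobians from the local Lipschitz constants and the passage to convex hulls, is fine.
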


\end{document}